\DeclareMathOperator*{\argmin}{arg\,min}
\DeclareMathOperator*{\argmax}{arg\,max}
\newtheorem{theorem}{Theorem}[section]
\newtheorem{prop}[theorem]{Proposition}
\newtheorem{lem}[theorem]{Lemma}
\newtheorem*{remark}{Remark}
\title{A Functional Approach to Curve Alignment and Shape Analysis}
\author[1,2]{Issam-Ali Moindjié$^*$}
\author[1]{Cédric Beaulac}
\author[1]{Marie-Hélène Descary}
\affil[1]{Department of mathematics, Université du Québec à Montréal, Canada}
\affil[2]{LAMPS, Université de Perpignan Via Domitia, France}
\affil[*]{Corresponding author: issam-ali.moindjie@univ-perp.fr}
\date{}
\newcommand{\norm}[1]{\left\lVert#1\right\rVert}
\begin{document}

\maketitle

\begin{abstract}{ In many image analysis problems, the contours of objects carry important statistical information about shape. Such contours are typically affected by deformation variables including scaling, translation, rotation, and reparametrization. }{   
Previous studies in statistical shape analysis have mainly focused on analyzing contours and shapes through discrete observations. While this approach might offer computational advantages, it overlooks the continuous nature of these objects and their underlying geometric structure.} It also ignores potential dependencies between the deformation variables and their effect on the shape, which may result in a loss of statistical information and reduced interpretability. In this paper, we introduce a novel framework for analyzing shapes within the context of Functional Data Analysis (FDA). Basis expansion techniques are employed to derive analytic solutions for the estimation of deformation variables, namely scaling, translation, rotation, and reparametrization, thereby achieving curve alignment. A generative model for random {  contours} is then developed using principal component analysis techniques. Numerical experiments on simulated data and the \textit{MPEG-7} database demonstrate that our method successfully identifies deformation parameters and captures the underlying distribution of { random contours} in settings where traditional FDA methods fail.
\end{abstract}

\section{Introduction}
The analysis of images has become increasingly important with advances in acquisition and storage technologies. While representing images as pixel matrices has enabled powerful learning methods, this representation does not explicitly capture the fundamental components of images: the objects they depict, characterized by their shapes, colors, and textures. In this work, we focus on the statistical analysis of object shapes.
\par   
\begin{figure}[ht]
    \centering
\begin{tabular}{ c c }
  (a) &  (b) \\ 
     \includegraphics[align=c, width=.35\textwidth]{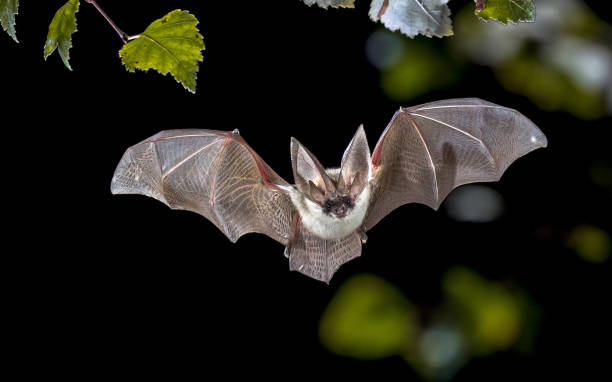}  & \includegraphics[align=c, width=.35\textwidth]{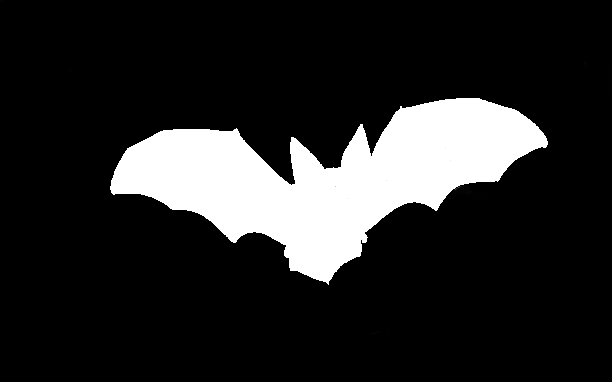} 
     \\  \\
     (c) &  (d) \\ 
     \includegraphics[align=c, width=.40\textwidth]{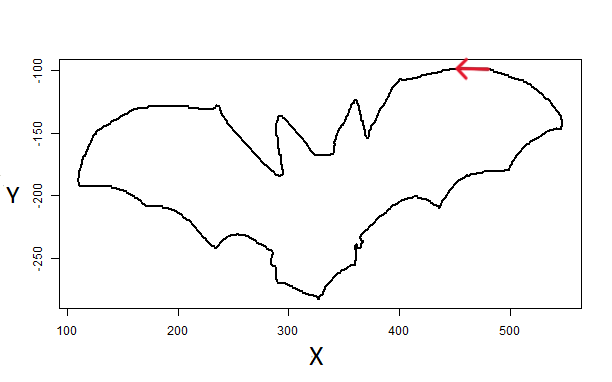} & 
     \begin{tabular}{c}
      \includegraphics[align=c, width=.25\textwidth]{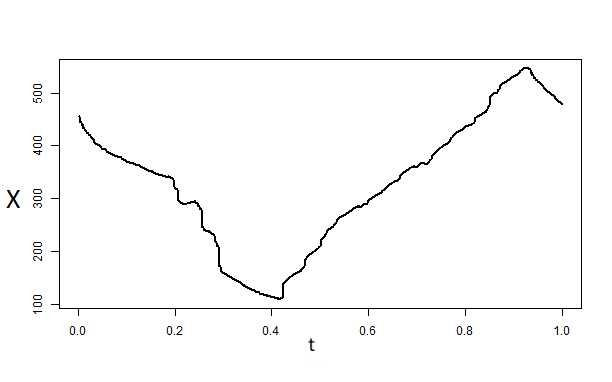} \\ \includegraphics[align=c, width=.25\textwidth]{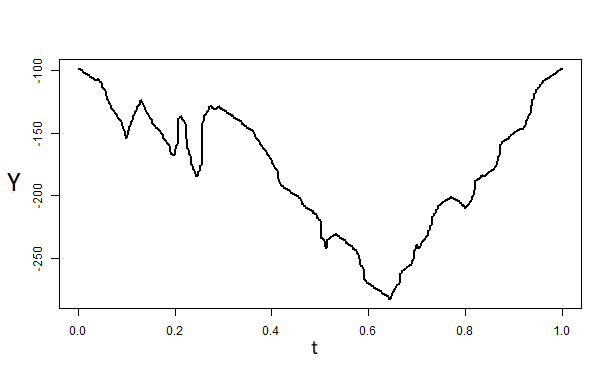}
     \end{tabular}
\end{tabular}
    \caption{From an image to the planar curve $\mathbf{C}$: (a) original image  \citep{bat}, (b) binarized image, (c) contour of the object in the image, and (d) the coordinate functions $X(t)$ and $Y(t)$. The red arrow in (c) indicates the starting point and the orientation of the traveling used to obtain the coordinate functions in (d).}
    \label{pres1}
\end{figure}
\par 
If we are interested in analyzing shapes in images, a natural representation is through binary images, as illustrated in Figure \ref{pres1}(b). These binary images, often referred to as masks or { silhouettes}, typically result from image segmentation and consist of pixels taking two values: one indicating regions outside the object of interest and the other indicating regions inside, thereby representing shapes as regions rather than explicit boundaries ({ or contours}). \cite{lelivre} identify other representations of shape, such as point clouds and ordered samples of points, and argue in favor of a functional representation; see \cite{lelivre} for a detailed discussion. We also adopt a curve-based representation of shapes in images. Compared to binary images, { the curve representation, which gives the contour of the silhouette image, leads to a substantial reduction in dimensionality while preserving the geometric structure. }
Moreover, if contours are extracted as collections of pixels, images of different resolutions yield contours with varying numbers of points. These issues can be addressed by smoothing the collection of points and projecting the resulting contours onto a common functional basis, thereby enabling a coherent statistical analysis of shapes.
\par
We focus on the { variable} $\mathbf{C}$, which represents { the contour} of the main object in the image. This variable can be viewed as a random parametric planar curve:
$$
\mathbf{C}(t)=\begin{pmatrix}
    X(t)\\  Y(t)
\end{pmatrix}, 
$$  
where $t\in [0, 1]$ represents the proportion of the curve that has been traversed, from the starting point ($t=0$) to the end point ($t=1$), and $X$, $Y$ are the coordinate functions. {  We restrict our attention to the case where the contour is closed, that is, $\mathbf{C}(0)=\mathbf{C}(1)$}. Figure \ref{pres1} illustrates the pipeline for extracting the planar curve $\mathbf{C}$ from an image.
\par
The { contour} $\mathbf{C}$ is assumed to be a deformed version of a latent variable $\mathbf{\tilde C}$:
\begin{equation}
\textbf{C}(t)= \rho\mathbf{O} \mathbf{\tilde C} \circ \gamma(t) + \mathbf{T}, \ t \in [0, 1],
\label{general_mod}    
\end{equation}
where $(\rho, \mathbf{O}, \mathbf{T}, \gamma)$ are deformation variables and $\mathbf{\tilde C}$ is the \textit{shape} of $\mathbf{C}$ in the sense of \cite{kendall}, i.e., what remains when all deformations are removed. The deformation variables act through scaling ($\rho\in \mathbb{R}^+$), translation ($\mathbf{T}\in \mathbb{R}^2$), rotation ($\mathbf{O}\in SO(2)$), and reparametrization ($\gamma \in \Gamma$, where $\Gamma$ denotes a set of increasing functions from $[0,1]$ to $[0,1]$).
\par
The analysis of $\mathbf{C}$ has been widely studied in the shape analysis literature. Early approaches relied on a finite set of points along the curve, referred to as landmarks, which are assumed to correspond across objects \citep{dryden1998}. More recent work considers $\mathbf{C}$ as a continuous planar curve (see, e.g., \cite{younes1998}) and introduces elastic shape analysis frameworks in which reparametrization functions $\gamma$ are modeled as diffeomorphic transformations and estimated jointly with rotation. In practice, these approaches rely on discretizations of the curve and numerical procedures, such as dynamic programming, and do not admit closed-form solutions. Moreover, elastic reparametrizations introduce an additional layer of transformation that must be estimated and interpreted.
\par
{  This work addresses the specific challenges arising when contours are automatically extracted from images, where phase variability due to arbitrary starting points is the dominant source of misalignment.} In this context, we restrict the class of admissible transformations by focusing on global deformations, namely scaling, translation, rotation, and a simple reparametrization of the starting point. This choice leads to alignment procedures that are fully compatible with a functional representation of the curve and can be carried out without additional discretization beyond the chosen basis representation. While more flexible approaches may be incorporated at a later stage, they are not required to capture the main geometric variability induced by image-based contour extraction.
\par
Functional Data Analysis (FDA) provides a natural framework for modeling planar curves as functional observations \cite{ramsay2008}. While several works have considered shape data within this framework (see, e.g., \cite{stocker2023, dai2018, dai2022}), the problem of jointly estimating deformation variables and modeling the resulting aligned curves has not been explicitly addressed.
\par

In this paper, we adopt a functional data analysis framework to study the random planar curve defined in \eqref{general_mod}. In particular, $\mathbf{C}$ is viewed as a bivariate functional variable and its realizations  are therefore treated as functional data.

\par
In summary, the main contributions of this work are as follows:  
\begin{itemize}
	\item A functional data-based procedure to estimate the deformation variables from a bi-variate random planar curve $\mathbf{C}$, focusing on phase variability induced by different starting points.
	\item A novel model for $\mathbf{C}$ that explicitly accounts for the deformation variables through two separate principal component analysis (PCA): one is performed on the aligned functional variable ($\rho \mathbf{\tilde C}$) and the other on the remaining deformation variables. 
\end{itemize}

The remainder of the paper is organized as follows. Section \ref{sec_shape_align} focuses on estimating the deformation variables, while Section \ref{gen_mod} introduces a generative model for $\mathbf{C}$. Numerical experiments on simulated data are presented in Section \ref{sims}, followed by an application to the \textit{MPEG-7} database in Section \ref{app}. The paper concludes with a discussion in Section \ref{disc}.

\section{Alignment of planar closed  curves}
\label{sec_shape_align}
Before introducing our alignment procedure, we first illustrate why alignment is a necessary step for the statistical analysis of planar curves. 
\par
For illustration purposes, Figure \ref{ex2} depicts realizations of $\mathbf{C}$ and their sample mean. The left column displays planar curves, while the corresponding coordinate functions are shown in the middle and right columns. The first row presents a sample of $n=10$ "heart-shaped" planar curves exhibiting variations in rotation and reparametrization, while the second row displays the estimate of the mean function $\mathbb{E}(\mathbf{C})$, defined 

coordinate-wise as:
$$
\mathbb{E}(\mathbf{C}(t))=\begin{pmatrix}
\mathbb{E}(X(t)) \\ \mathbb{E}(Y(t))
\end{pmatrix}, \quad t\in [0,1].
$$
As illustrated in Figure~\ref{ex2}, this definition does not preserve the geometric structure of the shapes, yielding an average that is not representative of the underlying objects.
This issue arises from the presence of deformation variables, such as rotation and reparametrization, which introduce non-linear variability that cannot be captured by linear functional methods.

\begin{figure}[H]
    \centering
    \begin{tabular}{ c c c c}
       &\includegraphics[align=c, width=0.2\linewidth]{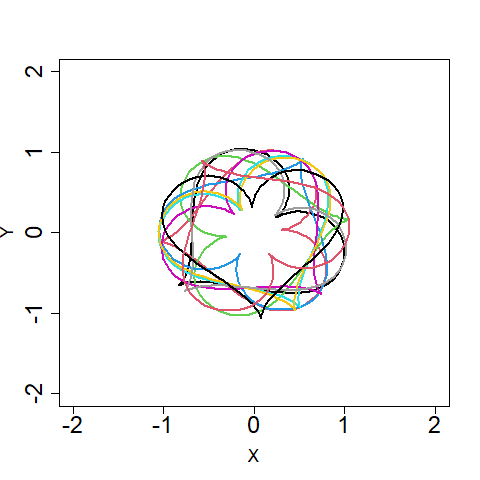} &     \includegraphics[align=c, width=0.2\linewidth]{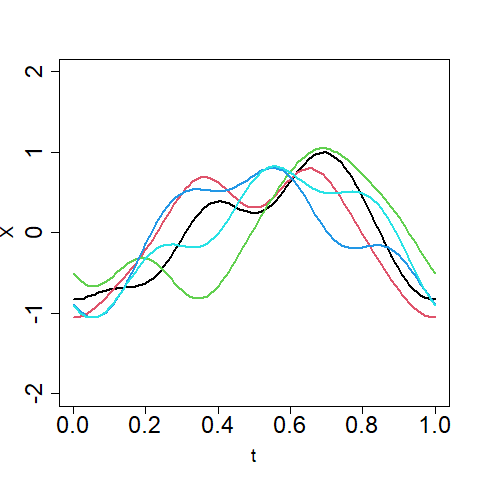} &  \includegraphics[align=c, width=0.2\linewidth]{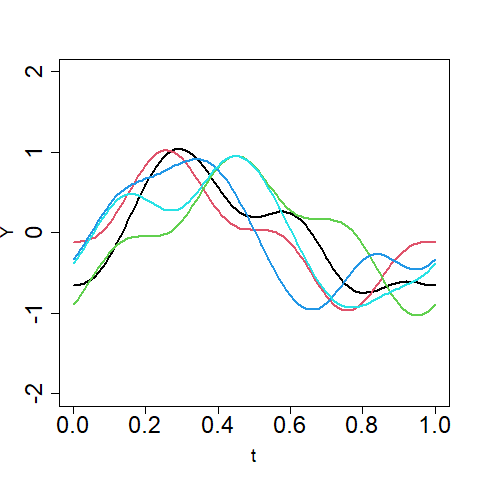} \\ 
         &\includegraphics[align=c, width=0.2\linewidth]{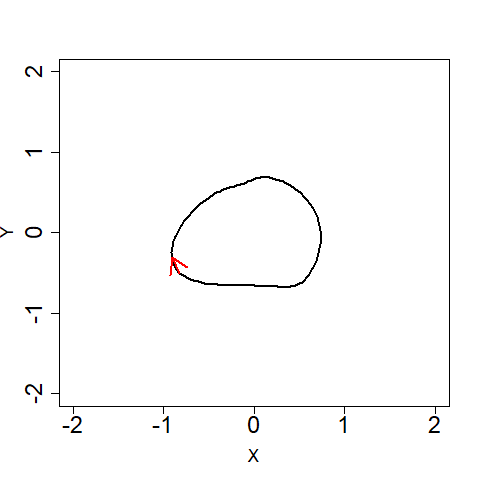} &     \includegraphics[align=c, width=0.2\linewidth]{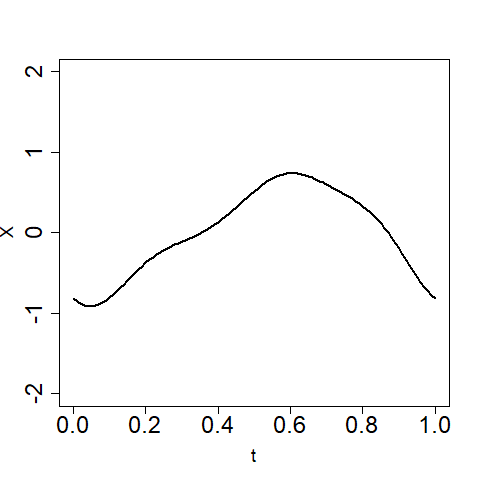} &  \includegraphics[align=c, width=0.2\linewidth]{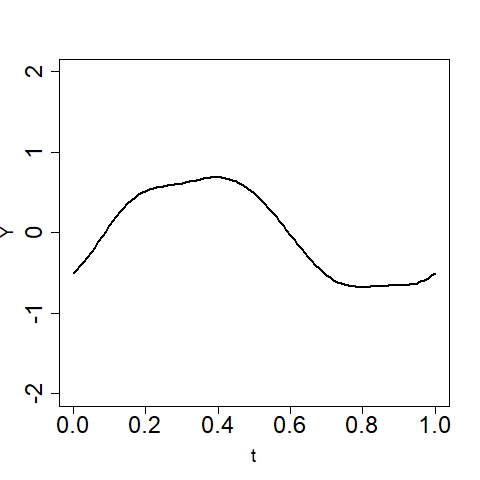}
   \end{tabular}
    \caption{Examples of observations of $\mathbf{C}$ (first row) and their sample mean (second row). Columns present the planar curves, the first and the second coordinate functions, respectively. }
    \label{ex2}
\end{figure}

A similar problem occurs when alignment is not addressed for a univariate functional variable $X$ taking values in $L^2([0,1],\mathbb{R})$. In this case, $X$ can be expressed as $X=\tilde{X}\circ \gamma$, where $\gamma$ is a warping function and $\tilde{X}$ represents the curve without phase variation. Hence, obtaining the shape $\mathbf{\tilde C}$ from $\mathbf{C}$ can be viewed as a generalization of the alignment procedure for univariate functional variables to the bivariate case. In the univariate case, this problem has been the source of many developments in FDA (see Chap.7 in \cite{ramsay2008} and \cite{marron2015} for a review). 
For example, in \cite{FPCA_amp} a flexible generative model of $X$ is proposed. 
This method relies on a joint functional principal component analysis of bijective transformations of $\tilde{X}$ and $\gamma$, inspired by the square-root velocity framework. Using a transformation of $\gamma\in \Gamma$ is mandatory due to the complex structure of $\Gamma$, defined in this case as a space of diffeomorphism functions. 
\cite{happ_am} discuss several choices of transformations, comparing their relative advantages and drawbacks.
They found that square-root velocity inverse transformations might lead to warping functions with atypical structures, making them hard to interpret.  
\par 
To solve this issue, in this manuscript we consider the following space of reparametrization functions 
\begin{equation} \label{def_Gamma}
\Gamma=\left\{ \gamma_\delta : [0,1]\rightarrow [0,1], \ \gamma_\delta(t)= \text{mod}(t-\delta, 1), \ \delta \in [0,1] \right\},
\end{equation} where $\text{mod}(\cdot,1)$ is the modulo $1$ function. It is thus assumed that each reparametrization function is uniquely characterized by its "starting point" $\delta \in [0,1]$. The motivation for choosing this particular space is discussed in Section \ref{reparm}.

\subsection{Model and parametrization}

We recall the model introduced in \eqref{general_mod}:
\begin{equation*}
\textbf{C}(t)= \rho\mathbf{O} \mathbf{\tilde C} \circ \gamma(t) + \mathbf{T}, \ t \in [0, 1],
\end{equation*}
where $\mathbf{C}=(X,Y)^\top$ is a bivariate functional variable taking values in the space of square-integrable functions $L^2([0,1],\mathbb{R}^2)$. Since $L^2([0,1],\mathbb{R}^2)$ is isomorphic to $L^2([0,1],\mathbb{R}) \times L^2([0,1],\mathbb{R})$, we equivalently view $\mathbf{C}$ as an element of the Hilbert space $\mathcal{H} = L^2([0,1],\mathbb{R}) \times L^2([0,1],\mathbb{R})$. Moreover, since $\mathbf{C}$ represents a closed curve, we assume that $\mathbf{C}(0)= \mathbf{C}(1)$. The space $\mathcal{H}$ is equipped with the inner product $\langle \cdot , \cdot\rangle_\mathcal{H}$ defined as: 
$$
\langle \boldsymbol{f},  \boldsymbol{g} \rangle_\mathcal{H}=  \int_{0}^1 \left\{f_{1}(t)g_{1}(t) + f_{2}(t)g_{2}(t)\right\}dt,\ \boldsymbol{f}=(f_1,f_2)^\top, \boldsymbol{g}=(g_1,g_2)^\top \in \mathcal{H}. 
$$

The first step in analyzing a dataset of planar closed curves is to align the data, meaning to extract the shape $\mathbf{\tilde C}$ from $\mathbf{C}$ or, equivalently, to identify the deformation variables in model \eqref{general_mod}. We formulate the following hypothesis regarding these deformation variables:
\begin{itemize}
    \item $\rho$ is a positive random scalar variable;
    \item $\mathbf{T}=(T_1  ,T_2)^\top$ is a random vector in $\mathbb{R}^2$;
    \item The rotation matrix  
		 \begin{equation}
\mathbf{O}:=\mathbf{O}_\theta=\begin{pmatrix}
			\cos(\theta) & -\sin(\theta) \\ 
			\sin(\theta) & \cos(\theta)
		\end{pmatrix}
		\label{rot}\end{equation} depends on a random angle $\theta$ in $[0,2\pi]$. Note that the case of the reflection matrix $\mathbf{R}_\theta$ can be considered using the propriety: 
$$ 
\mathbf{R}_\theta = \mathbf{O}_\theta \begin{pmatrix}
    1 & 0 \\ 
    0& -1
\end{pmatrix}.$$
This reduces to analyzing $
    (X,-Y)^\top$ in our framework instead of $\mathbf{C} = (X,Y)^\top$.
    \item The reparametrization function $\gamma := \gamma_\delta$ takes values in $\Gamma$, where $\Gamma$ is defined as in \eqref{def_Gamma}.   
    \item The random function $\tilde{\mathbf{C}}=({\tilde X} ,{\tilde Y} )^\top $ takes values in $\mathbf{S}^\infty=\{ \boldsymbol{f} \in \mathcal{H},\ \norm{\boldsymbol f}_\mathcal{H}=1 \}$, where $\norm{\ \cdot\ }_\mathcal{H}$ is the norm induced by the inner product $\langle \cdot, \cdot \rangle_\mathcal{H}$. For identification purposes, we will also assume that $\mathbf{\tilde C}$ is centered, i.e. $$\int_0^1\tilde{X}(t)dt=\int_0^1\tilde{Y}(t)dt=0. $$  
\end{itemize}

Before presenting our alignment method, we first provide the intuition behind the definition of the function space $\Gamma$.

\subsection{The space of reparametrization functions}
\label{reparm}
{
Recall that we define the space of reparametrization functions $\Gamma$ in \eqref{def_Gamma} as $$
\Gamma=\left\{ \gamma_\delta : [0,1]\rightarrow [0,1], \ \gamma_\delta(t)= \text{mod}(t-\delta, 1), \ \delta \in [0,1] \right\},$$
with $\text{mod}(x,1)=x-\lfloor x \rfloor, \forall x\in \mathbb{R}$. Functions that belong to this space can be seen as a generalization of the well-known time-shift deformation warping functions for univariate functional data (see e.g. \cite{marron2015} for details)  to the case of planar closed curves.} The modulo function allows considering the cyclic nature of the closed curve $\mathbf{C}$.

Figure \ref{ex_gamma} exhibits the effect of $\gamma_\delta \in \Gamma$ for different values of $\delta$ on a realization $\mathbf{\tilde c}=(\tilde x, \tilde y)^\top$ of $\mathbf{\tilde C}$. We see the "rearrangement effect" of the reparametrization function: $\delta=0$ defines the reference shape and the starting point of the coordinate functions of $\mathbf{\tilde c} \circ \gamma_\delta$ coincides with the point $t=\delta$ of the reference shape.

\begin{figure}[H]
    \centering
    \begin{tabular}{c c c c c }
    &   $\gamma_\delta$ & $\mathbf{\tilde c} \circ \gamma_\delta$ & ${\tilde x} \circ \gamma_\delta$ & ${\tilde y} \circ \gamma_\delta$ \\ 
        $\delta=0$&\includegraphics[align=c, width=0.15\textwidth]{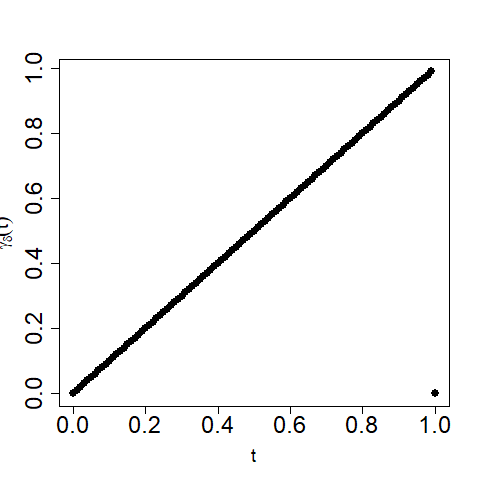} &\includegraphics[align=c, width=0.15\textwidth]{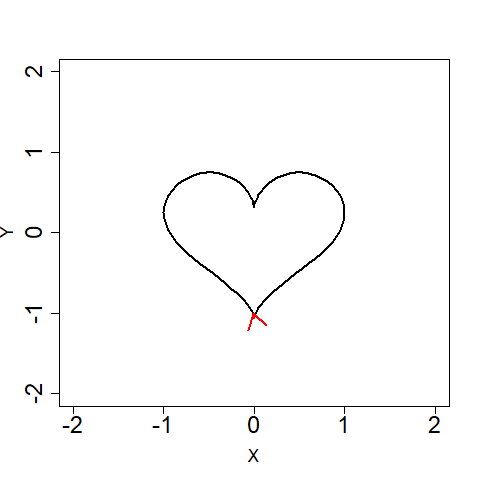}& \includegraphics[align=c, width=0.15\textwidth]{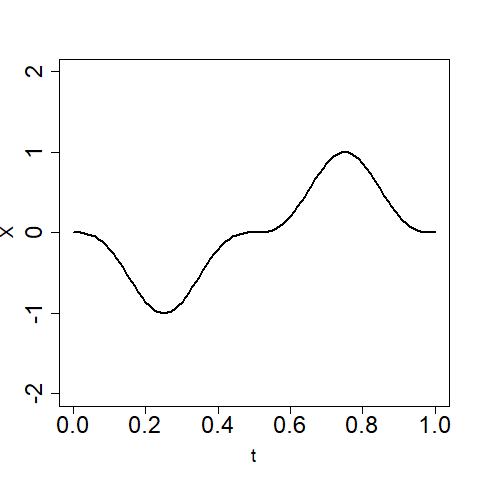} & \includegraphics[align=c, width=0.15\textwidth]{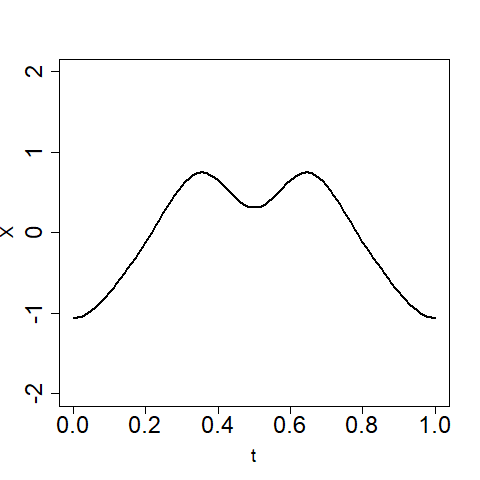}    \\
         
        $\delta=0.5$&\includegraphics[align=c, width=0.15\textwidth]{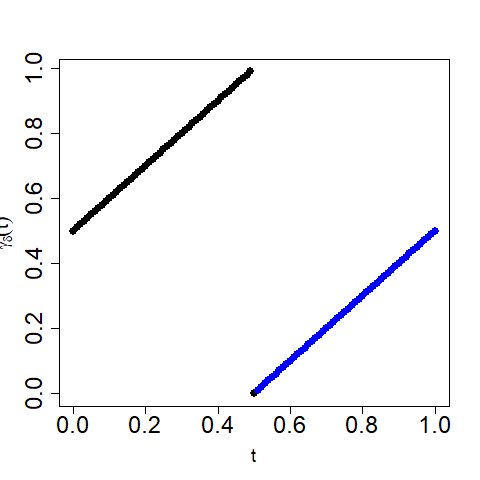} &\includegraphics[align=c, width=0.15\textwidth]{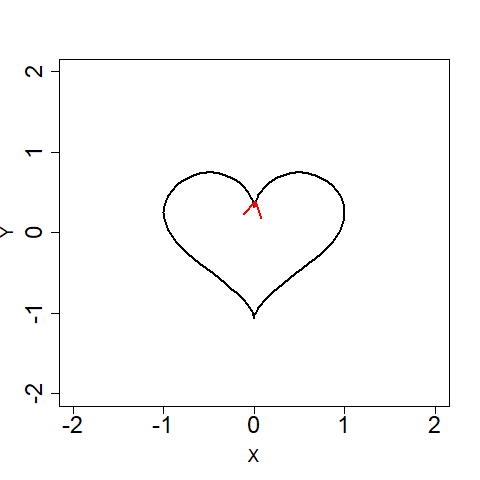}& \includegraphics[align=c, width=0.15\textwidth]{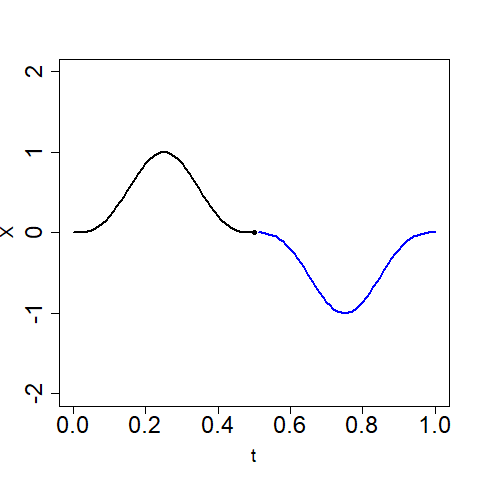} & \includegraphics[align=c, width=0.15\textwidth]{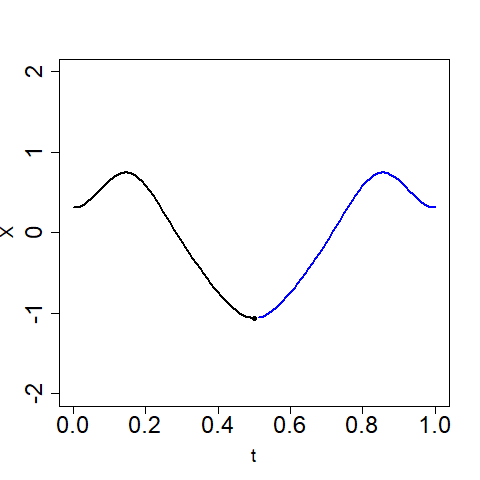}    \\
        
        $\delta=0.155$& \includegraphics[align=c, width=0.15\textwidth]{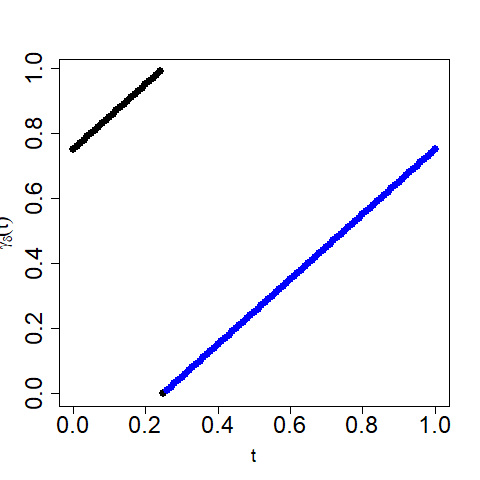} &\includegraphics[align=c, width=0.15\textwidth]{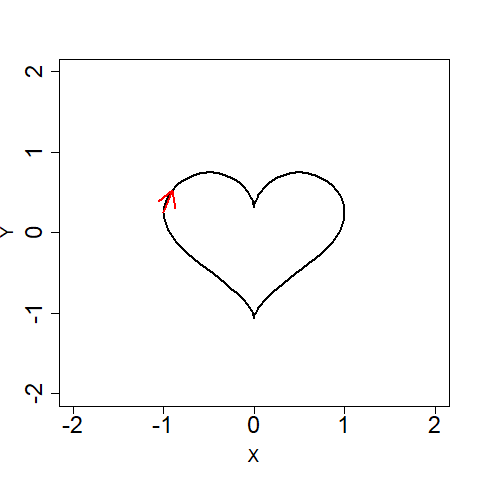}& \includegraphics[align=c, width=0.15\textwidth]{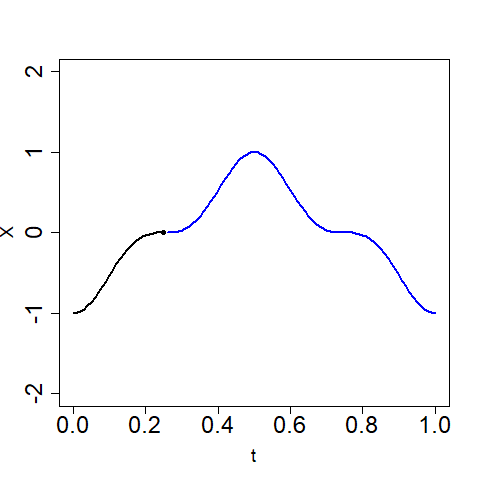} & \includegraphics[align=c, width=0.15\textwidth]{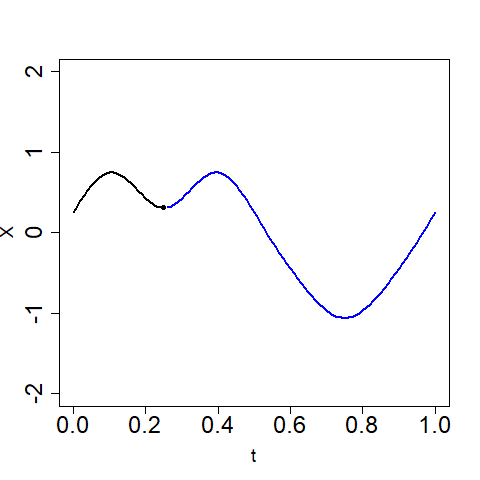}    \\
    \end{tabular}
    \caption{Plots of the functions $\gamma_\delta$ and of its composition with the shape $\mathbf{\tilde c}$ for different values of $\delta$, resulting in different starting points (represented by the red arrows) and coordinate functions.}
    \label{ex_gamma}
\end{figure}
{

In contour datasets, the variation of $\delta$ arises from edge detection algorithms, which select the starting point as the argument of the coordinate functions associated with the smallest/largest distance from the origin. The issue with this approach is that the starting value depends on the transformation variables, particularly rotation. Figure \ref{param_theta} illustrates this effect when the smallest distance from the origin is considered: each row shows the same shape subjected to different rotations, leading to entirely different coordinate functions. Therefore, the challenge is to obtain $\delta$ jointly with the other transformation variables. We will elaborate on this point in Section \ref{estimation}. 
\begin{figure}[H]
    \centering
    \begin{tabular}{ c c c c }
         &\includegraphics[align=c, width=0.15\linewidth]{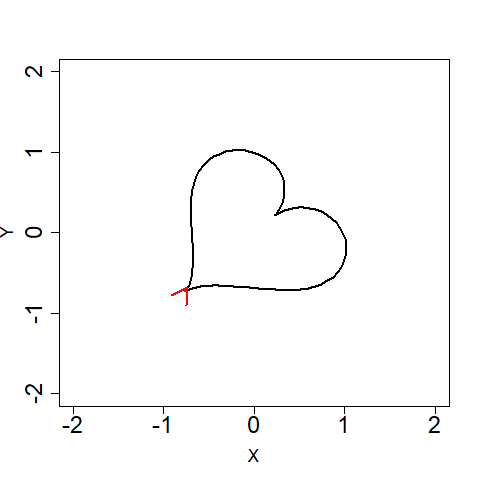}
   &  \includegraphics[align=c, width=0.15\linewidth]{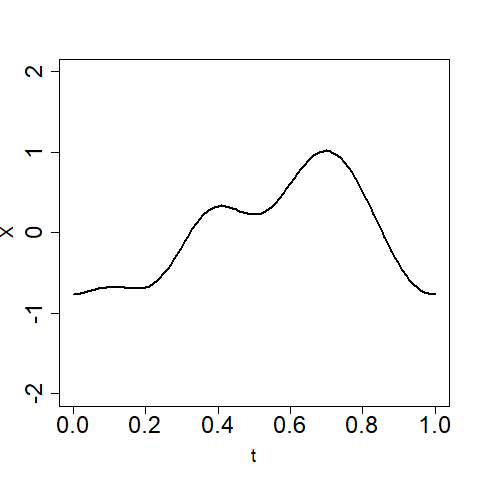} & \includegraphics[align=c, width=0.15\linewidth]{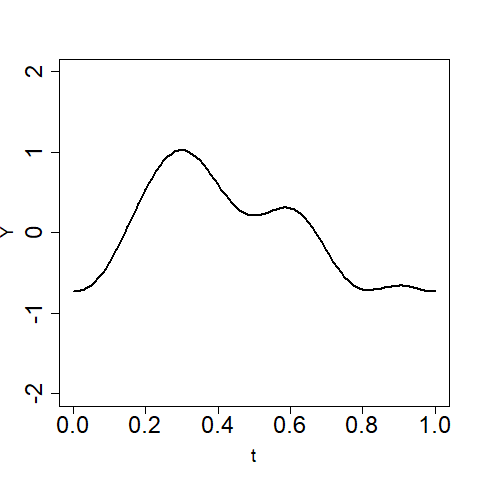}\\
   & \includegraphics[align=c, width=0.15\linewidth]{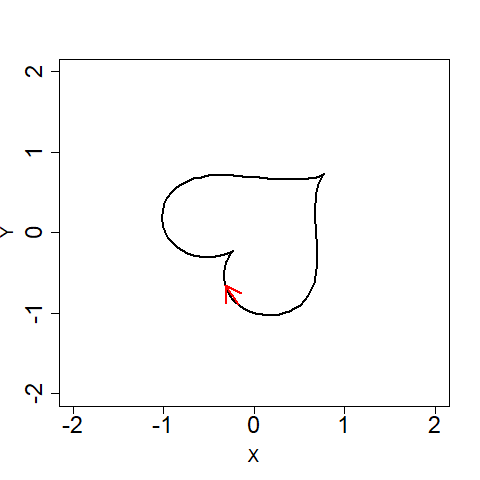}
         &  \includegraphics[align=c, width=0.15\linewidth]{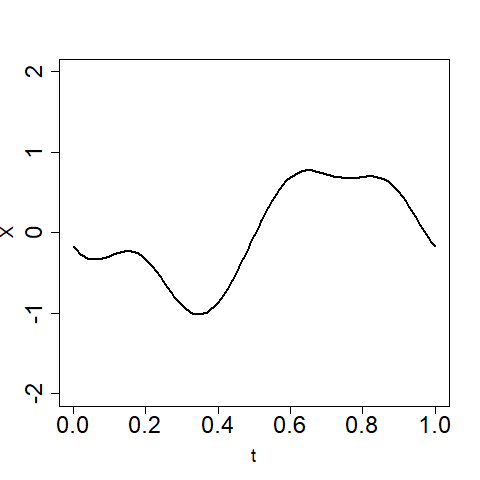} & \includegraphics[align=c, width=0.15\linewidth]{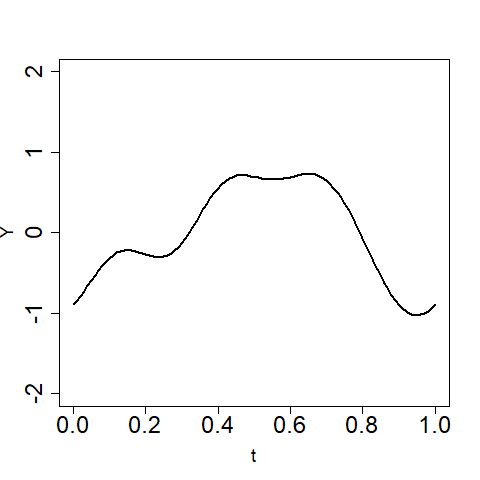} \\ 
   \end{tabular}
    \caption{Illustration of the effect of rotation on a given shape and its coordinate functions when the starting value (illustrated by the red arrow on the shape) is defined as the point associated with the lowest values of the coordinate functions. }
    \label{param_theta}
\end{figure}
In addition to being well-suited for contour data, the space $\Gamma$ has interesting mathematical properties, as demonstrated by the following result.
\begin{prop} $(\Gamma, \circ)$ is a group and the isometry property
$
\norm{\boldsymbol{f} \circ \gamma_\delta}_\mathcal{H}= \norm{\boldsymbol{f}}_\mathcal{H} 
$
holds for $\boldsymbol{f} \in \mathcal{H}$ and $\gamma_\delta \in \Gamma$. 
\label{prop_1}
\end{prop}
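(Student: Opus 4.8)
The statement splits into two independent claims, and I would handle them in turn. For the group structure, the plan is to reduce everything to a single composition identity. First I would compute, for $\delta_1, \delta_2 \in [0,1]$,
$$(\gamma_{\delta_1} \circ \gamma_{\delta_2})(t) = \text{mod}\bigl(\text{mod}(t-\delta_2, 1) - \delta_1, 1\bigr),$$
and invoke the elementary property $\text{mod}(\text{mod}(a,1)+b,1) = \text{mod}(a+b,1)$ to collapse this to $\text{mod}(t - (\delta_1+\delta_2), 1) = \gamma_{\text{mod}(\delta_1+\delta_2, 1)}(t)$. This single formula does most of the work: it establishes closure (the composite is again of the prescribed form, with shift $\text{mod}(\delta_1+\delta_2,1)$), identifies $\gamma_0$ as the neutral element since $\text{mod}(t,1)=t$ on $[0,1)$, and exhibits $\gamma_{\text{mod}(-\delta,1)}$ (that is, $\gamma_{1-\delta}$ for $\delta \in (0,1)$, and $\gamma_0$ itself for $\delta=0$) as the two-sided inverse of $\gamma_\delta$. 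Associativity is inherited for free from associativity of function composition. In effect the map $\delta \mapsto \gamma_\delta$ is a group isomorphism from $(\mathbb{R}/\mathbb{Z}, +)$ onto $(\Gamma, \circ)$, which I would state as the conceptual takeaway.

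For the isometry property, the plan is to reduce to a single coordinate and then perform a change of variables that exploits the fact that $\gamma_\delta$ is a measure-preserving rotation of the circle $[0,1)$. Since the $\mathcal{H}$-norm is the sum of the two $L^2([0,1])$ coordinate norms and composition with $\gamma_\delta$ acts coordinatewise, it suffices to show $\int_0^1 f(\gamma_\delta(t))^2\,dt = \int_0^1 f(t)^2\,dt$ for a scalar $f \in L^2([0,1])$. I would split the domain at the breakpoint $t=\delta$: on $[0,\delta)$ one has $\gamma_\delta(t) = t-\delta+1$, and on $[\delta,1]$ one has $\gamma_\delta(t) = t-\delta$. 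Substituting $u = t-\delta+1$ on the first piece sends it to $\int_{1-\delta}^1 f(u)^2\,du$, and substituting $u = t-\delta$ on the second sends it to $\int_0^{1-\delta} f(u)^2\,du$; summing the two recovers $\int_0^1 f(u)^2\,du$ exactly, and the bivariate claim follows by adding the two coordinate identities.

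Neither half is deep, and the only genuine care is bookkeeping around the modulo function. The main (mild) obstacle is the discontinuity of $\gamma_\delta$ at $t=\delta$ together with the endpoint ambiguity $\gamma_0(1)=0 \neq 1$: these are harmless in the $L^2$ setting since they affect only a single point of measure zero, but they must be acknowledged so that the identity-element argument and the change-of-variables computation are clean. I would therefore note once, at the outset, that all equalities between reparametrizations are understood as equalities of $L^2$-elements (that is, almost everywhere), after which both the composition identity and the substitution argument go through verbatim.
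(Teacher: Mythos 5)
Your proposal is correct and follows essentially the same route as the paper: the same composition identity $\gamma_{\delta_1}\circ\gamma_{\delta_2}=\gamma_{\mathrm{mod}(\delta_1+\delta_2,\,1)}$ drives the group axioms, and the isometry is obtained by the same split of the integral at $t=\delta$ followed by a change of variables in each coordinate. Your two refinements---invoking the fractional-part identity directly instead of the paper's case analysis, and noting that associativity is inherited from function composition (the paper reproves it via the modulo arithmetic)---streamline the argument but do not change its substance.
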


While the class of reparametrization functions $\Gamma$ defined in \eqref{def_Gamma} focuses on variability induced by different starting points, it can be extended to incorporate more flexible transformations. In particular, one may consider reparametrizations of the form $\gamma \in \Gamma_0$, where $\Gamma_0$ is the space of diffeomorphism functions from $[0,1] \to [0,1]$, leading to elastic alignment frameworks similar to those studied in the literature (see, e.g., \cite{FPCA_amp}). This can be done by modifying $\Gamma$ as follows: 
$$
\tilde\Gamma=\left\{ \tilde \gamma_{\gamma,\delta}:[0,1]\rightarrow [0,1], \ \tilde \gamma_{\gamma,\delta}(t)=\text{mod}(\gamma(t) -\delta, 1), \ \delta \in [0,1] \text{ and } \gamma \in \Gamma_0\right\}. 
$$

However, such extensions introduce additional complexity both in modeling and estimation, and are not required to capture the dominant source of phase variability arising in contour data extracted from images. For this reason, we restrict our attention to the simpler class $\Gamma$, which provides an interpretable and tractable framework consistent with the data generation process. This perspective allows us to bridge practical contour extraction procedures with a principled statistical framework.

}

With this in mind, the rest of this section presents the process of determining the deformation variables $(\rho, \mathbf{O}_\theta, \mathbf{T}, \gamma_\delta)$. 

\subsection{Estimation of the deformation variables}
\label{st}

Although our ultimate goal is to analyze a sample of planar closed curves—that is, to work with $n$ realizations of the random function $\mathbf{C}$—the alignment procedure is carried out on a curve-by-curve basis. Specifically, for each curve, the translation and scaling variables can be directly determined based on the assumptions made on the shape $\mathbf{\tilde{C}}$, as will be discussed in Section~\ref{T_and_rho}. As we will see in Section~\ref{rot-par}, the rotation and reparametrization variables are defined relative to a template, which may depend on other curves in the sample. However, once this template has been specified, the alignment procedure can be applied independently to each curve. Before presenting the estimation procedure, we first introduce the assumed basis representation of $\mathbf{C}$.

\subsubsection{Representation of the contour using a basis expansion}
\label{estimation}

Due to technological limitations, functional variables are observed on discrete grids in practice. In our setting, this discretization is induced by the image resolution. A crucial first step in functional data analysis is therefore smoothing, which consists of reconstructing a continuous representation of each discretely observed curve. Working with such a continuous representation, rather than directly with pixel-level data, provides several advantages. In particular, it allows for the analysis of images with different resolutions and yields a representation of the shape that is significantly more parsimonious than the original binary image.

A common approach to smoothing is to use a basis function expansion \cite[Chap.~3]{ramsay2008}, i.e., to assume that $\mathbf{C}$ can be expressed as
\begin{equation} \label{C_fourrier}
\mathbf{C}(t)= \begin{pmatrix} X(t) \\ Y(t) \end{pmatrix}
=\sum_{k=0}^M  \begin{pmatrix} a_{k1} \\ a_{k2}\end{pmatrix} \psi_k(t), \quad t\in [0,1],
\end{equation}
where $\psi_0,\ldots,\psi_M \in L^2([0,1])$ are known basis functions, and $\boldsymbol{a}_{k}=(a_{k1},a_{k2})^\top\in \mathbb{R}^2$ are the corresponding coefficients. Reconstructing the full function $\mathbf{C}$ from discrete observations thus reduces to estimating the coefficients $\boldsymbol{a}_{k}$, which can be achieved, for instance, by minimizing a least-squares criterion.

The choice of basis functions $\psi_k$ depends on the application. In FDA, B-splines and Fourier basis functions are among the most commonly used options: B-splines are typically preferred for non-periodic data, whereas Fourier basis functions are well suited for periodic data \citep{ramsay2008}. In this work, we use Fourier basis functions due to the inherent periodicity of $\mathbf{C}$. Recall that the first $M+1$ Fourier basis functions are defined as follows. 
\begin{equation}\label{fourier}
\psi_0(t)=1, \ \psi_{k}(t)=\left\{ \begin{array}{ll}
\sqrt{2}\sin((k+1)\pi t) & \textrm{if $k$ odd,}   \\
\sqrt{2}\cos(k\pi t) & \textrm{if $k$ even,} \end{array} \right. 
\end{equation}
for $t\in [0,1]$, $k=1,\ldots,M$, with $M$ even. These functions are orthonormal and satisfy $\int_0^1 \psi_k(t)\,dt=0$ for all $k\ge 1$. For the remainder of the paper, $\psi_k$ denotes the $(k+1)$-th Fourier basis function defined in \eqref{fourier}.

A key advantage of this choice is the well-known \emph{time-shift property} of the Fourier basis, which, as we will show below, greatly simplifies the estimation of the reparametrization variable. This property is formalized in the following lemma, which describes the effect of reparametrization on the Fourier basis functions.

\begin{lem} Let $\boldsymbol{\psi}=(\psi_1,\ldots,\psi_M)^\top$ and define $
\boldsymbol{\psi}_{\delta}= (
    \psi_1 \circ \gamma_\delta, \ldots , \psi_M \circ \gamma_\delta 
)^\top,
$
with $\gamma_\delta \in \Gamma$. Then we have 
\begin{equation}
     \boldsymbol{\psi} = \boldsymbol{\beta}(\delta) \boldsymbol{\psi}_\delta,
\end{equation}
where 
$$
\boldsymbol{\beta}(\delta)=\begin{pmatrix}
    \mathbf{O}_{2\pi \delta}& {\bf 0} & \ldots & {\bf 0} \\ 
    {\bf 0} & \mathbf{O}_{4\pi \delta}& \ldots & {\bf 0} \\ 
    \vdots & \vdots & \ldots & \vdots \\ 
    {\bf 0} & {\bf 0} & \ldots & \mathbf{O}_{M\pi \delta}
\end{pmatrix} \in \mathbb{R}^{M \times M},$$
is a sparse-orthogonal matrix and $\mathbf{O}_{k\pi \delta}, k\in \{2,4,\ldots,M\}$ is the rotation matrix defined in \eqref{rot}. 
\label{lem_f_gamma}
\end{lem}
This result shows that reparametrization reduces to a simple linear transformation of the basis functions, where $\boldsymbol{\beta}(\delta)$ acts as a transfer matrix. Note that if $\delta \in \{0,1\}$, then $\boldsymbol{\beta}(\delta)$ is the identity matrix, and the effect of reparametrization vanishes.

\subsubsection{The translation and scaling transformations}\label{T_and_rho}

To determine the translation and scaling variables, we rely on the definition of the shape variable $\mathbf{ \tilde C}$ which is centered in $(0, 0)^\top$ and has unit norm. Direct computations, combined with the properties of the Fourier basis, yield
 \begin{eqnarray*}
&&\mathbf{T} = \int_0^1{\bf C}(t)dt=\int_0^1 \sum_{k=0}^M\boldsymbol{a}_{k} \psi_k(t) dt= \boldsymbol{a}_{0}, \\ 
&&\rho = ||{\bf C}-{\bf T}||_\mathcal{H} = \left \|\sum_{k=1}^M\boldsymbol{a}_{k} \psi_k\right\|_\mathcal{H}=\sqrt{\sum_{k=1}^M\norm{\boldsymbol{a}_k}_2^2},
\end{eqnarray*}
with $\|\cdot\|_2$ the $\ell^2$-vector norm.
These variables have a natural interpretation for analyzing an object's position within an image : the parameter $\mathbf{T}$ captures the spatial position of the object, while $\rho$ reflects its scale, which is related to its distance from the camera.

\par  

Let $\mathbf{C}^*$ be the variable obtained after removing the effects of translation and scaling from $\mathbf{C}$:

\begin{equation}\label{defXstar}
\displaystyle {\bf C}^*(t)= \frac{1}{\rho } \left(\mathbf{C}(t)- \mathbf{T}\right)=\sum_{k=1}^M\boldsymbol{\alpha}_{k} \psi_k(t)=\boldsymbol{\alpha}\boldsymbol{\psi}(t), \ t \in [0,1]
\end{equation}
with $\boldsymbol{\alpha}_k= \boldsymbol{a}_k/\rho \in \mathbb{R}^2$, for $k=1,\ldots,M$, and where $\boldsymbol{\alpha}=(\boldsymbol{\alpha}_1 , \boldsymbol{\alpha}_2 , \ldots , \boldsymbol{\alpha}_M) \in \mathbb{R}^{2\times M}$ and $\boldsymbol{\psi}=( \psi_1 , \psi_2 , \ldots , \psi_M)^\top \in (L^2([0,1]))^{M}$. This normalized representation will be used in the next section to estimate the rotation matrix $\mathbf{O}_\theta$ and the reparametrization function $\gamma_\delta$.
    
\begin{remark}By construction, $\mathbf{C}^*$ belongs to the same space as $\mathbf{\tilde{C}}$, namely $\mathbf{S}^\infty$.
In statistical shape analysis, $\mathbf{S}^\infty$ is called the "pre-shape space" (\cite{dryden1998}, \cite{larticle}) and it contains shapes that are possibly deformed by rotation and reparametrization. 
The "shape space", which could be denoted by $\mathbf{S}^\infty/ \sim$, represents the quotient space of $\mathbf{S}^\infty$ with the equivalence relation $\sim$ defined as 
\begin{equation}
    \boldsymbol{f} \sim \boldsymbol{g} \iff \exists (\theta, \gamma) \in [0,2\pi ]\times \Gamma \text{ such that } \boldsymbol{f}= \mathbf{O}_\theta \boldsymbol{g}\circ \gamma,
    \label{eq_class}
\end{equation}
where $\boldsymbol{f}, \boldsymbol{g} \in \mathbf{S}^\infty$. In other words, this relation says that $\boldsymbol{f} \sim \boldsymbol{g}$ if and only if $\boldsymbol{f}$ and $\boldsymbol{g}$ share the same shape.
\end{remark}

\subsubsection{The rotation and reparametrization transformations}
\label{rot-par}

Our strategy for estimating the rotation and reparametrization parameters consists in aligning the function $\mathbf{C}^*$ to a reference function $\boldsymbol{\mu} \in \mathbf{S}^\infty$. As in the univariate curve alignment setting, there is no ground truth for the choice of $\boldsymbol{\mu}$ \citep{marron2015}. This reference can either be specified a priori or estimated from the data.
A common choice, which we also adopt here, is to define $\boldsymbol{\mu}$ as the Fréchet mean of the shape variable, i.e. to define $\boldsymbol{\mu}$ as
\begin{equation} \label{def_frechet_mean}
\boldsymbol{\mu}= \mathbb{E}_d(\mathbf{\tilde C}) = \arg \min_{\boldsymbol{f}\in \mathbf{S}^\infty }\mathbb{E}[d(\boldsymbol{f},\mathbf{\tilde C})^2], 
\end{equation}
where $d$ denotes a suitable distance on $\mathbf{S}^\infty$. In \cite{FPCA_amp}, the authors consider an elastic distance $d$ and propose an iterative procedure to approximate the corresponding Fréchet mean in the univariate setting. Such an approach is necessary since direct computation of the Fréchet mean is not feasible in practice, as it depends on the (unknown) aligned shapes $\tilde{\mathbf{C}}$. In Section~\ref{estim_frechet_mean}, we propose a similar iterative strategy to estimate the Fréchet mean, using a distance $d$ that is invariant to rotation and reparametrization. For the theoretical developments that follow, we assume that $\boldsymbol{\mu}$ is known and can be expressed as
$$
\boldsymbol{\mu}(t)= \sum_{k=1}^M \boldsymbol{u}_k \psi_k(t) =  \boldsymbol{u} \boldsymbol{\psi}(t),
$$
with $\boldsymbol{u}=(\boldsymbol{u}_1 , \boldsymbol{u}_2 , \ldots , \boldsymbol{u}_M)\in \mathbb{R}^{2\times M}$.

\par 
By replacing $\mathbf{C}(t)$ in \eqref{defXstar} with its representation in \eqref{general_mod}, we obtain that $\mathbf{C}^*$ can be written as
\begin{equation}
    \mathbf{C}^*(t) = \mathbf{O}_\theta \mathbf{\tilde C}\circ \gamma_\delta(t)
    \label{x-star}, \ t\in [0,1].
\end{equation} 
Determining the rotation matrix $\mathbf{O}_\theta$ and the reparametrization function $\gamma_\delta$ that align $\mathbf{{C}}^*$ to $\boldsymbol{\mu}$ is equivalent to finding the parameters $\theta$ and $\delta$ that minimize the distance, measured in terms of the $\norm{\cdot}_\mathcal{H}$ norm, between $\mathbf{C}^* = \mathbf{O}_\theta \mathbf{\tilde C}\circ \gamma_\delta$ and $\boldsymbol{\mu}$. Equivalently, one might think of this problem as aligning the template $\boldsymbol{\mu}$ to the pre-shape $\mathbf{C}^*$, and thus finding the parameters $\theta$ and $\delta$ that minimize the distance between $\mathbf{O}_\theta\boldsymbol{\mu} \circ \gamma_\delta$ and $\mathbf{C}^*$. This leads to the optimisation problem 

\begin{equation}
\min_{(\theta,\delta) \in [0, 2\pi]\times [0,1]}\norm{\mathbf{O}_\theta\boldsymbol{\mu} \circ \gamma_\delta - \mathbf{C}^* }_\mathcal{H}^2.  
\label{princ}
\end{equation}

Using the fact that $\mathbf{O}_\theta^{-1}=\mathbf{O}_\theta^\top$ (since rotation matrices are orthogonal) and that the inverse of $\gamma_\delta \in \Gamma$ is $\gamma_{1-\delta}$ (as shown in the proof of Proposition \ref{prop_1}), we can recover the shape variable $\mathbf{\tilde C}$ once $(\hat{\theta}, \hat{\delta})$ have been obtained, via $\mathbf{\tilde C} = \mathbf{O}_{\hat{\theta}}^\top \mathbf{C}^* \circ \hat{\gamma}_{1-\hat{\delta} }$.

Note that from the isometry propriety of $\Gamma$ (Proposition \ref{prop_1}), it follows that
$$\norm{\boldsymbol{f} \circ\gamma-\boldsymbol{g}\circ\gamma}_\mathcal{H} =\norm{\boldsymbol{f}-\boldsymbol{g}}_\mathcal{H}, \ \text{ for } \boldsymbol{f}, \boldsymbol{g}\in \mathcal{H} \text{ and } \gamma \in \Gamma.$$
As in the univariate case \citep{FPCA_amp}, this invariance ensures that our estimators remain unchanged under a common reparametrization applied to both $\mathbf{C}$ and $\boldsymbol{\mu}$.

The next result builds on Lemma \ref{lem_f_gamma} to reformulate the objective function in \eqref{princ}, allowing it to be expressed in matrix form rather than in functional form.
\begin{prop} The optimization problem \eqref{princ} is equivalent to 
\begin{equation}
   \min_{(\theta,\delta)\in [0,2\pi] \times  [0,1]}\norm{\mathbf{O}_\theta\boldsymbol{u}\boldsymbol{\beta}(-\delta) - \boldsymbol{\alpha}}_F^2
    \label{princ_2}
\end{equation}
where $\norm{\cdot}_F$ is the Frobenius norm matrix and $\boldsymbol{\beta}(\cdot)$ the transfer matrix defined in Lemma \ref{lem_f_gamma}. 
\label{chang_b}
\end{prop} 
Problems \eqref{princ} and \eqref{princ_2} do not admit closed-form solutions. To address this issue, we propose a novel alternating optimization algorithm inspired by the iterative closest point (ICP) algorithm \citep{ICP-review}. The method operates directly in the functional space, without discretizing the curves, and is therefore well suited to the present framework. We refer to this procedure as the \textit{Iterative Closest Function} (ICF) algorithm, which is described in the next section. 

\subsubsection{The ICF algorithm}
\label{ficp}

Starting from an initial value of $\delta$, the ICF algorithm alternates between the following two steps until convergence.

\begin{enumerate}
    \item[(i)] \textbf{Estimation of $\theta$ for a given $\delta$} \\ 
       In this step, the estimator of ${\theta}$ is obtained by solving a Procrustes orthogonal problem: 
\begin{equation}
\hat \theta = \argmin_{\theta \in [0, 2\pi] }\norm{\mathbf{O}_\theta { \boldsymbol{\mu}}\circ {\gamma}_{{\delta}}-\mathbf{C}^*}_\mathcal{H}^2.
	\label{pbl_1}
\end{equation}
The multivariate version of this problem has been extensively studied (see, e.g., \cite{procrustes}). In the functional setting, by arguments analogous to those in the classical case, one can show that $\hat{\theta}$ belongs to the set $\{\theta_1, \theta_2\}$, where $\theta_1$ and $\theta_2$ are the two solutions to the following equation:
$$
\tan({\theta_k})=\langle \mathbf{C}^*, \boldsymbol{\mu}\circ \gamma_{{\delta}} \rangle_{\mathcal{H}}^{-1} \left\{ \langle X^*,\mu_1\circ \gamma_{\delta} \rangle_{L^2} - \langle Y^*,\mu_2\circ \gamma_{\delta} \rangle_{L^2}\right\}, k=1,2,
$$
where $ \boldsymbol{\mu}=(\mu_1,\mu_2)^\top$ and $\mathbf{ C}^*=(X^*,Y^*)^\top$. Once $\theta_1$ and $\theta_2$ have been computed, we evaluate the objective function in \eqref{pbl_1} at both values and select $\hat{\theta}$ as the one yielding the smallest value.

\item[(ii)] \textbf{Estimation of $\delta$ for a given $\theta$} \\
When the rotation angle $\theta$ is known, problem \eqref{princ_2} reduces to the following optimization problem:
\begin{equation}
    \hat{\delta}= \argmin_{\delta\in [0, 1]}\norm{\mathbf{O}_\theta\boldsymbol{u}\boldsymbol{\beta}(-\delta) - \boldsymbol{\alpha}}_F^2.
	\label{pb2}
\end{equation}
Unlike step (i), this problem does not generally admit a closed-form solution. Instead, we solve it numerically using the result stated in the following proposition.
\begin{prop} \label{prop} 
The solution $\hat{\delta}$ of problem \eqref{pb2} satisfies the equation  
\begin{equation}
    \sum_{k\in \{1, 3, \ldots, M-1\} }w_{k}^{1,\theta} \sin( (k+1) \pi \hat{\delta} ) = \sum_{k\in \{1, 3, \ldots, M-1\} }w_{k}^{2,\theta} \cos( (k+1) \pi \hat{\delta}), 
    \label{optim_prob}
\end{equation}
where the coefficients $w_{ k}^{1,\theta}$ and $w_{ k}^{2,\theta}$  are derived from the matrix  $(\mathbf{O}_\theta\boldsymbol{u})^\top \boldsymbol{\alpha}$.  
\end{prop}
For clarity, the exact expressions of the coefficients $(w_{k}^{l, \theta})_{k,l}$ and the proof of the proposition are provided in Appendix \ref{delta_sol}.

We solve equation \eqref{optim_prob} with respect to $\hat{\delta}$ using a bisection method. This approach yields accurate results, as demonstrated in the numerical experiments of Section \ref{sims}.
\end{enumerate}

\begin{remark}
The performance of the proposed ICF algorithm depends on the initialization of $\delta$ in step (i). We therefore recommend running the algorithm with multiple initial values and selecting the pair $(\hat{\theta}, \hat{\delta})$ that minimizes the objective function in \eqref{princ}.
\end{remark}

\subsection{Estimation of the Fréchet mean} \label{estim_frechet_mean}

The Fréchet mean introduced in \eqref{def_frechet_mean} depends on the choice of a distance on $\mathbf{S}^\infty$. In this work, motivated by \eqref{princ} and Proposition \ref{prop_1}, we define the function $d:\mathbf{S}^\infty\times\mathbf{S}^\infty\rightarrow \mathbb{R}^+$ as
$$
d(\boldsymbol{f}, \boldsymbol{g})=\min_{\delta\in [0,1], \ \theta \in [0, 2\pi]}\norm{\boldsymbol{f}\circ \gamma_\delta- \mathbf{O}_\theta \boldsymbol{g}}_\mathcal{H}.
$$
This quantity can be interpreted as a Procrustes-type distance \citep{dryden1998}, as it is invariant to rotation and reparametrization:
$$
d(\boldsymbol{f}, \boldsymbol{g})=d(\mathbf{O}_{\theta_1}\boldsymbol{f}\circ\gamma_{\delta_1}, \mathbf{O}_{\theta_2}\boldsymbol{g}\circ\gamma_{\delta_2}), \quad \forall \delta_1,\delta_2 \in [0,1], \forall \theta_1,\theta_2  \in [0, 2\pi].
$$ 
In particular, this implies that $d(\mathbf{C}^*,\mathbf{\tilde C})= 0$, and more generally that $d$ is not a proper distance on $\mathbf{S}^\infty$, since the separability property does not hold:
$$ 
d(\boldsymbol{f}, \boldsymbol{g})= 0 \centernot\implies \boldsymbol{f}= \boldsymbol{g}, \quad \text{for } \boldsymbol{f},\boldsymbol{g} \in \mathbf{S}^\infty.
$$
However, $d$ defines a proper distance when considered on the shape space $\mathbf{S}^\infty/\sim$ rather than on the pre-shape space $\mathbf{S}^\infty$ { (see Lemma~\ref{lem} in appendix)}.
Plugging the distance $d$ into \eqref{def_frechet_mean}, one can show that
$$ 
\boldsymbol{\mu}= \mathbb{E}_d(\mathbf{\tilde C}) = \argmax_{\boldsymbol{f}\in \mathbf{S}^\infty }\mathbb{E}\left[\langle \mathbf{\tilde C}, \boldsymbol{f} \rangle_{\mathcal{H}}\right]
$$

We now describe how to estimate $\boldsymbol{\mu}$ from a sample of $n$ shapes $\mathbf{\tilde C}_1, \ldots, \mathbf{\tilde C}_n$. We assume that each shape admits a Fourier basis expansion:
$$
\mathbf{\tilde C}_i(t)= \sum_{k=1}^M \tilde{\boldsymbol{a}}_{i,k} \psi_k(t)=  \mathbf{\tilde A}_i \boldsymbol{\psi}(t), \quad  t \in [0,1], \ i=1,\ldots,n,
$$
where $ \mathbf{\tilde A}_i\in \mathbb{R}^{2\times M}$ is the corresponding coefficient matrix. We define the estimator of the Fréchet mean as $\hat{\boldsymbol{\mu}}= \hat{\boldsymbol{u}} \boldsymbol{\psi},$
where 
\begin{equation}
    \hat{\boldsymbol{u}}= \argmax_{\boldsymbol{u}\in \mathbb{R}^{2\times M}} \frac{1}{n} \sum_{i=1}^n \langle \mathbf{\tilde A}_i, \boldsymbol{u}\rangle_F ,
\label{frech_mean}
\end{equation}
and $\langle \cdot, \cdot \rangle_F$ denotes the Frobenius inner product.

The optimization problem in \eqref{frech_mean} is non-convex and does not admit a closed-form solution, and must therefore be solved numerically. This problem is structurally similar to the Fréchet mean estimation problem encountered in classical statistical shape analysis, for which a numerical solution is proposed in \citet{R-shapes}. We rely on their implementation to solve \eqref{frech_mean}, using the general-purpose nonlinear optimization function \textit{nlm} from the \texttt{stats} package in R \citep{R-stats}. Additional details on the implementation are provided in our publicly available code (\url{https://github.com/imoindjie/Shape-FDA}).

\subsection{Iterative estimation of shapes and of the Fréchet mean}
\label{sec_iter_algo}

From the previous sections on alignment using the ICF algorithm and on Fréchet mean estimation, it follows that there is a strong interdependence between shape estimation and mean estimation: the Fréchet mean is required to align the shapes, while aligned shapes are in turn needed to estimate the Fréchet mean. Consequently, poor estimates at either stage may adversely affect the overall procedure. To address this issue, we propose an iterative algorithm that alternates between shape alignment and Fréchet mean estimation.

The complete estimation pipeline, from raw data to aligned shapes and the Fréchet mean, is summarized below.

\begin{itemize}
\item \textbf{Input:}
 \begin{itemize} 
 \item $\{(\mathbf{C}_i(t_{i1}),\ldots,\mathbf{C}_i(t_{iJ_i}))\}_{i=1}^n$: observed curves on a discrete grid
 \item $M$: number of Fourier basis functions
 \item $\xi$: convergence threshold
 \end{itemize}

\item \textbf{Algorithm:}
\begin{enumerate}
    \item \textbf{Smoothing} \\
    For each $i\in \{1,\ldots,n\}$, estimate the coefficient vectors $\boldsymbol{a}_{i,k}$, $k=0,\ldots,M$, defined in \eqref{C_fourrier}, by minimizing a least-squares criterion based on the discretized observations $(\mathbf{C}_i(t_{i1}),\ldots,\mathbf{C}_i(t_{iJ_i}))$. The resulting smoothed curve is
    $$
    \hat{\mathbf{C}}_i(t)=\sum_{k=0}^M \hat{\boldsymbol{a}}_{i,k} \psi_k(t), \quad t\in[0,1].
    $$

    \item \textbf{Centering and normalization} \\
    For each $i\in \{1,\ldots,n\}$, define the estimated pre-shape as
    $$
    \hat{\mathbf{C}}^*_i(t) = \frac{1}{\hat \rho_i}\left(\hat{\mathbf{C}}_i(t)-\hat{\mathbf{T}}_i \right)
    =\sum_{k=1}^M \hat{\boldsymbol{\alpha}}_{i,k} \psi_k(t),
    $$
    where $\hat{\mathbf{T}}_i=\hat{\boldsymbol{a}}_{i,0}$, $\hat{\rho}_i=\sqrt{\sum_{k=1}^M \|\hat{\boldsymbol{a}}_{i,k}\|_2^2}$, and $\hat{\boldsymbol{\alpha}}_{i,k}= \hat{\boldsymbol{a}}_{i,k} / \hat{\rho}_i, k=1,\ldots,M$.

    \item \textbf{Iterative alignment and Fréchet mean estimation} \\
    Initialize $\hat{\boldsymbol{\mu}}$ by randomly selecting one pre-shape from $\{\hat{\mathbf{C}}^*_1,\ldots,\hat{\mathbf{C}}^*_n\}$. Then repeat the following steps until convergence, i.e., until $\eta \leq \xi$:
    
    \begin{itemize} 
        \item[a.] For each $i\in \{1,\ldots,n\}$, estimate the rotation angle $\theta_i$ and the reparametrization parameter $\delta_i$ using the ICF algorithm, described in Section \ref{ficp}, with $\hat{\boldsymbol{\mu}}$ as the template. Define the estimated shape as
        $$
        \hat{\tilde{\mathbf{C}}}_i =  \mathbf{O}_{\hat{\theta}_i}^\top \hat{\mathbf{C}}^*_i \circ \gamma_{1-\hat{\delta}_i}.
        $$

        \item[b.] Using the shapes $\hat{\tilde{\mathbf{C}}}_1,\ldots, \hat{\tilde{\mathbf{C}}}_n$, compute an estimate $\hat{\boldsymbol{\mu}}$  of the Fréchet mean as described in Section~\ref{estim_frechet_mean}. Then compute the distance-based variance
        $$
        \eta=\frac{1}{n} \sum_{i=1}^n \left(\cos^{-1}\left(\langle \hat{\tilde{\mathbf{C}}}_i, \hat{\boldsymbol{\mu}}\rangle_{\mathcal{H}} \right)\right)^2.
        $$
    \end{itemize}
\end{enumerate}
\end{itemize}

\color{black}

\section{Modeling contours with a PCA-based approach}
\label{gen_mod}

In the previous section, we introduced a procedure to estimate the deformation parameters associated with a planar curve and to recover its underlying shape. As a result, each curve $\mathbf{C}$ can be decomposed into a shape component and a set of deformation variables, namely $(\tilde{\mathbf{C}}, \rho, \theta, \delta, \mathbf{T})$.

We now address the problem of modeling the variability of these quantities. In order to preserve interpretability, we propose to model the shape and the deformation variables separately. This strategy is motivated by the fact that these two sources of variability play fundamentally different roles and may be of independent interest in applications.

Following this idea, and inspired by phase–amplitude modeling approaches of \cite{FPCA_amp}, we introduce the two random variables
$$
{\bf Z}_1 = \rho \tilde{\mathbf{C}} \in \mathcal{H}, 
\quad \text{and} \quad 
{\bf Z}_2 = \left( 
\tan\left(\frac{\pi}{2} \left(\delta - \frac{1}{2}\right)\right),
\tan\left(\frac{1}{4} \left(\frac{\theta}{2} - \pi\right)\right),
\mathbf{T}^\top
\right)^\top \in \mathbb{R}^4.
$$
The variable ${\bf Z}_1$ is a linear transformation of $(\tilde{\mathbf{C}}, \rho) \in \mathbf{S}^\infty \times \mathbb{R}^+$ that preserves all the information while allowing us to work in the Hilbert space $\mathcal{H}$. This is convenient since the pre-shape space $\mathbf{S}^\infty$ does not naturally admit well-defined notions of mean or covariance. Similarly, we work with ${\bf Z}_2 \in \mathbb{R}^4$ instead of $(\delta, \theta, \mathbf{T})$, as the chosen transformations map bounded or periodic variables into an unconstrained Euclidean space, making standard multivariate techniques applicable. Note that both transformations are invertible.

\medskip

Since ${\bf Z}_1$ is a bivariate functional variable, we model it using multivariate functional principal component analysis (MFPCA) \citep{jacques2014model}, which yields
\begin{equation} 
\label{eq_Z1}
\mathbf{Z}_1(t) \simeq \mathbb{E}[\mathbf{Z}_1(t)] + \sum_{k=1}^{M_1} \xi_k^{(1)} \boldsymbol{\phi}_k(t), 
\quad t \in [0,1],
\end{equation}
where $\boldsymbol{\phi}_1, \ldots, \boldsymbol{\phi}_{M_1} \in \mathcal{H}$ are the eigenfunctions of the covariance operator of $\mathbf{Z}_1$. For ${\bf Z}_2$, we use standard multivariate PCA, leading to
\begin{equation}
\label{eq_Z2}
\mathbf{Z}_2\simeq \mathbb{E}[\mathbf{Z}_2] + \sum_{k=1}^{M_2} \xi_k^{(2)} \mathbf{U}_k,
\end{equation}
where $\mathbf{U}_1, \ldots, \mathbf{U}_{M_2} \in \mathbb{R}^4$ are the eigenvectors of the covariance matrix of $\mathbf{Z}_2$. For $j=1,2$, the scores $\{\xi_k^{(j)}\}$ are centered and uncorrelated, with
$$
\mathbb{E}(\xi_k^{(j)} \xi_l^{(j)}) =
\begin{cases}
\lambda_k^{(j)} & \text{if } k = l, \\
0 & \text{otherwise},
\end{cases}
$$
where $\lambda_k^{(j)}$ denotes the $k$-th eigenvalue of the covariance operator of $\mathbf{Z}_j$.

\medskip

A possible alternative would be to perform a single PCA on the joint variable $({\bf Z}_1, {\bf Z}_2)$ using methods for hybrid data (see, e.g., Chap.~10 of \cite{ramsay2008}). While such an approach may lead to a more compact representation, it obscures the interpretation of the principal components. In particular, it becomes difficult to disentangle whether a given mode of variation is driven by shape or by deformation effects. By contrast, the proposed approach preserves this separation, thereby providing a more interpretable representation. This is especially advantageous in applications such as anomaly detection or classification, where identifying the source of variability is of primary importance.

\subsection{A generative model for contours}
\label{gen-sec}

The representations \eqref{eq_Z1}--\eqref{eq_Z2} provide a finite-dimensional parametrization of both shape and deformation variability through the score vectors $\boldsymbol{\xi}^{(1)}$ and $\boldsymbol{\xi}^{(2)}$. To obtain a fully probabilistic model, it remains to specify a distribution for these scores.

A natural and widely used choice is to assume that the concatenated score vector
$$
\boldsymbol{\xi} = (\xi_1^{(1)}, \ldots, \xi_{M_1}^{(1)}, \xi_1^{(2)}, \ldots, \xi_{M_2}^{(2)}) \in \mathbb{R}^{M_1 + M_2}
$$
follows a multivariate Gaussian distribution with mean zero and covariance matrix
$$
\boldsymbol{\Sigma} =
\begin{pmatrix}
\boldsymbol{\Sigma}_1 & \boldsymbol{\Sigma}_{1,2} \\
\boldsymbol{\Sigma}_{1,2}^\top & \boldsymbol{\Sigma}_2
\end{pmatrix},
$$
where $\boldsymbol{\Sigma}_j = \mathrm{diag}(\lambda_1^{(j)}, \ldots, \lambda_{M_j}^{(j)})$ and $\boldsymbol{\Sigma}_{1,2}$ captures possible dependencies between shape and deformation scores.

\medskip

This specification defines a generative model for planar curves through the following procedure:
\begin{enumerate}
    \item Sample $\boldsymbol{\xi} \sim \mathcal{N}(0, \boldsymbol{\Sigma})$;
    \item Construct ${\bf Z}_1$ and ${\bf Z}_2$ using \eqref{eq_Z1} and \eqref{eq_Z2};
    \item Construct $\mathbf{C}$ from ${\bf Z}_1$ and ${\bf Z}_2$ via
$$
\mathbf{C}= \mathbf{O}_{\,8\tan^{-1}(Z_{22}) + 2\pi}\,
\mathbf{Z}_1 \circ \gamma_{\frac{2}{\pi}\tan^{-1}(Z_{21}) + \frac{1}{2}}
+ (Z_{23}, Z_{24})^\top.
$$
\end{enumerate}

It is important to note that the construction step 2 relies on the mean functions and the eigenstructures of the covariance operators of $\mathbf{Z}_1$ and $\mathbf{Z}_2$. In practice, these quantities are unknown and are replaced by their empirical estimates obtained from the sample. The resulting model is therefore a plug-in generative model, where population quantities are approximated by their empirical counterparts.

This construction provides a flexible and interpretable model for random planar curves, in which shape and deformation variability are explicitly represented while allowing for dependence between them through the joint distribution of the scores.

\section{Simulation study}
\label{sims}

In this section, we present a simulation study to assess the performance of the proposed alignment methodology. Since translation and scaling can be directly recovered from the Fourier coefficients, we focus on the more challenging task of estimating the rotation and reparametrization parameters, namely $\theta$ and $\gamma$. In particular, we evaluate the ability of the alignment procedures to recover the underlying shapes from their associated pre-shapes.

To this end, we compare the proposed ICF algorithm with two widely used alignment methods: the iterative closest point algorithm (ICP; see \cite{ICP-review}) and the elastic alignment approach of \cite{larticle}, based on the square-root velocity function (SRVF) framework. A key distinction between these methods lies in the representation of the curves. Both ICP and SRVF-based approaches operate directly on discretely observed curves, whereas ICF first reconstructs a continuous representation using a Fourier basis and performs alignment in the corresponding functional space. 

Since our goal is to compare the alignment procedures themselves, we use a common fixed template throughout this section instead of the Fréchet mean, whose estimation would introduce an additional source of variability.

\subsection{Data simulation}

We generate a sample of pre-shapes $\mathbf{C}_i^*, \ i=1,\ldots,n$, according to
$$
\mathbf{C}_i^*(t) = \mathbf{O}_{\theta_i} \, \mathbf{\tilde C}_i \circ \gamma_{\delta_i}(t), \quad t \in [0,1],
$$
where the reparametrization parameters are sampled as $\delta_i \stackrel{i.i.d}{\sim} \mathcal{U}(0,1)$, and the rotation angles are defined by $\theta_i = 2\pi \delta_i$.

Inspired by \cite{heart}, the shape variables are constructed as $\mathbf{\tilde C}_i = {\bf H}_i/\| {\bf H}_i\|_{\mathcal{H}}$, where ${\bf H}_i$ is a heart-shaped parametric curve defined by
\begin{equation}
{\bf H}_i(t) =
\begin{pmatrix}
b_{i0} \sin^3(\pi(2t-1)) \\
\sum_{k=1}^4 b_{ik} \cos(k\pi(2t-1))
\end{pmatrix}, \quad t \in [0,1],
\label{hearth}
\end{equation}
with $(b_{i0}, \ldots, b_{i4}) \stackrel{i.i.d}{\sim} \mathcal{N}((16,13,-5,-2,-1)^\top/16, \sigma^2 \mathbf{I}_5)$. Sampling these coefficients induces variability in the shapes $\mathbf{\tilde C}_i$, while $\sigma$ controls the magnitude of this variability. We consider two levels of shape variability, $\sigma \in \{0.01, 0.1\}$.

We define the template as the normalized mean curve,
$$
\boldsymbol{\mu}=\mathbb{E}({\bf H}_i) /\|\mathbb{E}({\bf H}_i)\|_{\mathcal{H}},
$$
with a reference parametrization such that $\delta=0$ corresponds to the pointed end of the heart.

The first row of Figure~\ref{mean_heart_sig} displays the template $\boldsymbol{\mu}$ and its coordinate functions. The second and third rows show observations $\mathbf{\tilde c}_i$ of simulated shapes $\mathbf{\tilde C}_i$ for $\sigma=0.01$ and $\sigma=0.1$, respectively, illustrating the impact of $\sigma$ on shape variability.

\begin{figure}[H]
    \centering
    \begin{tabular}{c c c}
    \includegraphics[width=0.16\linewidth]{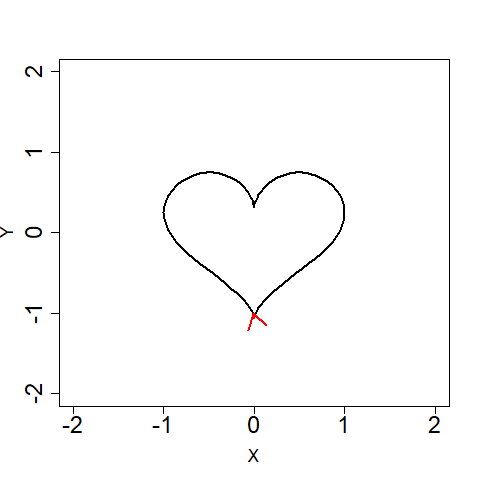}&
    \includegraphics[width=0.16\linewidth]{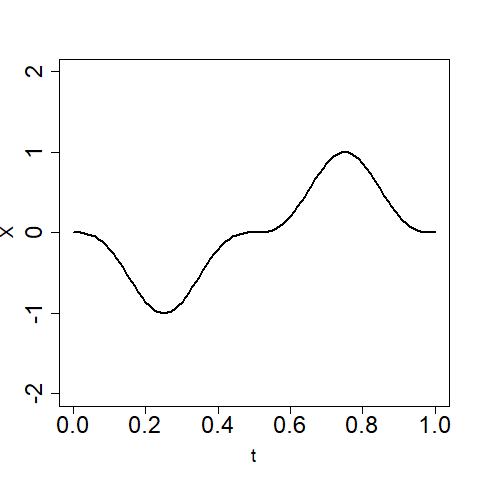} & 
    \includegraphics[width=0.16\linewidth]{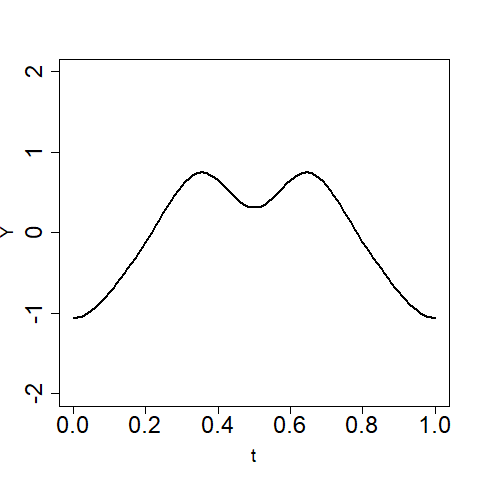} \\
    \includegraphics[scale=.16, align=c]{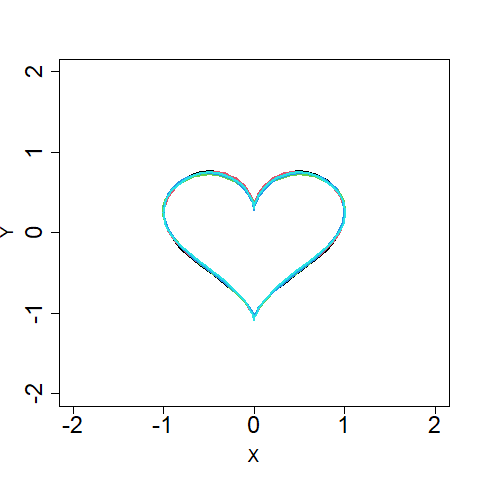} &  \includegraphics[scale=.16, align=c]{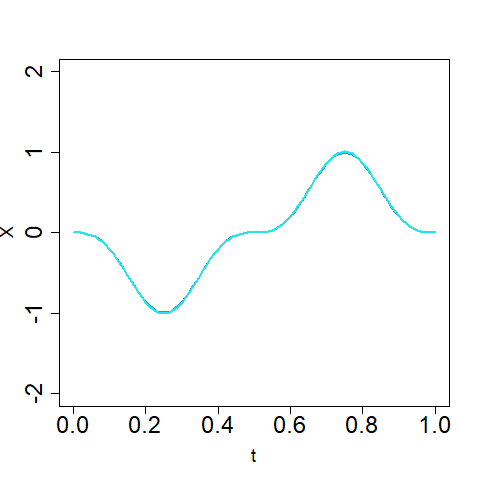} & \includegraphics[scale=.16, align=c]{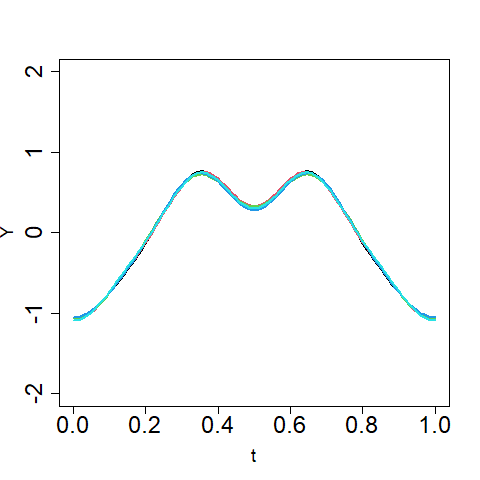} \\
          \includegraphics[scale=.16, align=c]{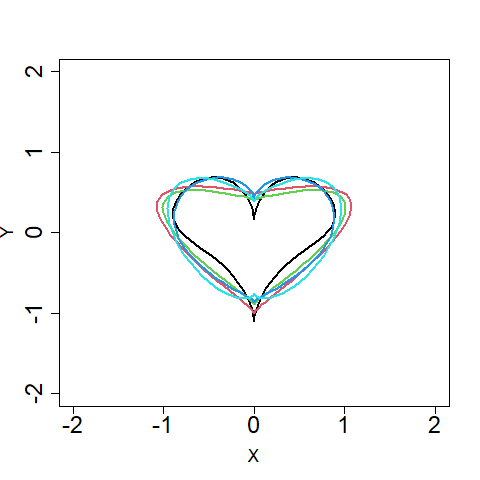} &  \includegraphics[scale=.16, align=c]{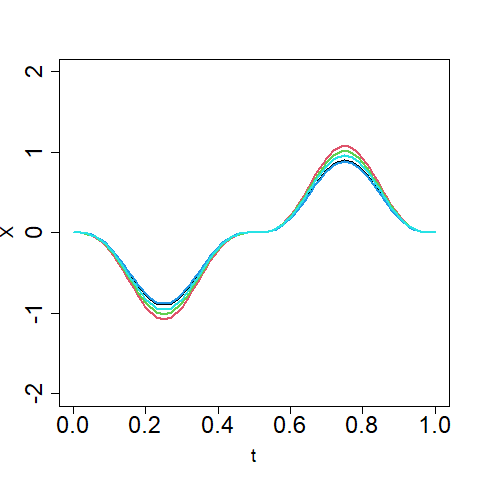} & \includegraphics[scale=.16, align=c]{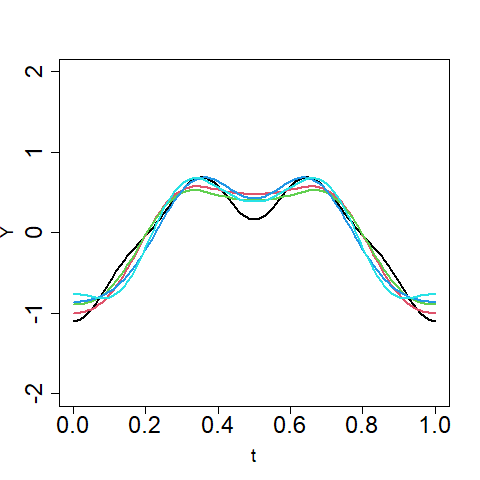} 
    \end{tabular}
    \caption{Template $\boldsymbol{\mu}$ and its coordinate functions (first row), and observations of simulated shapes and their coordinate functions for $\sigma=0.01$ (second row) and $\sigma=0.1$ (third row)
    }
    \label{mean_heart_sig}
\end{figure}

Figure~\ref{x-star-x} displays representative simulated pre-shapes $\mathbf{c}_i^*$ for $\sigma=0.1$, together with their coordinate functions.


\begin{figure}[H]
    \centering
    \begin{tabular}{c c c c}
    \includegraphics[scale=0.16]{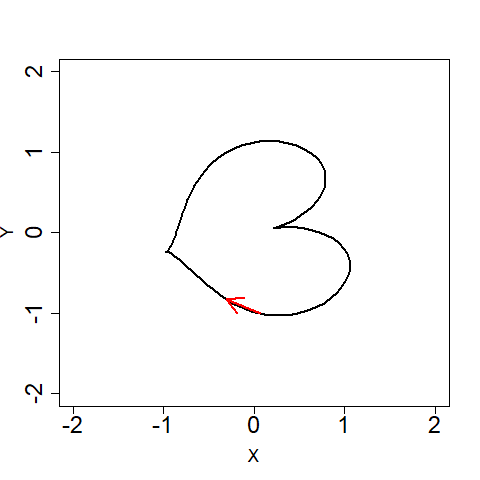}& 
     \includegraphics[scale=0.16]{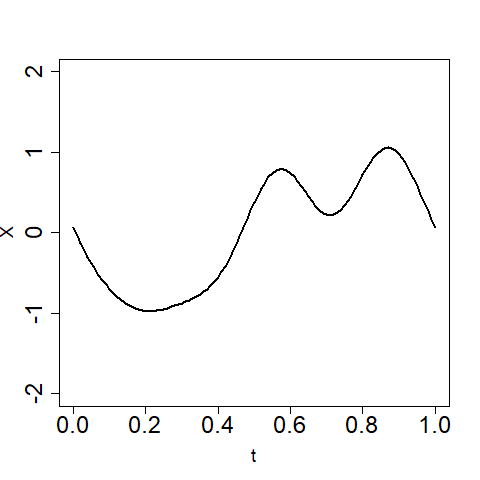} & 
     \includegraphics[scale=0.16]{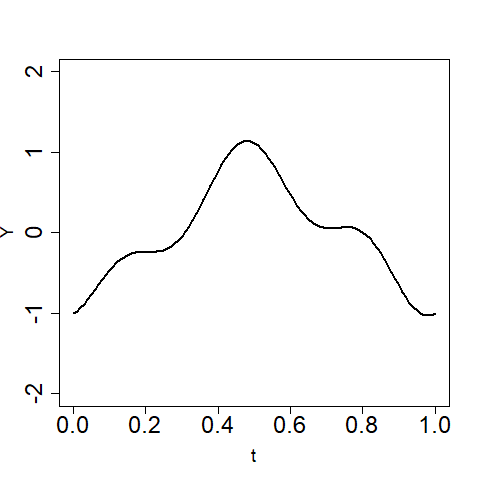}  \\
    \includegraphics[scale=0.16]{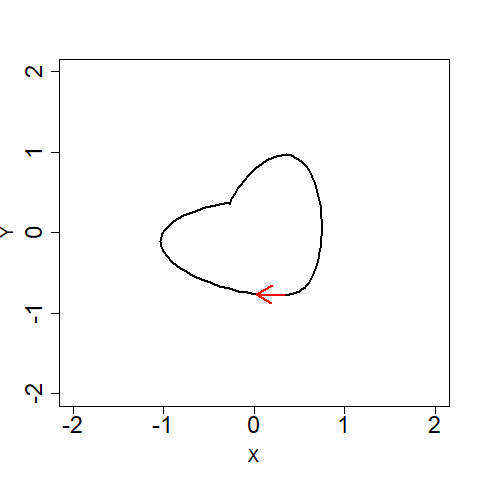}& 
     \includegraphics[scale=0.16]{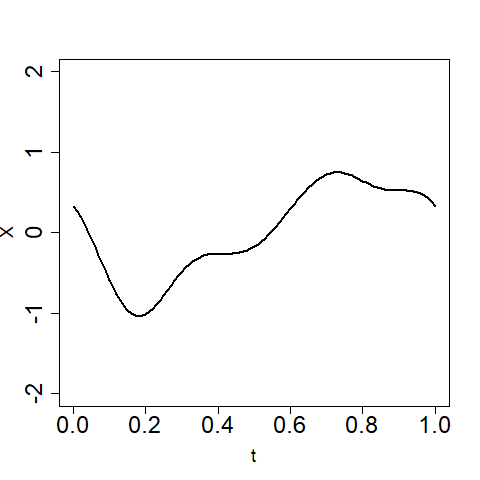} & 
     \includegraphics[scale=0.16]{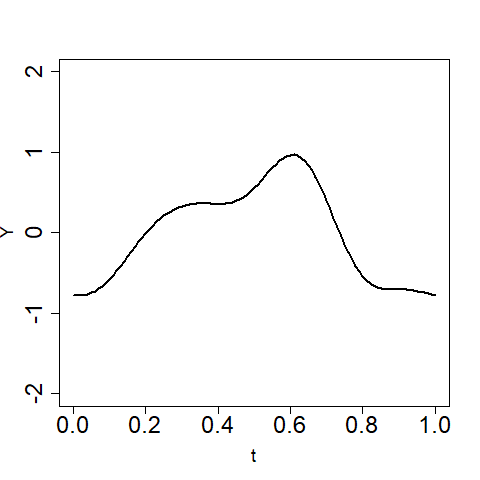}  \\ 
       \includegraphics[scale=0.16]{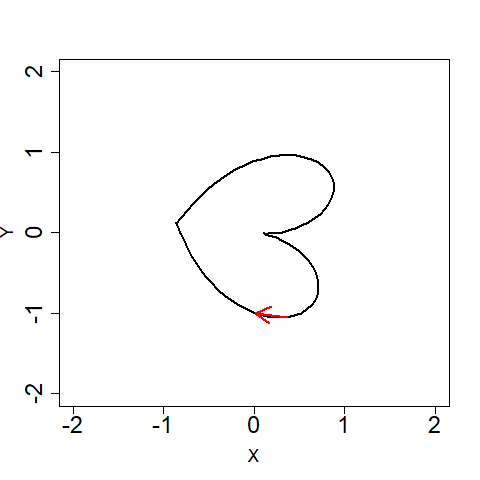}& 
     \includegraphics[scale=0.16]{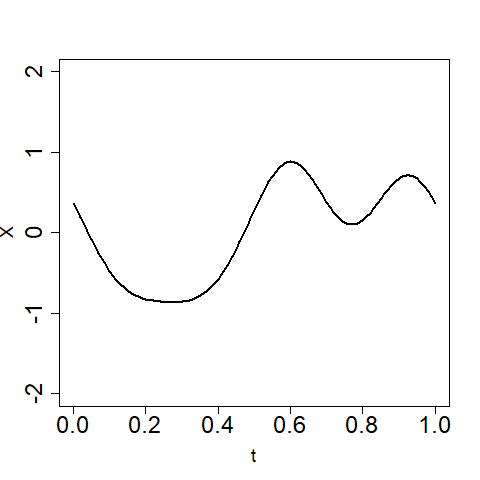} & 
     \includegraphics[scale=0.16]{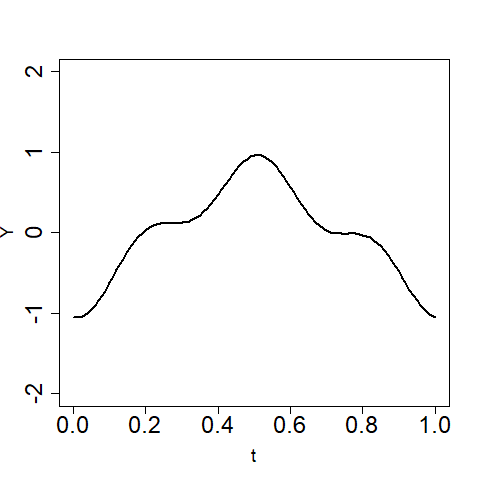}  
    \end{tabular}    
\caption{Observations of simulated pre-shapes and their coordinate functions for $\sigma=0.1$.
}
    \label{x-star-x}
\end{figure}

For each value of $\sigma$, we generate a dataset $\mathbf{c}_1^*, \ldots, \mathbf{c}_n^*$ of size $n=200$. Each pre-shape $\mathbf{c}_i^*$ is then discretized by sampling $J_i=J$ points $\{\mathbf{c}_i^*(t_{i,j})\}_{j=1}^{J}$, where the sampling locations $t_{i,j}$ are independently drawn from a discrete uniform distribution over $\{0.01, \ldots, 0.99\}$, with endpoints fixed at $t_{i,1}=0$ and $t_{i,J}=1$. To assess the impact of discretization, we consider $J \in \{15, 50, 90\}$.

These simulation settings allow us to evaluate the effect of both shape variability and sampling resolution on the performance of the alignment methods.

\subsection{Numerical results}

We apply the three alignment procedures, namely ICF, ICP, and SRVF-based alignment, to each simulated dataset. As discussed previously, the ICF method relies on a functional representation of the curves. Consequently, the discretized observations are first smoothed according to the reconstruction procedure described in Section~\ref{sec_iter_algo}. In all experiments, this smoothing step is performed using $M=10$ Fourier basis functions. We use in-house implementations for both the ICF and ICP algorithms, while the SRVF-based alignment is carried out using the \texttt{fdasrvf} package \cite{fdasrvf}, provided by the original authors of the method.

Figure~\ref{fig:comp} illustrates alignment results for one pre-shape observation $\mathbf{c}^*$ of the dataset simulated with $\sigma=0.1$. On top of the figure we display the template function, the observation $\mathbf{c}^*$ and the true reparametrization and shape functions, $\gamma$ and $\tilde{\mathbf{c}}$, associated with it. Results are shown for two sampling resolutions $J$. For each resolution, we display the discretized pre-shape, and the estimated reparametrization and shape functions, $\hat{\gamma}$ and $\hat{\tilde{\mathbf{c}}}$, produced by each method. The objective is to recover the correct starting point of the curve and align the resulting shape with the reference template $\boldsymbol{\mu}$.

\begin{figure}
    \centering
    \begin{tabular}{c c | c c  c }
    $\boldsymbol{\mu}$ &   \vline& $\mathbf{c}^*$ &$\gamma$ & $\tilde{\mathbf{c}}$  \\
    \includegraphics[align=c, width=0.15\linewidth]{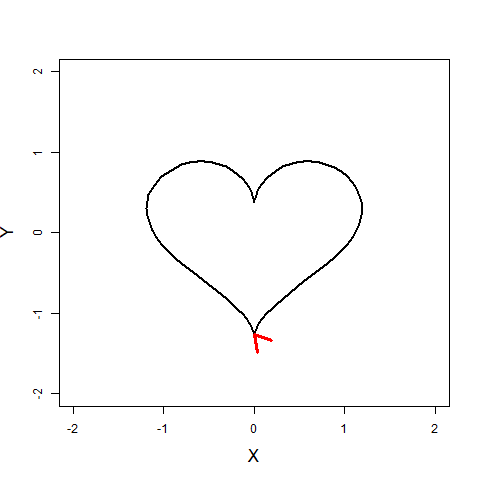} & \vline &  \includegraphics[align=c, width=0.15\linewidth]{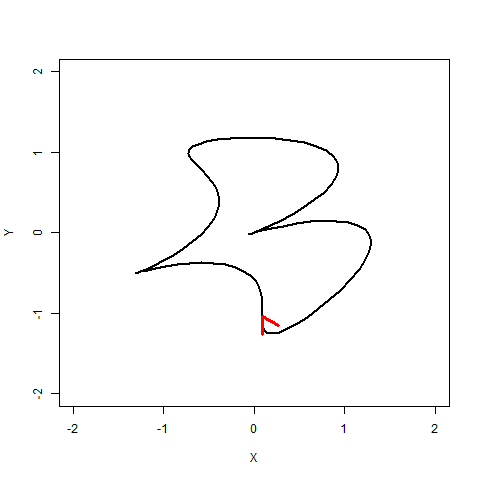} &\includegraphics[align=c, width=0.15\linewidth]{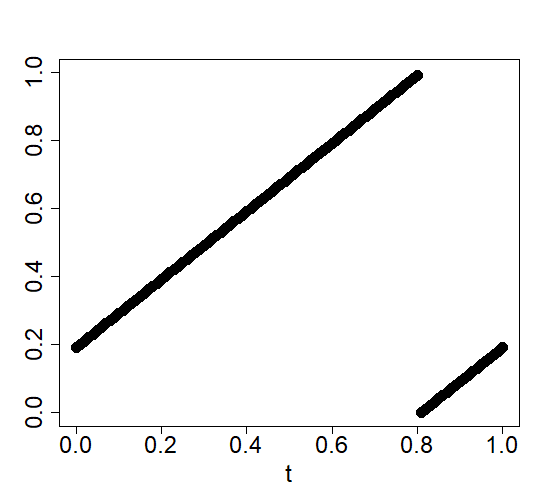} &  \includegraphics[align=c, width=0.15\linewidth]{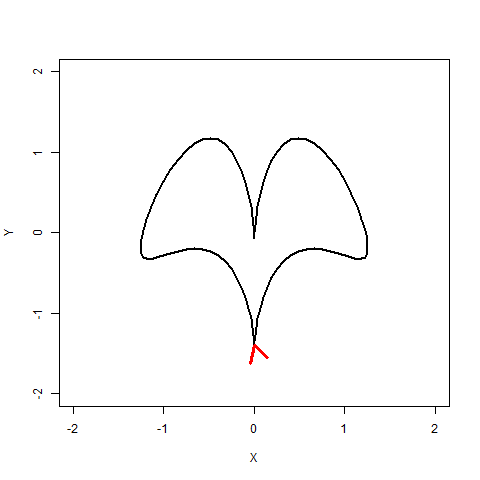} 
    \end{tabular}
    \begin{tabular}{c  c c |  c c }
    \hline
    \hline
& \multicolumn{2}{c }{$\{\mathbf{c}^*(t_{j})\}_{j=1}^{J}, \ J=15$} & \multicolumn{2}{c }{$\{\mathbf{c}^*(t_{j})\}_{j=1}^{J}, \ J=90$} \\
               &\multicolumn{2}{c }{\includegraphics[align=c, width=0.15\linewidth]{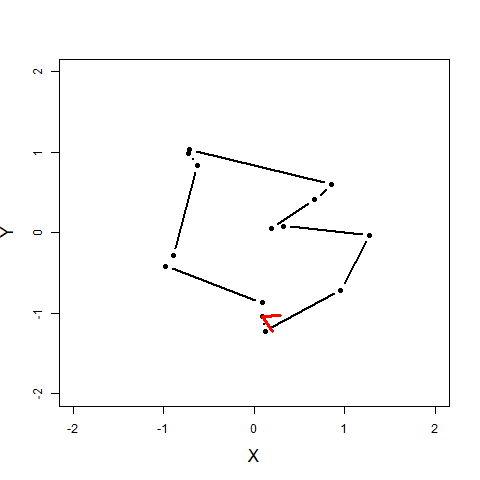}} &  \multicolumn{2}{c }{\includegraphics[align=c, width=0.15\linewidth]{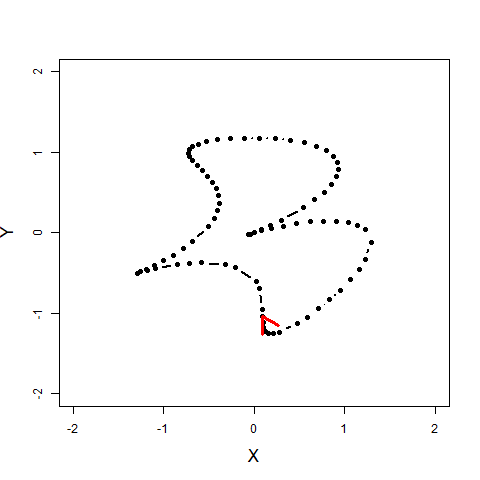}}  \\
             & $\hat{\gamma}$ &  $\hat{\tilde{\mathbf{c}}}$ & $\hat{\gamma}$ &  $\hat{\tilde{\mathbf{c}}}$ \\
            \text{ICF}  & \includegraphics[align=c, width=0.15\linewidth]{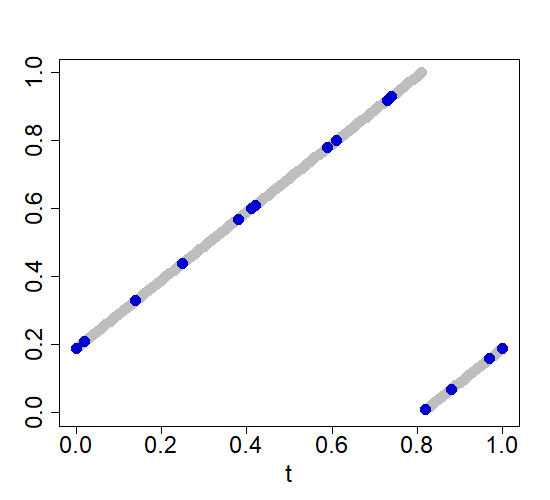} &  \includegraphics[align=c, width=0.15\linewidth]{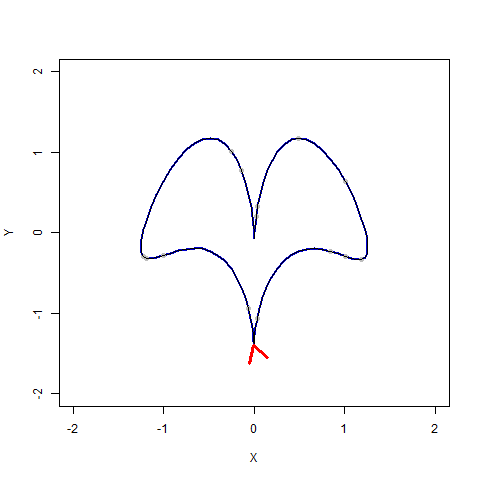} & \includegraphics[align=c, width=0.15\linewidth]{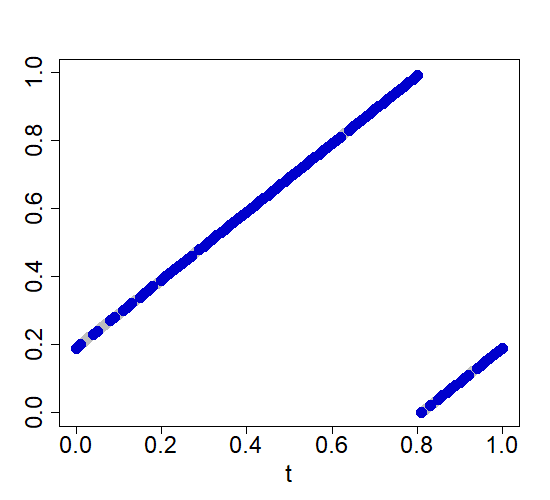} &\includegraphics[align=c, width=0.15\linewidth]{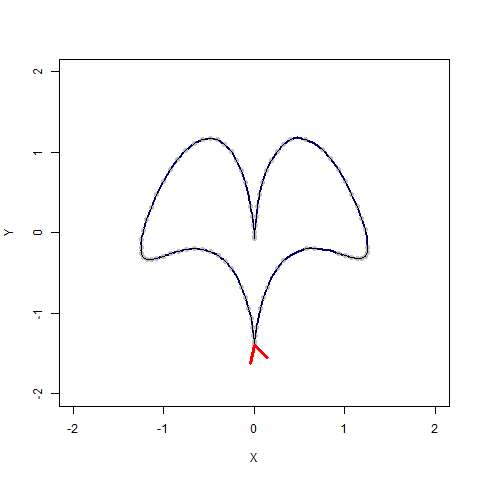} \\
            \text{ICP} &  

            \includegraphics[align=c, width=0.15\linewidth]{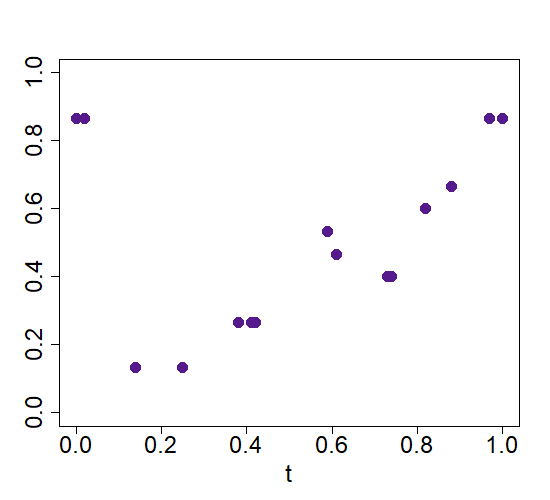} &  \includegraphics[align=c, width=0.15\linewidth]{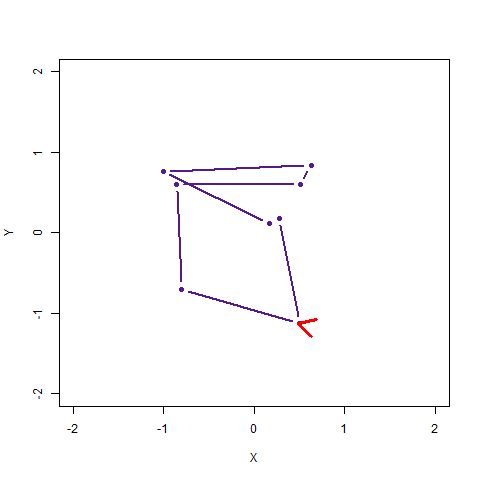} &  \includegraphics[align=c, width=0.15\linewidth]{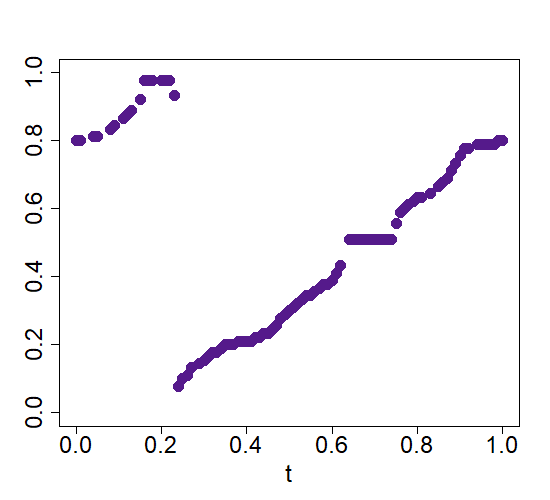} & \includegraphics[align=c, width=0.15\linewidth]{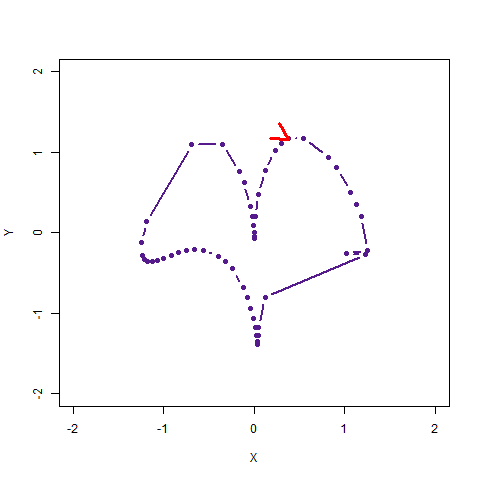}          
            \\
            \text{SRVF}  & 

        \includegraphics[align=c, width=0.15\linewidth]{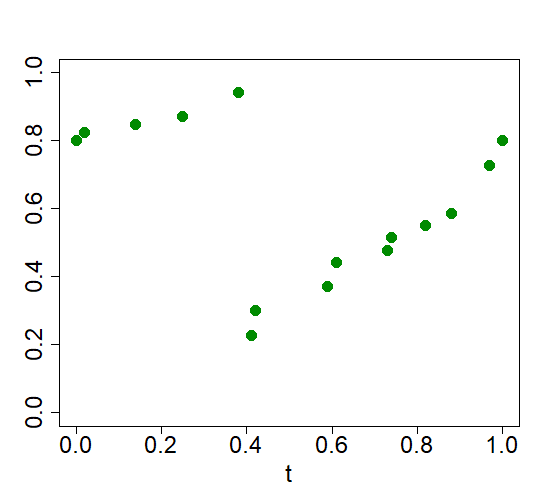} & \includegraphics[align=c, width=0.15\linewidth]{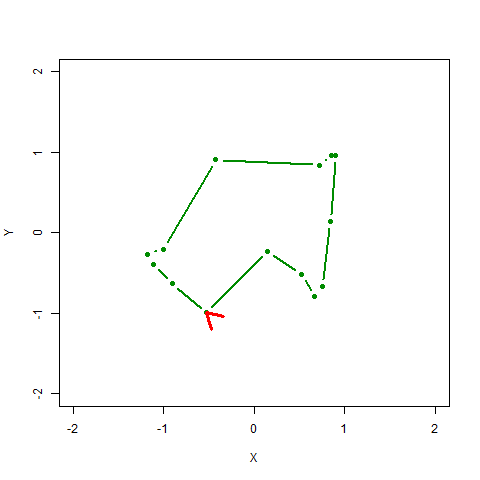}&  \includegraphics[align=c, width=0.15\linewidth]{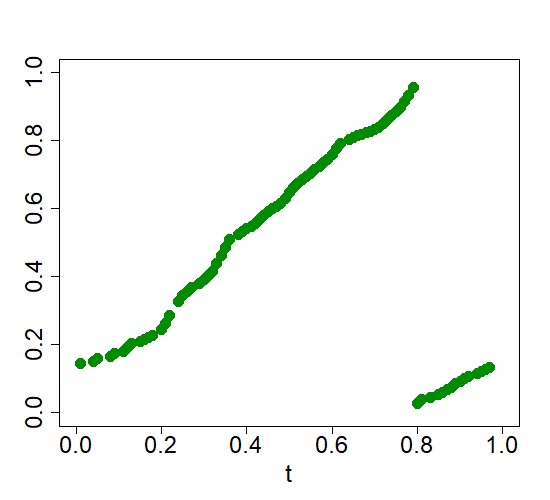} & \includegraphics[align=c, width=0.15\linewidth]{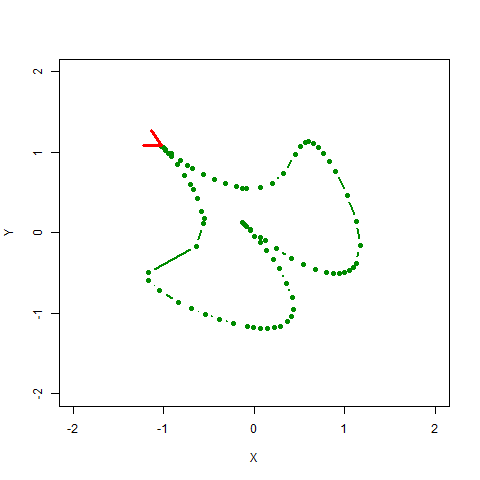}  
            
    \end{tabular}

\caption{Example of a simulated pre-shape curve $\mathbf{c}^*$ together with its associated reparametrization function $\gamma$ and shape function $\tilde{\mathbf{c}}$ (top row) under $\sigma=0.1$. Results are shown for two sampling resolutions $J$. For each alignment method, we display the estimated reparametrization function $\hat{\gamma}$ and the estimated shape $\hat{\tilde{\mathbf{c}}}$ obtained after aligning the discretized pre-shape observations $\{\mathbf{c}^*(t_j)\}_{j=1}^{J}$ (second row) to the template $\boldsymbol{\mu}$ (top left).}
    \label{fig:comp}
\end{figure}


Several important differences between the methods can be observed from Figure~\ref{fig:comp}. First, ICP is designed for unordered point clouds and therefore does not preserve the original parametrization of the curves. During the alignment process, point correspondences may be reassigned arbitrarily, which can substantially distort the underlying shape. This phenomenon is particularly visible when the number of sampled points is small ($J=15$).


The figure also highlights the additional flexibility provided of elastic alignment methods. Although this flexibility can be advantageous in general settings, it may lead the SRVF-based procedure to favor elastic deformations instead of correctly identifying the starting point $\delta$. In such cases, the shape can become artificially distorted rather than properly registered to the template. This effect is again more pronounced for small values of $J$.

Finally, we observe that the SRVF-based method may also struggle to accurately estimate the rotation parameter. This effect is present in both scenarios displayed in Figure~\ref{fig:comp}. While the exact cause of this behavior is not fully understood, it may be related to the joint estimation of rotation and reparametrization in elastic alignment frameworks, where these two sources of variability can partially compensate for one another.

To quantitatively assess the accuracy of the estimated deformation parameters, we compute the following mean squared errors (MSE). Since rotations are periodic, the error associated with $\theta$ is measured through its representation on the unit circle: 

$$ 
\text{MSE}_\theta=\frac{1}{n} \sum_{i=1}^n \norm{  \begin{pmatrix}
    \cos(\theta_i)\\ 
    \sin(\theta_i)
\end{pmatrix}-\begin{pmatrix}
    \cos(\hat{\theta}_i)\\ 
    \sin(\hat{\theta}_i)
\end{pmatrix} }_2^2,
$$
while the error associated with the reparametrization function is defined as
$$ 
\text{MSE}_{\gamma} = \frac{1}{n} \sum_{i=1}^{n} \frac{1}{J}\sum_{j=1}^{J} \left( {\gamma}_{\delta_{i}}(t_{i,j})-\hat{\gamma}_{i}(t_{i,j})\right)^2,
$$
where $\hat{\gamma}_i$ denotes the estimated reparametrization function for the $i$-th curve. 

\begin{table}[ht]
    \centering
      \begin{tabular}{c c | c c c | c c c c }
    $\sigma$ & $J$ & \multicolumn{3}{c}{\text{MSE}$_\gamma$ } & \multicolumn{3}{c}{\text{MSE}$_\theta$ }\\
    & & ICF & ICP& SRVF & ICF & ICP& SRVF  \\ \hline
 0.01 & 15 & 0.03 & 0.15 & 0.14 & 0.40 & 1.15 & 1.97 \\ 
  0.01 & 50 & 0.03 & 0.18 & 0.04 & 0.31 & 0.59 & 2.01 \\ 
  0.01 & 90 & 0.04 & 0.17 & 0.02 & 0.46 & 0.31 & 2.04 \\ 
 \hline 
   0.10 & 15 & 0.03 & 0.15 & 0.15 & 0.32 & 1.28 & 1.75 \\ 
   0.10 & 50 & 0.02 & 0.18 & 0.05 & 0.26 & 0.36 & 1.65 \\ 
   0.10 & 90 & 0.03 & 0.18 & 0.04 & 0.35 & 0.37 & 1.58 
    \end{tabular}
    \caption{Mean squared errors associated with the estimation of the reparametrization function $\gamma$ and the rotation parameter $\theta$ for different values of $\sigma$ and $J$, obtained with the three alignment methods.}
    \label{res-sim}
\end{table}


Table~\ref{res-sim} summarizes the MSE values obtained under all simulation scenarios. A clear difference between ICF and ICP can be observed across all settings. As already mentioned, while ICF explicitly preserves the ordering of points along the curve, ICP operates on unordered point clouds and may alter point correspondences during alignment. This lack of structural constraint leads to substantially larger errors for ICP, particularly for the estimation of $\gamma$ and, when $J$ is small, for the estimation of $\theta$ as well.

The comparison between ICF and the SRVF-based alignment is more nuanced for the estimation of $\gamma$. When the sampling resolution is low ($J=15$), ICF clearly outperforms SRVF. As the number of sampled points increases, however, the performance of the SRVF method improves substantially and becomes comparable to, or slightly better than, that of ICF.

In contrast, a noticeably different behavior is observed for the estimation of the rotation parameter $\theta$. Across all scenarios, the SRVF-based method exhibits significantly larger errors than both ICF and ICP, with MSE values remaining consistently high regardless of the sampling resolution. This observation supports the qualitative behavior already noted in Figure~\ref{fig:comp}. As mentioned previously, this effect may stem from the joint estimation of rotation and reparametrization in elastic alignment frameworks, where the two deformation components can partially compensate for one another. However, a more thorough investigation would be required to precisely identify the origin of this effect.


\section{Real data analysis} 
\label{app}

In this section, we analyze datasets from the \textit{MPEG-7} database\footnote{\url{https://dabi.temple.edu/external/shape/MPEG7/dataset.html}}, which has been widely used in statistical shape analysis (see, for example, \citet{larticle}, \citet{lelivre}, \citet{datasets}). This database contains binarized images of $101$ different object classes, each with approximately 20 instances. 

In our study, we focus on five representative classes: butterflies, forks, bats, horseshoes, and spoons. For clarity of presentation, we report detailed results for the butterfly and fork datasets, while the results for the remaining classes are provided in Appendix~\ref{appendix}. Figure~\ref{binary} displays sample images for these two classes. As can be observed, the images exhibit variability in rotation, scaling, and translation, leading to differences in orientation, size, and positioning.

\begin{figure}[H]
	\centering 
\begin{tabular}{ c c c c c }
	\includegraphics[align=c, width=2cm]{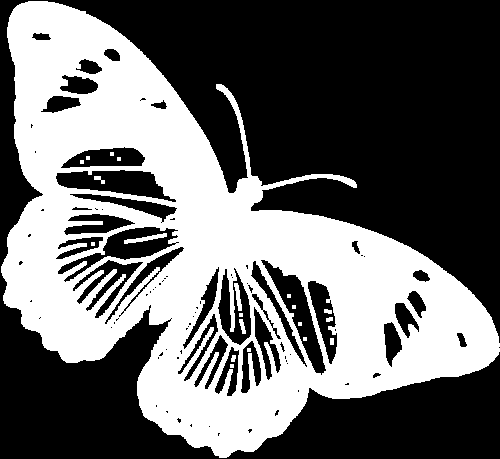}& \includegraphics[align=c, width=2cm]{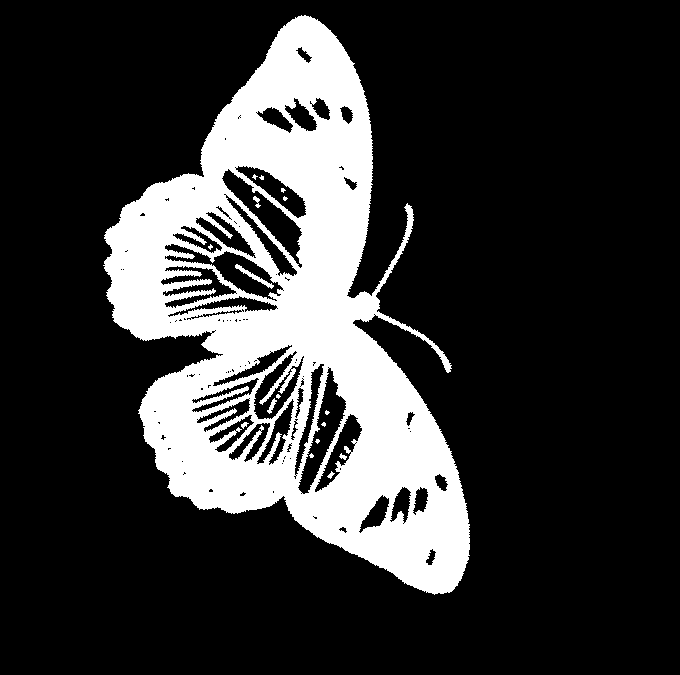} & \includegraphics[align=c, width=2cm]{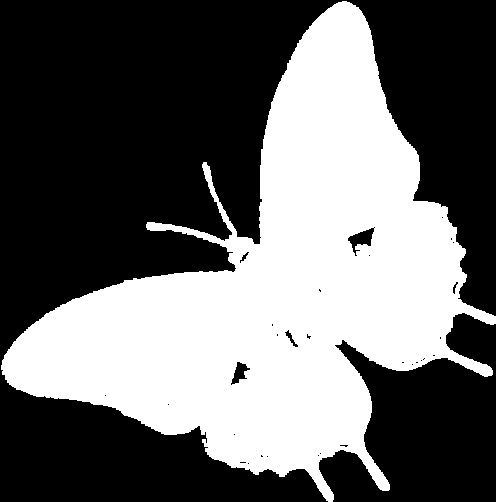}& \includegraphics[align=c, width=2cm]{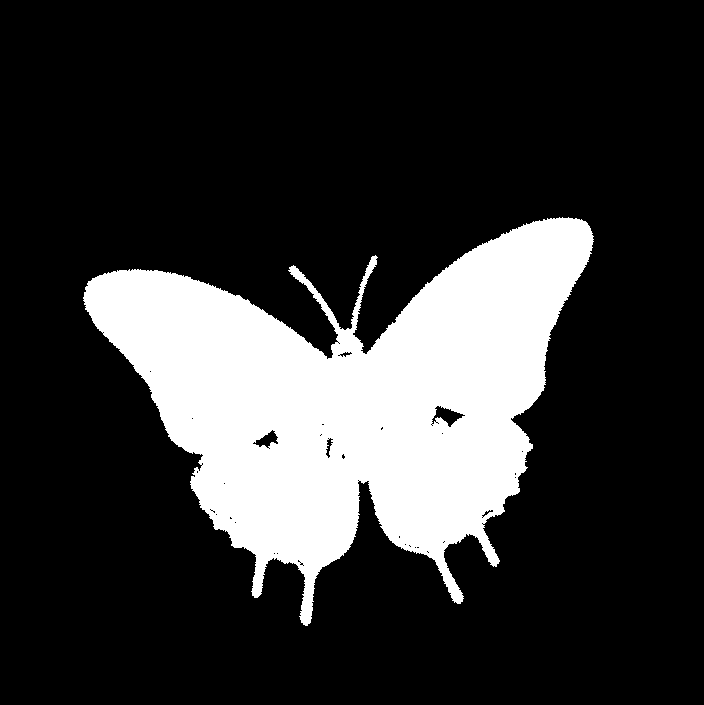}& \includegraphics[align=c, width=2cm]{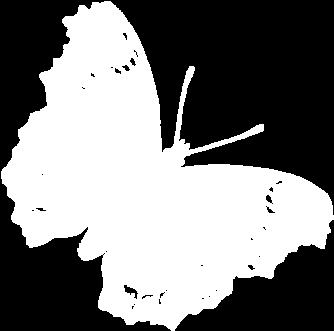} \\ 
	 \includegraphics[align=c, height=2cm]{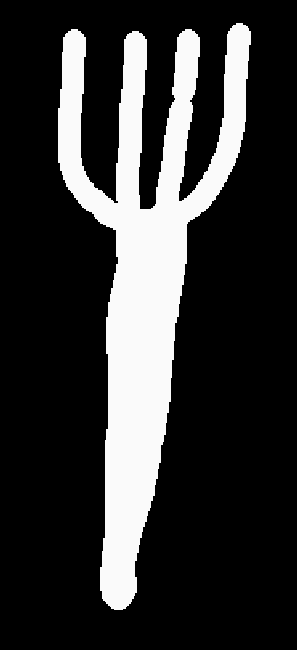}& \includegraphics[align=c, width=2cm]{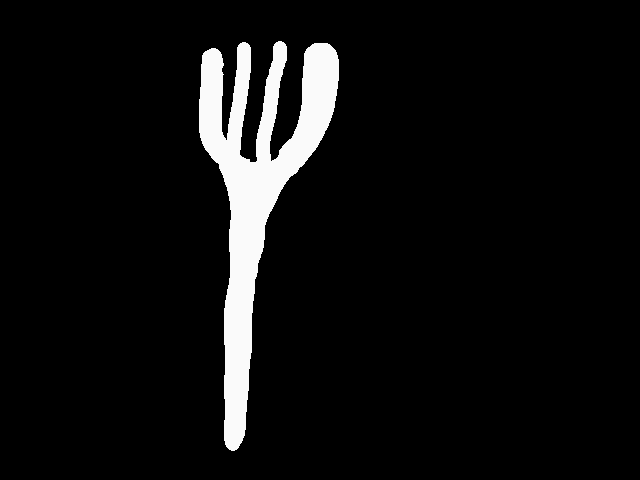} & \includegraphics[align=c, width=2cm]{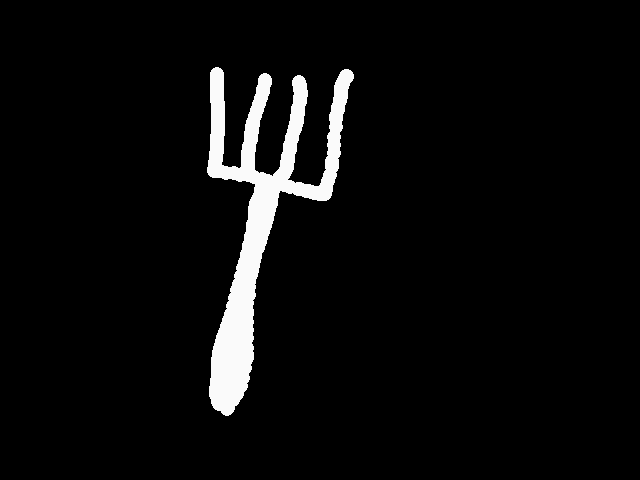}& \includegraphics[align=c, width=2cm]{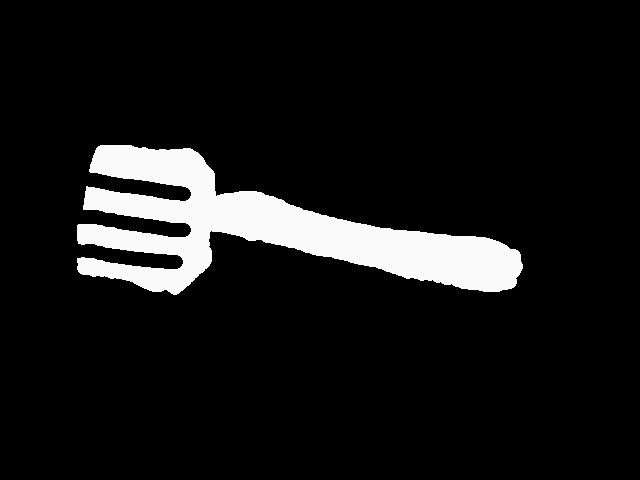}& \includegraphics[align=c, width=2cm]{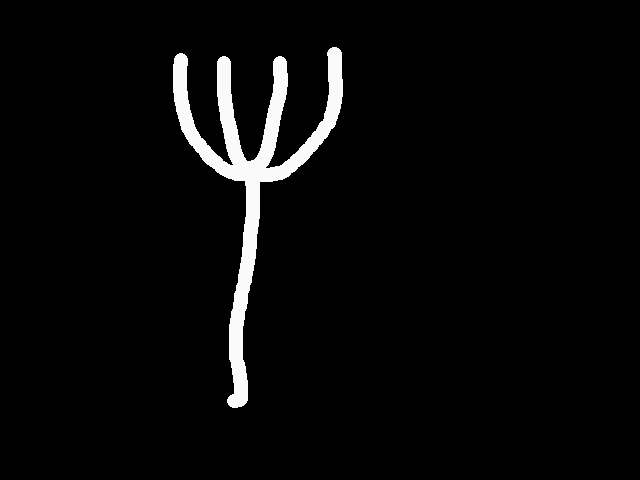}
\end{tabular}
\caption{Examples of images from the database for the butterfly and fork objects.}
\label{binary}
\end{figure}

Before applying our methodology, planar curves must be extracted from the binarized images. We perform this step using the OpenCV library in Python \citep{opencv_library}. Figure~\ref{curves} illustrates the resulting closed planar curves corresponding to the images in Figure~\ref{binary}. 

The extraction process introduces variability in the parametrization of the curves, as the starting points differ across observations, particularly for the fork shapes. In addition, the distribution of deformation variables appears to vary across object classes. For instance, butterfly shapes exhibit substantial variability in orientation, whereas fork shapes are mostly aligned in a similar direction. This suggests that the distribution of certain deformation parameters, such as rotation, may depend on the object class. More generally, this highlights that the deformations applied to a shape are not independent of the shape itself.

\begin{figure}[H]
	\centering 
\begin{tabular}{c}
\begin{tabular}{ c c c c c  }
	\includegraphics[align=c, height=2.5cm]{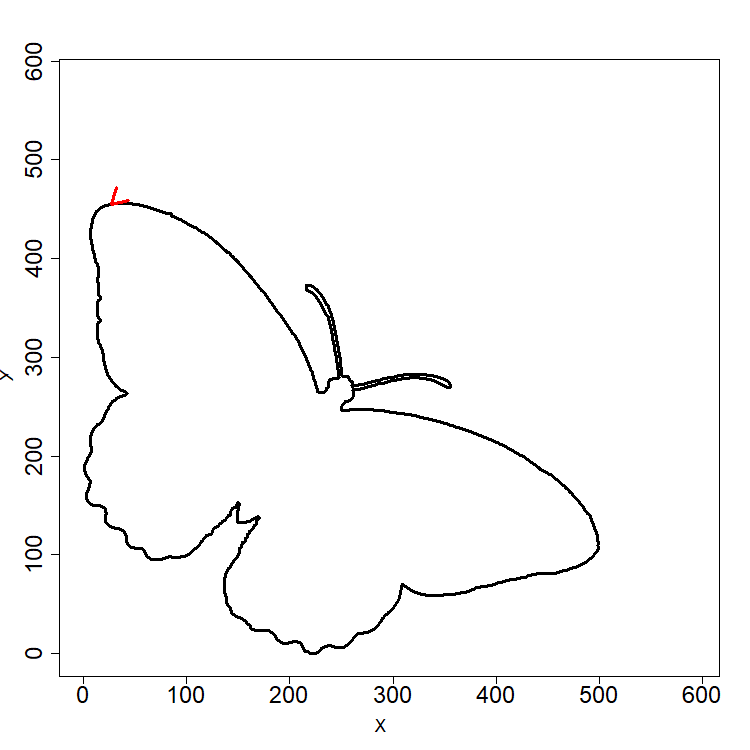}& \includegraphics[align=c, height=2.5cm]{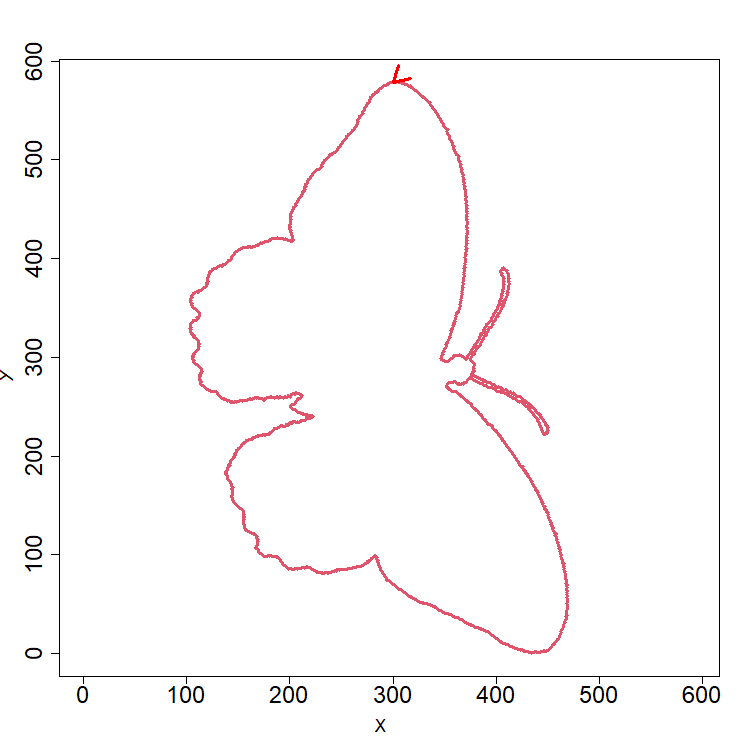} & \includegraphics[align=c, height=2.5cm]{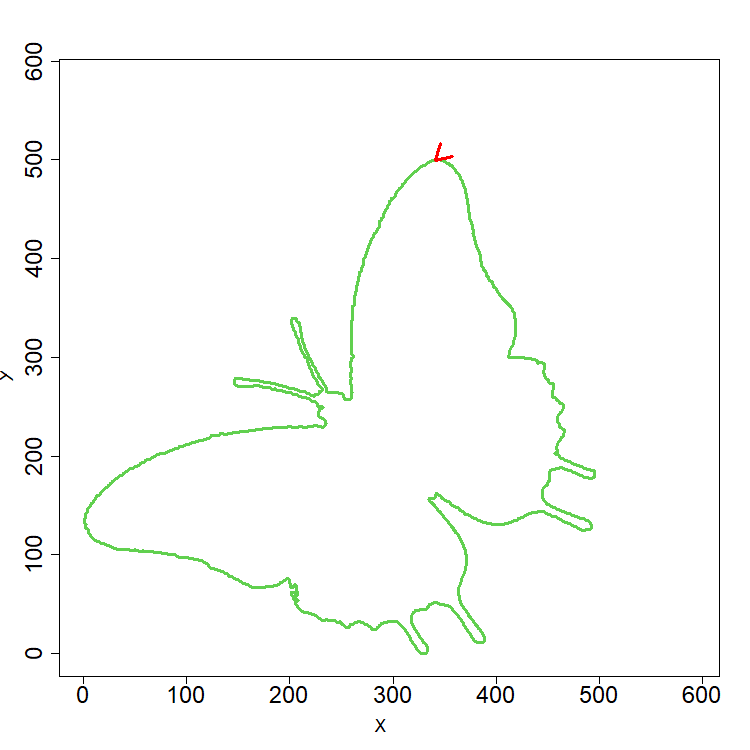}& 
    \includegraphics[align=c, height=2.5cm]{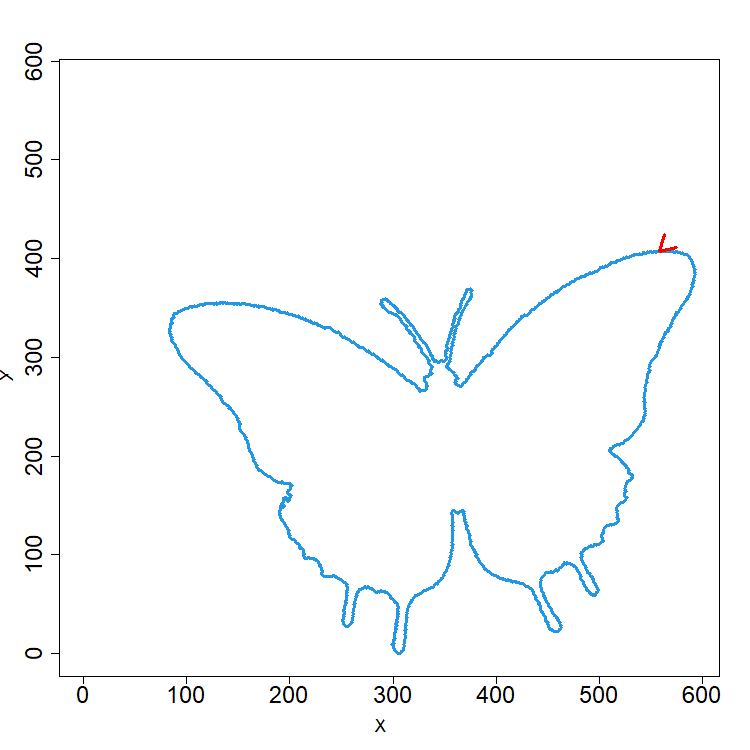}& \includegraphics[align=c, height=2.5cm]{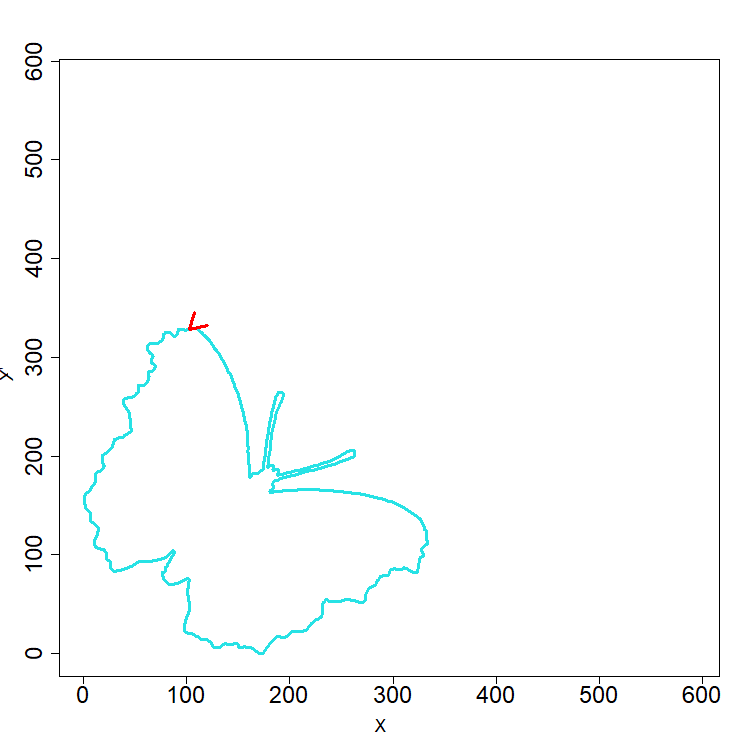} \\
\end{tabular} \\ 
\begin{tabular}{c c}
    \includegraphics[align=c, height=3cm]{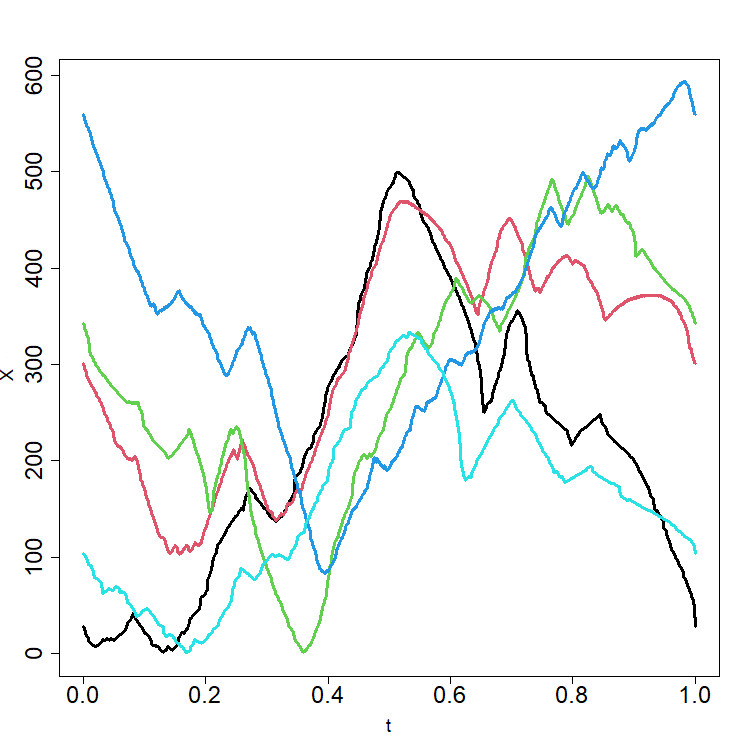} &
     \includegraphics[align=c, height=3cm]{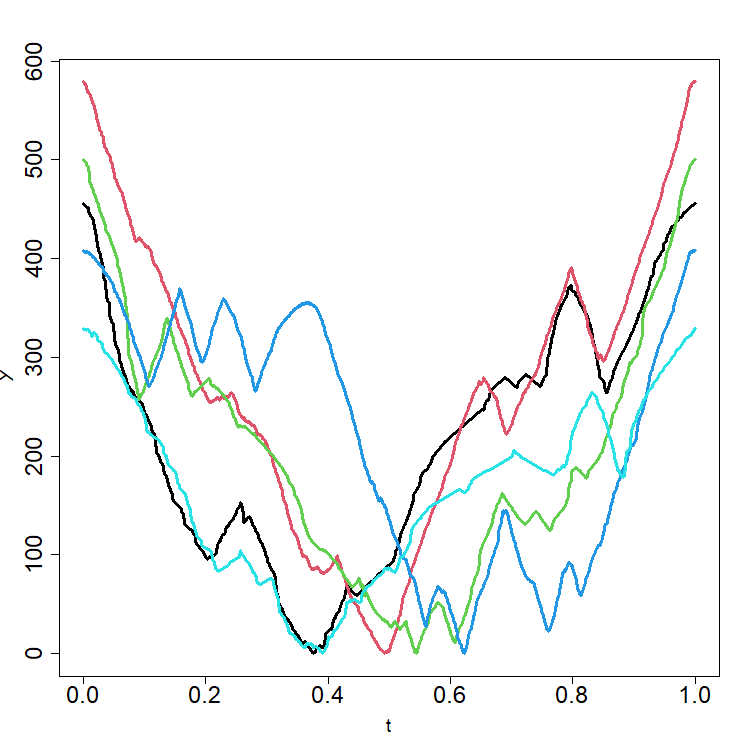}  
\end{tabular}
\\  \hline  
\begin{tabular}{ c c c c c  }
	\includegraphics[align=c, height=2.5cm]{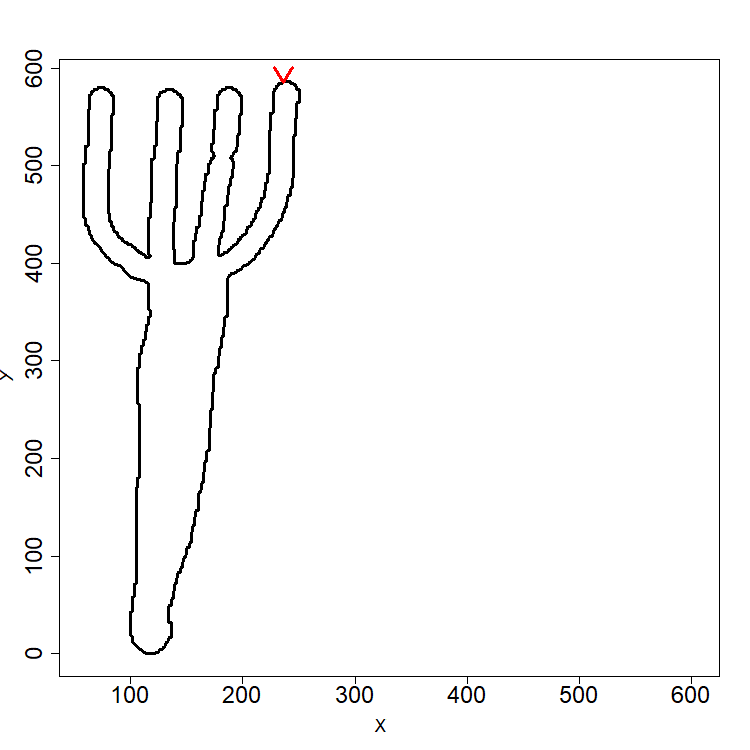}& \includegraphics[align=c, height=2.5cm]{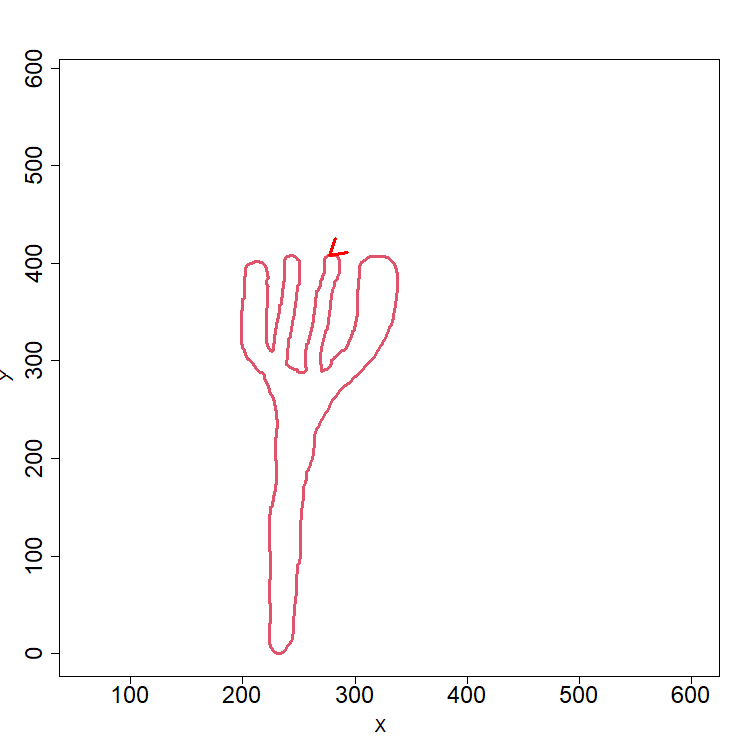} & \includegraphics[align=c, height=2.5cm]{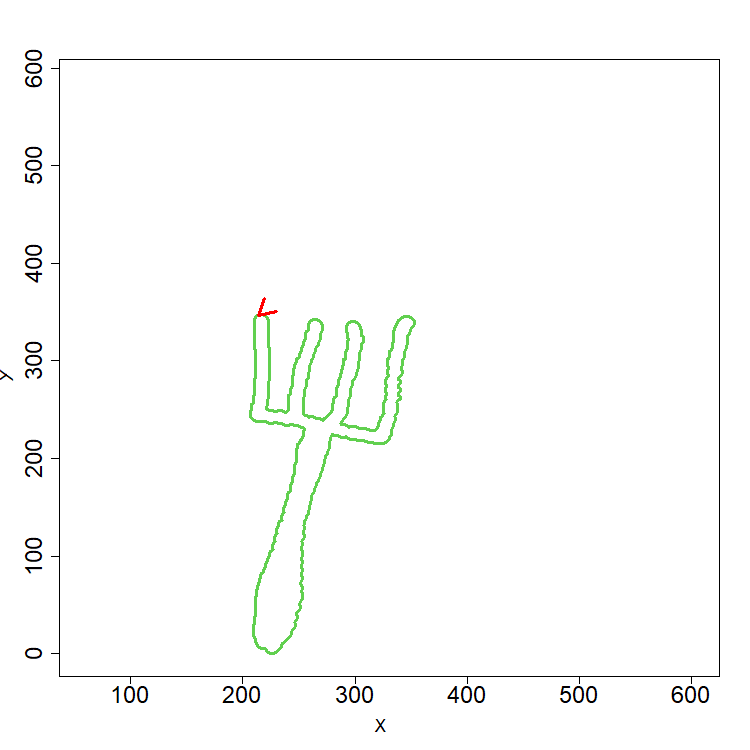}&   
    \includegraphics[align=c, height=2.5cm]{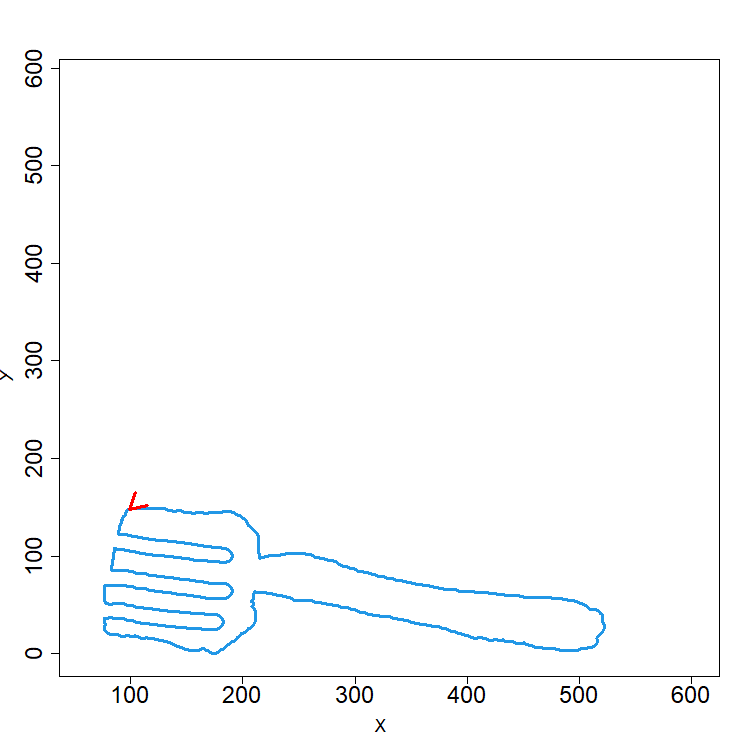}& \includegraphics[align=c, height=2.5cm]{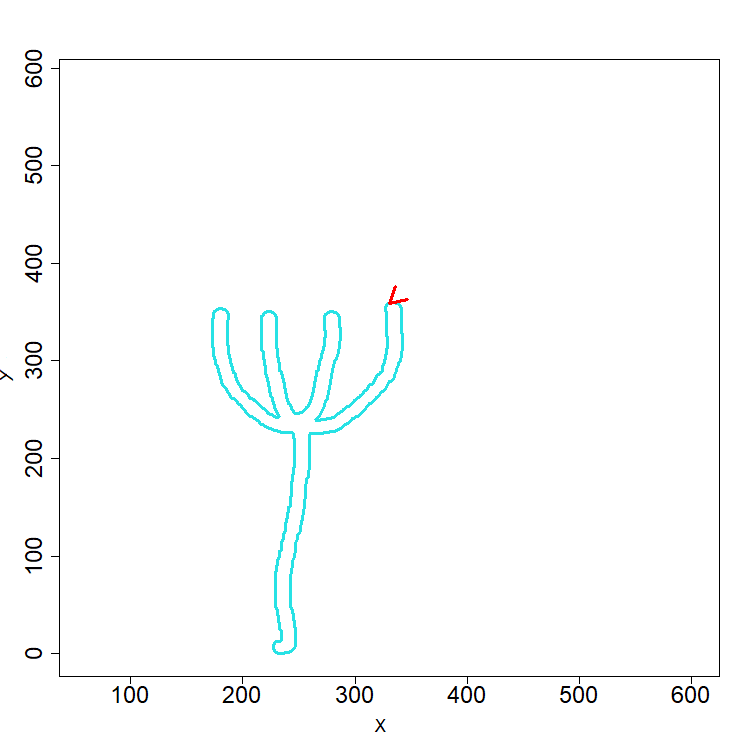} \end{tabular} \\ 
    \begin{tabular}{c c}
    \includegraphics[align=c, height=3cm]{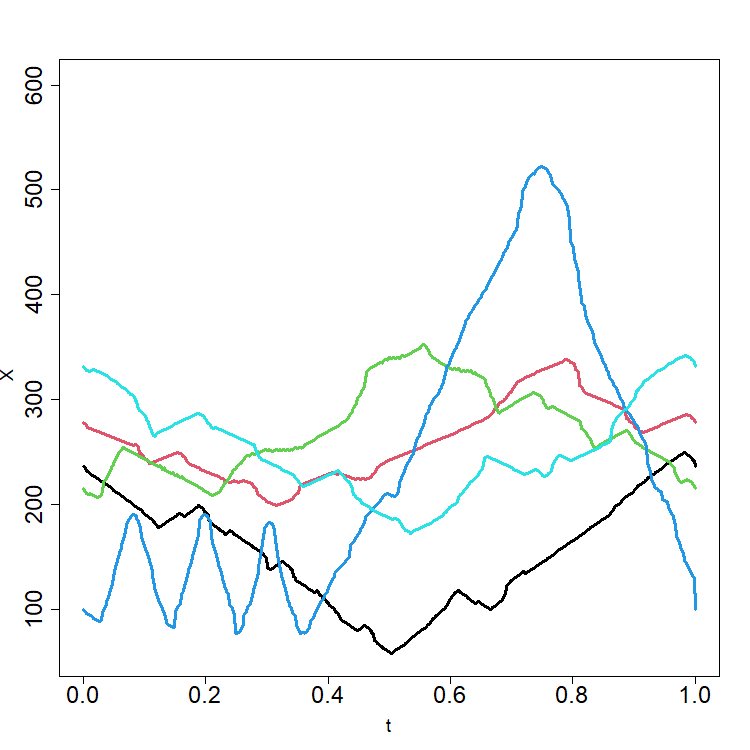}
     &  \includegraphics[align=c, height=3cm]{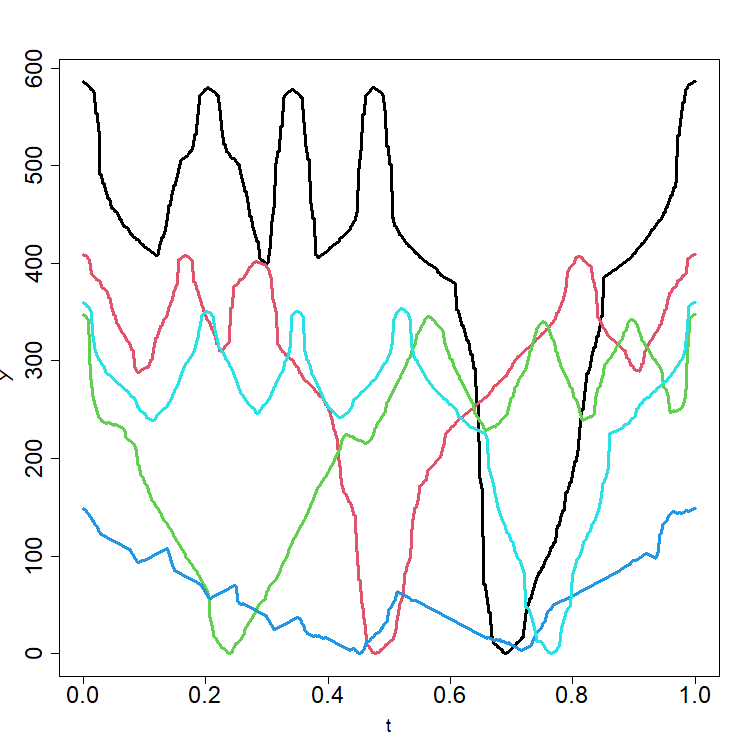} 
    \end{tabular}
\end{tabular}
\caption{Planar closed curves extracted from the binarized images of Figure \ref{binary} (first row of each block) with their associate coordinate functions (second row of each block).}
\label{curves}
\end{figure}

\subsection{Results from the alignment procedure}
\label{Resu_align}

Before aligning the planar curves, we first recover their functional form using a basis expansion with $M=50$ Fourier functions. Let $\hat{\mathbf{c}}^s_1,\ldots,\hat{\mathbf{c}}^s_{n_s}$ denote the resulting functions, where the index $s$ refers to the dataset: $s=b$ for butterflies and $s=f$ for forks. We then estimate the translation and scaling parameters for each curve, yielding the standardized shapes $\hat{\mathbf{c}}^{*s}_i, i=1,\ldots,n_s$, illustrated in Figure~\ref{alignement}.

To estimate the rotation and reparametrization parameters, we define a template $\boldsymbol{\mu}^s$. While a reference curve could be selected from the dataset, we instead use the Fréchet mean introduced in Section~\ref{estim_frechet_mean}. The full iterative procedure described in Section~\ref{sec_iter_algo} is then applied to jointly estimate the Fréchet mean and align the curves. This results in the estimated shape curves $\hat{\tilde{\mathbf{c}}}_i^s, i=1,\ldots,n_s$, also shown in Figure~\ref{alignement}. The coordinate functions appear well aligned, indicating that the proposed alignment procedure performs effectively. Moreover, despite the relatively small sample size, the estimated Fréchet means yield coherent and interpretable representative shapes. This is a non-trivial task in shape analysis  \cite{lelivre}, as it requires jointly accounting for geometric variability and deformation effects, and further illustrates the stability of the proposed estimation procedure.

A key feature of our approach is that it provides explicit estimates of the deformation parameters for each curve, enabling a direct statistical analysis of these quantities. The estimated deformation parameters for both datasets are displayed in Figure~\ref{est_def}. We observe clear differences in their distributions across object classes. For instance, in the butterfly dataset, the translation components exhibit a strong linear relationship, whereas in the fork dataset, the translation vectors are more tightly clustered around a central location. Similarly, the rotation parameters $\hat{\theta}_i$ and the reparametrization parameters show substantially greater variability in the butterfly dataset than in the fork dataset.

These observations are consistent with the visual variability seen in Figure~\ref{curves} and highlight that deformation patterns are strongly shape-dependent. Unlike approaches that treat deformations as nuisance parameters, our framework allows them to be explicitly modeled and interpreted. 

\begin{figure}[H]
    \centering
   \begin{tabular}{ c c c}
   $\hat{\boldsymbol{\mu}}^b$& \includegraphics[align=c, width=0.15\linewidth]{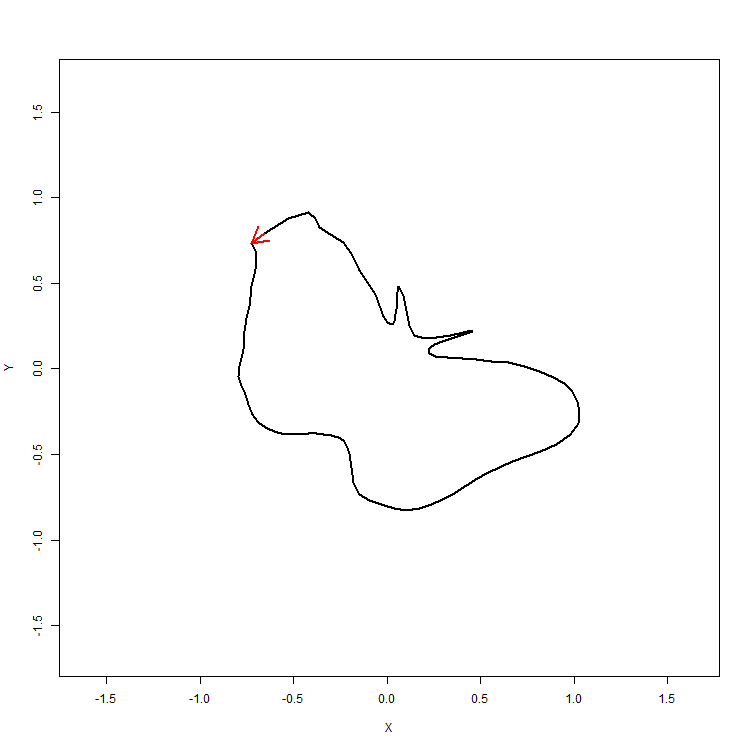}    &  \begin{tabular}{c c c c } 
    
    ${\hat{\mathbf{c}}^{*b}}$& 
       \includegraphics[align=c, width=0.15\linewidth]{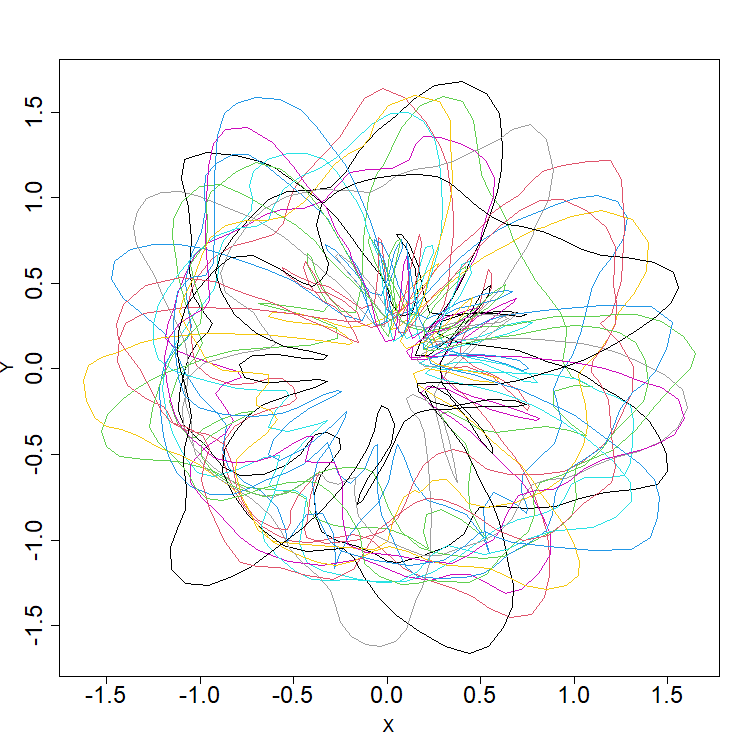} & 
          \includegraphics[align=c, width=0.15\linewidth]{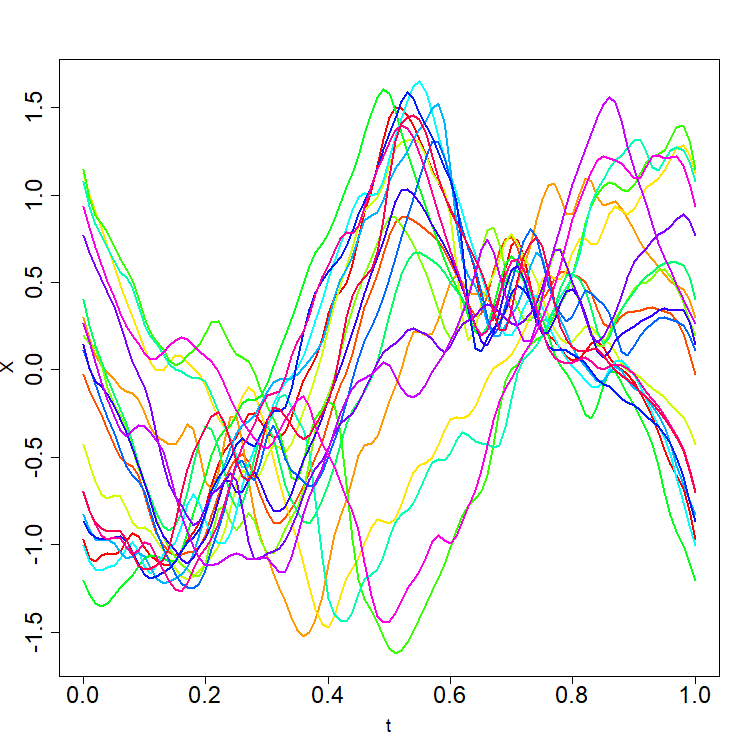} &
          \includegraphics[align=c, width=0.15\linewidth]{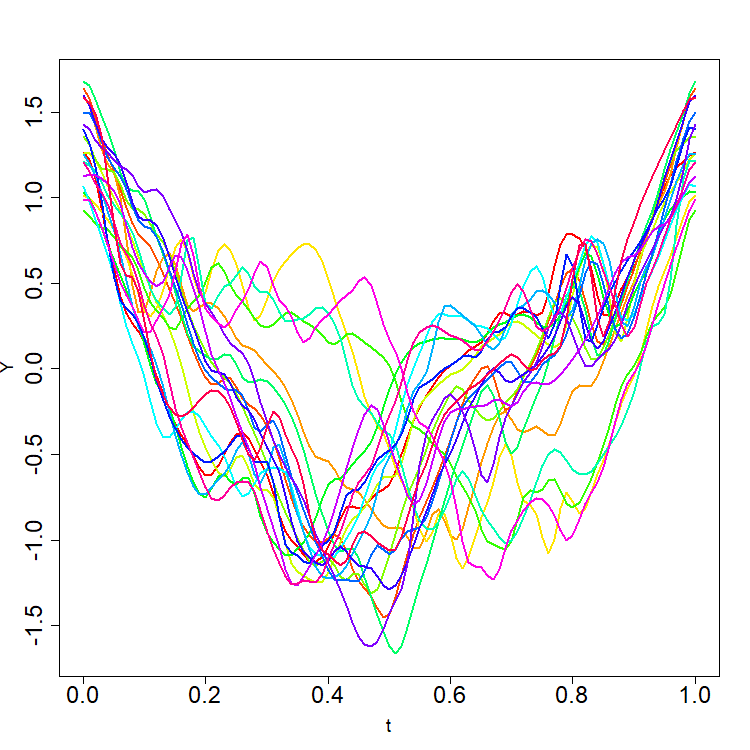} \\
      $\hat{\tilde {\mathbf{c}}}^b$& \includegraphics[align=c, width=0.15\linewidth]{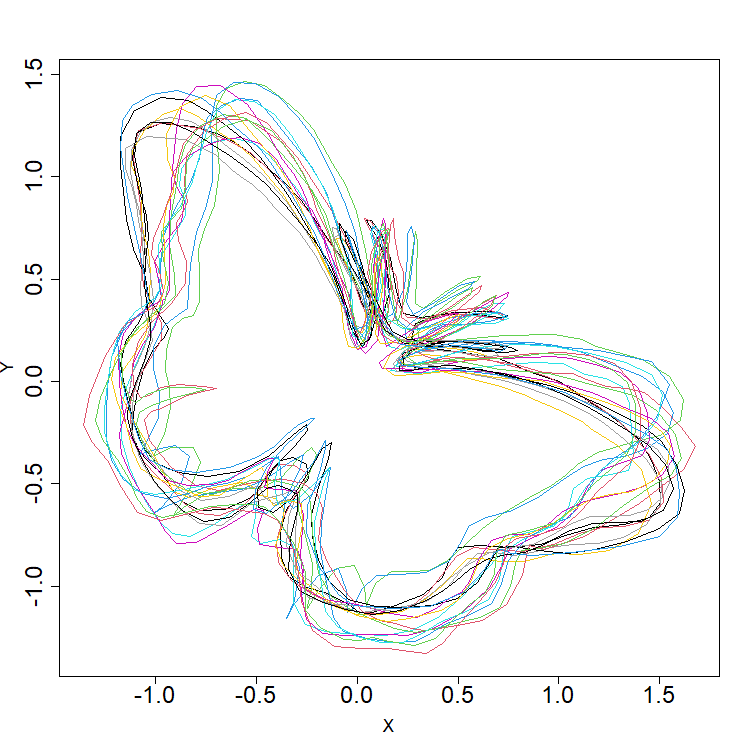} & 
          \includegraphics[align=c, width=0.15\linewidth]{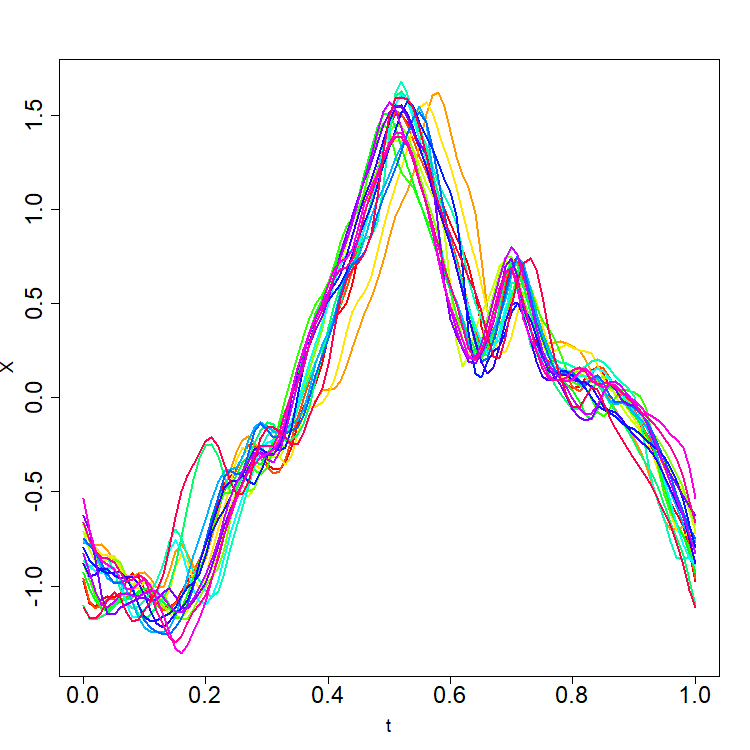} &
          \includegraphics[align=c, width=0.15\linewidth]{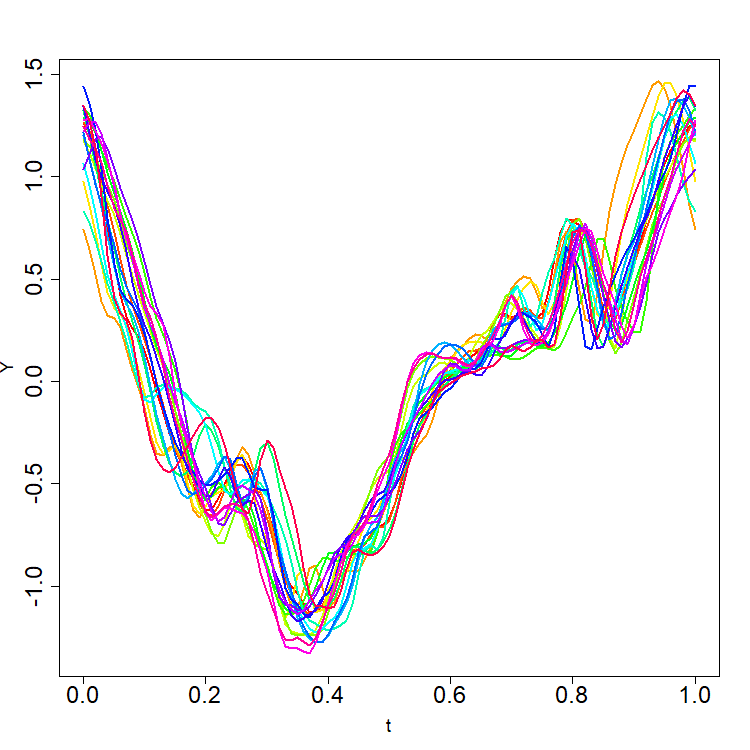}    
   \end{tabular} \\ 
   \hline \\ 
   
   $\hat{\boldsymbol{\mu}}^f$& \includegraphics[align=c, width=0.15\linewidth]{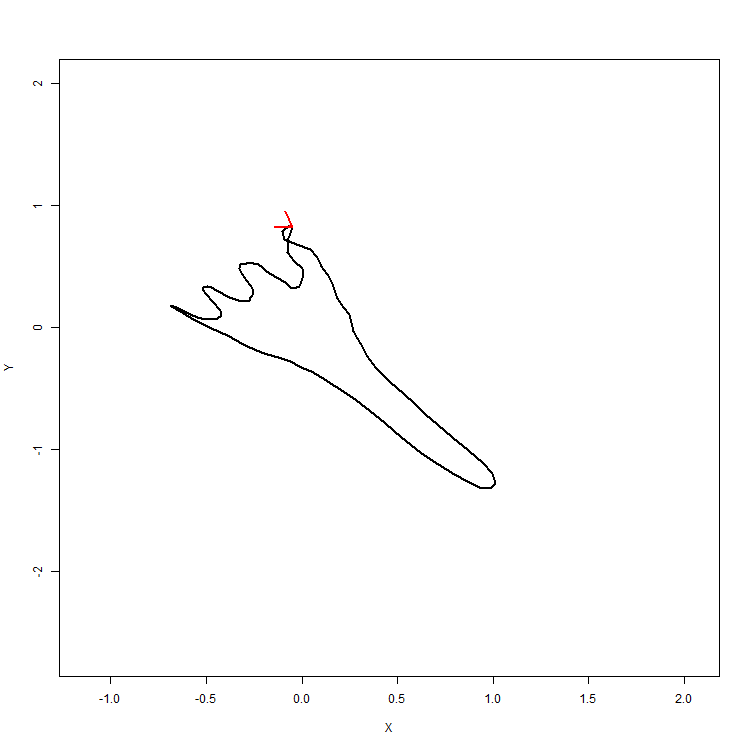}    &  \begin{tabular}{c c c c } 
    ${\hat{\mathbf{c}}^{*f}}$&  \includegraphics[align=c, width=0.15\linewidth]{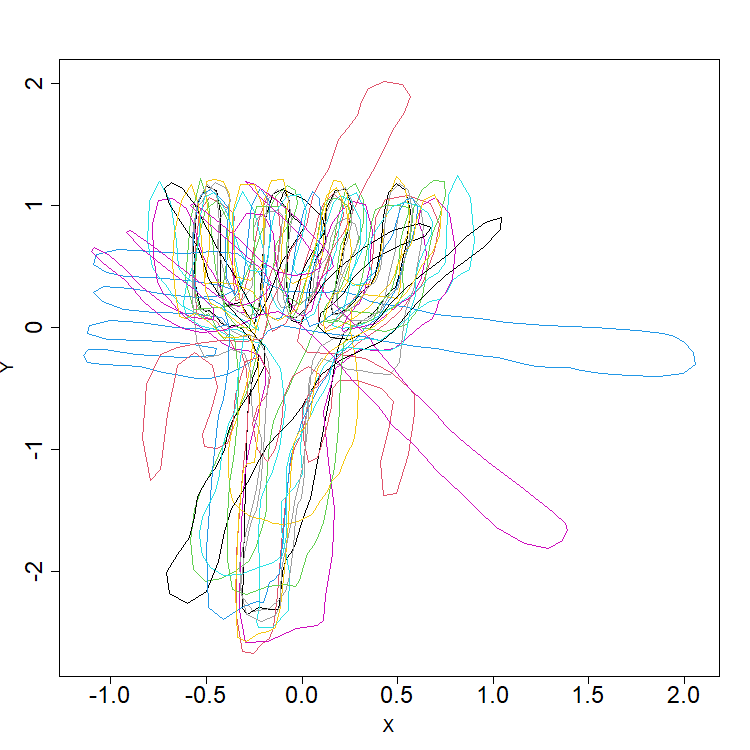} & 
          \includegraphics[align=c, width=0.15\linewidth]{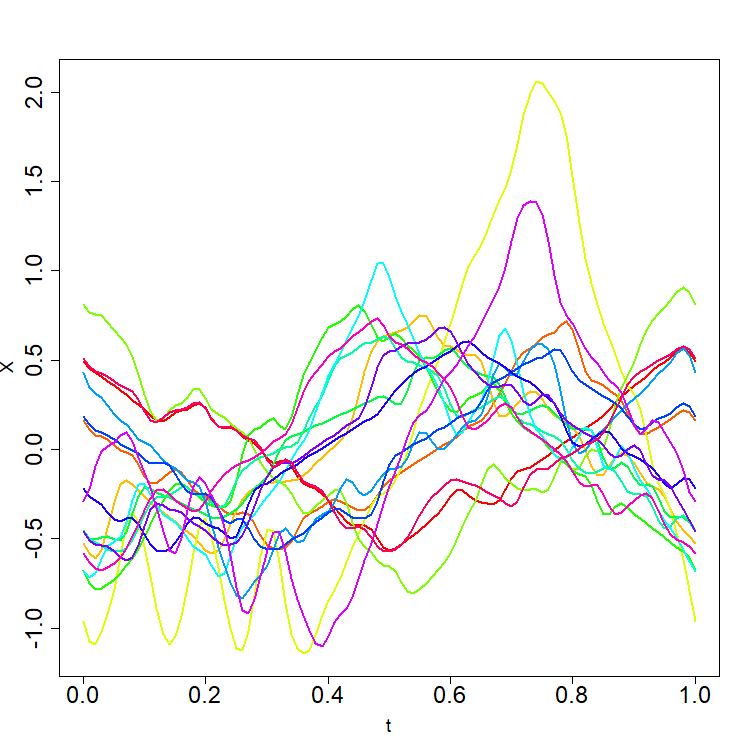} &
          \includegraphics[align=c, width=0.15\linewidth]{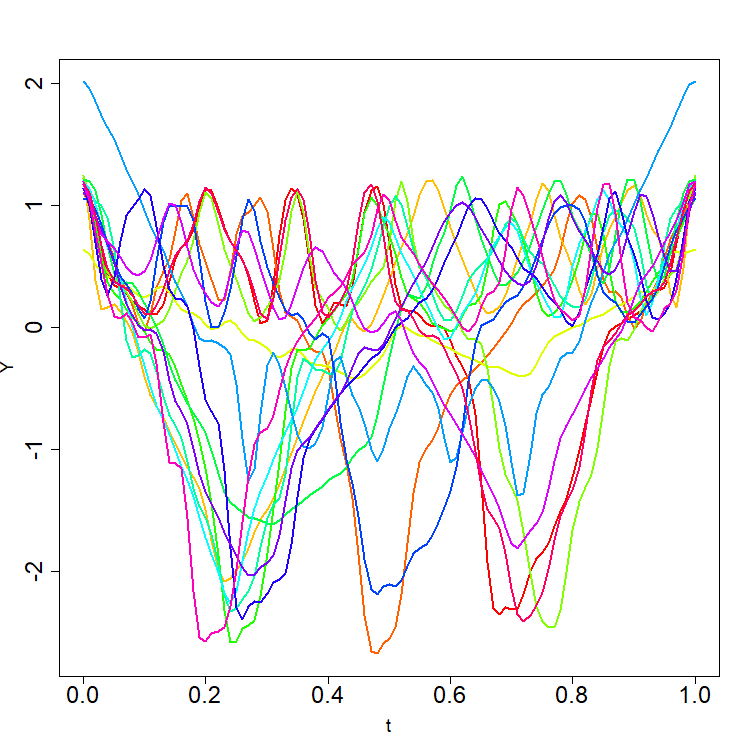} \\
      $\hat{\tilde {\mathbf{c}}}^f$& \includegraphics[align=c, width=0.15\linewidth]{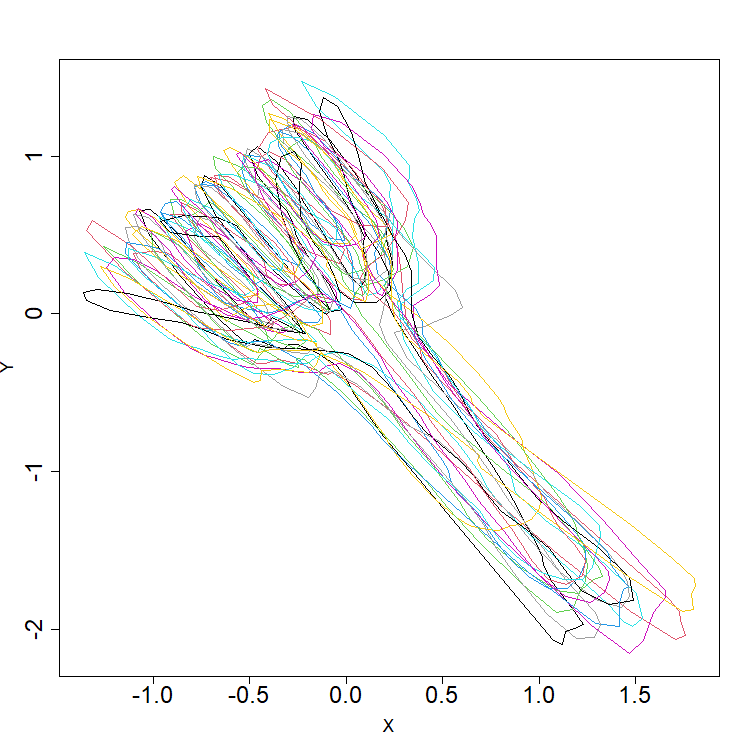} & 
          \includegraphics[align=c, width=0.15\linewidth]{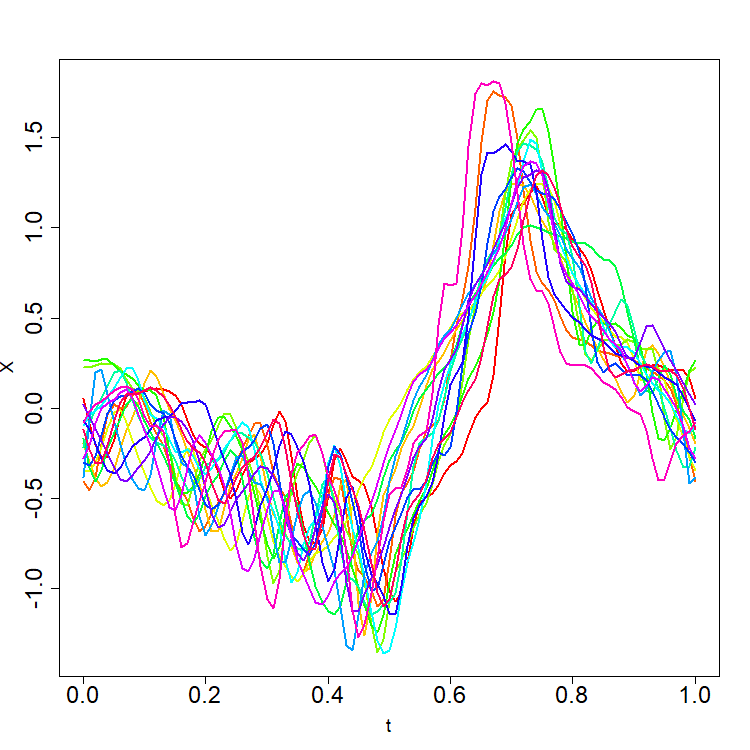} &
          \includegraphics[align=c, width=0.15\linewidth]{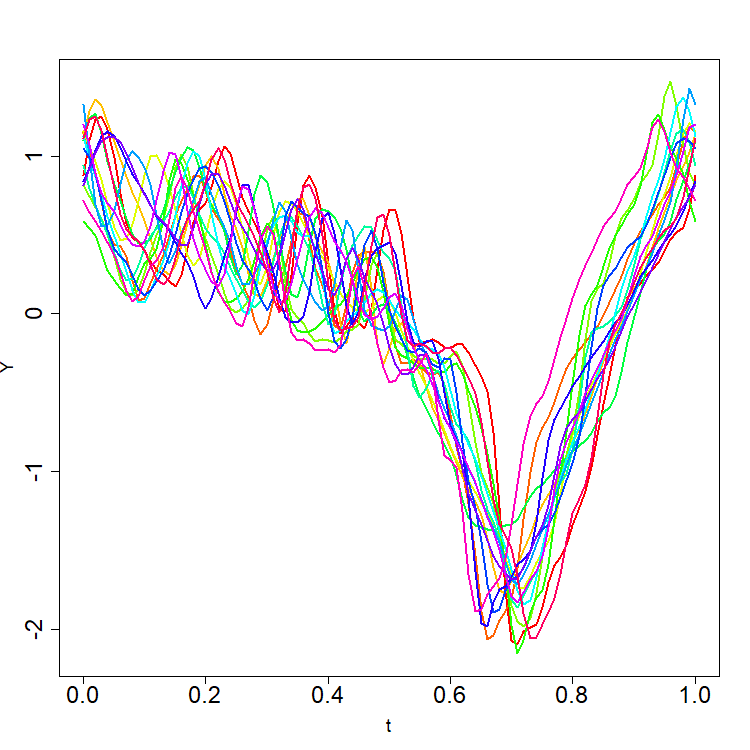}  
   \end{tabular}
    \end{tabular}
    \caption{
    Illustration of the alignment procedure for the butterfly dataset (top block) and the fork dataset (bottom block). In each block, the left column displays the estimated Fréchet mean. The first row shows estimated pre-shapes together with their coordinate functions, while the second row presents the corresponding aligned shapes.}
    \label{alignement}
\end{figure}

\begin{figure}[H]
    \centering
    \begin{tabular}{c c c c c  }
    & $\hat{\mathbf{T}}_i$ & $\hat{\rho}_i$ & $\hat{\theta}_i$ &   $2\pi {\hat\delta}_i$ \\ 
         Butterfly &  \includegraphics[width=.15\linewidth, align=c]{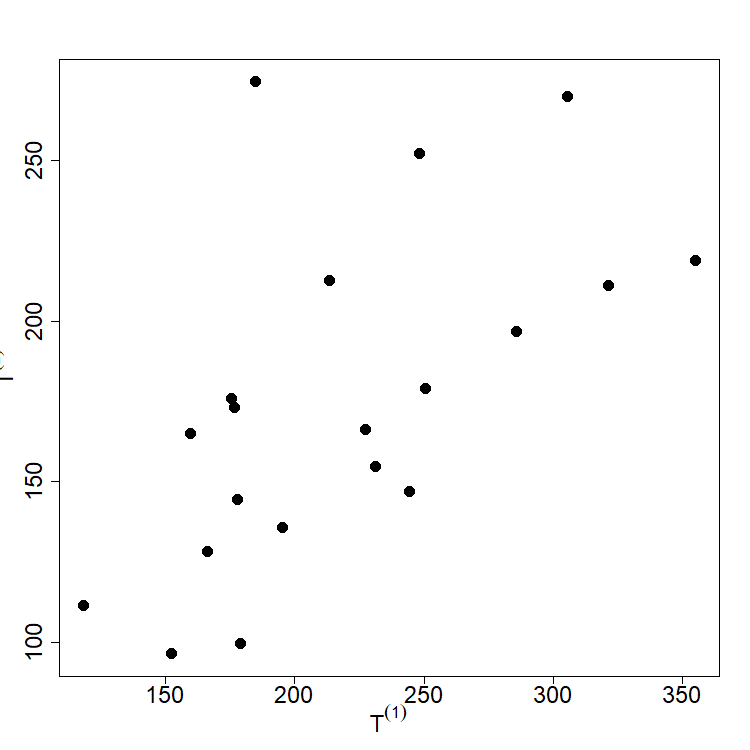} & \includegraphics[width=.15\linewidth, align=c]{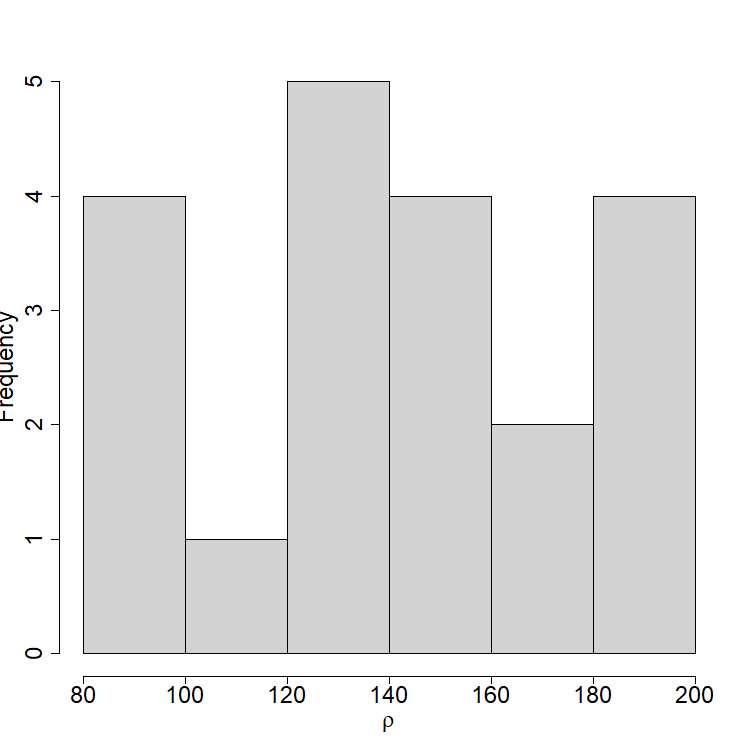} &
          \includegraphics[width=.15\linewidth, align=c]{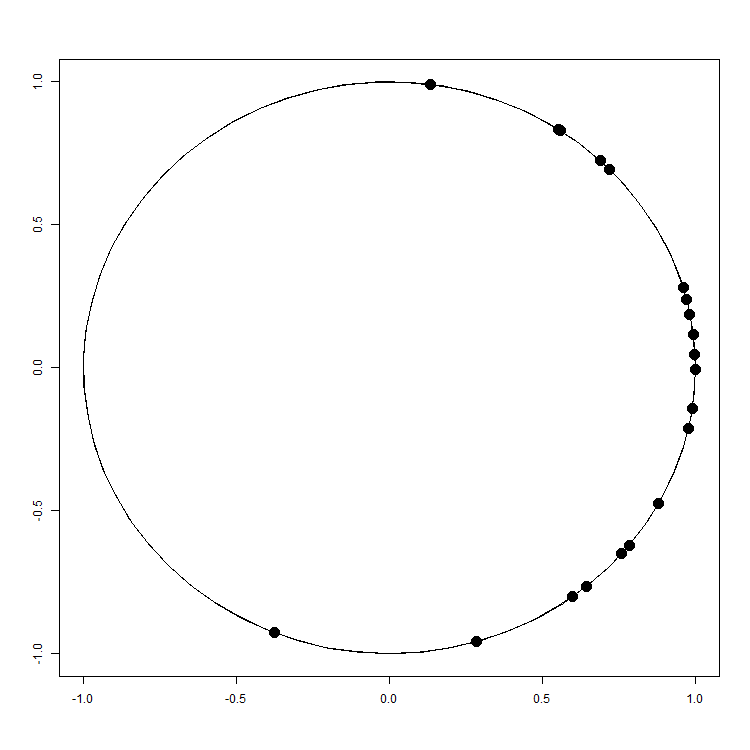} &    
\includegraphics[width=.15\linewidth, align=c]{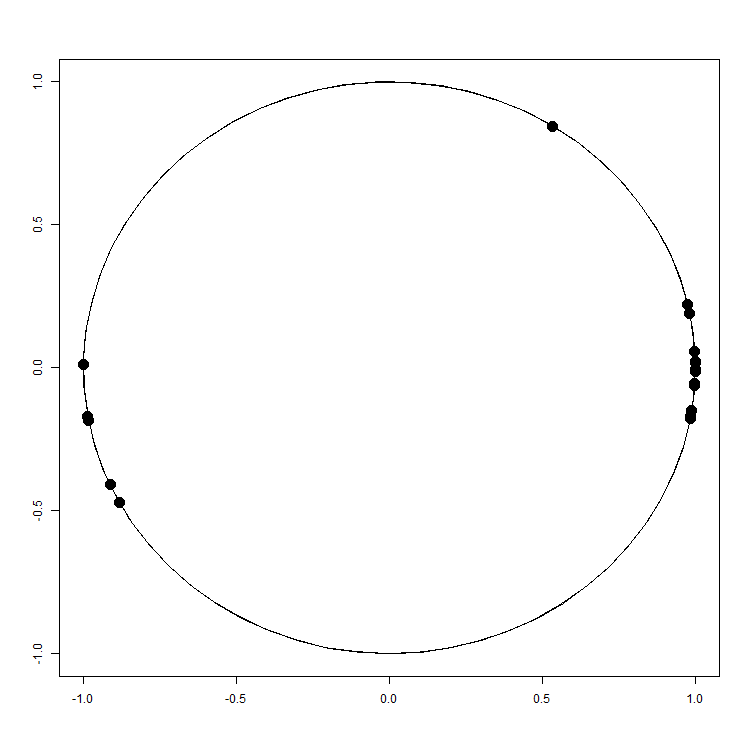}  \\
         Fork & 
         \includegraphics[width=.15\linewidth, align=c]{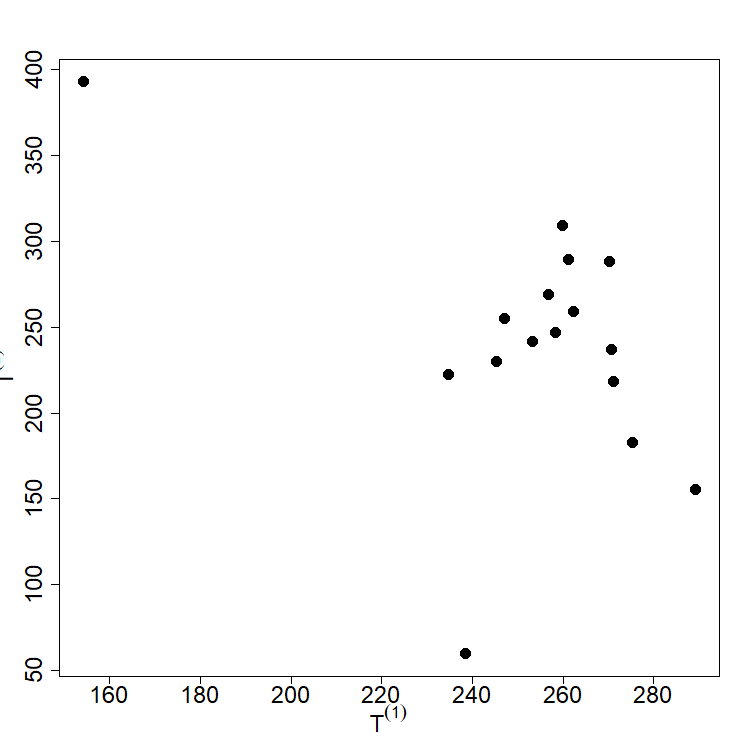} & \includegraphics[width=.15\linewidth, align=c]{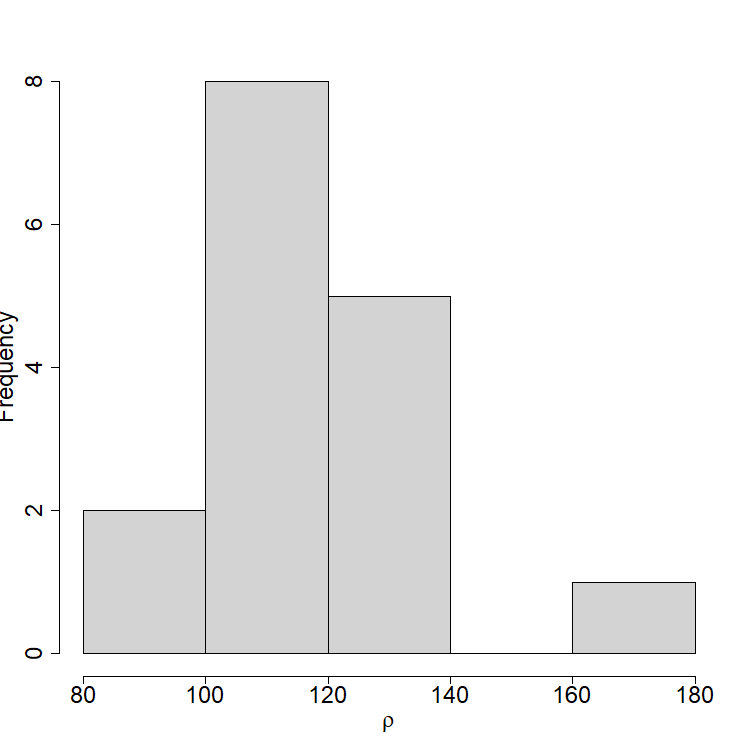} &
         \includegraphics[width=.15\linewidth, align=c]{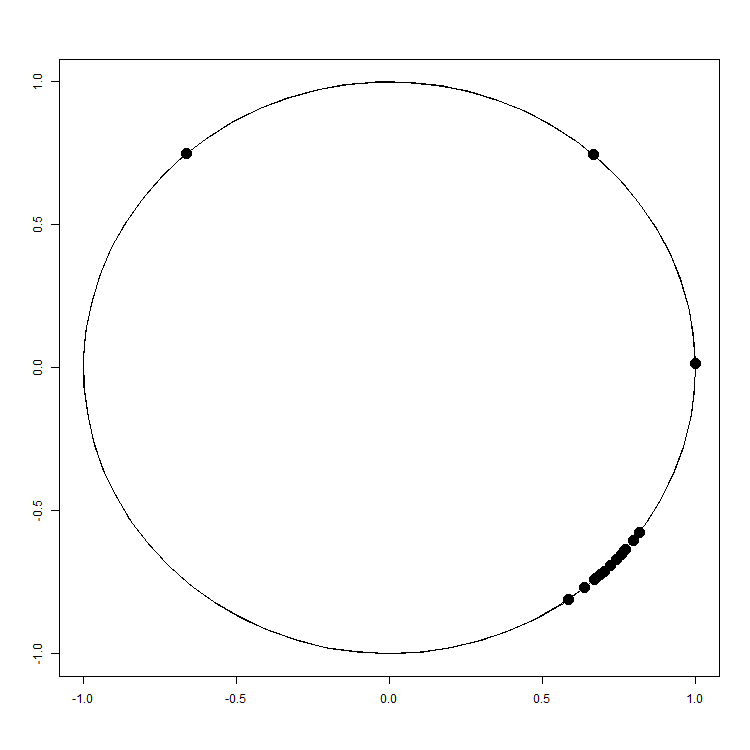} &
         \includegraphics[width=.15\linewidth, align=c]{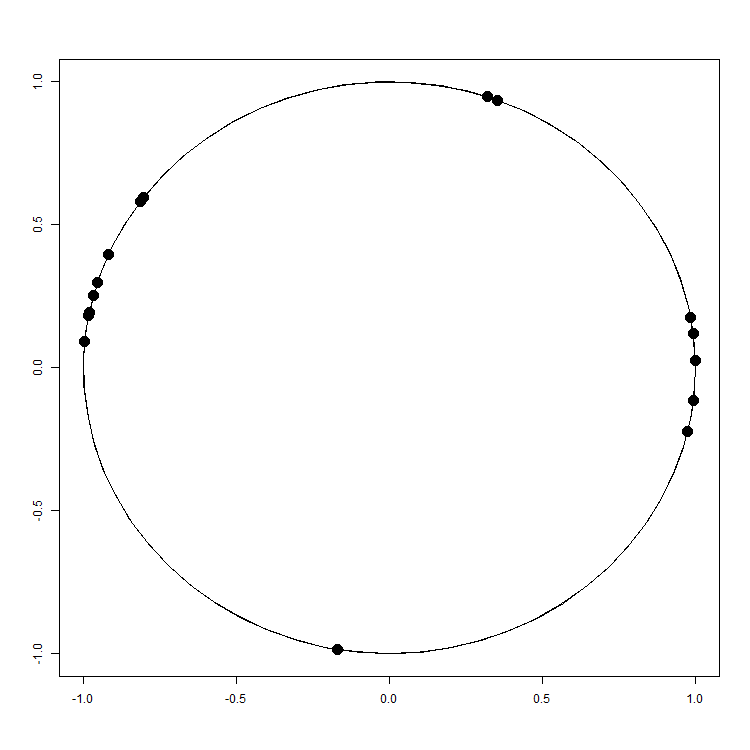}  
    \end{tabular}
    \caption{Plots of the estimated deformation parameters associated with each curve in the butterfly (first row) and fork (second row) datasets}
    \label{est_def}
\end{figure}

\subsection{Results from modeling contours}
\label{Resu_mod}
We now apply the PCA framework introduced in Section~\ref{gen_mod} to analyze the butterfly and fork datasets. For each dataset, we construct the samples $\mathbf{z}_1=(\mathbf{z}_{11},\ldots,\mathbf{z}_{n_s1})$ and $\mathbf{z}_2=(\mathbf{z}_{12},\ldots,\mathbf{z}_{n_s2})$ where
$$\mathbf{z}_{i1} = \hat\rho_i\hat{\tilde{\mathbf{c}}}_i \textrm{ and } \mathbf{z}_{i2}=\left( 
    \tan\left(\frac{\pi}{2} \left(\hat\delta_i-\frac{1}{2}\right)  \right), \tan\left(\frac{1}{4} \left(\frac{\hat\theta_i}{2}-\pi\right) \right)  ,\hat{\mathbf{T}}_i^\top
\right)^\top,
 $$
for $i=1,\ldots,n_s$. The variables $\mathbf{z}_1$ and $\mathbf{z}_2$ are then analyzed separately using the PCA-based approach described in Section~\ref{gen_mod}.

The first row of Figure~\ref{fpca_1} displays the estimated mean shape (in black) for the butterfly dataset, along with perturbations obtained by adding (in red) or subtracting (in blue) a multiple of the first three eigenfunctions. The second row presents the same visualization for the fork dataset. This type of representation, inspired by \cite[Chap.~8]{ramsay2008}, provides insight into the main modes of variation captured by the model.

For the butterfly dataset, the dominant sources of variability are concentrated in the wings. The first eigenfunction primarily captures variations in wing size relative to body length, essentially accounting for the scaling effect $\rho$. The second and third eigenfunctions reflect more subtle changes in wing geometry, corresponding to intrinsic shape variability. This is particularly noticeable for the third eigenfunction, where large positive scores lead to short, pointed wings (red), while large negative scores produce elongated, rounded wings (blue).

For the fork dataset, the modes of variation are less immediately interpretable, although a similar pattern can still be observed. The first eigenfunction mainly captures variations in the overall size of the fork, again reflecting the scaling effect $\rho$. The second eigenfunction captures a key structural difference in the dataset by distinguishing between three-pronged and four-pronged forks, with large positive scores (red) associated with three-pronged shapes.

Overall, these results demonstrate that the proposed PCA framework captures meaningful and interpretable modes of shape variability, even in the presence of deformation effects.

\begin{figure}[H]
    \centering
    \begin{tabular}{c c c c c}PC1 ($57\%$) & PC2 ($22\%$)& PC3 ($9\%$) \\  

        \includegraphics[width=.2\linewidth]{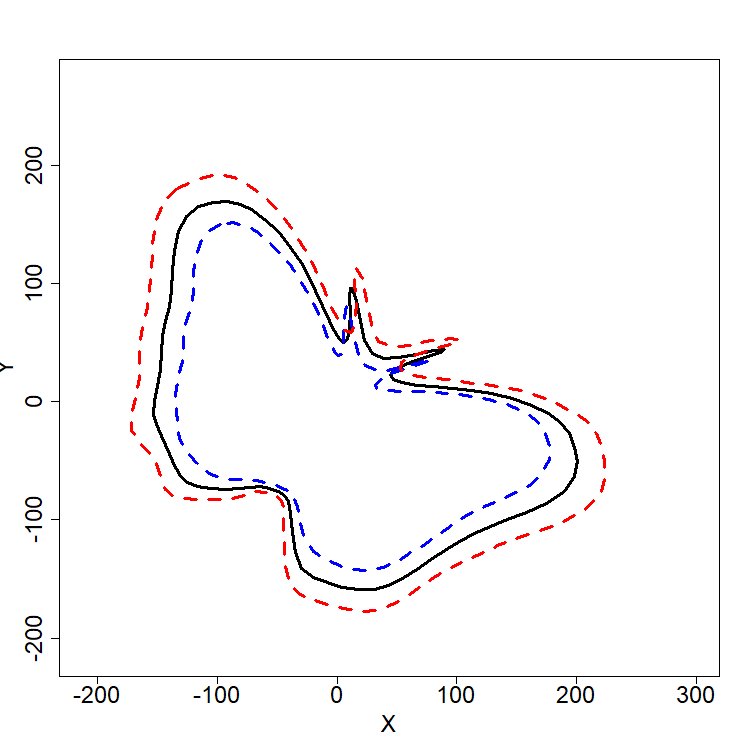} &  \includegraphics[width=.2\linewidth]{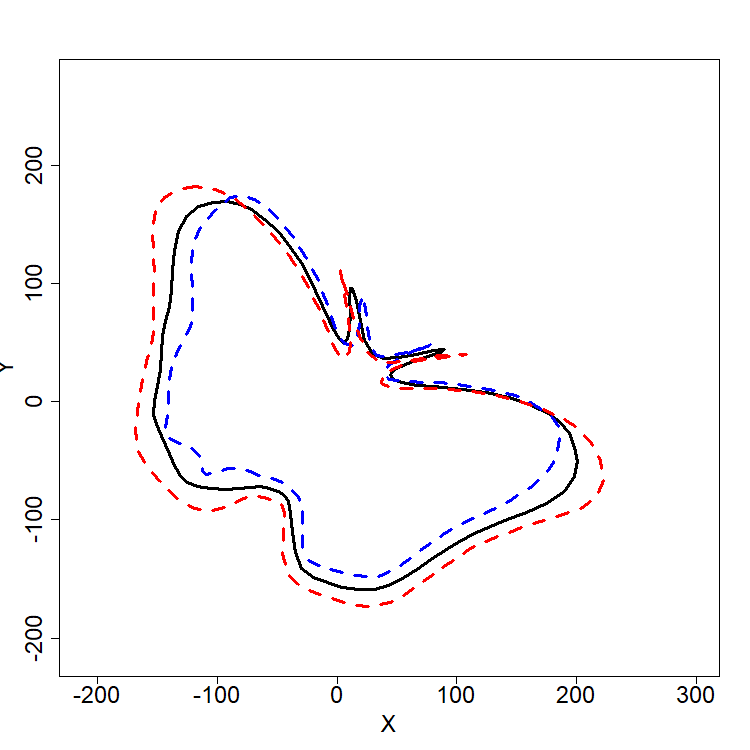} & \includegraphics[width=.2\linewidth]{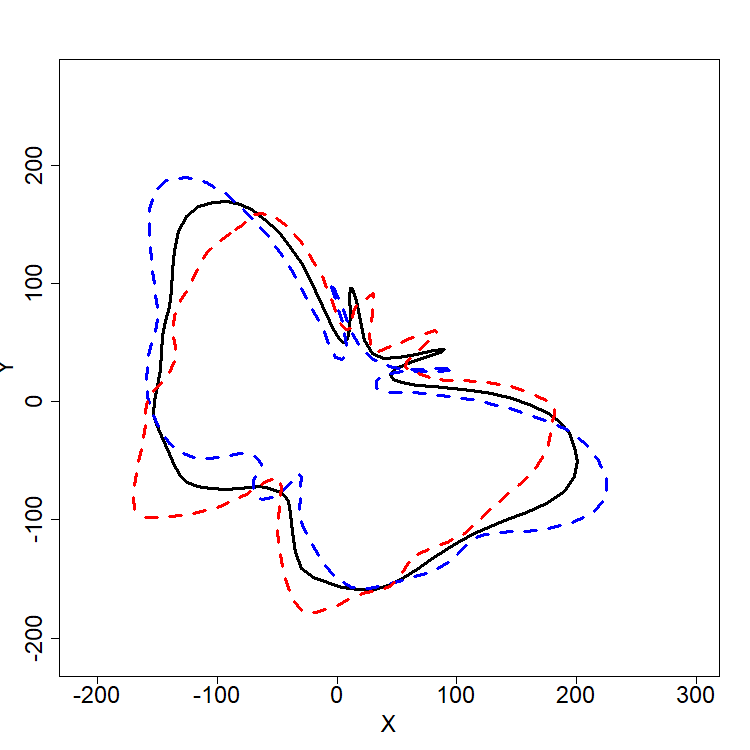}   \\
        PC1 ($42\%$) & PC2 ($19\% $) & PC3 ($10\% $ ) \\

        \includegraphics[width=.2\linewidth]{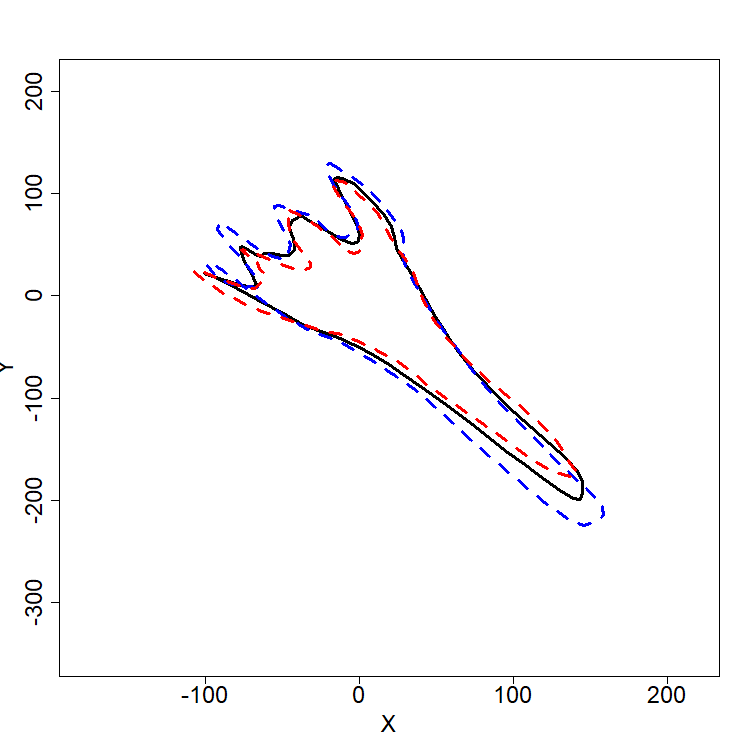} &  \includegraphics[width=.2\linewidth]{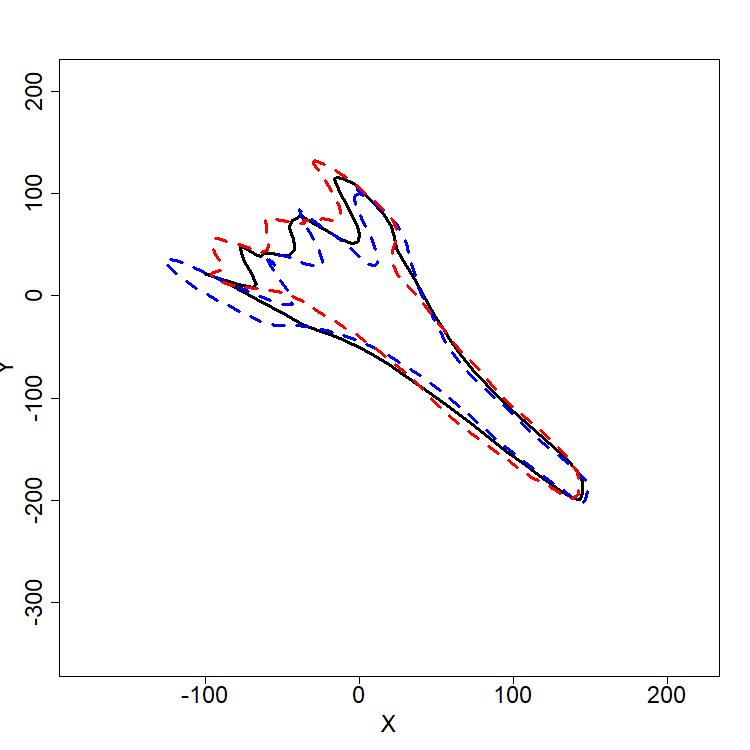} & \includegraphics[width=.2\linewidth]{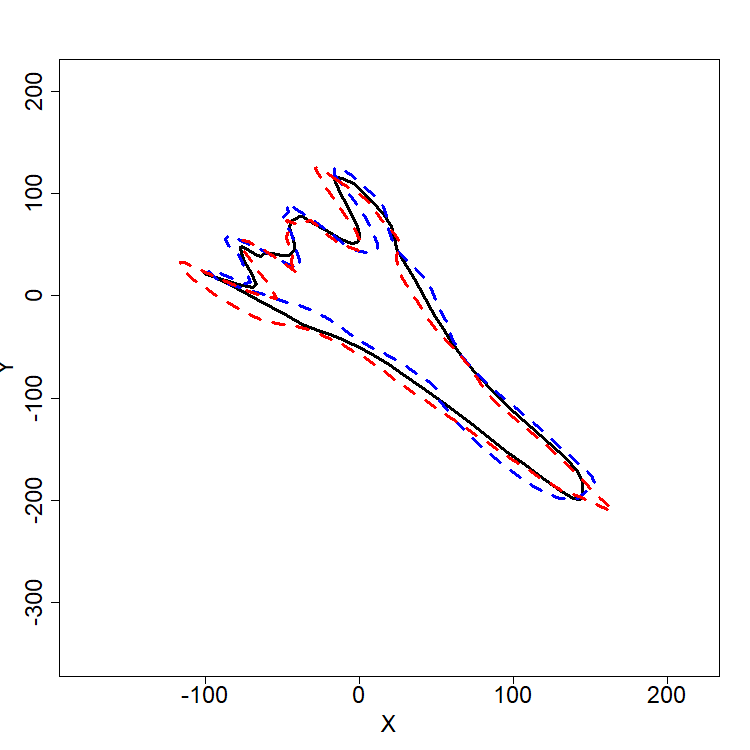}   \\
    \end{tabular}
    \caption{Plots of the estimated mean function $\bar{\mathbf{z}} = \sum_{i}\mathbf{z}_{i1}$ in black, of $\bar{\mathbf{z}} - 20\hat{\boldsymbol{\phi}}_k$ in blue and of $\bar{\mathbf{z}} + 20\hat{\boldsymbol{\phi}}_k$ in red, for $k=1$ (first column), $k=2$ (second column) and $k=3$ (third column). The first row corresponds to the butterfly dataset, while the second row corresponds to the fork dataset. The number in parentheses in each plot title indicates the percentage of total variation explained by the corresponding eigenfunction.  }
    \label{fpca_1}
\end{figure}

To assess the ability of the proposed PCA model to capture the variability of planar curves, we generate new samples from the estimated model. The underlying idea is that a well-fitted model should produce synthetic shapes that are both realistic and representative of the original datasets.

We define our generative model as described in Section~\ref{gen-sec}. In particular, we simulate the score vectors $\boldsymbol{\xi} \in \mathbb{R}^{M_1+M_2}$ from the estimated joint Gaussian model. The number of retained principal components, $M_1$ for $\mathbf{Z}_1$ and $M_2$ for $\mathbf{Z}_2$, is chosen such that the cumulative proportion of explained variance reaches at least $90\%$. This leads to $M_1=4$ and $M_2=4$ for the butterfly dataset, and $M_1=7$ and $M_2=4$ for the fork dataset.

For comparison, we also consider a baseline approach based on multivariate functional PCA (MFPCA), applied directly to the smoothed curves $\hat{\mathbf{c}}^s_i$, $i=1,\ldots,n_s$, with $s \in \{b,f\}$, following \citet{happ2018}. As before, we retain a number of principal components, denoted $M$, such that at least $90\%$ of the variance is explained. This results in $M=3$ for the butterfly dataset and $M=4$ for the fork dataset.

This approach serves as a natural baseline to assess the impact of the alignment step. Since geometric variability and deformation effects are not explicitly disentangled, we expect these sources of variation to be confounded, which may limit the ability of the model to generate coherent and realistic { contours}.

Figures~\ref{gen_1} and \ref{gen_2} illustrate examples of generated butterfly and fork curves, respectively. Each figure consists of three rows, each displaying five curves: row (a) shows curves generated using our approach without deformations, row (b) shows curves generated using our approach with deformations, and row (c) displays curves generated using the MFPCA baseline.

To quantitatively assess how well the generated curves resemble those in the original datasets, we introduce two similarity metrics based on the distance $d$ defined in Section~\ref{estim_frechet_mean}:
$$
D_1(\mathbf{y}^{s})=\frac{1}{n_{s}}\sum_{i=1}^{n_s} d\left(\mathbf{y}^{*s},\hat{\mathbf{c}}_i^{*s}\right) \  \textrm{ and } \ D_2(\mathbf{y}^{s})=\min_{1\le i \le n_s } d\left(\mathbf{y}^{*s},\hat{\mathbf{c}}_i^{*s}\right),
$$
where $\mathbf{y}^{s}$ denotes a generated curve and $\mathbf{y}^{*s}$ its standardized shape, with $s \in \{b,f\}$.

A low value of $D_1(\mathbf{y}^{s})$ indicates that the generated curve is, on average, close to the shapes in the dataset, while a low value of $D_2(\mathbf{y}^{s})$ indicates that the generated curve is close to at least one observed shape. The values of these two metrics are reported above each generated curve in Figures~\ref{gen_1} and \ref{gen_2}.

\begin{figure}[H]
    \centering
    \begin{tabular}{c c c  c c c c c c c c }
 $D_1$: &0.06 & 0.10 & 0.06 & 0.06 & 0.05 \\ 
 $D_2$: & 0.02 & 0.01 & 0.02 & 0.02 & 0.02 \\ 
(a) &\includegraphics[align=c, width=.13\textwidth]{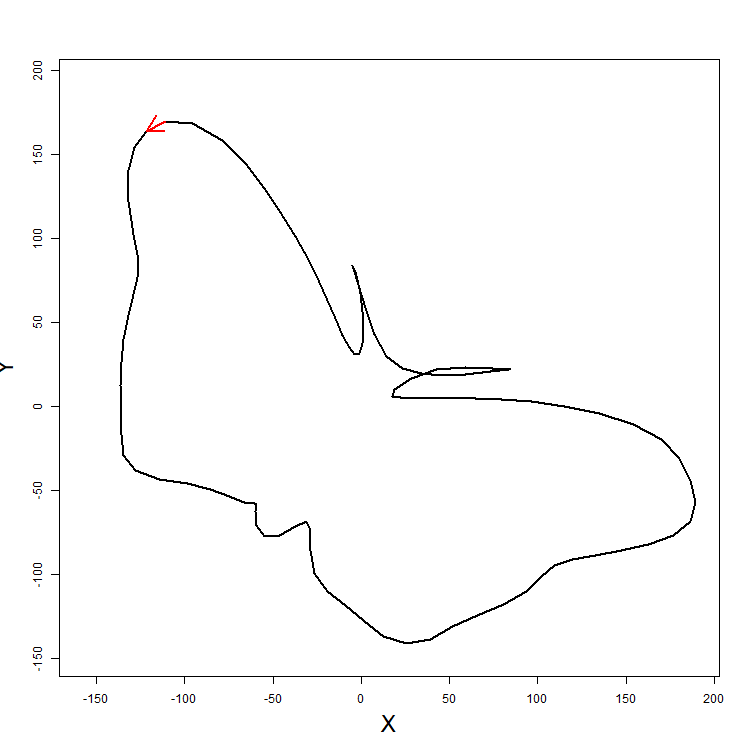} &\includegraphics[align=c, width=.13\textwidth]{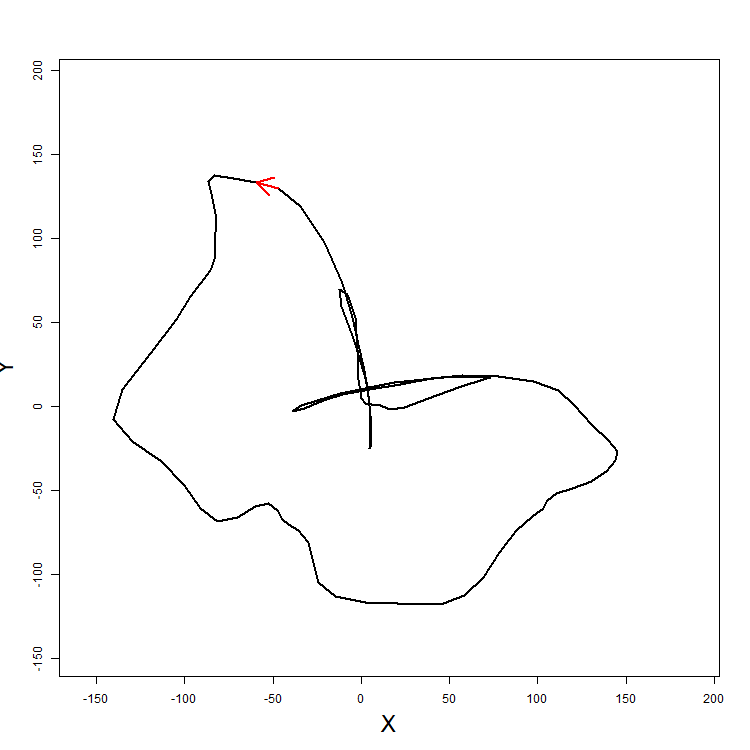}& \includegraphics[align=c, width=.13\textwidth]{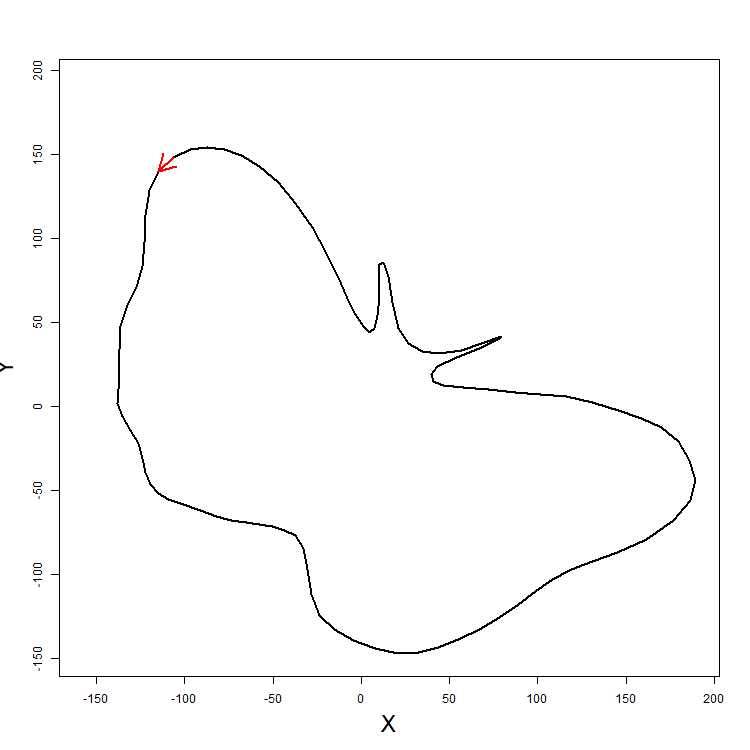}& \includegraphics[align=c, width=.13\textwidth]{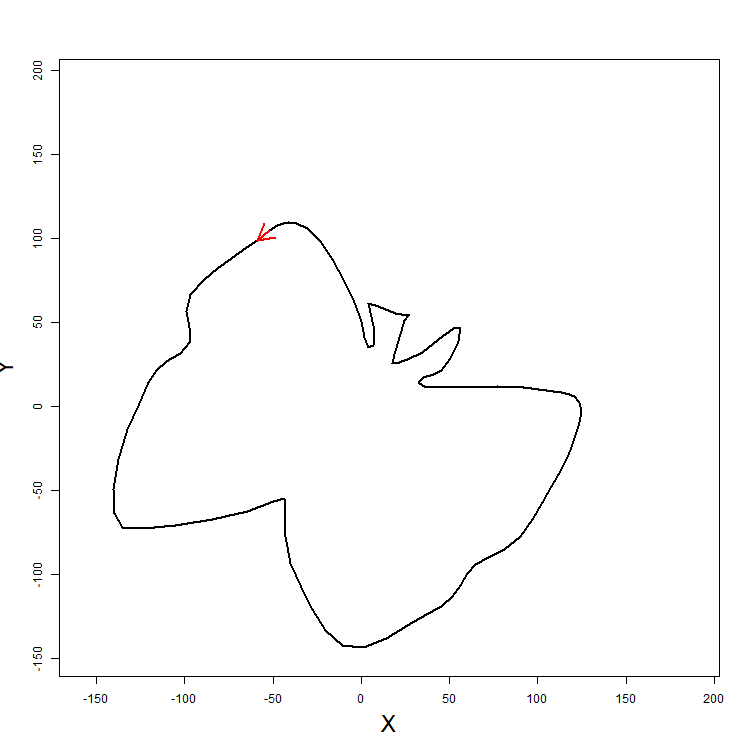} & \includegraphics[align=c, width=.13\textwidth]{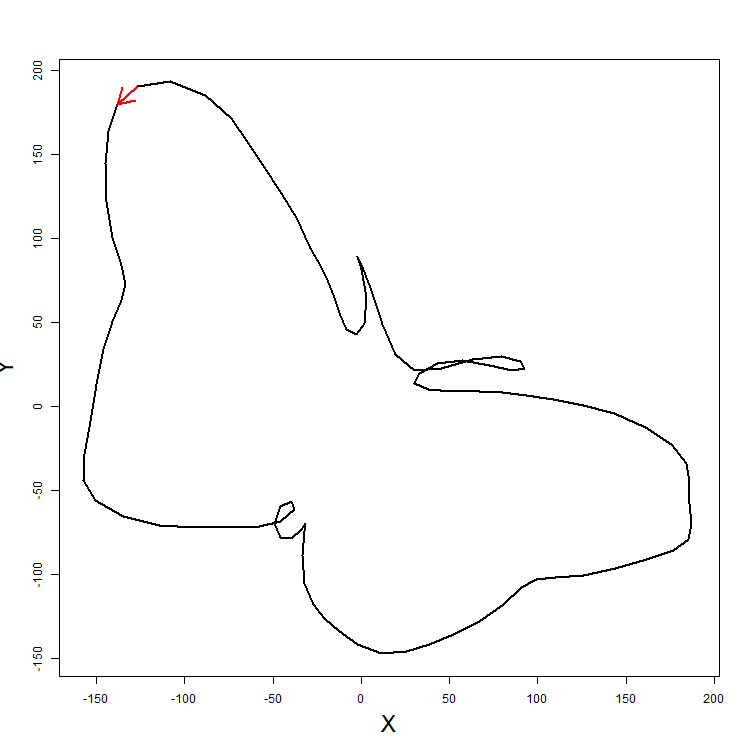}\\ 
\hline
 $D_1$: & 0.07 & 0.10 & 0.07 & 0.07 & 0.05 \\ 
  $D_2$: & 0.02 & 0.02 & 0.04 & 0.02 & 0.02 \\ 
 (b)  & \includegraphics[align=c, width=.13\textwidth]{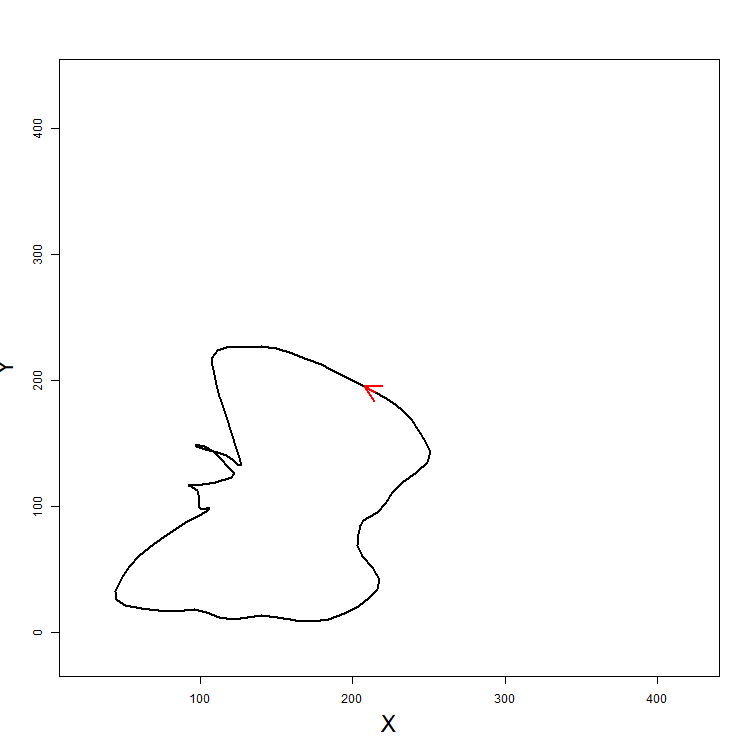} &\includegraphics[align=c, width=.13\textwidth]{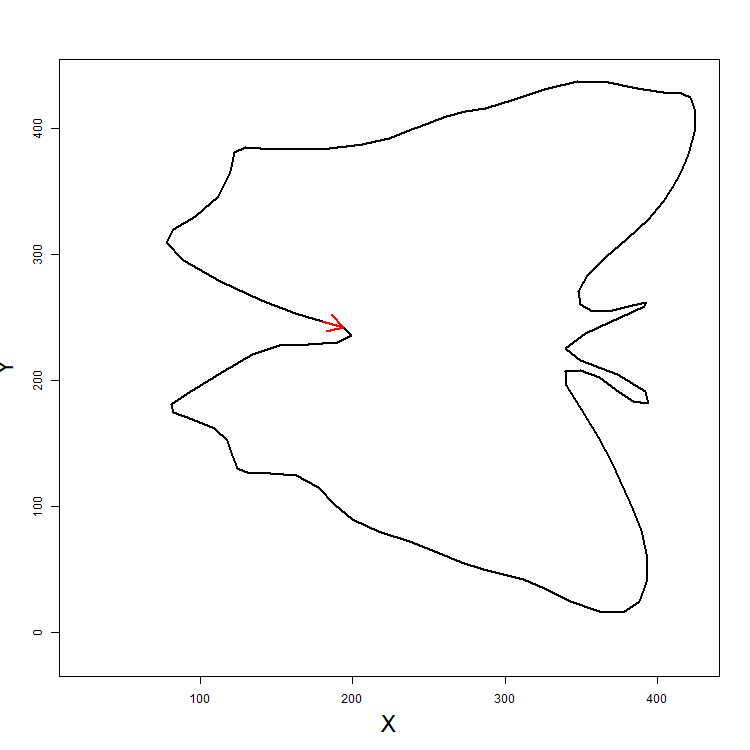}& \includegraphics[align=c, width=.13\textwidth]{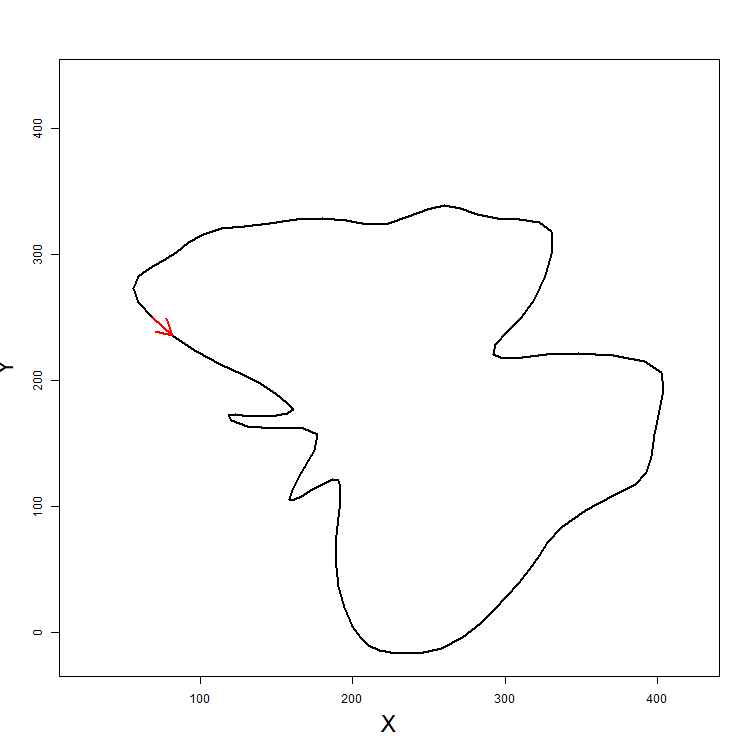}& \includegraphics[align=c, width=.13\textwidth]{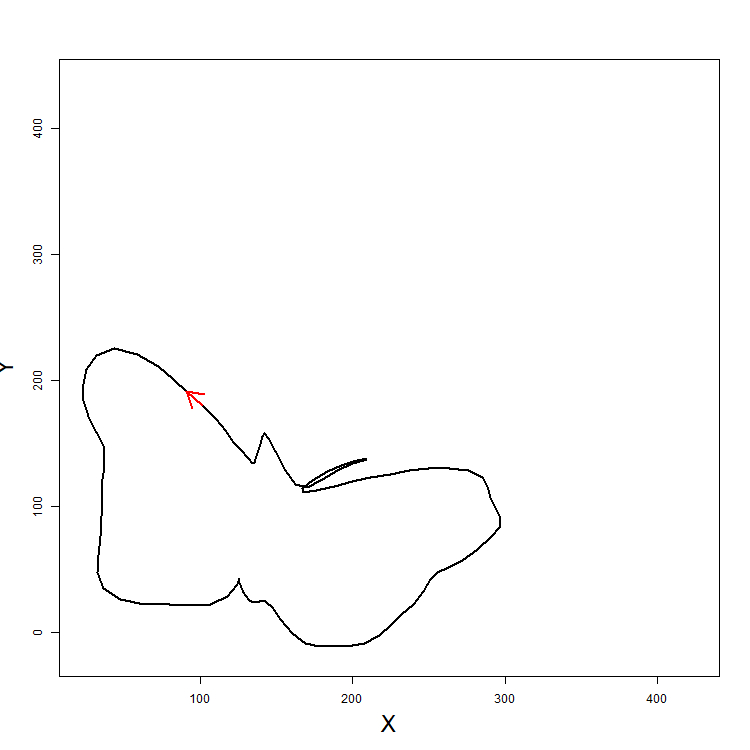} & \includegraphics[align=c, width=.13\textwidth]{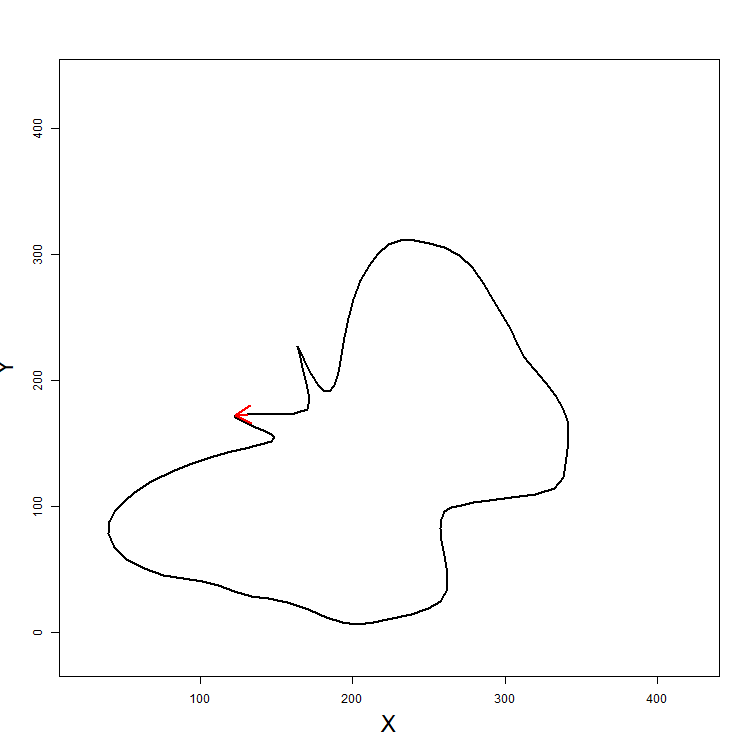}\\ 
 \hline
  $D_1$: & 0.07 & 0.07 & 0.07 & 0.05 & 0.14 \\ 
  $D_2$: & 0.04 & 0.04 & 0.03 & 0.02 & 0.08 \\ 
  (c)  &\includegraphics[align=c, width=.13\textwidth]{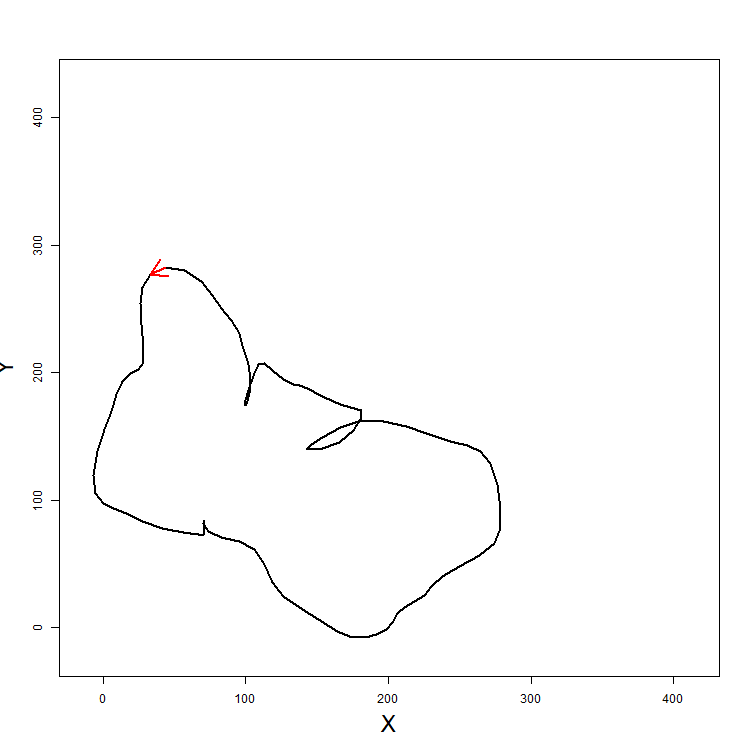} &\includegraphics[align=c, width=.13\textwidth]{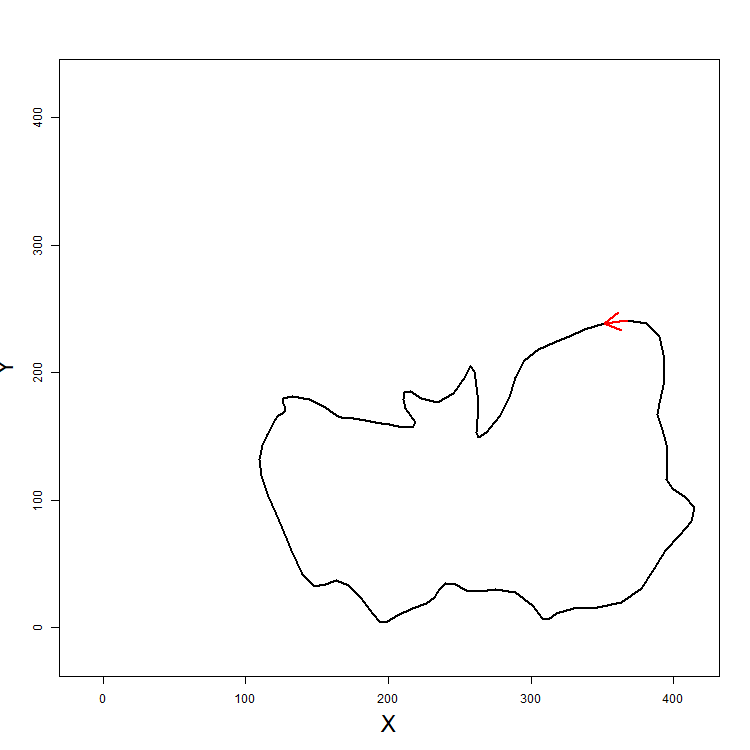}& \includegraphics[align=c, width=.13\textwidth]{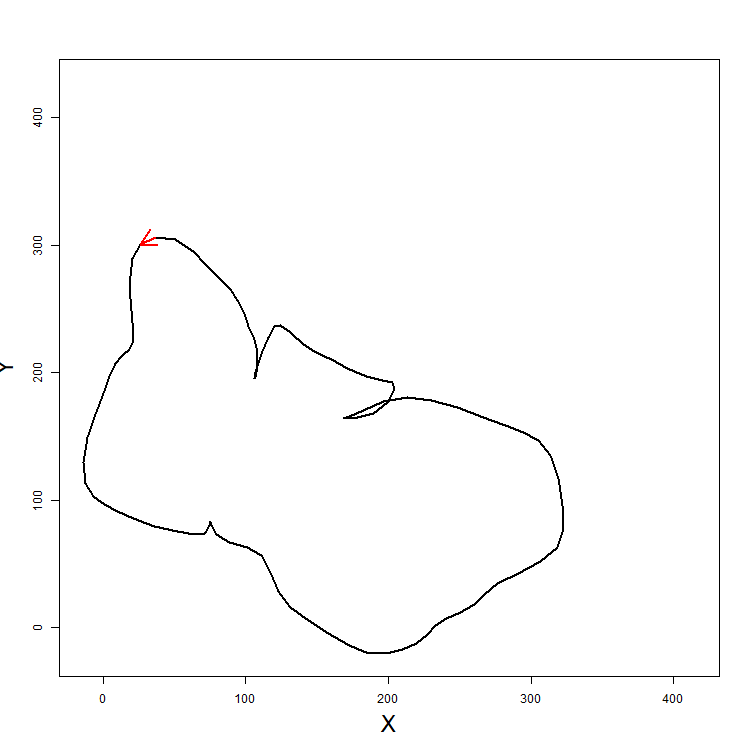}& \includegraphics[align=c, width=.13\textwidth]{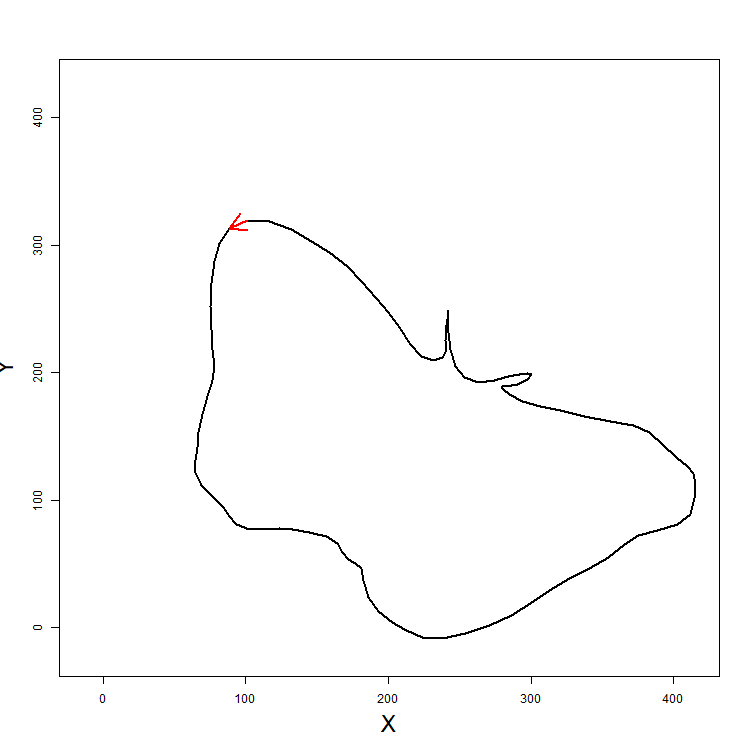} & \includegraphics[align=c, width=.13\textwidth]{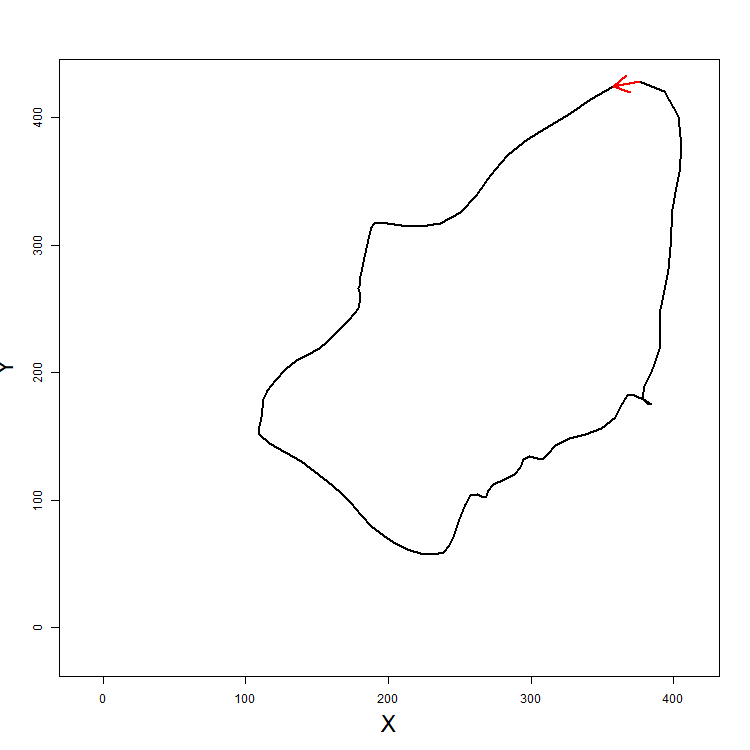}\\
\end{tabular}
\caption{Butterfly contours generated with (a) our approach without the deformation parameters, (b) our approach with the deformation parameters, and (c) with MFPCA.}
    \label{gen_1}
\end{figure}
\begin{figure}[H]
\centering

\begin{tabular}{c c c c c c c  c c c c c c}
$D_1$: & 0.03 & 0.03 & 0.06 & 0.05 & 0.03 \\ 
  $D_2$: & 0.01 & 0.01 & 0.01 & 0.02 & 0.01 \\ 
  (a)& \includegraphics[align=c, width=.13\textwidth]{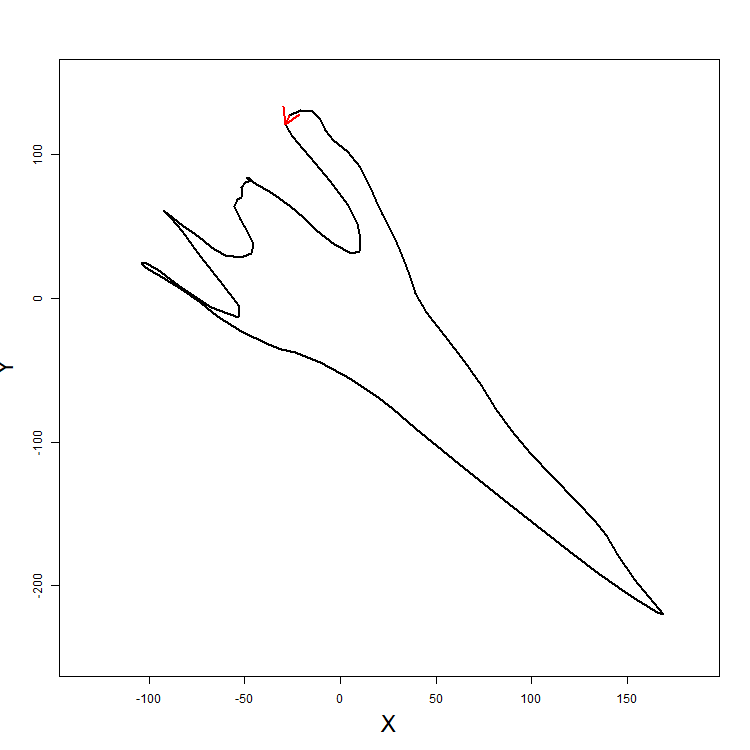} &\includegraphics[align=c, width=.13\textwidth]{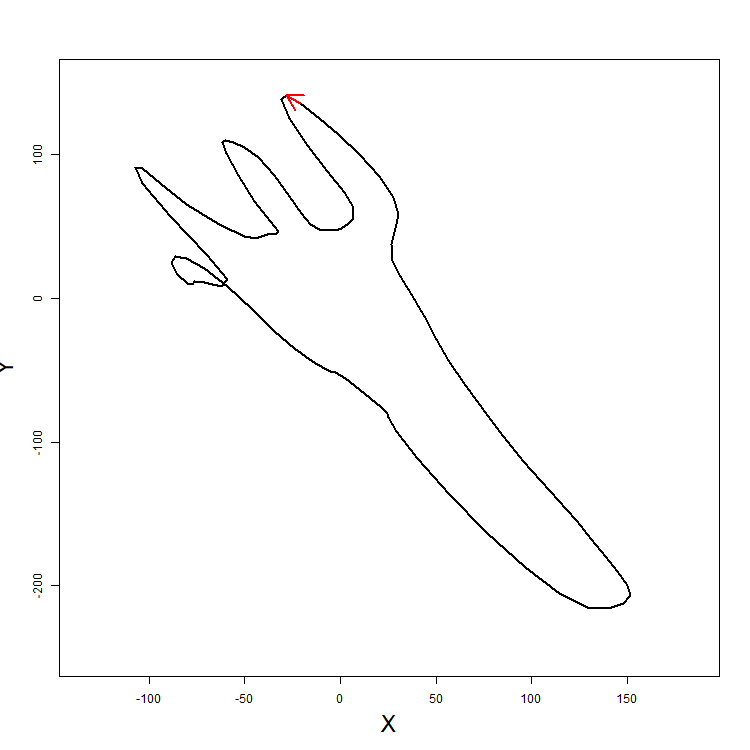}& \includegraphics[align=c, width=.13\textwidth]{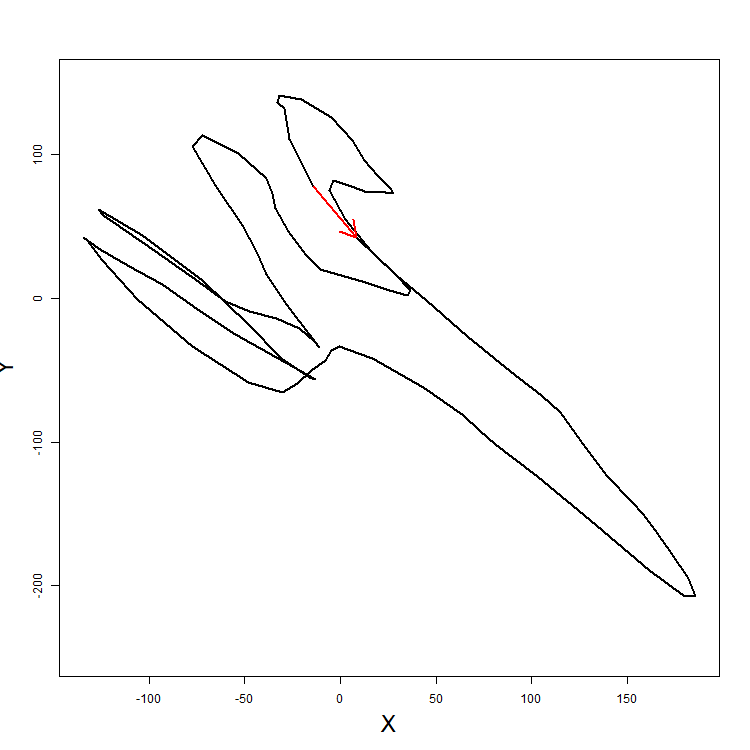} & \includegraphics[align=c, width=.13\textwidth]{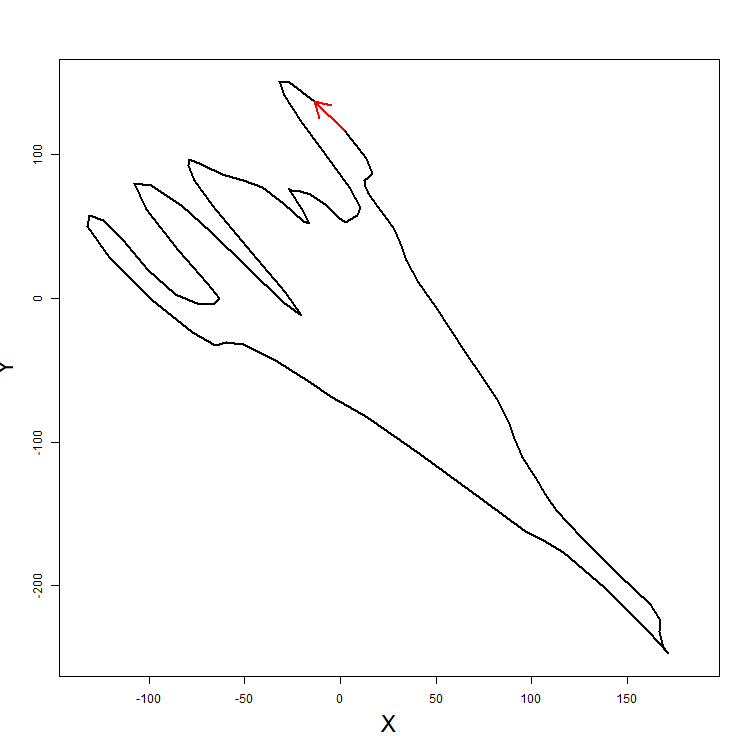} & \includegraphics[align=c, width=.13\textwidth]{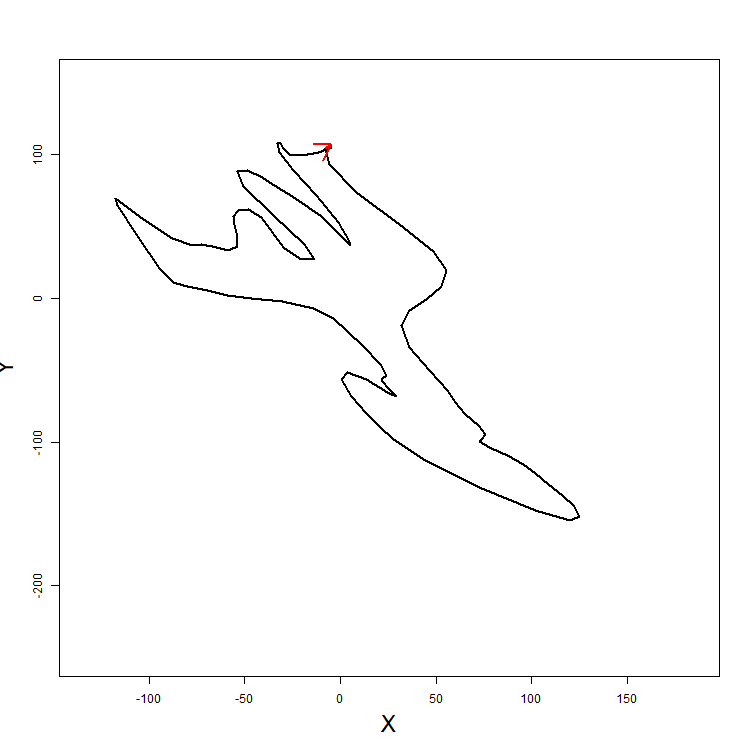}\\
 \hline
  $D_1$: & 0.03 & 0.05 & 0.07 & 0.06 & 0.03 \\ 
  $D_2$: & 0.01 & 0.03 & 0.02 & 0.02 & 0.01 \\
   (b)& \includegraphics[align=c, width=.13\textwidth]{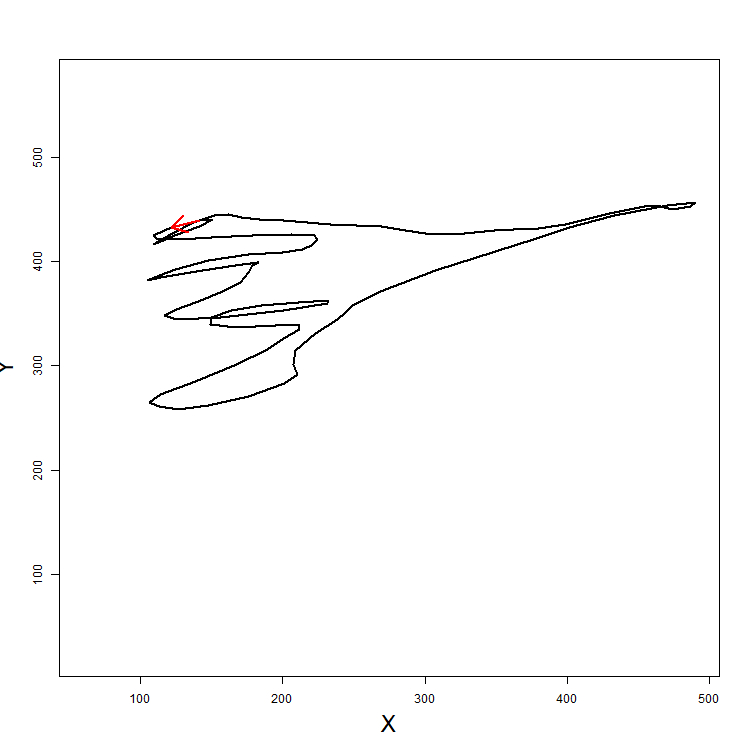} &\includegraphics[align=c, width=.13\textwidth]{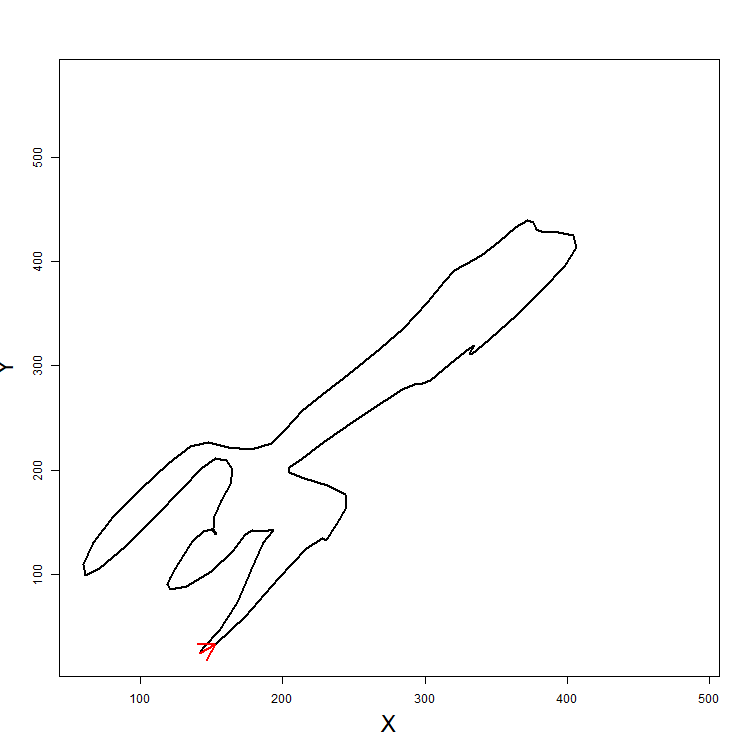}& \includegraphics[align=c, width=.13\textwidth]{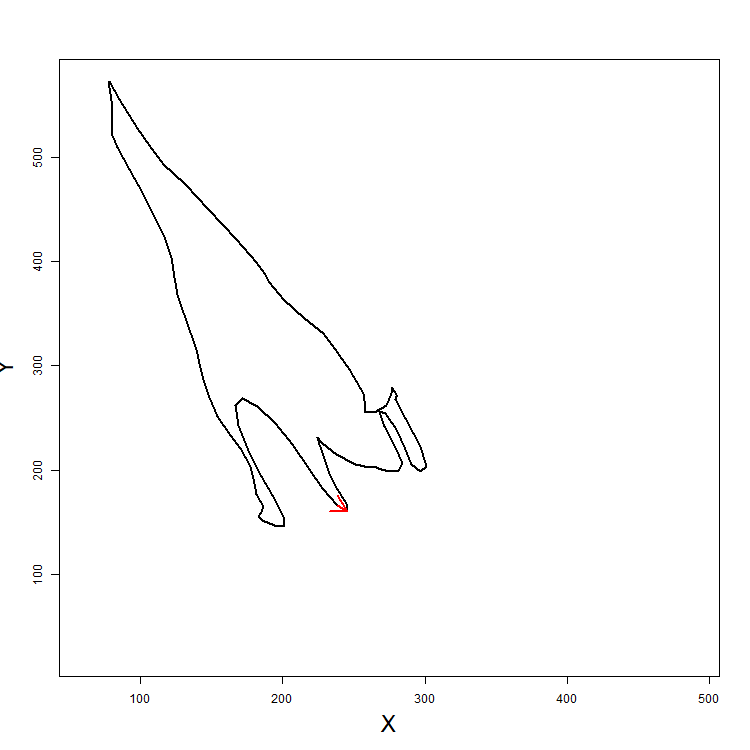} & \includegraphics[align=c, width=.13\textwidth]{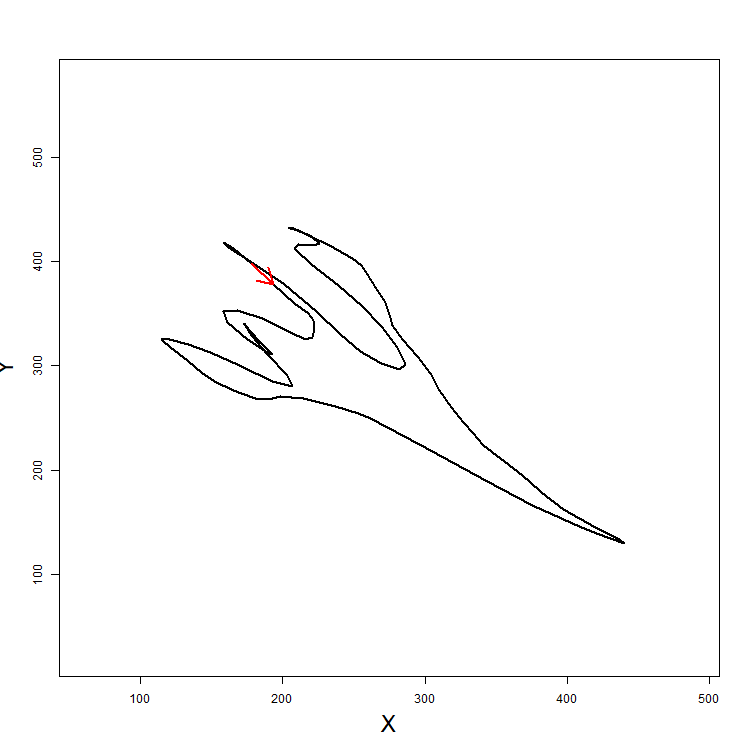} & \includegraphics[align=c, width=.13\textwidth]{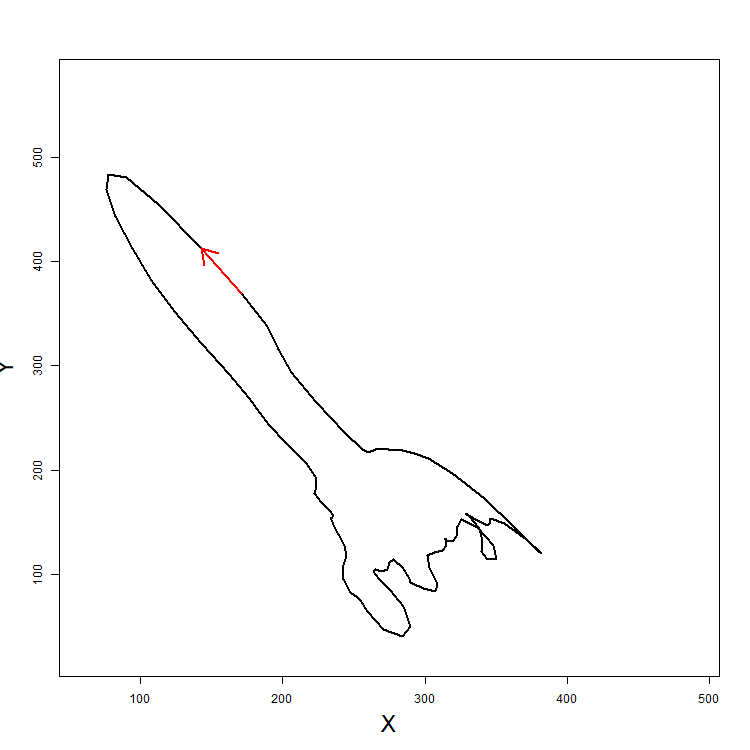}\\
   \hline
 $D_1$: & 0.80 & 0.06 & 0.16 & 0.12 & 0.13 \\ 
  $D_2$: & 0.68 & 0.03 & 0.10 & 0.08 & 0.07 \\ 
    (c)&\includegraphics[align=c, width=.13\textwidth]{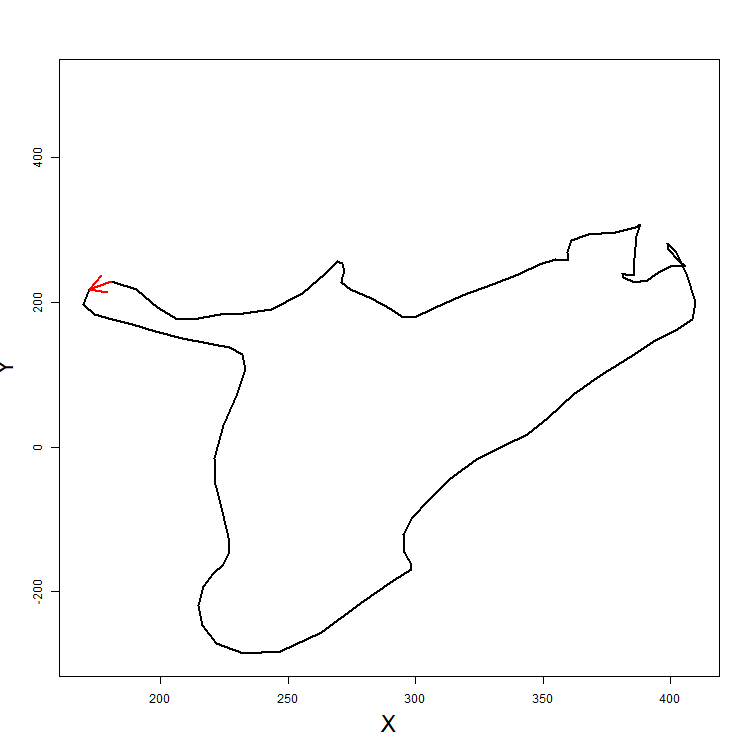} &\includegraphics[align=c, width=.13\textwidth]{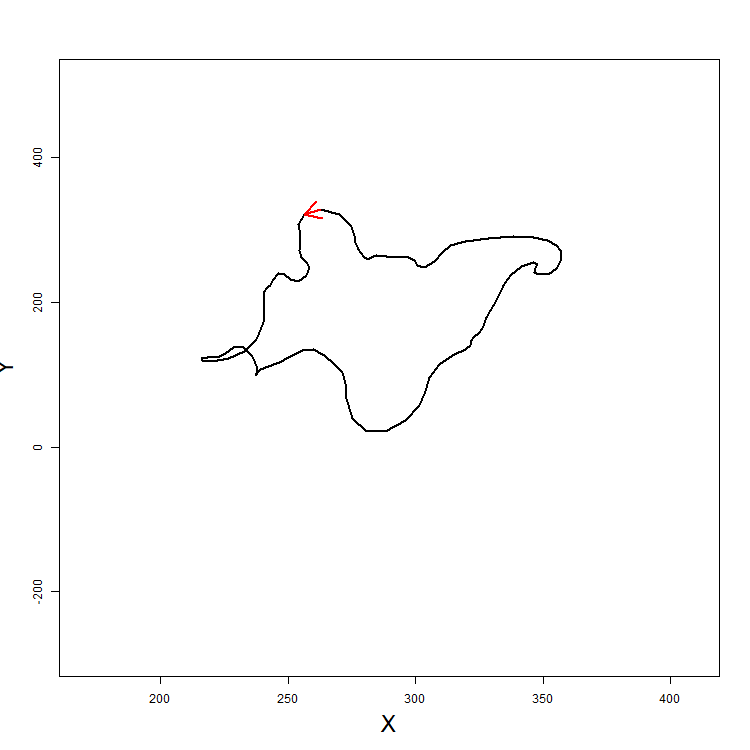}& \includegraphics[align=c, width=.13\textwidth]{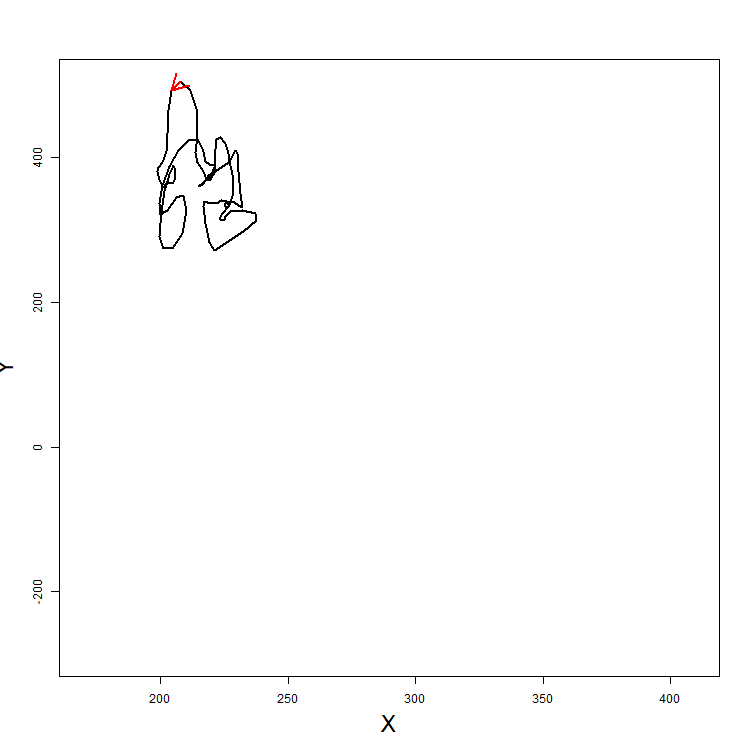} & \includegraphics[align=c, width=.13\textwidth]{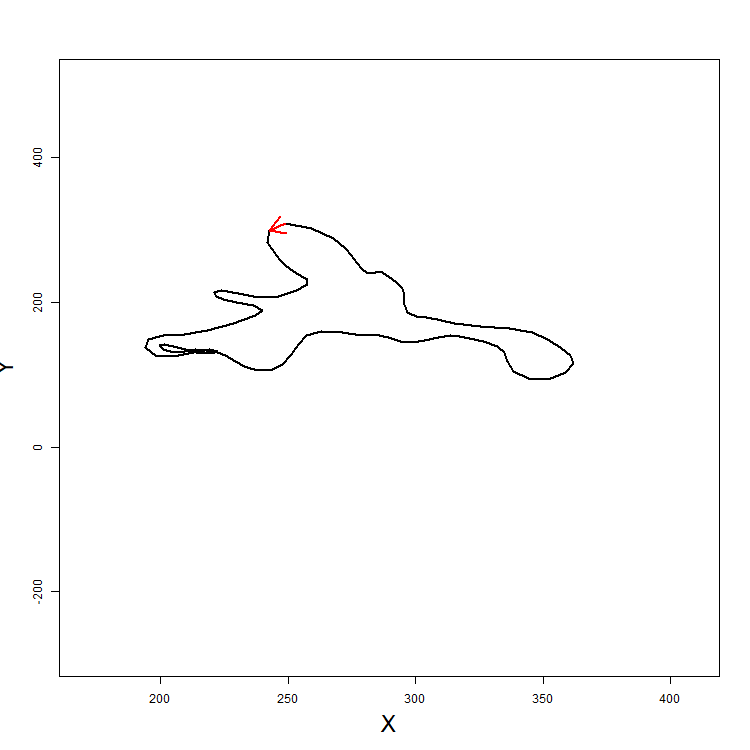} & \includegraphics[align=c, width=.13\textwidth]{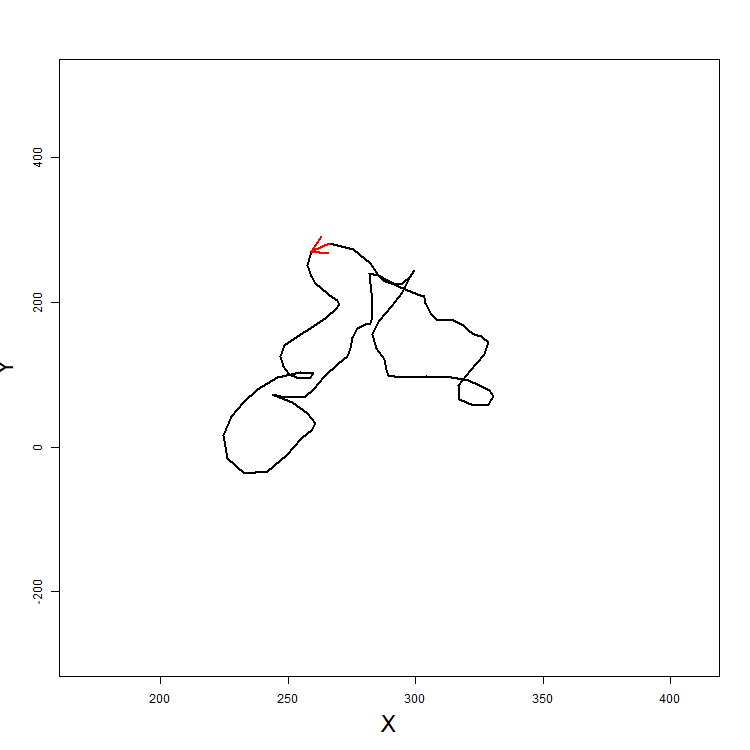} \\
\end{tabular}
\caption{Fork contours generated with (a) our approach without the deformation parameters, (b) our approach with the deformation parameters, and (c) with MFPCA}
    \label{gen_2}
\end{figure}

For the butterfly dataset, illustrated in Figure~\ref{gen_1}, both the proposed approach and the MFPCA baseline yield comparable performance according to the proposed metrics. However, visual inspection favors our approach. In particular, the generated shapes better preserve fine structural features such as the head and antennae, which are often poorly represented or entirely absent in curves generated using MFPCA.

For the fork dataset, illustrated in Figure~\ref{gen_2}, the metrics clearly favor the proposed approach. The shapes generated using MFPCA show little resemblance to actual forks, reflecting the limitations of modeling unaligned curves. In contrast, the proposed method produces shapes that remain consistent with the overall fork geometry. Nevertheless, some generated curves still exhibit irregular or distorted features. This highlights the intrinsic difficulty of modeling complex shapes from a limited sample size ($n_f = 20$).

\section{Discussion}
\label{disc}

This paper studies random planar curves, with a particular focus on contours automatically extracted from images. The main contribution of this work lies in the proposed alignment framework. We introduce a functional approach that explicitly models deformation variables and focuses on the dominant source of phase variability in contour data, namely the arbitrary choice of the starting point. This leads to a tractable and fully functional alignment procedure, referred to as the ICF algorithm, which avoids additional discretization and yields interpretable estimates of the deformation parameters.

In contrast to fully elastic alignment frameworks, which rely on highly flexible reparametrizations, our approach deliberately restricts the class of admissible transformations. This choice is motivated by the nature of image-derived contours, where the primary source of misalignment arises from the parametrization rather than from complex local deformations. By focusing on this setting, the proposed method provides a simple, identifiable, and computationally efficient alternative that is particularly well suited for automatically extracted shapes.

Beyond alignment, we propose a generative model for random planar curves within a functional data analysis framework. This model is based on two separate principal component analyses: one performed on the aligned functional variable and the other on the estimated deformation parameters. 
A key advantage of this formulation is that it enables the analysis of shape variability and deformation effects, rather than treating the latter as nuisance quantities.

The effectiveness of the proposed framework is demonstrated through both simulation studies and real data experiments. The results of the simulation study highlight the accuracy of the alignment procedure and its robustness to discretization. The real data analysis further emphasizes the importance of alignment: the estimated shapes are coherent, and the proposed model generates realistic curves that better reflect the geometric structure of the data than approaches based on unaligned functional representations.

A key feature of our approach is the explicit inclusion of deformation variables in the modeling framework, in particular the scaling parameter. This allows us to work within a linear functional space of closed curves. In contrast, treating scaling as a nuisance parameter would require working on the unit sphere $\mathbf{S}^\infty$, which is a non-linear manifold, and adapting tools from shape analysis such as tangent space approximations \citep{dryden1998, dai2018}. 

Overall, this work highlights the importance of tailoring alignment procedures to the structure of the data. By focusing on the specific characteristics of contours extracted from images, we obtain a simple yet effective framework that bridges functional data analysis, shape analysis and image analysis.

 Our work also has some limitations. First, we focus on single-object contours, which may be restrictive in applications involving multiple objects. Extending the framework to handle multiple interacting curves is a natural direction for future research. Second, the class of reparametrization functions considered in this paper is restricted to shifts of the starting point. As already mentioned, our framework is intentionally designed for contours extracted from images, where the dominant source of phase variability arises from the arbitrary choice of the starting point. While this setting motivates the use of simple reparametrization functions, the proposed approach could be extended to incorporate more flexible transformations when required. In particular, integrating richer classes of reparametrizations, such as diffeomorphic transformations, constitutes a natural direction for future work.

\newpage

\bibliography{refs}
\bibliographystyle{apalike}

\newpage 
\appendix 
\section{Additional figures}
\label{appendix}

This section presents additional results obtained from the analysis of the bat, horseshoe, and spoon datasets introduced in Section~\ref{app}. The same analysis pipeline as for the butterfly and fork datasets was applied, following the alignment and modeling procedures described in Sections~\ref{Resu_align} and \ref{Resu_mod}.

The alignment results are displayed in Figure~\ref{align-2}. Overall, the estimated shapes and their associated coordinate functions appear to be well aligned with the reference function $\hat{\boldsymbol{\mu}}$. The spoon dataset is particularly interesting, as the bowl of the spoon does not always appear on the same side of the contour. This illustrates a potential identifiability issue for certain shapes. Nevertheless, the corresponding coordinate functions remain visually well aligned.

The modeling results for $\mathbf{C}$ are presented in Figures~\ref{bat-gen}, \ref{horse-gen}, and \ref{spoon-gen}. Compared with the standard multivariate functional PCA (MFPCA) approach, the proposed method generally generates more realistic shapes, particularly for the bat and horseshoe datasets.



\begin{figure}[H]
    \centering
   \begin{tabular}{ c c c}
   $\hat{\boldsymbol{\mu}}$& \includegraphics[align=c, width=0.1\linewidth]{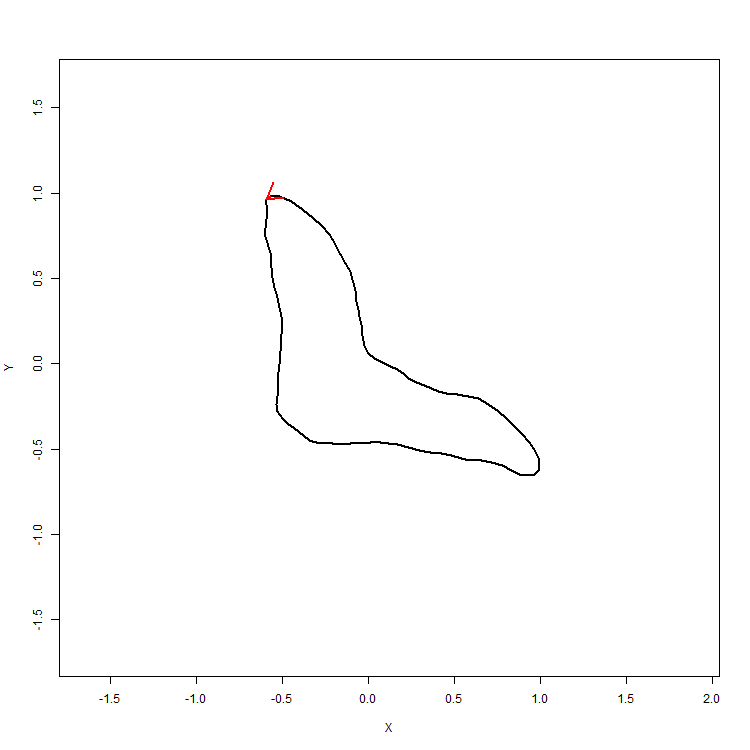}    &  \begin{tabular}{c c c c } 
    $\hat{\mathbf{c}}^*$& 
       \includegraphics[align=c, width=0.15\linewidth]{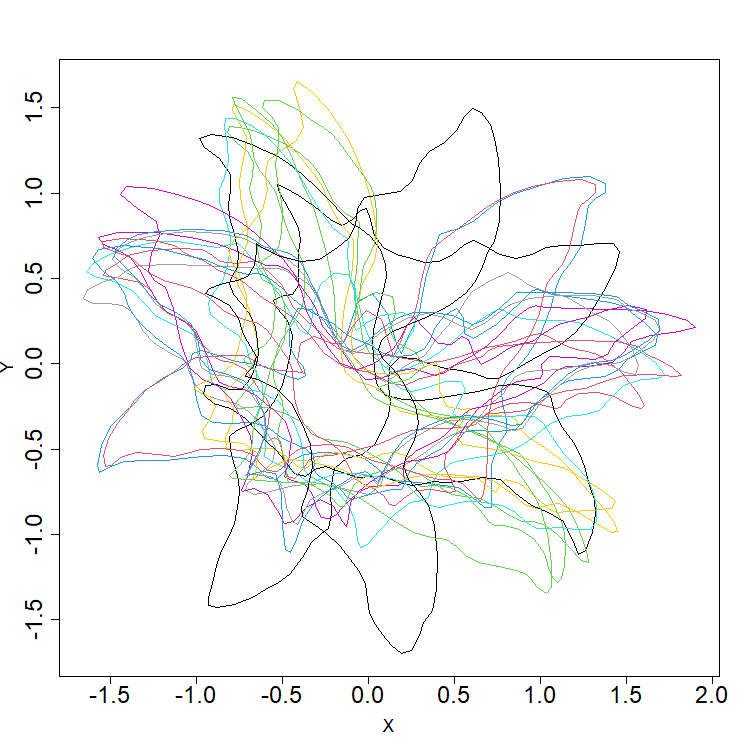} & 
          \includegraphics[align=c, width=0.15\linewidth]{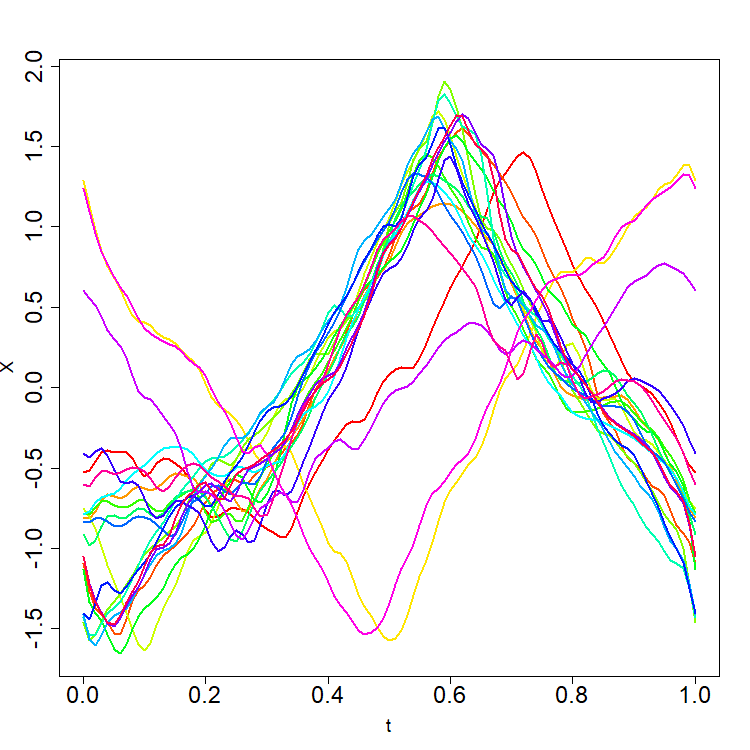} &
          \includegraphics[align=c, width=0.15\linewidth]{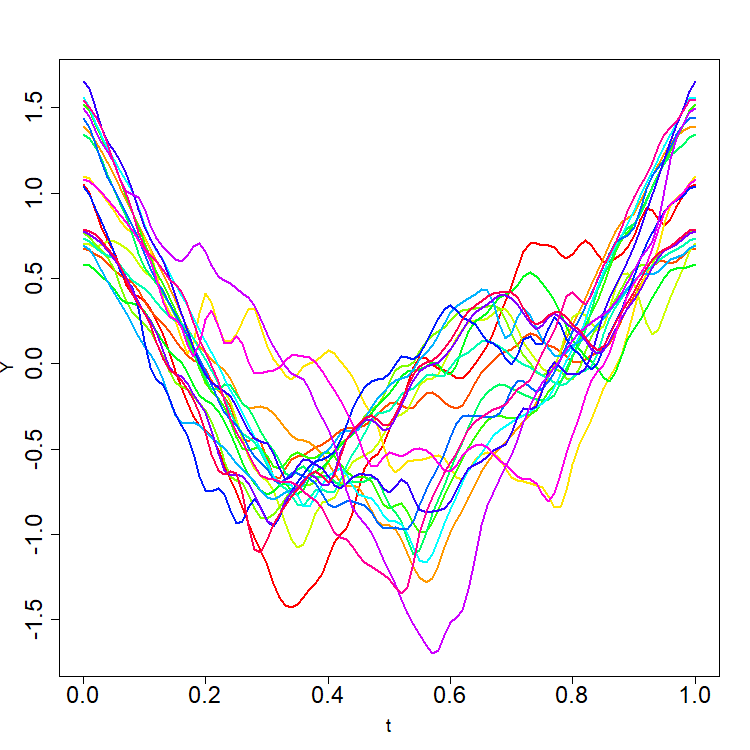} \\
      $\hat{\tilde{\mathbf{c}}}$& \includegraphics[align=c, width=0.15\linewidth]{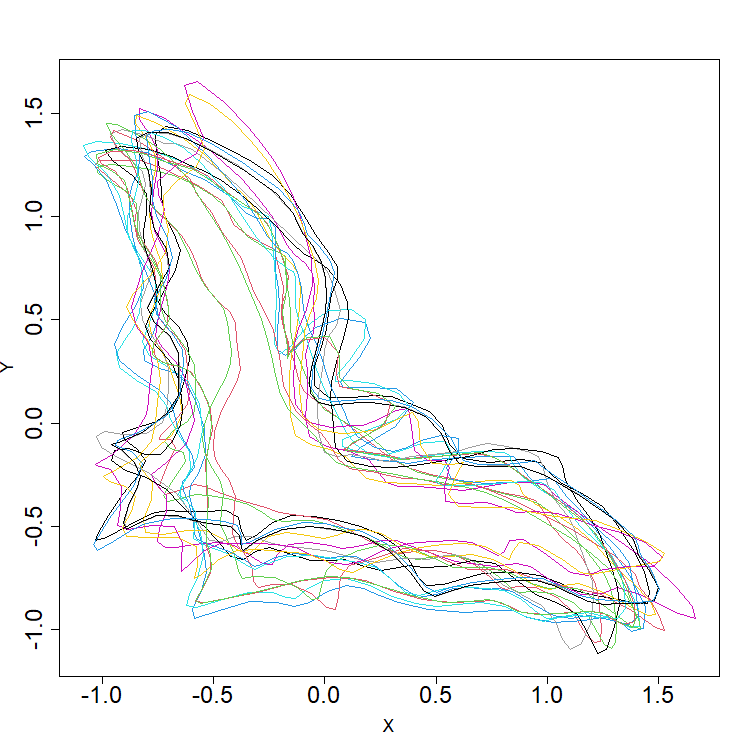} & 
          \includegraphics[align=c, width=0.15\linewidth]{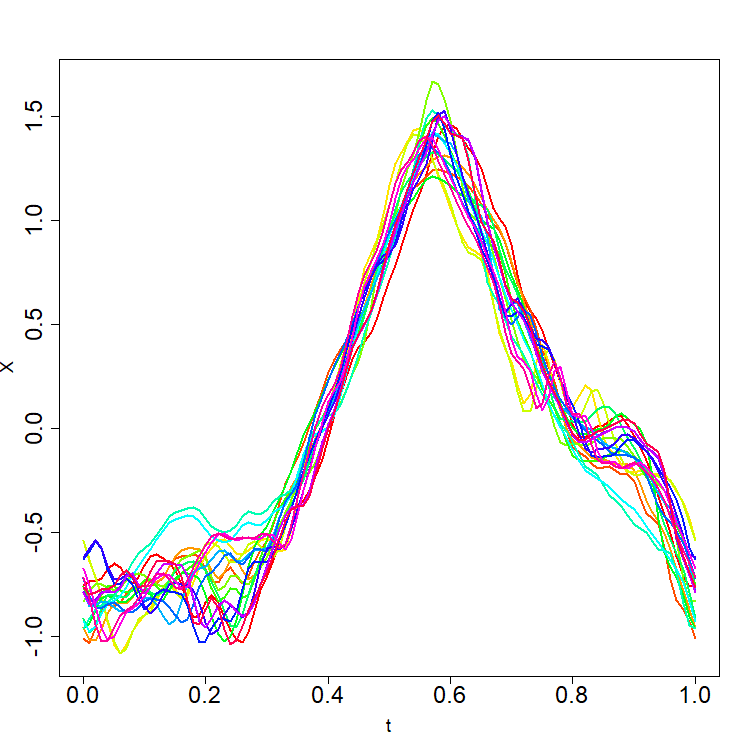} &
          \includegraphics[align=c, width=0.15\linewidth]{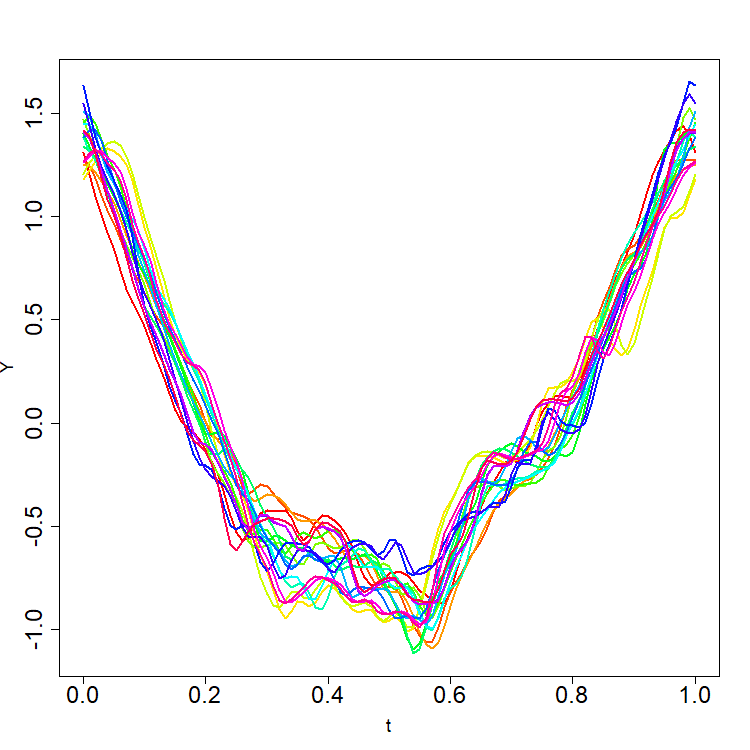}    
   \end{tabular} \\ 
   \hline \\ 
   
   $\hat{\boldsymbol{\mu}}$& \includegraphics[align=c, width=0.154\linewidth]{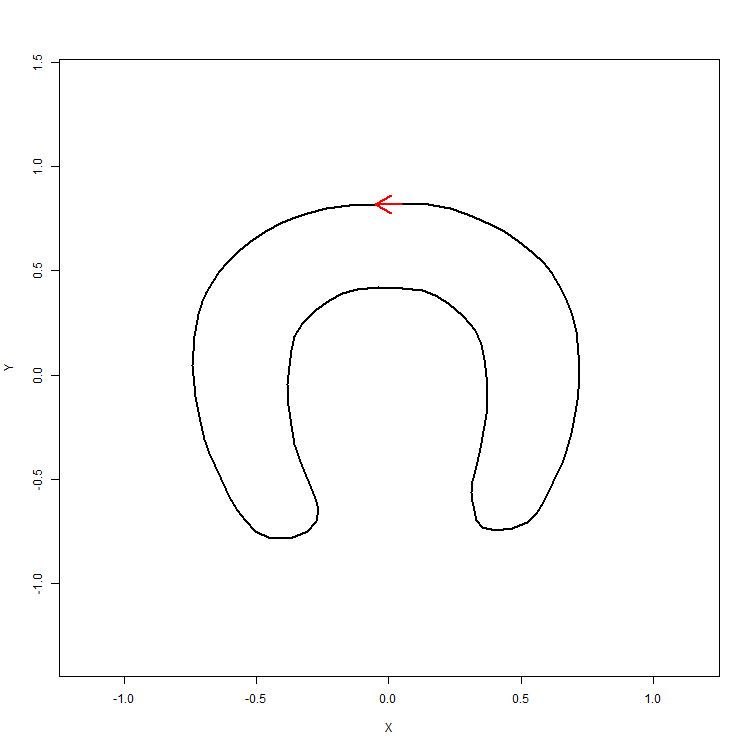}    &  \begin{tabular}{c c c c } 
    $\hat{\mathbf{c}}^*$& 
       \includegraphics[align=c, width=0.15\linewidth]{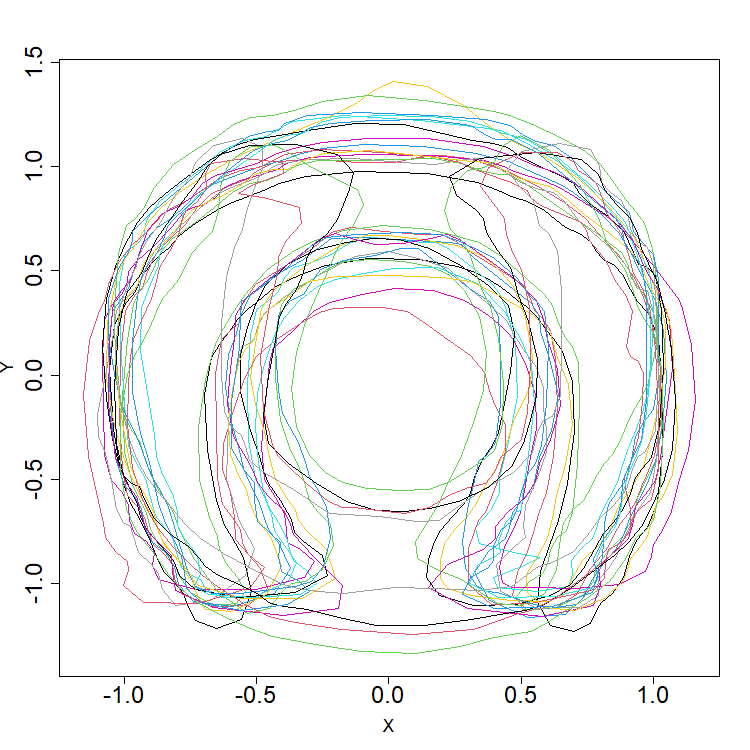} & 
          \includegraphics[align=c, width=0.15\linewidth]{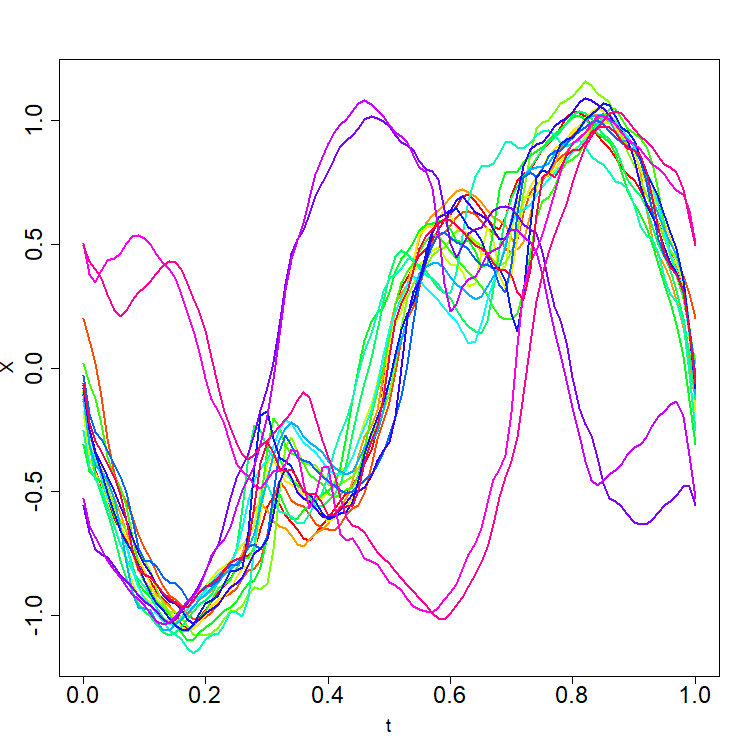} &
          \includegraphics[align=c, width=0.15\linewidth]{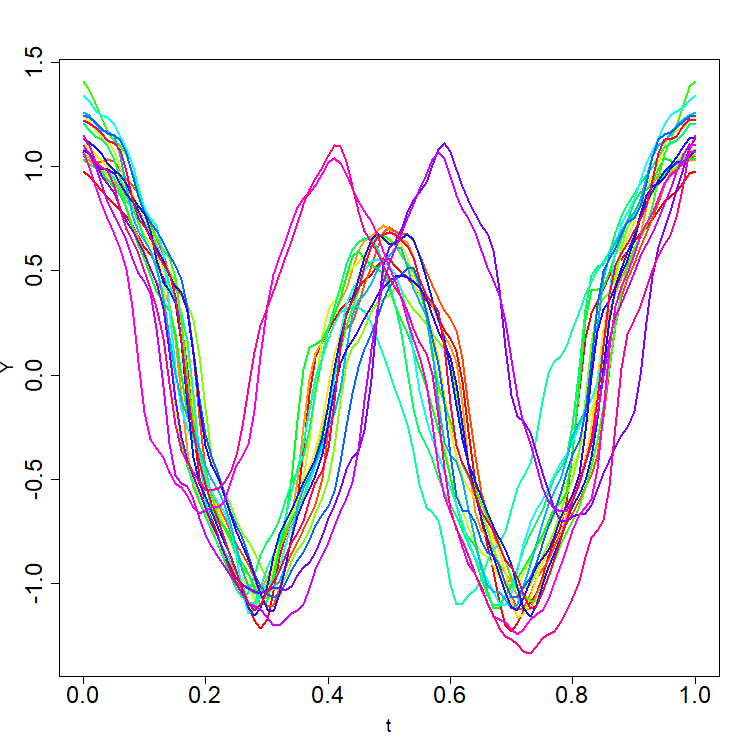} \\
      $\hat{\tilde{\mathbf{c}}}$& \includegraphics[align=c, width=0.15\linewidth]{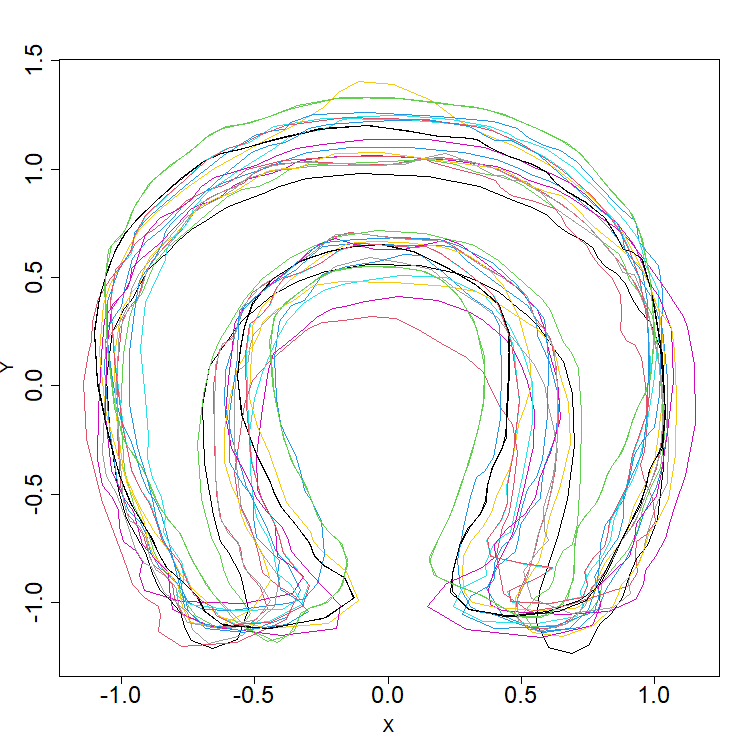} & 
          \includegraphics[align=c, width=0.15\linewidth]{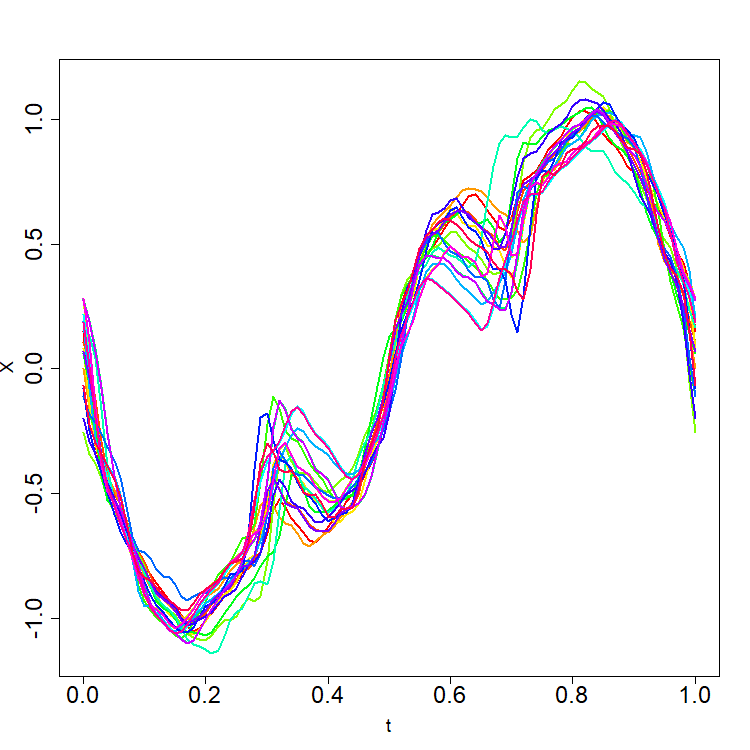} &
          \includegraphics[align=c, width=0.15\linewidth]{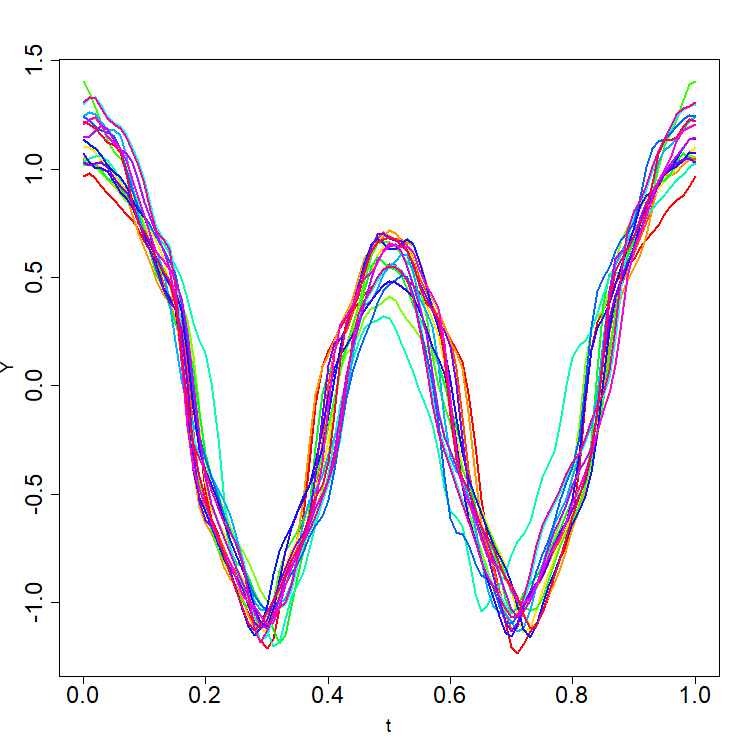}    
   \end{tabular} \\ 
        \hline \\ 
   $\hat{\boldsymbol{\mu}}$& \includegraphics[align=c, width=0.153\linewidth]{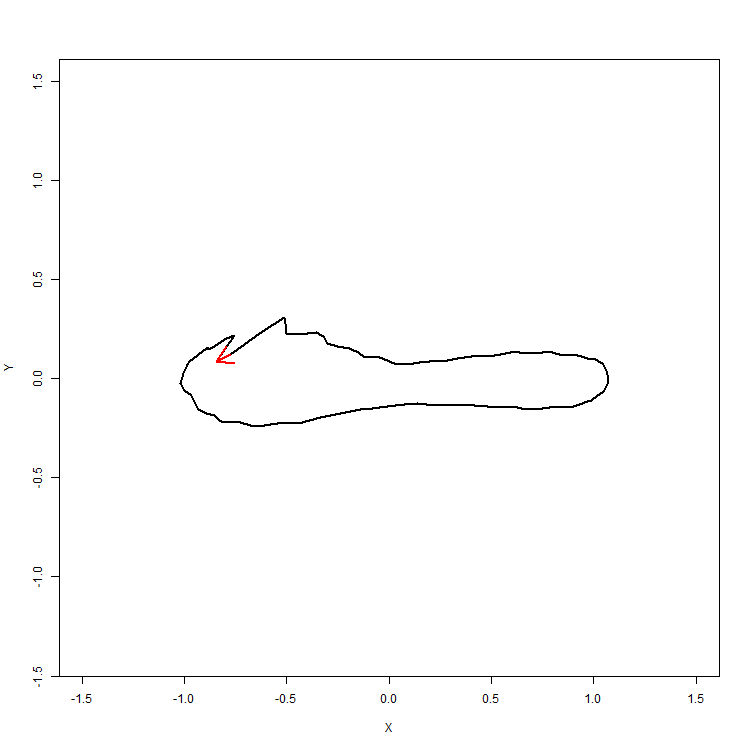} &   \begin{tabular}{c c c c } 
    $\hat{\mathbf{c}}^*$& 
       \includegraphics[align=c, width=0.15\linewidth]{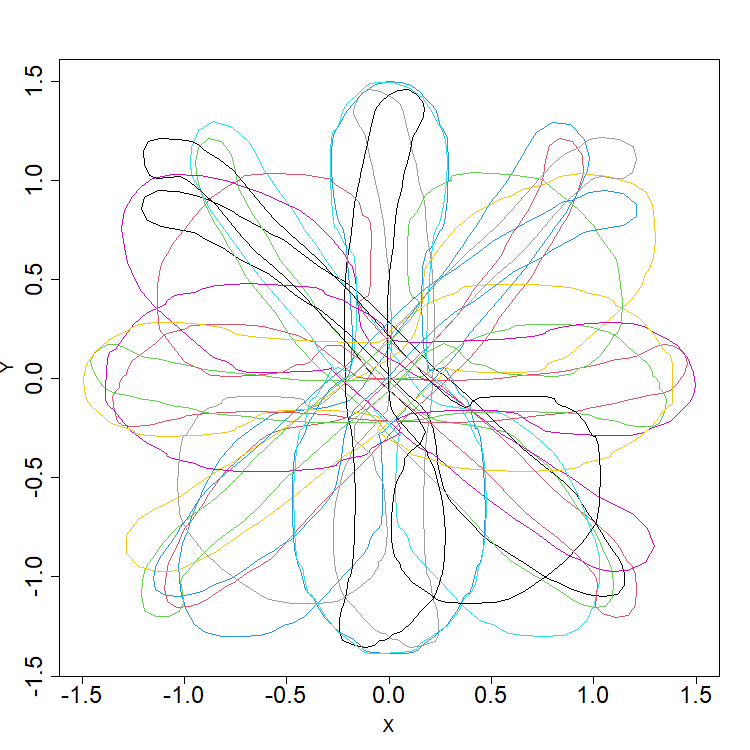} & 
          \includegraphics[align=c, width=0.15\linewidth]{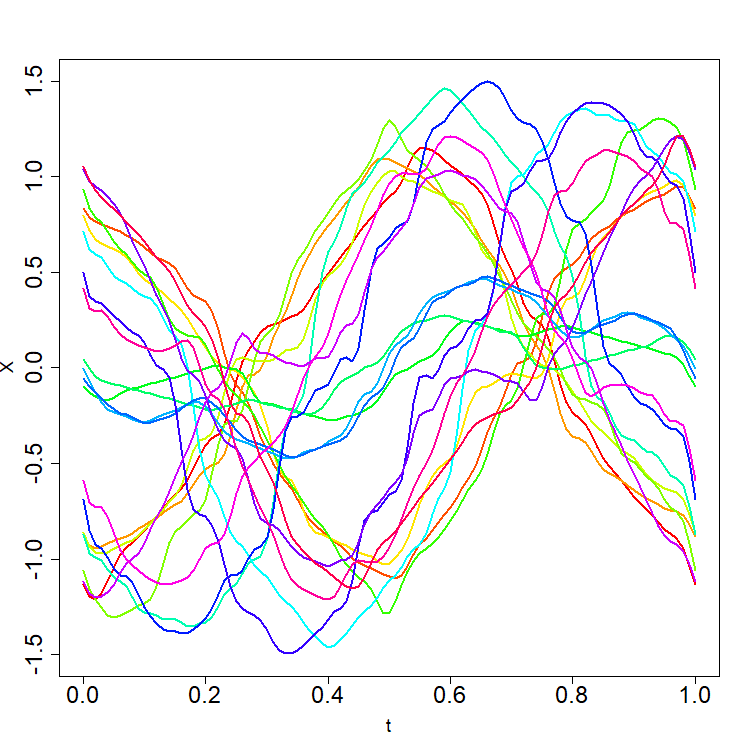} &
          \includegraphics[align=c, width=0.15\linewidth]{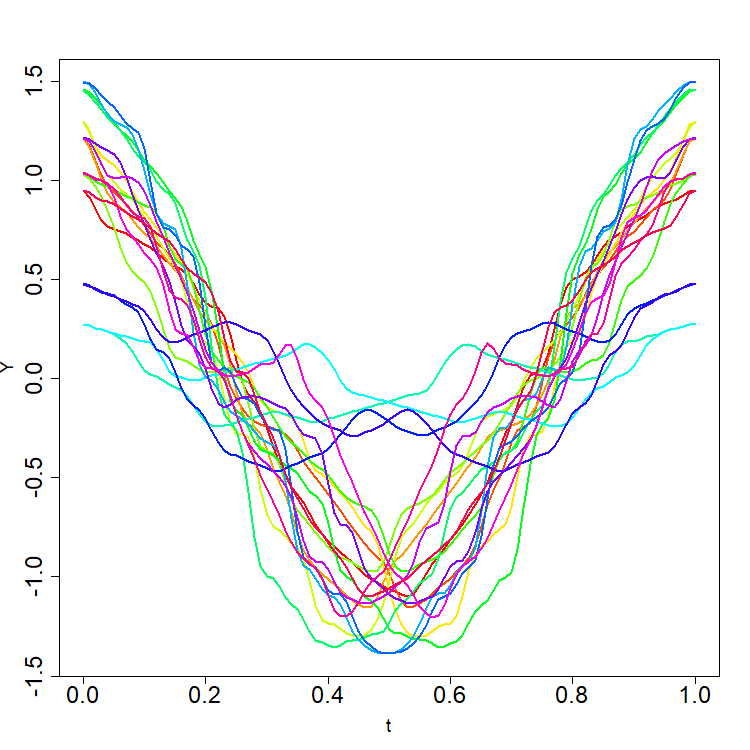} \\
      $\hat{\tilde{\mathbf{c}}}$& \includegraphics[align=c, width=0.15\linewidth]{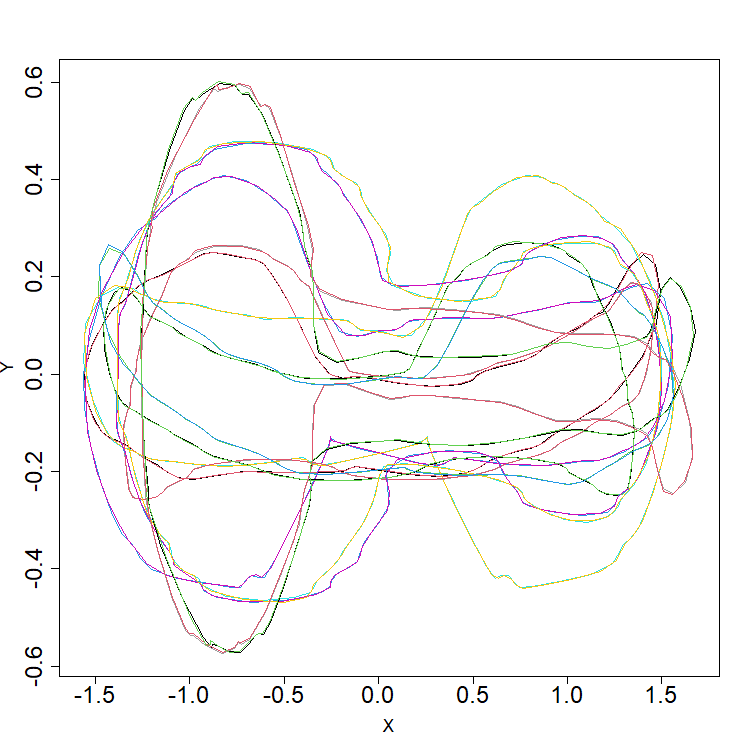} & 
          \includegraphics[align=c, width=0.15\linewidth]{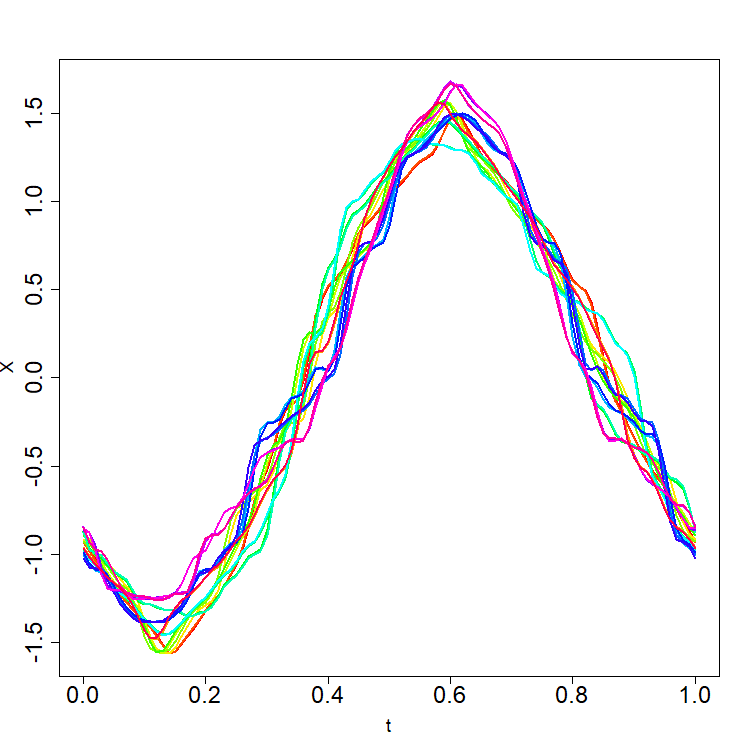} &
          \includegraphics[align=c, width=0.15\linewidth]{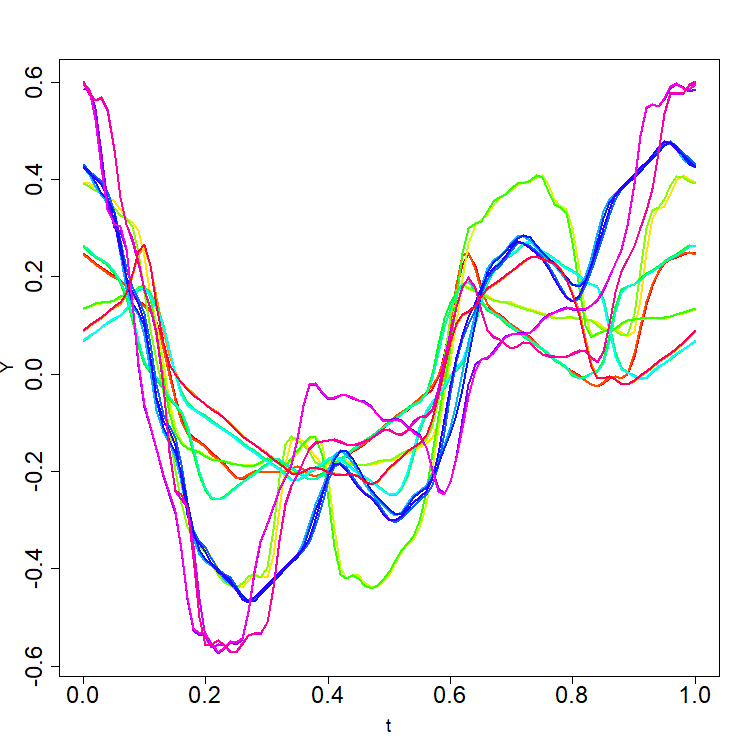}    
   \end{tabular} 
    \end{tabular}
    \caption{Illustration of the alignment procedure for the bat dataset (top block), the horseshoe dataset (second block) and the spoon dataset (bottom block). In each block, the left column displays the estimated Fréchet mean. The first row shows estimated pre-shapes together with their coordinate functions, while the second row presents the corresponding aligned shapes.}
    \label{align-2}
\end{figure}

\begin{figure}[H]
    \centering
\begin{tabular}{c c c c c c c  c c c c c c}
$D_1$: & 0.03 & 0.03 & 0.06 & 0.05 & 0.03 \\ 
  $D_2$: & 0.01 & 0.01 & 0.01 & 0.02 & 0.01 \\ 
 (a)& \includegraphics[align=c, width=.13\textwidth]{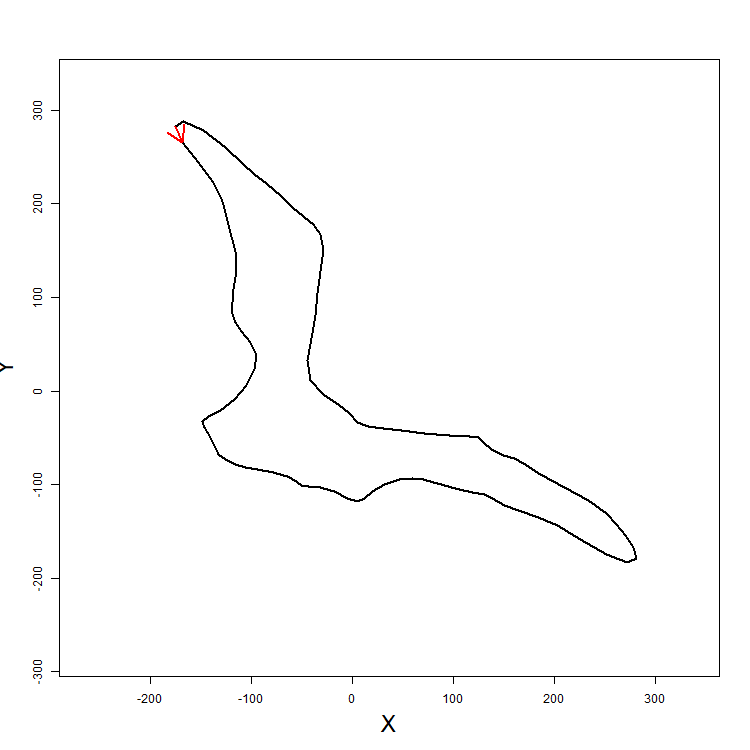} &\includegraphics[align=c, width=.13\textwidth]{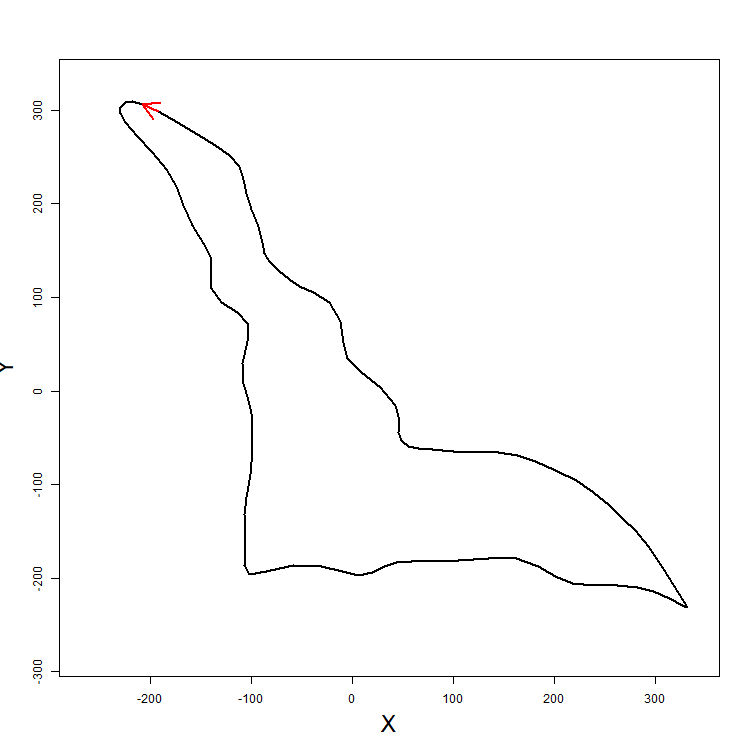}& \includegraphics[align=c, width=.13\textwidth]{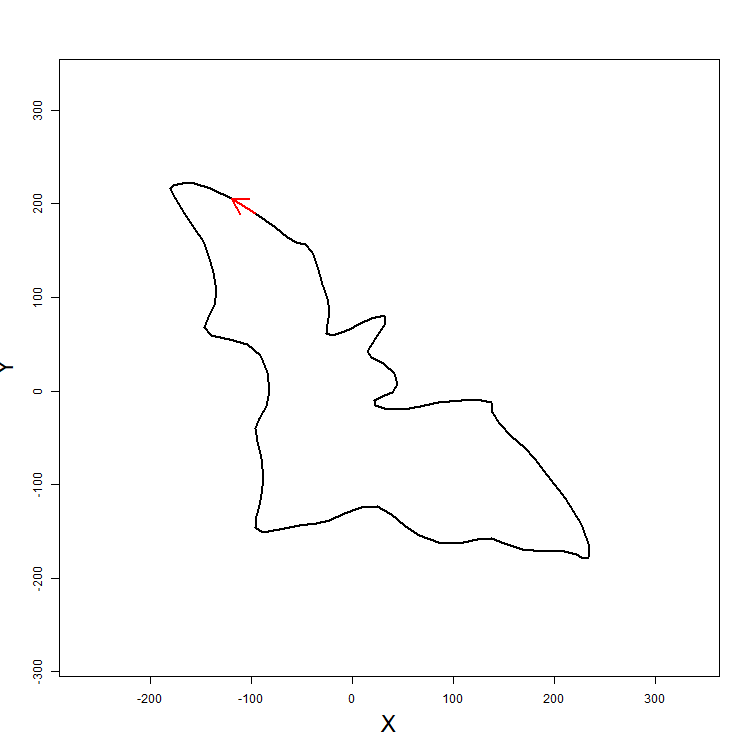} & \includegraphics[align=c, width=.13\textwidth]{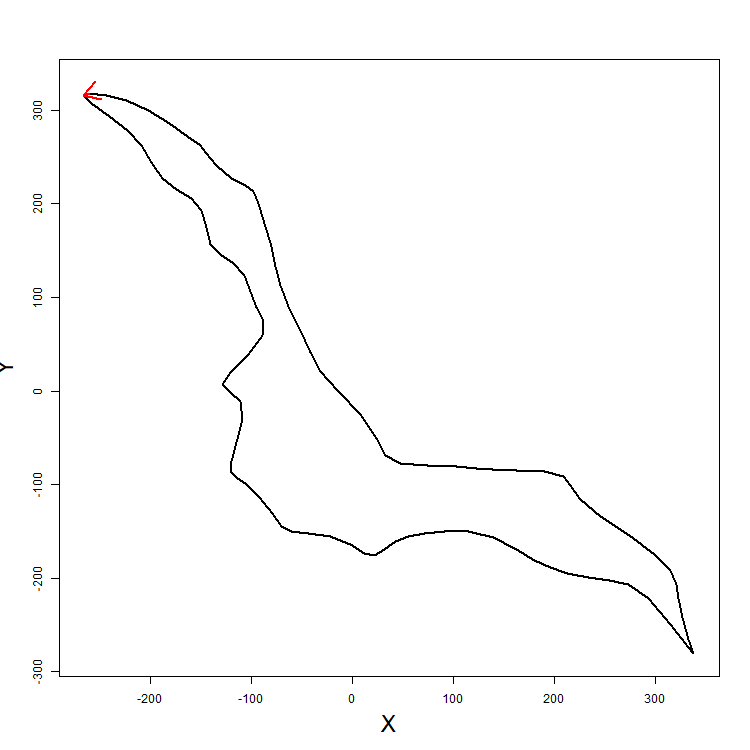} & \includegraphics[align=c, width=.13\textwidth]{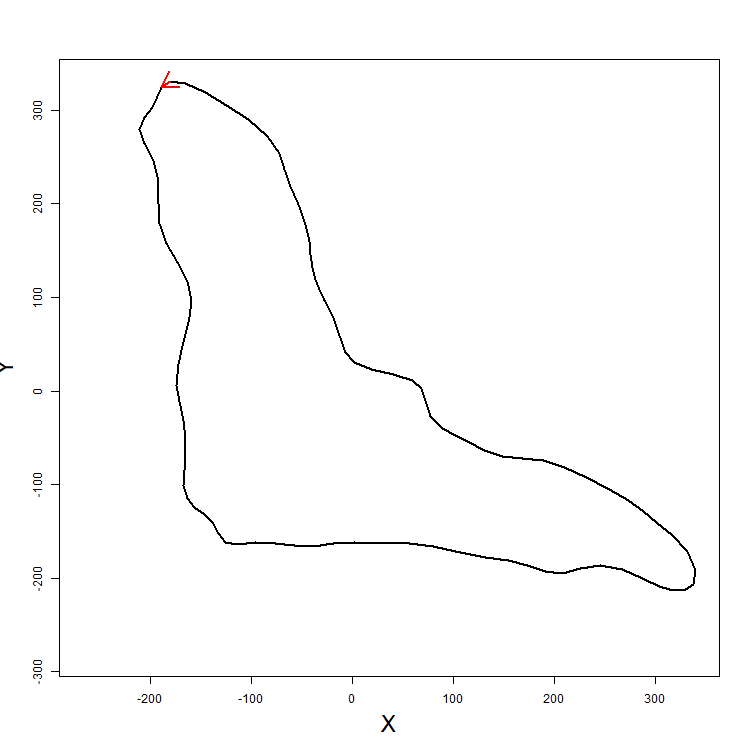}\\
 \hline
  $D_1$: & 0.03 & 0.05 & 0.07 & 0.06 & 0.03 \\ 
  $D_2$: & 0.01 & 0.03 & 0.02 & 0.02 & 0.01 \\ 
   (b)& \includegraphics[align=c, width=.13\textwidth]{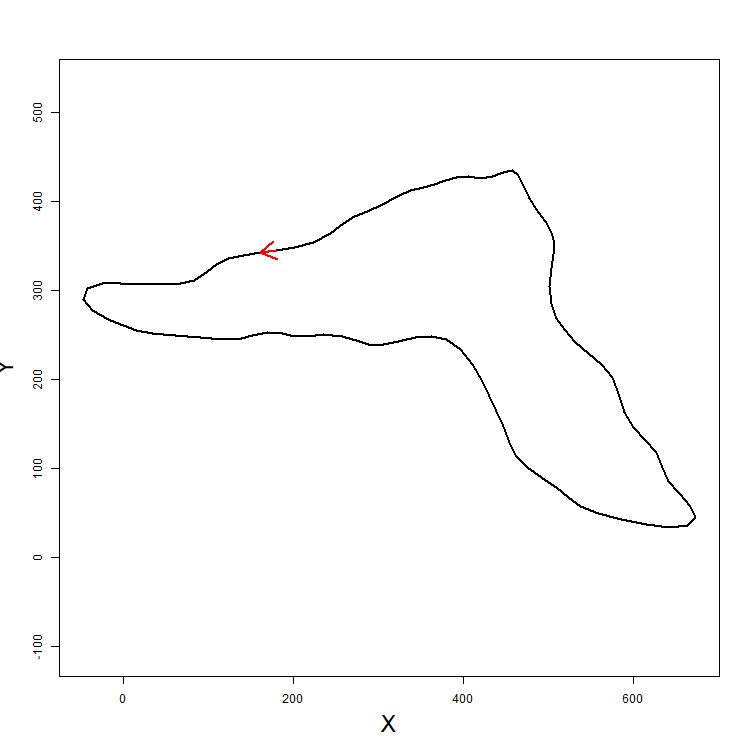} &\includegraphics[align=c, width=.13\textwidth]{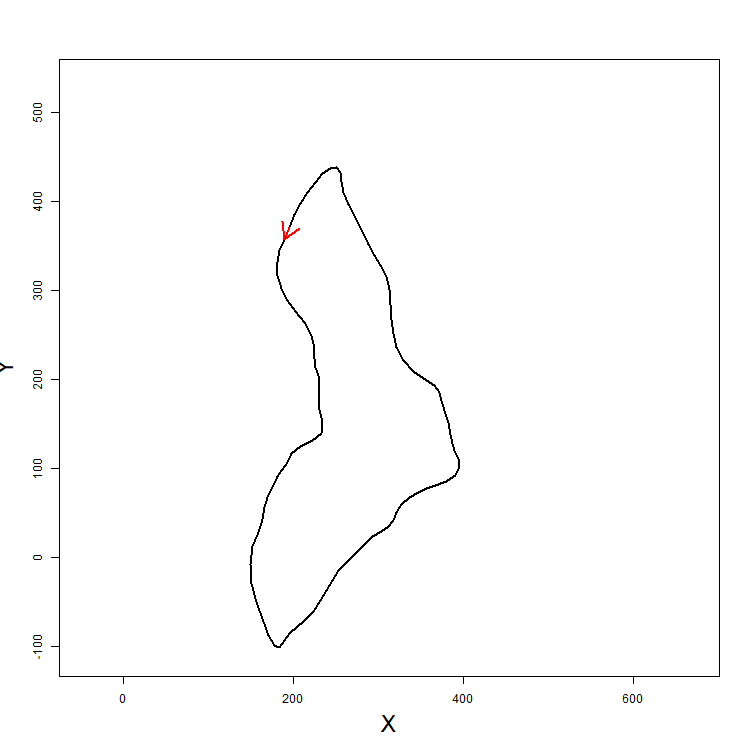}& \includegraphics[align=c, width=.13\textwidth]{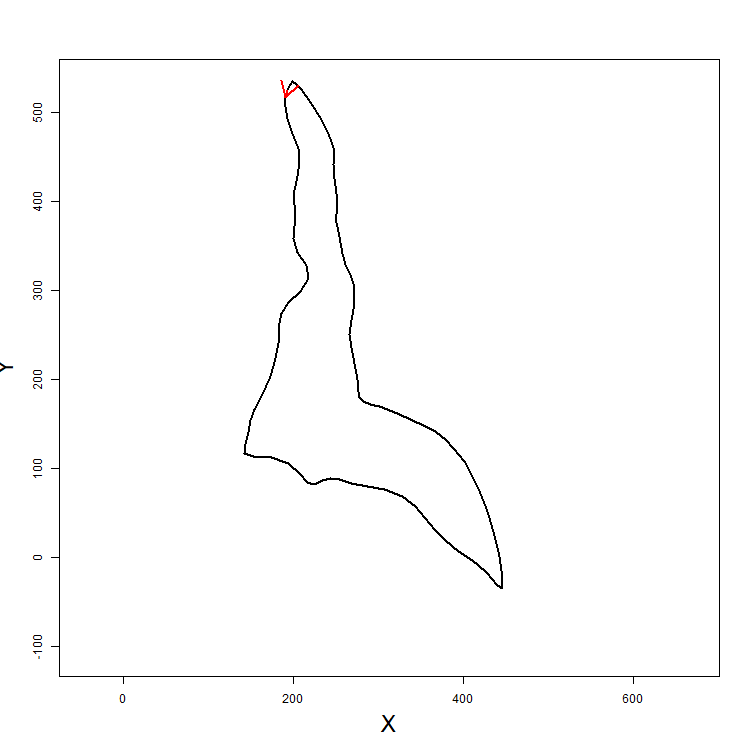} & \includegraphics[align=c, width=.13\textwidth]{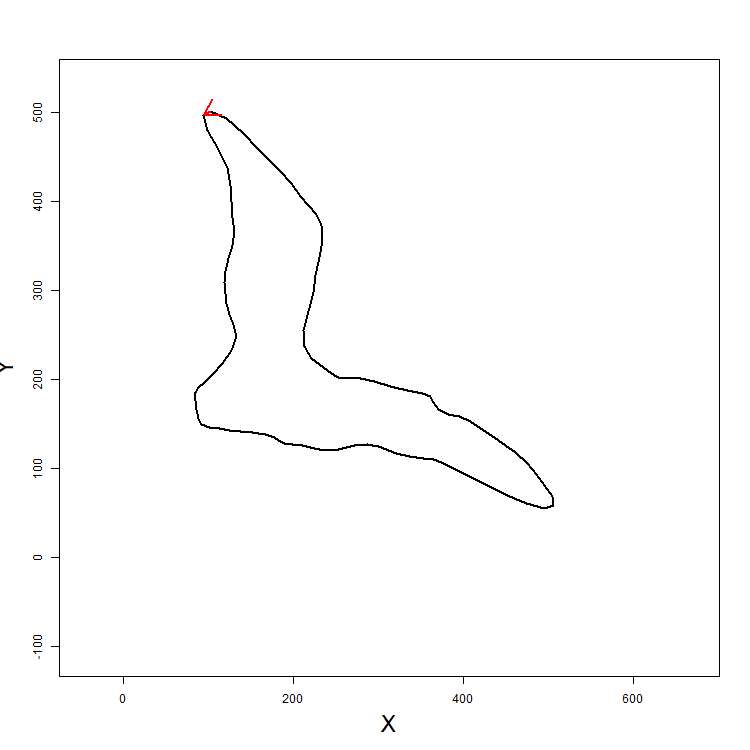} & \includegraphics[align=c, width=.13\textwidth]{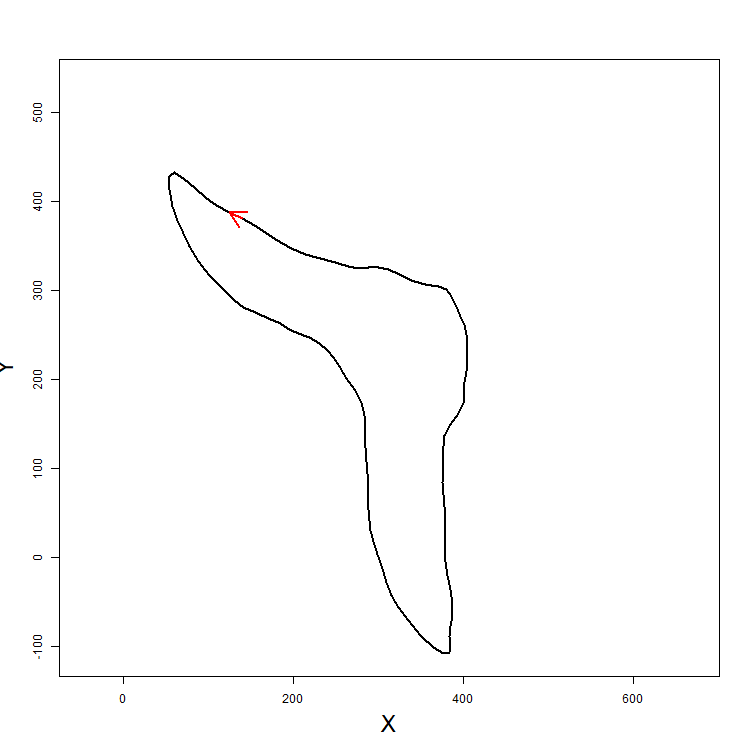}\\
   \hline
 $D_1$: & 0.80 & 0.06 & 0.16 & 0.12 & 0.13 \\ 
  $D_2$: & 0.68 & 0.03 & 0.10 & 0.08 & 0.07 \\  
    (c)&\includegraphics[align=c, width=.13\textwidth]{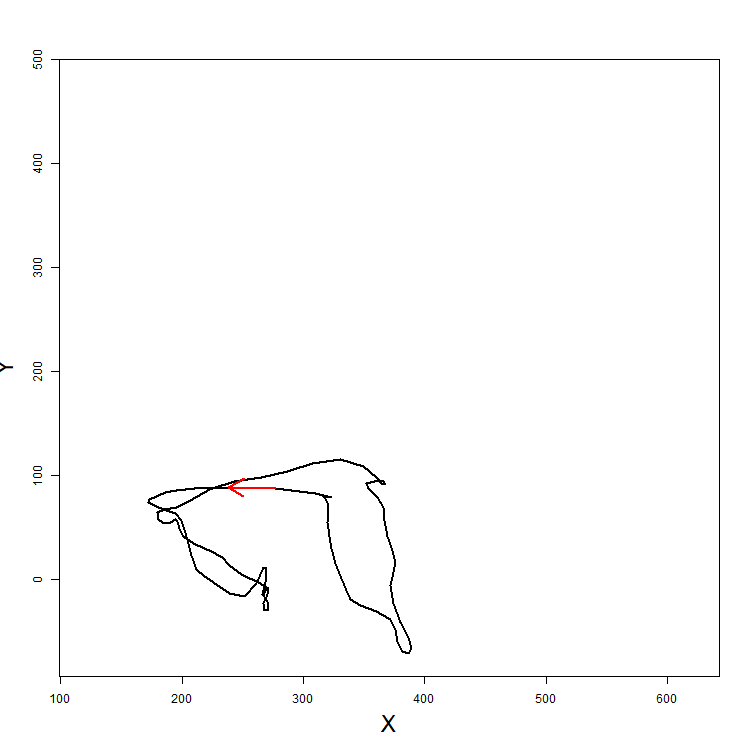} &\includegraphics[align=c, width=.13\textwidth]{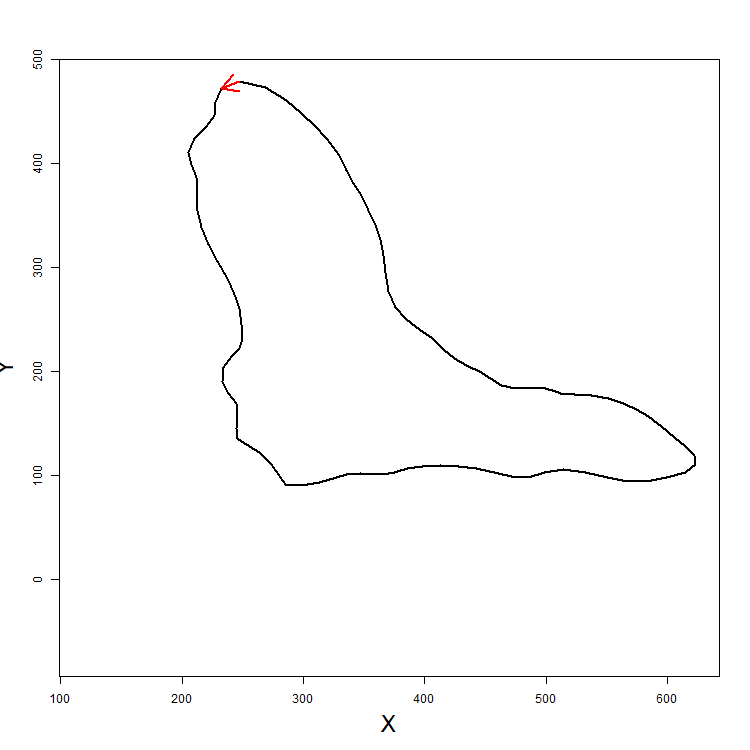}& \includegraphics[align=c, width=.13\textwidth]{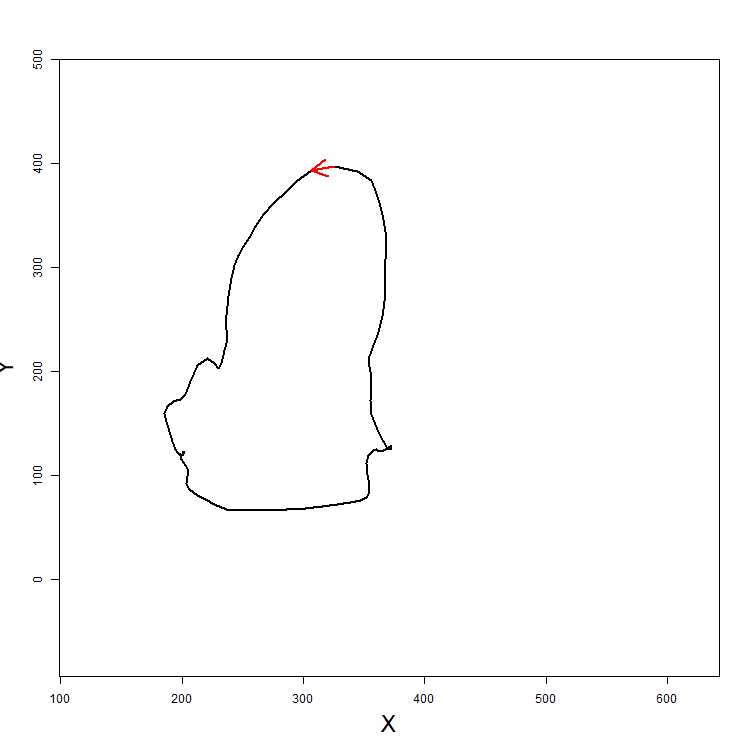} & \includegraphics[align=c, width=.13\textwidth]{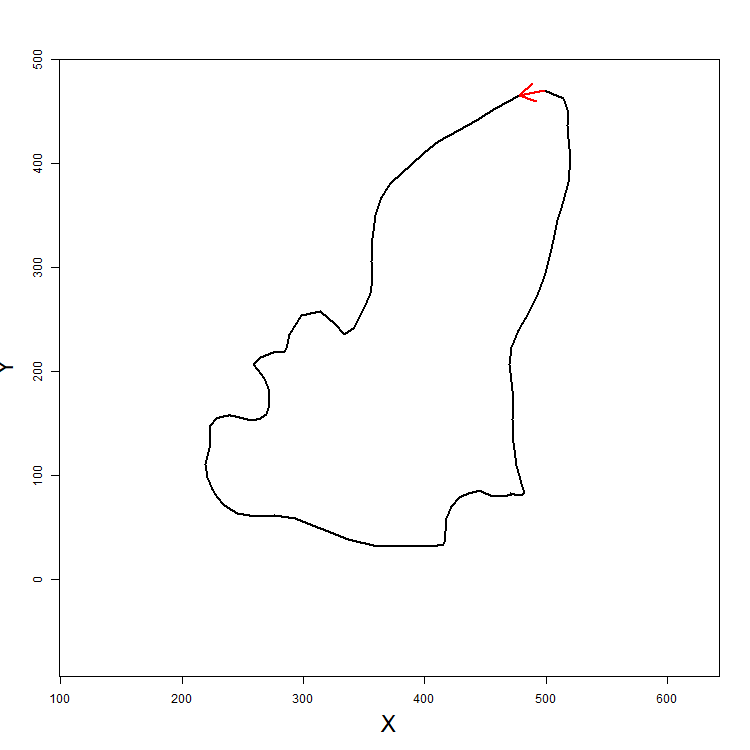} & \includegraphics[align=c, width=.13\textwidth]{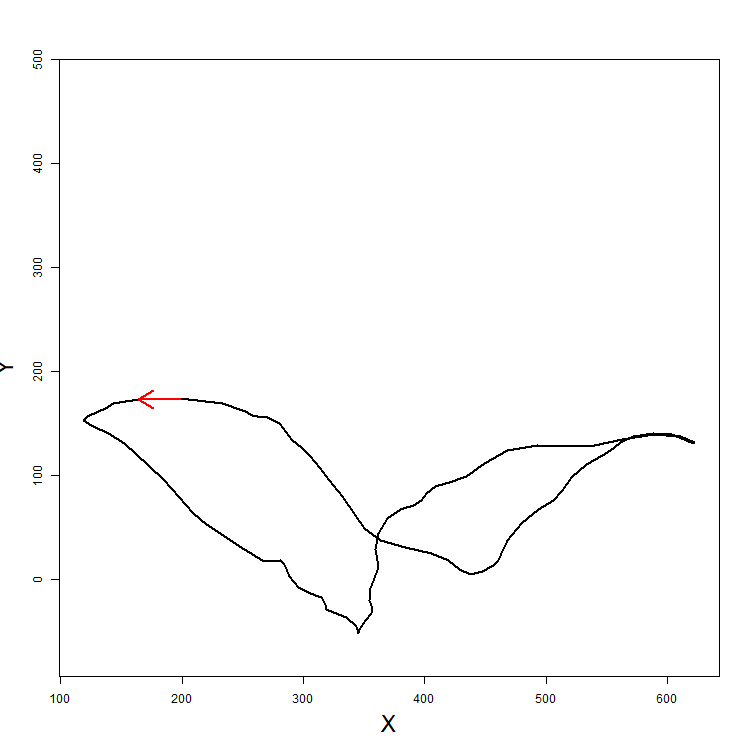} \\
\end{tabular}
\caption{Bat contours generated with (a) our approach without the deformation parameters, (b) our approach with the deformation parameters, and (c) with MFPCA.}
\label{bat-gen}
\end{figure}
\begin{figure}[H]
    \centering
    \begin{tabular}{c c c c c c c  c c c c c c}
$D_1$: & 0.04 & 0.03 & 0.21 & 0.08 & 0.04 \\ 
  $D_2$: & 0.01 & 0.01 & 0.12 & 0.02 & 0.01 \\  
 (a)& \includegraphics[align=c, width=.13\textwidth]{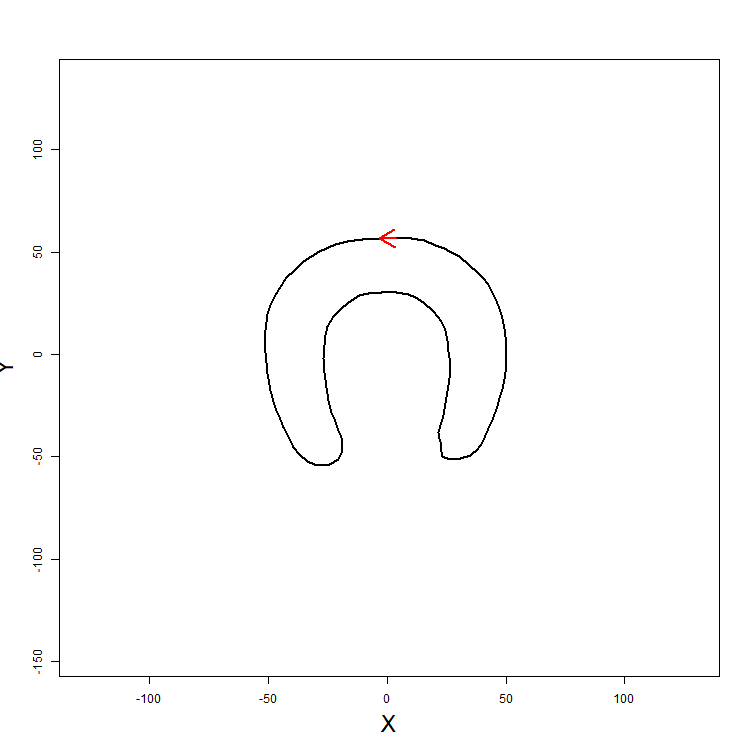} &\includegraphics[align=c, width=.13\textwidth]{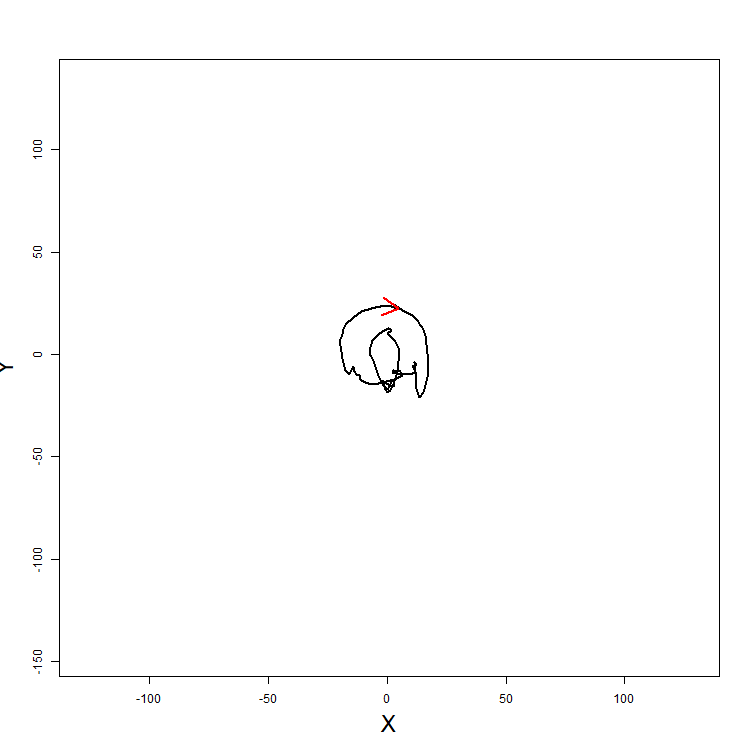}& \includegraphics[align=c, width=.13\textwidth]{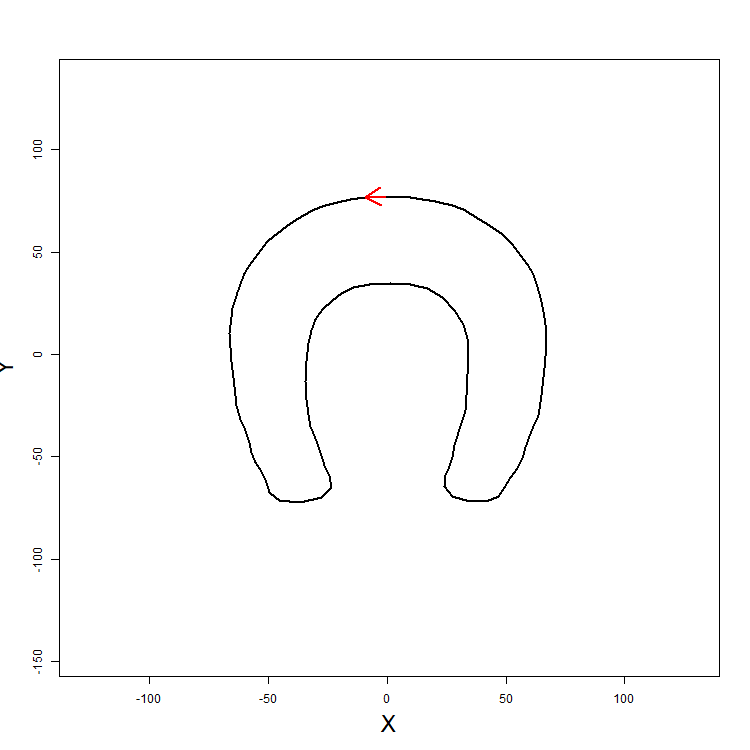} & \includegraphics[align=c, width=.13\textwidth]{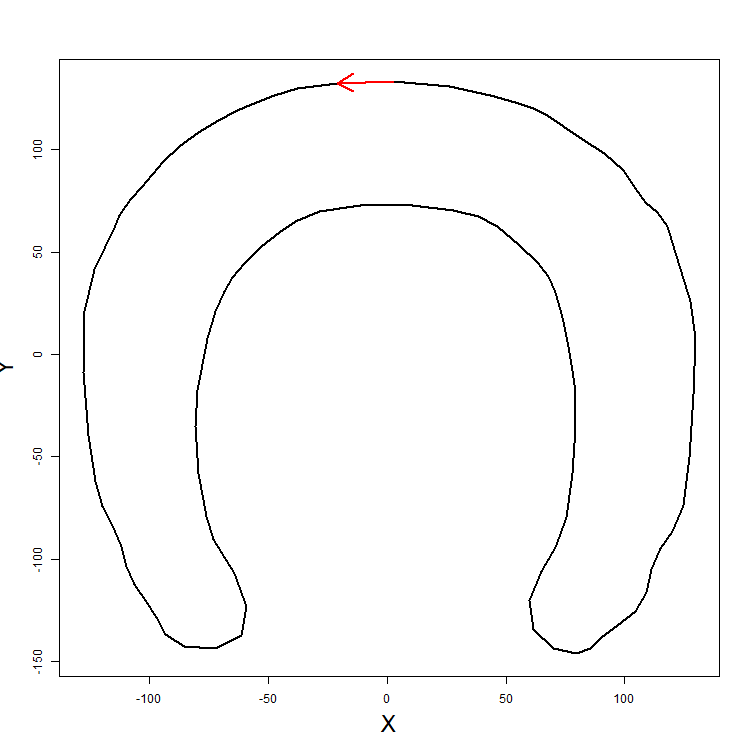} & \includegraphics[align=c, width=.13\textwidth]{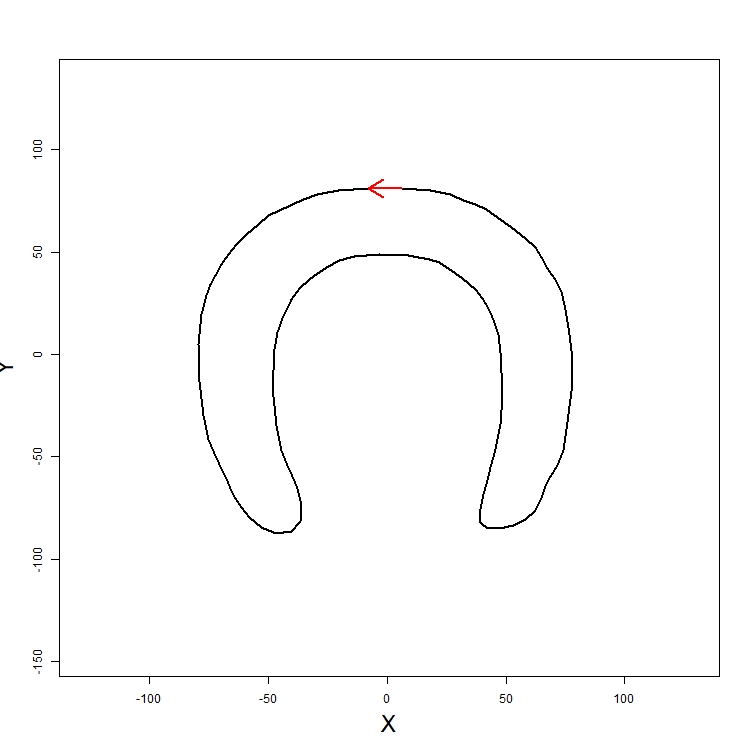}\\
 \hline
  $D_1$: & 0.05 & 0.06 & 0.22 & 0.09 & 0.04 \\ 
  $D_2$: & 0.02 & 0.03 & 0.14 & 0.02 & 0.01 \\ 
   (b)& \includegraphics[align=c, width=.13\textwidth]{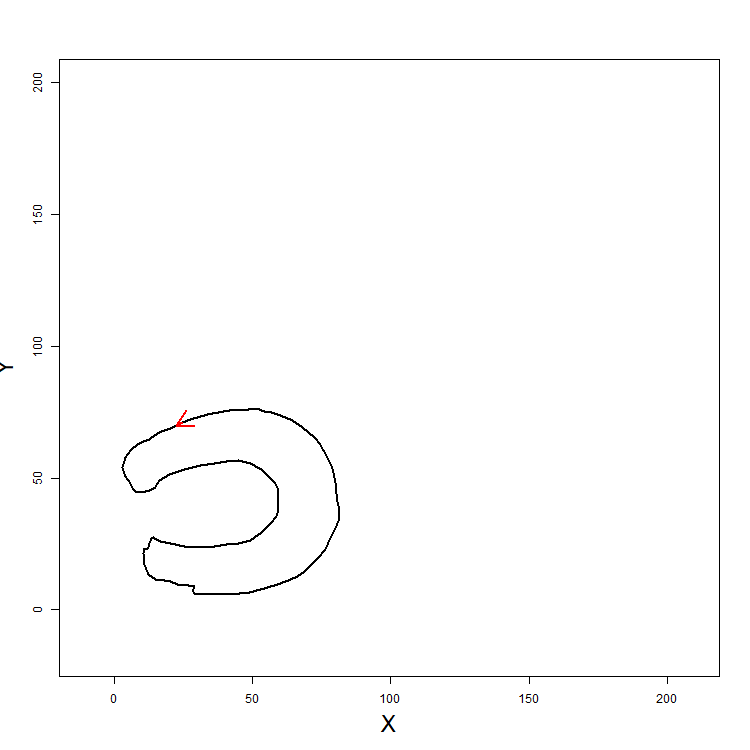} &\includegraphics[align=c, width=.13\textwidth]{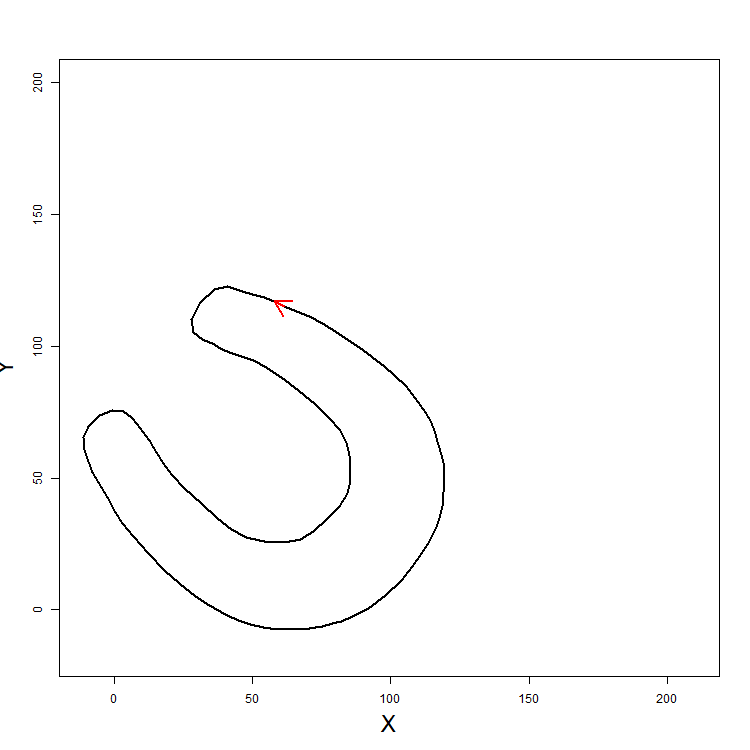}& \includegraphics[align=c, width=.13\textwidth]{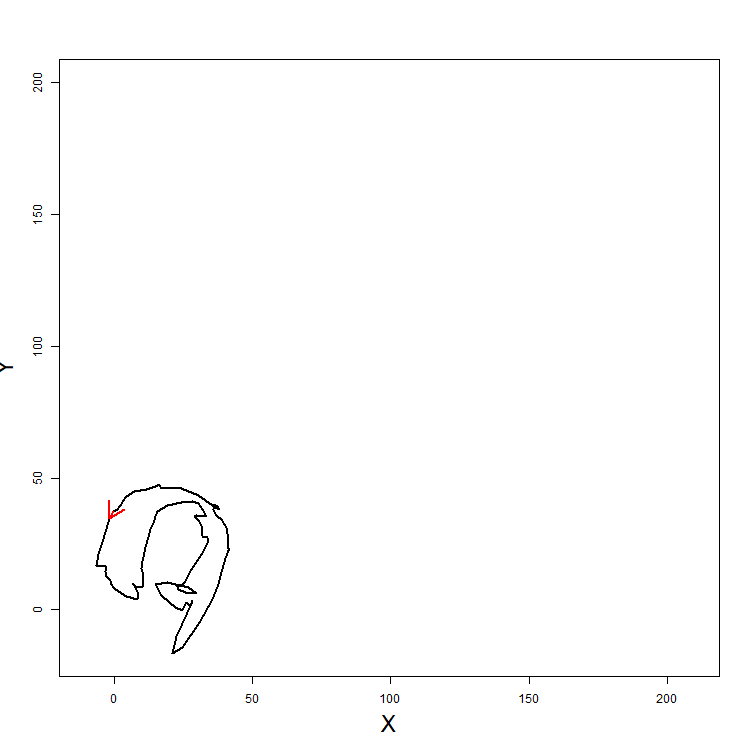} & \includegraphics[align=c, width=.13\textwidth]{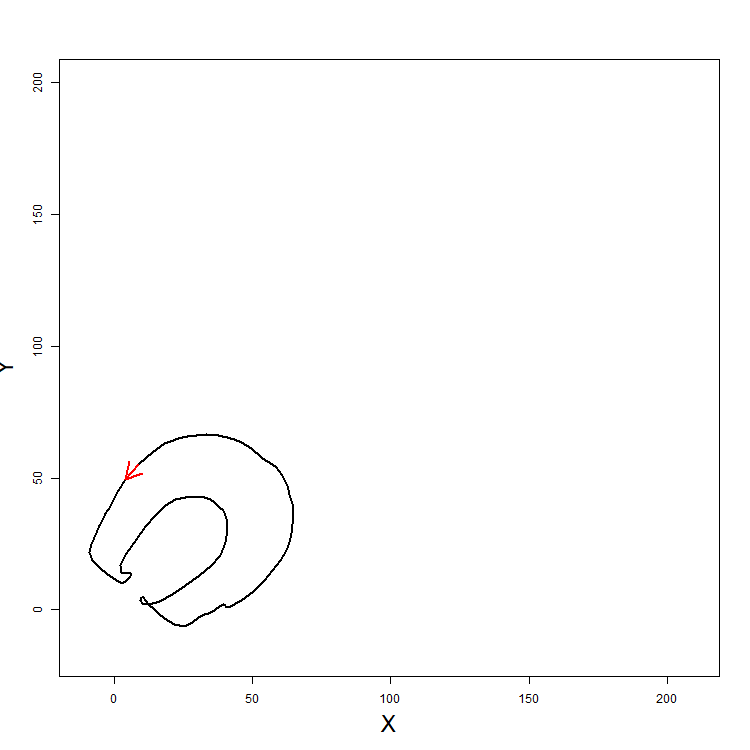} & \includegraphics[align=c, width=.13\textwidth]{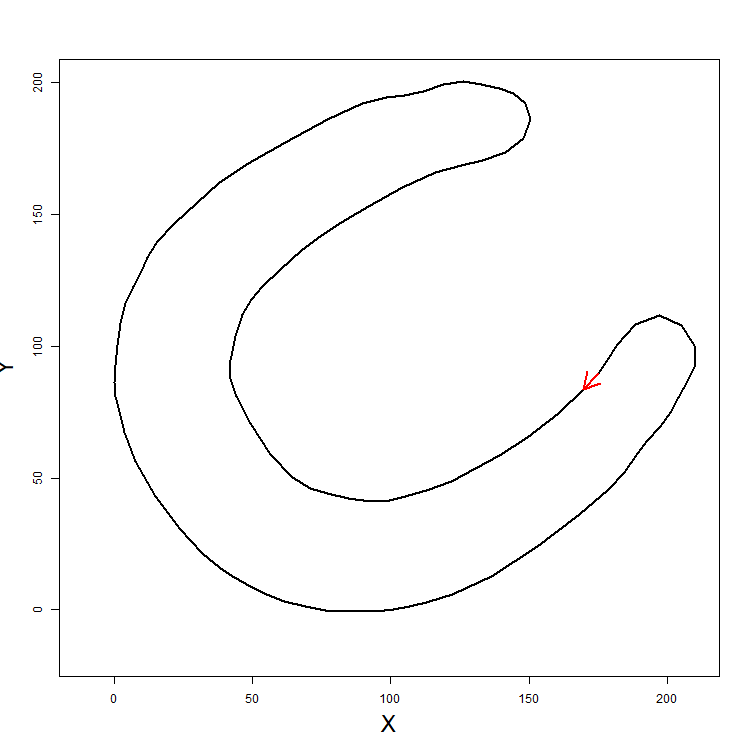}\\
   \hline
  $D_1$: & 0.11 & 0.05 & 0.18 & 0.08 & 0.13 \\ 
  $D_2$: & 0.09 & 0.02 & 0.15 & 0.06 & 0.11 \\ 
    (c)&\includegraphics[align=c, width=.13\textwidth]{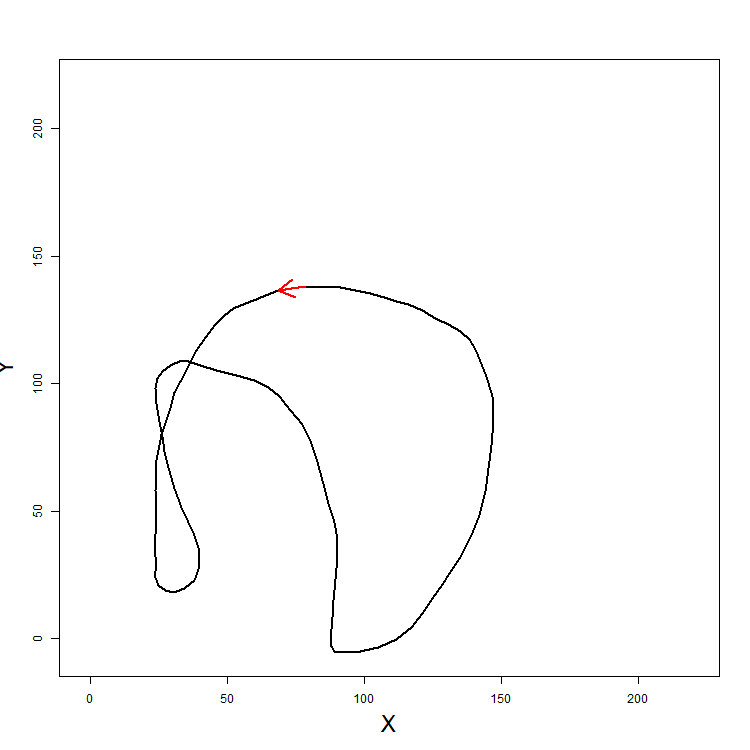} &\includegraphics[align=c, width=.13\textwidth]{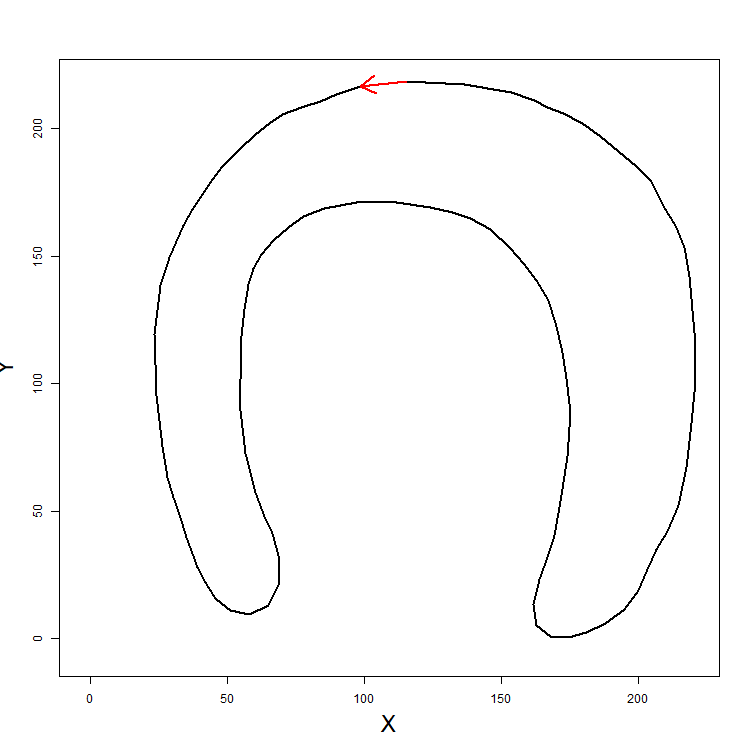}& \includegraphics[align=c, width=.13\textwidth]{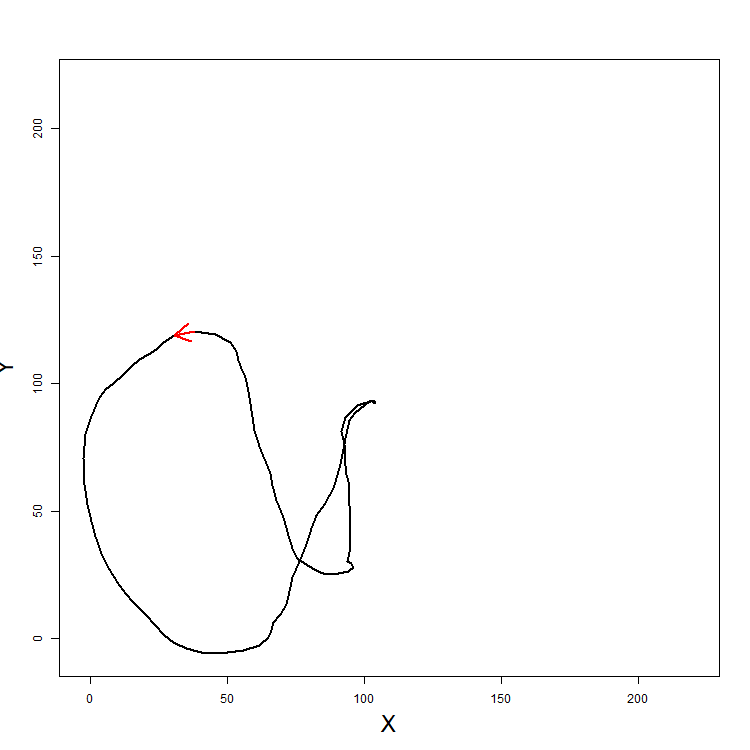} & \includegraphics[align=c, width=.13\textwidth]{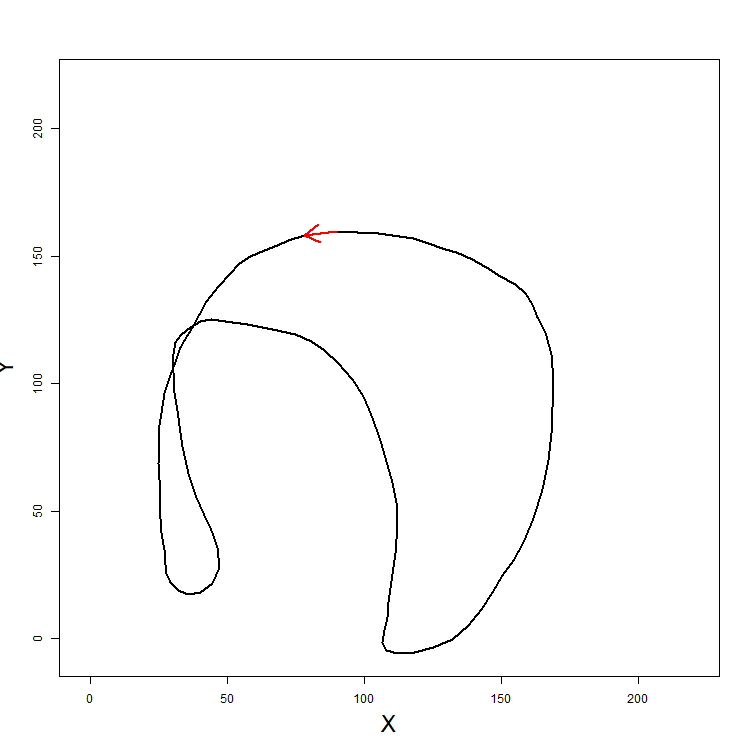} & \includegraphics[align=c, width=.13\textwidth]{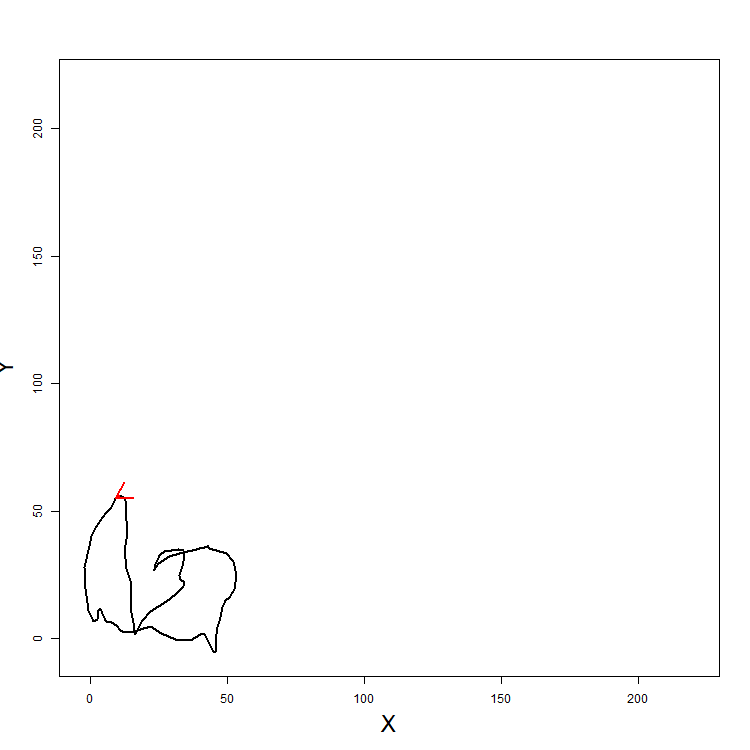} \\
\end{tabular}
    \caption{Horseshoe contours generated with (a) our approach without the deformation parameters, (b) our approach with the deformation parameters, and (c) with MFPCA.}
    \label{horse-gen}
\end{figure}
\begin{figure}[H]
\centering
    \begin{tabular}{c c c c c c c  c c c c c c}
$D_1$: & 0.07 & 0.08 & 0.04 & 0.04 & 0.04 \\ 
  $D_2$: & 0.01 & 0.04 & 0.02 & 0.01 & 0.01 \\ 
 (a)& \includegraphics[align=c, width=.13\textwidth]{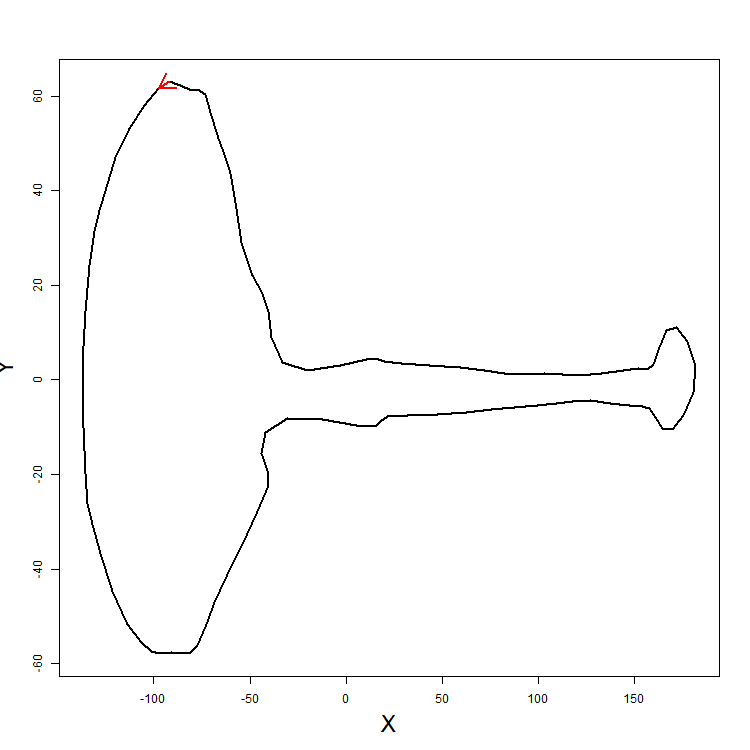} &\includegraphics[align=c, width=.13\textwidth]{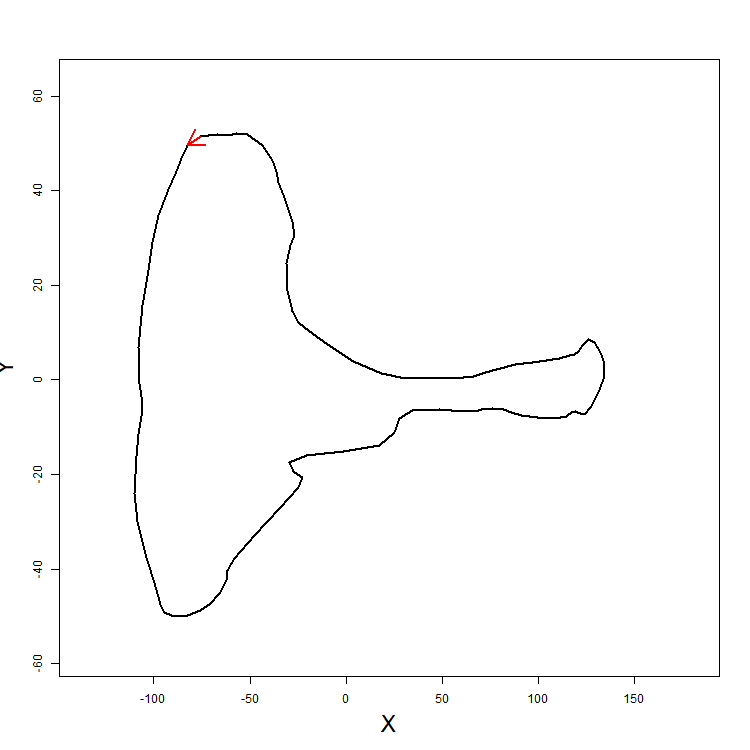}& \includegraphics[align=c, width=.13\textwidth]{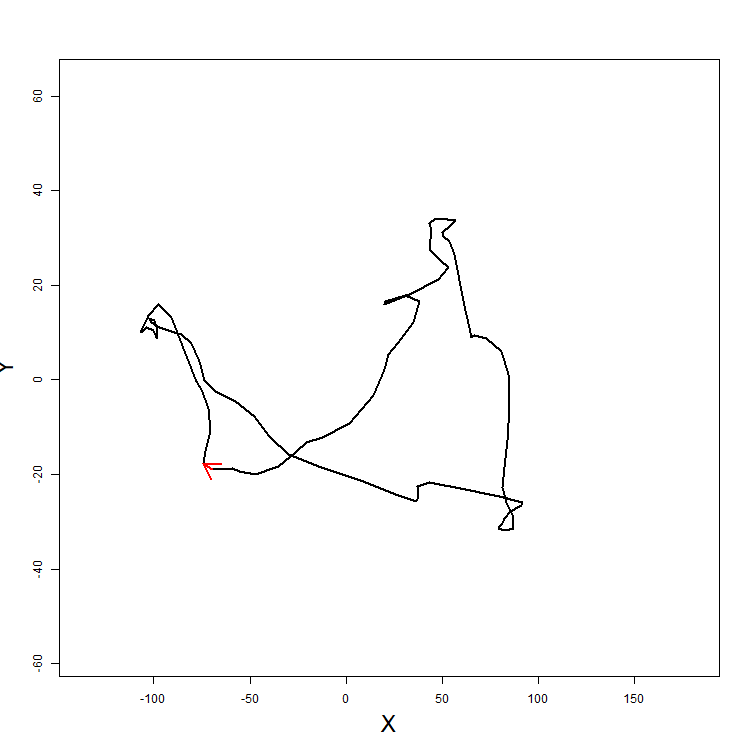} & \includegraphics[align=c, width=.13\textwidth]{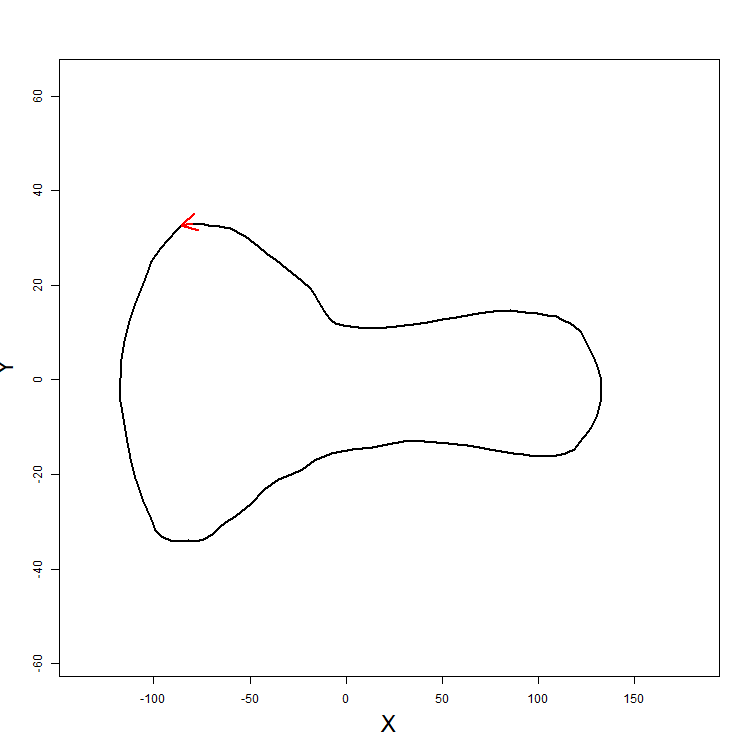} & \includegraphics[align=c, width=.13\textwidth]{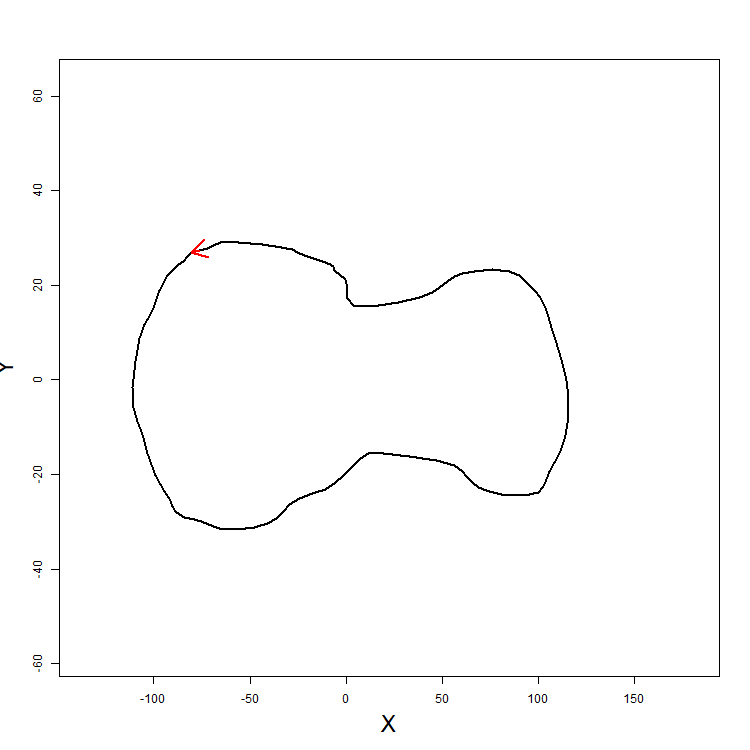}\\
 \hline
  $D_1$: & 0.09 & 0.09 & 0.05 & 0.06 & 0.05 \\ 
  $D_2$: & 0.03 & 0.05 & 0.03 & 0.02 & 0.03 \\ 
   (b)& \includegraphics[align=c, width=.13\textwidth]{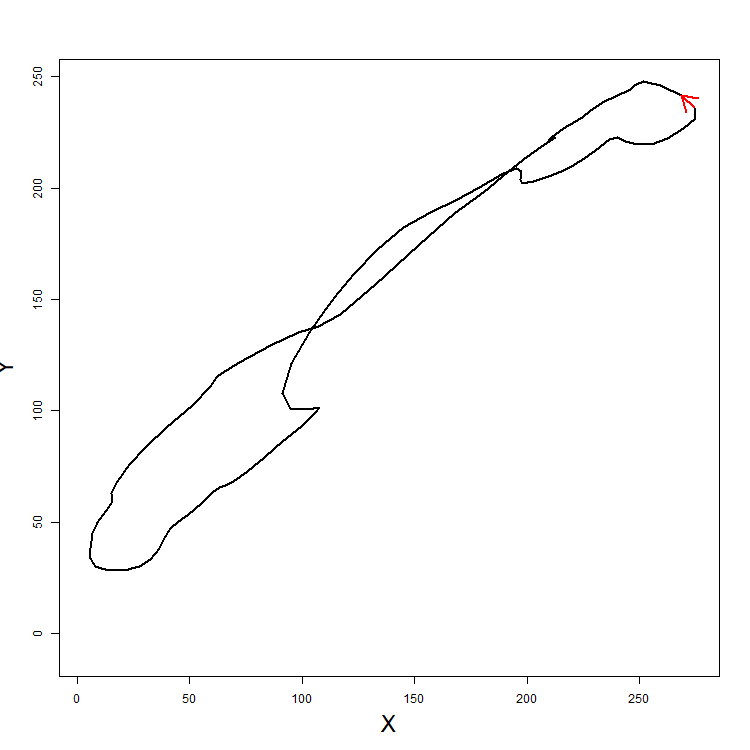} &\includegraphics[align=c, width=.13\textwidth]{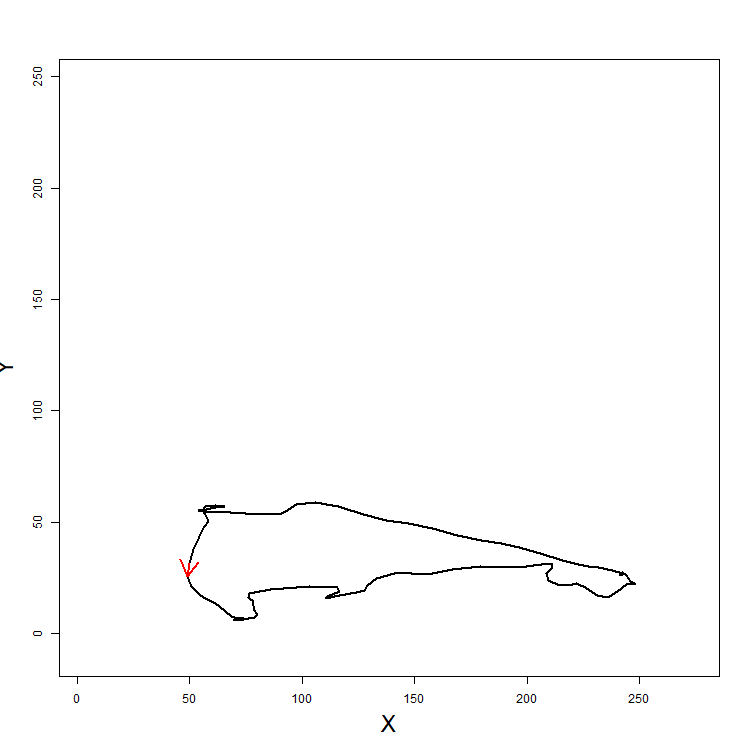}& \includegraphics[align=c, width=.13\textwidth]{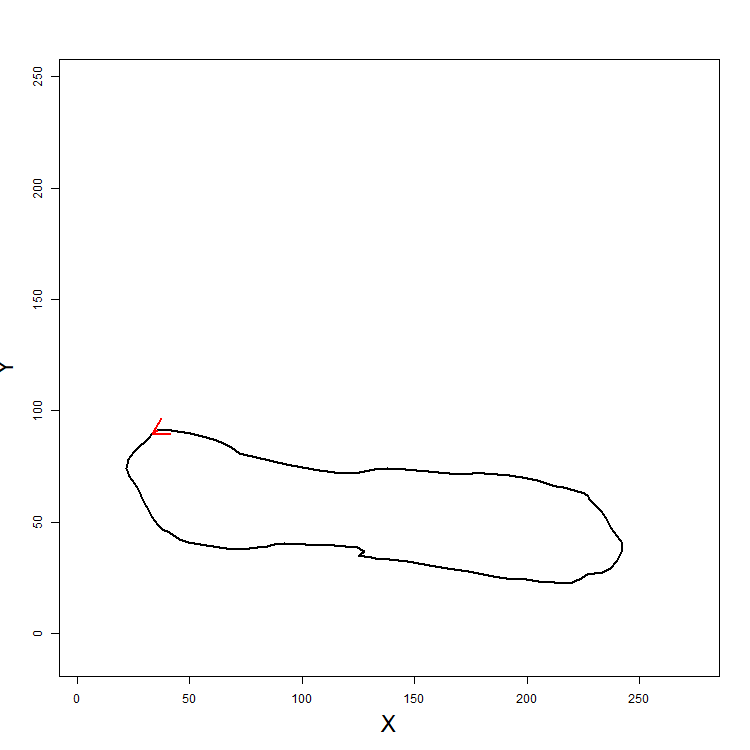} & \includegraphics[align=c, width=.13\textwidth]{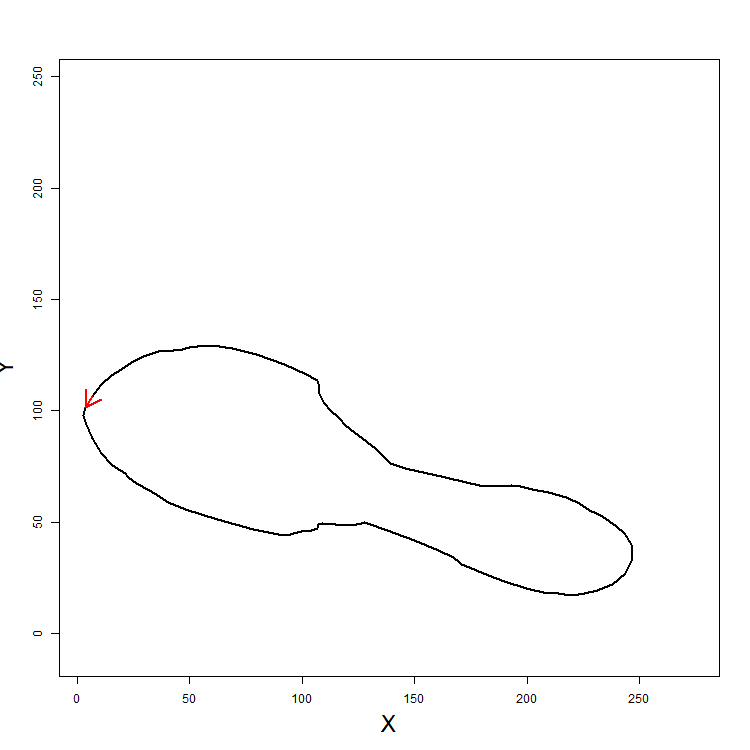} & \includegraphics[align=c, width=.13\textwidth]{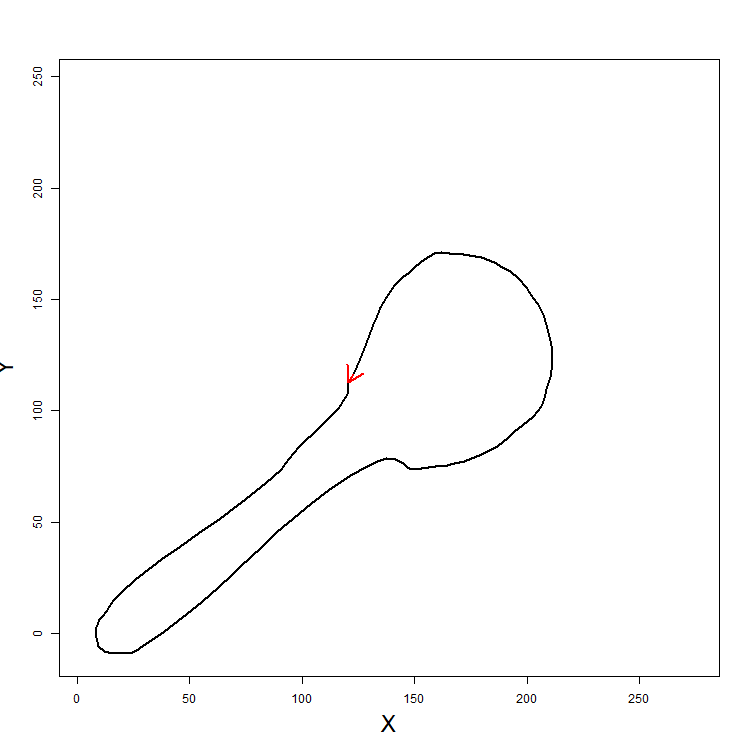}\\
   \hline
  $D_1$: & 0.05 & 0.04 & 0.07 & 0.05 & 0.04 \\ 
  $D_2$: & 0.02 & 0.02 & 0.02 & 0.03 & 0.02 \\ 
    (c)&\includegraphics[align=c, width=.13\textwidth]{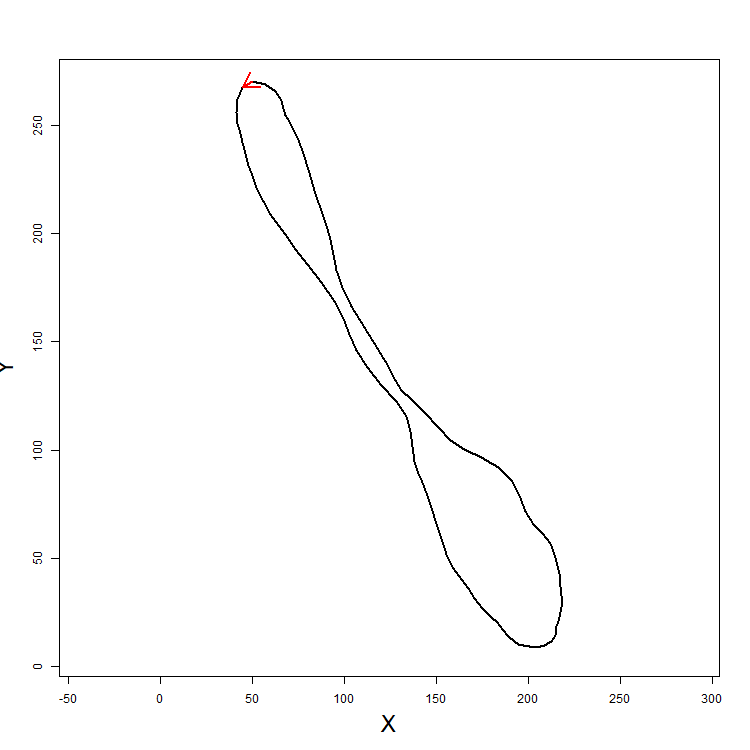} &\includegraphics[align=c, width=.13\textwidth]{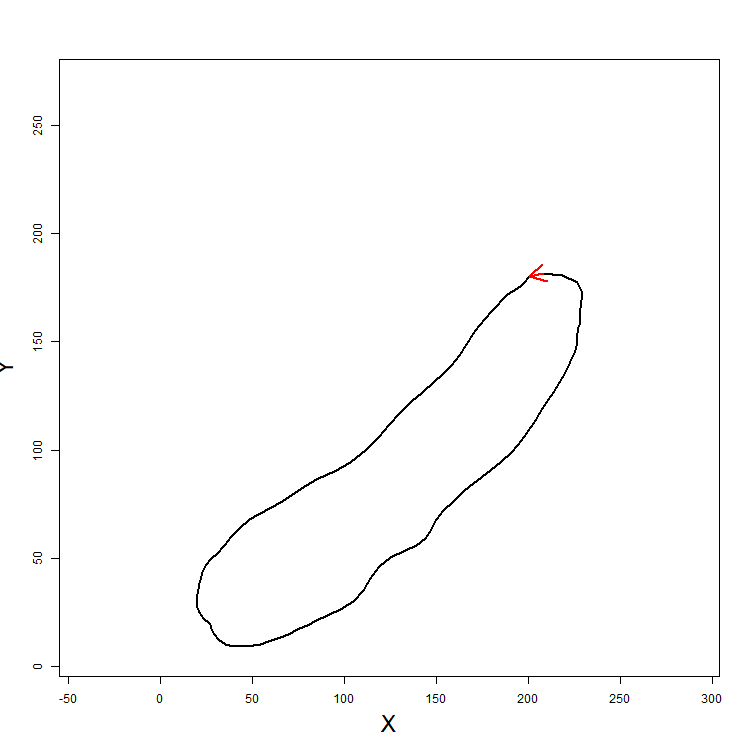}& \includegraphics[align=c, width=.13\textwidth]{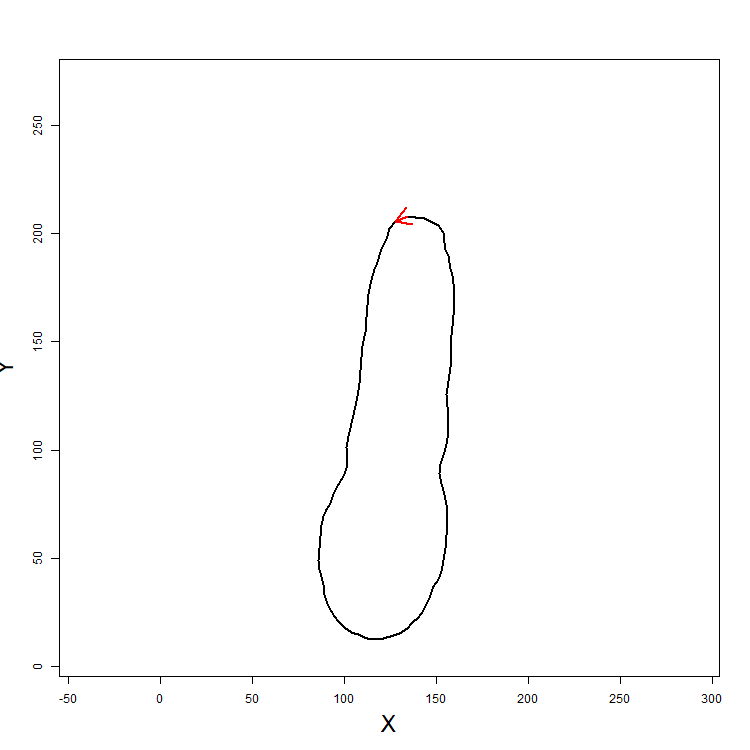} & \includegraphics[align=c, width=.13\textwidth]{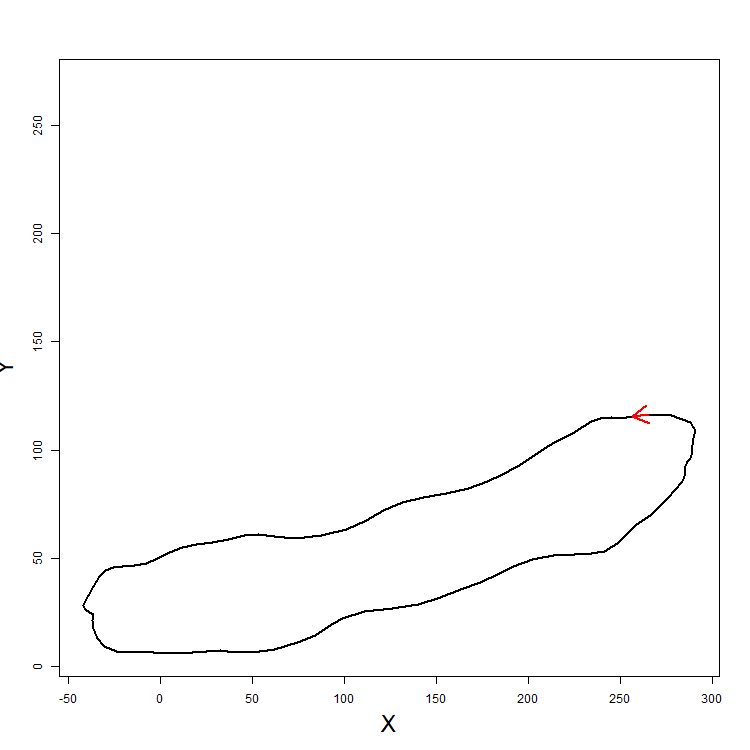} & \includegraphics[align=c, width=.13\textwidth]{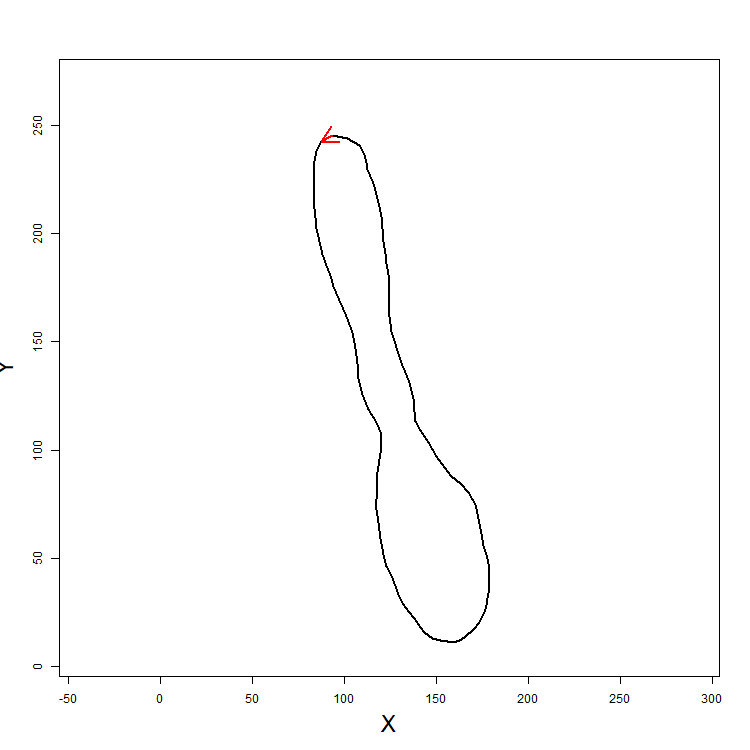} \\
\end{tabular}
\caption{Spoon contours generated with (a) our approach without the deformation parameters, (b) our approach with the deformation parameters, and (c) with MFPCA. }
    \label{spoon-gen}
\end{figure}

\section{Proofs of the formal results}
\begin{proof}[Proof of Proposition \eqref{prop_1}] 
The proposition is composed of two parts: (i) $(\Gamma, \circ)$ is a group, and (ii) the isometry propriety holds.
\paragraph{\normalfont Proof of (i):} First recall that for $r \in \mathbb{R}$, we have that $\text{mod}(r, 1) =r-\lfloor r \rfloor \equiv \{r\}$, i.e., that $\{r\}$ is the fractional part of $r$. We use the following properties of the fractional part in the proof: (a) $\{\{r\}\}= \{r\}$, (b) $\{\{r_1\}+r_2\}= \{r_1+r_2\}$, (c) $\{r+1\}=\{r\}$ for $r, r_1,r_2 \in \mathbb{R}$, (d) $\{r\}=r-1$ for $1\leq r < 2$ and (e) $\{r\}=r+1$ for $-1\leq r < 0$. 

\par The group structure is a consequence of the following lemma.
\begin{lem} \label{lem_annexeB}
Let $\gamma_{\delta_1}, \gamma_{\delta_2} \in \Gamma$, then 
\begin{equation}     
    \gamma_{\delta_1} \circ  \gamma_{\delta_2}= \gamma_{\tilde{\delta}} \in \Gamma, 
    \label{eq_i}
\end{equation}
with $\tilde{\delta}=\text{mod}(\delta_1+\delta_2, 1)$.
\end{lem}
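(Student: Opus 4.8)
The plan is to compute the composition directly from the definition $\gamma_\delta(t)=\text{mod}(t-\delta,1)=\{t-\delta\}$ and then simplify the resulting nested fractional parts using the properties (a)--(e) recalled above. Writing out the composition gives
\[
\gamma_{\delta_1}\circ\gamma_{\delta_2}(t)=\gamma_{\delta_1}\bigl(\{t-\delta_2\}\bigr)=\bigl\{\{t-\delta_2\}-\delta_1\bigr\},
\]
so the whole argument reduces to showing that this equals $\{t-\tilde\delta\}$ with $\tilde\delta=\{\delta_1+\delta_2\}$.

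The central step is to eliminate the inner fractional part. I would apply property (b), namely $\{\{r_1\}+r_2\}=\{r_1+r_2\}$, with $r_1=t-\delta_2$ and $r_2=-\delta_1$, which yields
\[
\bigl\{\{t-\delta_2\}-\delta_1\bigr\}=\{(t-\delta_2)-\delta_1\}=\{t-(\delta_1+\delta_2)\}.
\]
It then remains to replace $\delta_1+\delta_2$ by its fractional part. Since $\delta_1+\delta_2-\tilde\delta=\lfloor\delta_1+\delta_2\rfloor$ is a (nonnegative) integer, the invariance of $\{\cdot\}$ under integer shifts (property (c), applied $\lfloor\delta_1+\delta_2\rfloor$ times) gives $\{t-(\delta_1+\delta_2)\}=\{t-\tilde\delta\}=\gamma_{\tilde\delta}(t)$. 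Finally, $\tilde\delta=\{\delta_1+\delta_2\}\in[0,1)$ by construction, so $\gamma_{\tilde\delta}\in\Gamma$, which establishes \eqref{eq_i}.

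I do not expect any serious obstacle here, since the argument is pure fractional-part algebra. The only point requiring care is the correct invocation of property (b), namely grouping the terms so that the inner $\{t-\delta_2\}$ occupies the "$\{r_1\}$" slot. As a sanity check one could instead proceed by cases, using properties (d) and (e) to resolve $\{t-\delta_2\}-\delta_1$ according to whether it falls in $[0,1)$, $[1,2)$, or $[-1,0)$, but the fractional-part identities make such a case split unnecessary.
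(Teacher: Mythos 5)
Your proof is correct. It is worth noting, though, that it takes a more streamlined route than the paper's own argument. The paper proves the lemma by an explicit case analysis: it first treats separately the boundary situations $\delta_i\in\{0,1\}$, and then, for $\delta_1,\delta_2\in(0,1)$, writes $\gamma_{\delta_2}(t)$ piecewise according to the sign of $t-\delta_2$ and checks that both branches of $\text{mod}(\gamma_{\delta_2}(t)-\delta_1,1)$ collapse to $\text{mod}(t-(\delta_1+\delta_2),1)$, before finally reducing $\delta_1+\delta_2$ modulo $1$. You instead dispose of the nested fractional part in a single stroke via the identity $\{\{r_1\}+r_2\}=\{r_1+r_2\}$ (property (b), which the paper states up front but only deploys later, in the associativity step), followed by the integer-shift invariance (c); this eliminates both case splits entirely. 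The two arguments rest on the same elementary facts, but yours is shorter and arguably cleaner, while the paper's piecewise computation has the minor pedagogical benefit of making the ``wrap-around'' behaviour of the composition visible. The only point to watch in your version is that property (c) as stated gives invariance under a shift by $+1$, so replacing $\delta_1+\delta_2$ by $\tilde\delta$ requires applying it $\lfloor\delta_1+\delta_2\rfloor$ times (here at most twice), which you correctly note.
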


\begin{proof} First note that by definition $\gamma_0=\text{mod}(t,1)=\text{mod}(t-1,1)=\gamma_1$.
\begin{itemize}
\item \textbf{Case where $\delta_1\in\{ 0,1\}$ and/or $\delta_2\in\{ 0,1\}$}. If $\delta_1=0$ and $\delta_2\in[0,1]$, then for any $t\in[0,1]$, we have
\begin{eqnarray*}\gamma_{\delta_1} \circ \gamma_{\delta_2}(t)=\gamma_0 \circ \gamma_{\delta_2}(t)=
        \text{mod}(\gamma_{\delta_2}(t), 1)&=& \text{mod}(\text{mod}(t-\delta_2, 1), 1)\\
        &=&\text{mod}(t-\delta_2, 1)\\
        &=& \left\{\begin{array}{l l}
         \gamma_{\delta_2}&   \text{if }\delta_2 <  1\\
         \gamma_0 & \text{if } \delta_2=1
    \end{array}\right.\\
    &=&\gamma_{\text{mod}(0+\delta_2,1)}.\end{eqnarray*}
        
All the other cases are shown with a similar argument.

\item \textbf{Case where $\delta_1 \in (0,1)$ and $\delta_2\in (0,1)$}. For any $t\in [0,1]$ we have
$$\gamma_{\delta_2}(t)=\left\{\begin{array}{ll}
         1+(t-\delta_2)& \text{if }t-\delta_2 <  0 ,\\
         (t-\delta_2) & \text{if } t-\delta_2\geq 0,
    \end{array}     \right.$$
which leads to
\begin{eqnarray*}
\gamma_{\delta_1} \circ \gamma_{\delta_2}(t)&=& \text{mod}(\gamma_{\delta_2}(t)-\delta_1, 1)\\
&=& \left\{\begin{array}{l l}
         \text{mod}(1+(t-\delta_2)-\delta_1, 1)&   \text{if }t-\delta_2 <  0\\
         \text{mod}((t-\delta_2)-\delta_1, 1) & \text{if } t-\delta_2\geq 0
    \end{array}\right.\\
    &=& \text{mod}(t-(\delta_2+\delta_1), 1)\\
    &=&\text{mod}(t-\tilde \delta, 1).
\end{eqnarray*} 
The last equality comes from the fact that 
$$\delta_1+\delta_2 = \left\{\begin{array}{l l}
         \tilde{\delta}&   \text{if }\delta_1+\delta_2 <  1\\
         \tilde{\delta} + 1 & \text{if } \delta_1+\delta_2 \ge  1
    \end{array}\right. , \ \textrm{ and } \ \text{mod}(t-\tilde \delta, 1)=\text{mod}(t-(\tilde \delta+1), 1).$$ 
Since by definition $\tilde{\delta}\in [0,1)$, we finally have that $\gamma_{\delta_1} \circ \gamma_{\delta_2}(t)=\text{mod}(t-\tilde \delta, 1) = \gamma_{\tilde \delta}(t)$, with $\gamma_{\tilde \delta}\in \Gamma$.
\end{itemize}
\end{proof}

The proof that $(\Gamma, \circ)$ is a group follows directly from Lemma \ref{lem_annexeB} : 
\begin{itemize}
    \item \textbf{Associativity:} Let $\gamma_{\delta_1}, \gamma_{\delta_2},\gamma_{\delta_3} \in \Gamma $, then we have
    \begin{eqnarray*}
    (\gamma_{\delta_1} \circ \gamma_{\delta_2}) \circ \gamma_{\delta_3}&=&  \gamma_{\text{mod}(\delta_1+\delta_2, 1)} \circ \gamma_{\delta_3}\\
    &=& \gamma_{\text{mod}(\text{mod}(\delta_1+\delta_2, 1)+\delta_3, 1)}\\
    &=&\gamma_{\text{mod}(\delta_1+\text{mod}(\delta_2+\delta_3, 1), 1)}\\
    &=&\gamma_{\delta_1} \circ (\gamma_{\delta_2} \circ \gamma_{\delta_3}),
    \end{eqnarray*}
    where we used property (b) of fractional parts to obtain the third equality. 
    
    \item \textbf{Identity element}: The function $\gamma_0 \in \Gamma$ is the identity element since for any $\gamma_\delta \in \Gamma$, we have
    \begin{align*}
        \gamma_{\delta} \circ \gamma_{0}= \gamma_{0} \circ \gamma_{\delta}=\gamma_{\text{mod}(\delta +0, 1)}=  \gamma_{\delta}. 
    \end{align*} 
    \item \textbf{Inverse element}: Let $\gamma_\delta \in \Gamma$ then $\gamma_{1-\delta}\in \Gamma$ is the inverse of $\gamma_\delta$ since
    $$
        \gamma_{\delta} \circ \gamma_{1-\delta}= \gamma_{\text{mod}(\delta+(1-\delta), 1)}=  \gamma_{0} \ \textrm{ and } \ \gamma_{1-\delta} \circ \gamma_{\delta}=\gamma_{\text{mod}(1-\delta)+\delta, 1)}= \gamma_{0}.
   $$ 
\end{itemize}
This concludes the proof of (i).

\paragraph{\normalfont Proof of (ii):} 
Let $\boldsymbol{f}= (f^{(1)} , f^{(2)})^\top \in \mathcal{H}$ and $\gamma_\delta \in \Gamma$, we have that  
\begin{align*}
    \norm{\boldsymbol{f} \circ \gamma_\delta}_2^2& = \int_{0}^1 (f^{(1)} \circ \gamma_\delta(t))^2dt + \int_{0}^1 (f^{(2)} \circ \gamma_\delta(t))^2dt, 
\end{align*}
where for $j\in \{1,2\}$ and $t\in (0,1)$, we have 
$$f^{(j)}\circ\gamma_\delta(t) =  \left\{\begin{array}{l l}
         f^{(j)}(1+t-\delta)&   \text{if }t-\delta <  0,\\
         f^{(j)}(t-\delta) & \text{if } t-\delta\geq 0.
    \end{array}\right.$$
Then
\begin{eqnarray*}
\int_0^1 \left(f^{(j)}\circ\gamma_\delta(t)\right)^2 dt&=& \int_0^\delta \left(f^{(j)}(1+t-\delta)\right)^2dt + \int_{\delta}^1 \left(f^{(j)}(t-\delta)\right)^2dt \\
&=&  \int_{1-\delta}^1 \left(f^{(j)}(u)\right)^2du + \int_0^{1-\delta} \left(f^{(j)}(u)\right)^2du \\
&=& \int_0^1 \left(f^{(j)}(u)\right)^2du,
\end{eqnarray*}
which leads to
$$\norm{\boldsymbol{f} \circ \gamma_\delta}_2^2 = \int_0^1 \left(f^{(1)}(u)\right)^2du+\int_0^1 \left(f^{(2)}(u)\right)^2du = \norm{\boldsymbol{f}}_2^2. $$

\end{proof}
\begin{lem} $d(\cdot, \cdot)$ is a pseudo-distance on $\mathcal{H}$. 
\label{lem}
\end{lem}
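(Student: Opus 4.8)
The plan is to verify the three defining properties of a pseudo-distance—nonnegativity together with $d(\boldsymbol{f},\boldsymbol{f})=0$, symmetry, and the triangle inequality—while deliberately \emph{not} establishing positive-definiteness, whose failure is precisely what makes $d$ a pseudo-distance rather than a genuine distance (any two elements of the same orbit are at distance $0$). Nonnegativity is immediate since $d$ is a minimum of norms, and $d(\boldsymbol{f},\boldsymbol{f})=0$ follows by choosing $\delta=0$ and $\theta=0$, so that $\gamma_0=\mathrm{id}$ and $\mathbf{O}_0=\mathbf{I}_2$, making the argument of the norm vanish.

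The first real step is to rewrite $d$ in a form that exposes a group of linear isometries. Using that $\mathbf{O}_\theta$ is orthogonal (hence preserves $\norm{\cdot}_\mathcal{H}$ pointwise) and that rotation commutes with reparametrization, I would show
$$\norm{\boldsymbol{f}\circ\gamma_\delta-\mathbf{O}_\theta\boldsymbol{g}}_\mathcal{H}=\norm{(\mathbf{O}_\theta^\top\boldsymbol{f})\circ\gamma_\delta-\boldsymbol{g}}_\mathcal{H},$$
and, since $\mathbf{O}_\theta^\top=\mathbf{O}_{-\theta}$ ranges over all rotations as $\theta$ does, conclude that
$$d(\boldsymbol{f},\boldsymbol{g})=\min_{(\theta,\delta)}\norm{\phi_{\theta,\delta}(\boldsymbol{f})-\boldsymbol{g}}_\mathcal{H},\qquad \phi_{\theta,\delta}(\boldsymbol{f}):=(\mathbf{O}_\theta\boldsymbol{f})\circ\gamma_\delta.$$
I would then record two facts about the family $\{\phi_{\theta,\delta}\}$: each $\phi_{\theta,\delta}$ is a \emph{linear} map preserving $\norm{\cdot}_\mathcal{H}$ (combining orthogonality of $\mathbf{O}_\theta$ with the isometry property of $\Gamma$ from Proposition \ref{prop_1}), and the family is closed under composition and inversion via $\phi_{\theta_1,\delta_1}\circ\phi_{\theta_2,\delta_2}=\phi_{\theta_1+\theta_2,\delta_1+\delta_2}$ (using $\mathbf{O}_{\theta_1}\mathbf{O}_{\theta_2}=\mathbf{O}_{\theta_1+\theta_2}$ and the group law of $(\Gamma,\circ)$ from Proposition \ref{prop_1}), so that it forms a group of invertible linear isometries of $\mathcal{H}$.

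Symmetry then falls out cleanly: writing $\sigma=(\theta,\delta)$, invertibility and isometry give $\norm{\phi_\sigma(\boldsymbol{f})-\boldsymbol{g}}_\mathcal{H}=\norm{\boldsymbol{f}-\phi_\sigma^{-1}(\boldsymbol{g})}_\mathcal{H}=\norm{\phi_{\sigma^{-1}}(\boldsymbol{g})-\boldsymbol{f}}_\mathcal{H}$, and since $\sigma\mapsto\sigma^{-1}$ is a bijection of the parameter set, minimizing over $\sigma$ yields $d(\boldsymbol{f},\boldsymbol{g})=d(\boldsymbol{g},\boldsymbol{f})$. For the triangle inequality I would pick minimizers $\sigma_1,\sigma_2$ realizing $d(\boldsymbol{f},\boldsymbol{g})$ and $d(\boldsymbol{g},\boldsymbol{h})$, bound $d(\boldsymbol{f},\boldsymbol{h})\le\norm{\phi_{\sigma_2}(\phi_{\sigma_1}(\boldsymbol{f}))-\boldsymbol{h}}_\mathcal{H}$ using that $\phi_{\sigma_2}\circ\phi_{\sigma_1}$ is again some $\phi_\sigma$, and then insert $\pm\,\phi_{\sigma_2}(\boldsymbol{g})$ and apply the ordinary triangle inequality in $\mathcal{H}$ together with the isometry and linearity of $\phi_{\sigma_2}$ to obtain $\norm{\phi_{\sigma_1}(\boldsymbol{f})-\boldsymbol{g}}_\mathcal{H}+\norm{\phi_{\sigma_2}(\boldsymbol{g})-\boldsymbol{h}}_\mathcal{H}=d(\boldsymbol{f},\boldsymbol{g})+d(\boldsymbol{g},\boldsymbol{h})$.

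The main obstacle is not any individual inequality but justifying that the minima are attained, which the triangle-inequality argument invokes through the minimizers $\sigma_1,\sigma_2$. For this I would argue that the parameter set $SO(2)\times\Gamma$ is a compact torus and that $(\theta,\delta)\mapsto\norm{\phi_{\theta,\delta}(\boldsymbol{f})-\boldsymbol{g}}_\mathcal{H}$ is continuous; continuity in $\theta$ is clear, while continuity in $\delta$ reduces to the $L^2$-continuity of the cyclic shift operator $\boldsymbol{f}\mapsto\boldsymbol{f}\circ\gamma_\delta$, a standard property of $L^2([0,1])$. Compactness plus continuity gives attainment, after which the argument closes. Should one prefer to avoid attainment altogether, the same steps go through verbatim with $\inf$ in place of $\min$, using an $\varepsilon$-approximate minimizer in the triangle-inequality step.
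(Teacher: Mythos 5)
Your proof is correct and rests on the same ingredients as the paper's: nonnegativity and $d(\boldsymbol{f},\boldsymbol{f})=0$ by the trivial choice $(\theta,\delta)=(0,0)$, symmetry by absorbing the rotation and reparametrization onto the other argument, and the triangle inequality from the ordinary one in $\mathcal{H}$ combined with the isometry and group properties from Proposition \ref{prop_1}. The difference is one of organization rather than substance: the paper carries out the norm manipulations inline (for the triangle inequality it minimizes jointly over four parameters and shows the left-hand side decouples into $d(\boldsymbol{f},\boldsymbol{r})+d(\boldsymbol{r},\boldsymbol{g})$), whereas you package the rotations and shifts into a group of invertible linear isometries $\phi_{\theta,\delta}$ and let symmetry and the triangle inequality fall out of general facts about group-invariant distances. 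Your version is cleaner and, importantly, it patches a point the paper passes over in silence: the use of $\min$ (and, in your argument, of actual minimizers $\sigma_1,\sigma_2$) requires the infimum to be attained, which you justify by compactness of $[0,2\pi]\times[0,1]$ and $L^2$-continuity of the cyclic shift, or circumvent with $\varepsilon$-approximate minimizers. That is a genuine, worthwhile addition to the paper's argument.
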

\begin{proof}[Proof of Lemma \ref{lem}] 
Let $\boldsymbol{f}, \boldsymbol{g} \in \mathcal{H}$.
\begin{itemize}
    \item \textbf{Positivity:} We have by definition
    $$d(\boldsymbol{f}, \boldsymbol{g}) =\min_{\delta\in [0,1], \theta \in [0, 2\pi]}\norm{\boldsymbol{f}\circ \gamma_\delta- \mathbf{O}_\theta \boldsymbol{g}}_\mathcal{H}  \ge 0,$$ and
    $$d(\boldsymbol{f}, \boldsymbol{f})=\min_{\delta\in [0,1], \theta \in [0, 2\pi]}\norm{\boldsymbol{f}\circ \gamma_\delta- \mathbf{O}_\theta \boldsymbol{f}}_\mathcal{H} = \norm{\boldsymbol{f}\circ \gamma_0- \mathbf{O}_0 \boldsymbol{f}}_\mathcal{H} = 0.$$
\item \textbf{Symmetry:}    
    First note that  
    $$d(\boldsymbol{f}, \boldsymbol{g})=\min_{\delta\in [0,1], \theta \in [0, 2\pi]}\norm{\boldsymbol{f}\circ \gamma_\delta- \mathbf{O}_\theta \boldsymbol{g}}_\mathcal{H}=\min_{\delta\in [0,1), \theta \in [0, 2\pi]}\norm{\mathbf{O}_{2\pi-\theta}\boldsymbol{f}- \boldsymbol{g}\circ \gamma_{1-\delta} }_\mathcal{H}.$$
    Letting $\theta'=2\pi-\theta$ and $\delta'=1-\delta$, we have that
    $$d(\boldsymbol{f}, \boldsymbol{g})=\min_{\delta'\in [0,1], \theta' \in [0, 2\pi]}\norm{\mathbf{O}_{\theta'}\boldsymbol{f}- \boldsymbol{g}\circ \gamma_{\delta'} }_\mathcal{H}=d(\boldsymbol{g},\boldsymbol{f}).$$
\item \textbf{Triangle inequality:}  

Since $\norm{\cdot}_\mathcal{H}$ induce a proper distance, we have that 
$$
\norm{\boldsymbol{f}_1-\boldsymbol{f}_2}_\mathcal{H}+\norm{\boldsymbol{f}_2-\boldsymbol{f}_3}_\mathcal{H} \geq \norm{\boldsymbol{f}_1-\boldsymbol{f}_3}_\mathcal{H}
$$

for $\boldsymbol{f}_1,\boldsymbol{f}_2, \boldsymbol{f}_3 \in \mathcal{H}$. 
\par 
For any $\boldsymbol{f},\boldsymbol{r},\boldsymbol{g}\in \mathcal{H}$, by letting
$\boldsymbol{f}_1=\boldsymbol{f}\circ \gamma_\delta, \boldsymbol{f}_2=\mathbf{O}_\theta \boldsymbol{r}$ and $ \boldsymbol{f}_3=\mathbf{O}_{\theta'} \boldsymbol{g}\circ\gamma_{\delta'}
$ in the previous inequality, we obtain
\begin{align}
    &\norm{\boldsymbol{f} \circ \gamma_\delta- \mathbf{O}_\theta \boldsymbol{r}}_\mathcal{H}+\norm{ \mathbf{O}_\theta \boldsymbol{r}- \mathbf{O}_{\theta'} \boldsymbol{g}\circ\gamma_{\delta'}}_\mathcal{H}
 \geq \norm{\boldsymbol{f}\circ \gamma_\delta- \mathbf{O}_{\theta'} \boldsymbol{g}\circ\gamma_{\delta'}}_\mathcal{H} \nonumber \\ 
 &\phantom{aaaaa}\Rightarrow   \min_{\delta,\delta'\in [0,1], \ \theta,\theta' \in [0,2\pi]}\left(\norm{\boldsymbol{f} \circ \gamma_\delta- \mathbf{O}_\theta \boldsymbol{r}}_\mathcal{H}+\norm{ \mathbf{O}_\theta \boldsymbol{r}- \mathbf{O}_{\theta'} \boldsymbol{g}\circ\gamma_{\delta'}}_\mathcal{H}\right) \label{ineq_tri} \\
& \phantom{aaaaaaaaaa}  \geq \min_{\delta,\delta'\in [0,1], \ \theta,\theta' \in [0,2\pi]}\norm{\boldsymbol{f}\circ \gamma_\delta- \mathbf{O}_{\theta'} \boldsymbol{g}\circ\gamma_{\delta'}}_\mathcal{H}.\nonumber
\end{align}

The left-hand side of the inequality \eqref{ineq_tri} admits the following simplification  
\begin{align*}
    &\min_{\delta,\delta'\in [0,1], \ \theta,\theta' \in [0,2\pi]}\left(\norm{\boldsymbol{f} \circ \gamma_\delta- \mathbf{O}_\theta \boldsymbol{r}}_\mathcal{H}+\norm{ \mathbf{O}_\theta \boldsymbol{r}- \mathbf{O}_{\theta'} \boldsymbol{g}\circ\gamma_{\delta'}}_\mathcal{H}\right) \\
    =&\min_{\delta,\delta'\in [0,1], \ \theta,\theta' \in [0,2\pi]}\left(\norm{\boldsymbol{f} \circ \gamma_\delta- \mathbf{O}_\theta \boldsymbol{r}}_\mathcal{H}+\norm{ \boldsymbol{r}- \mathbf{O}_{\theta'-\theta} \boldsymbol{g}\circ\gamma_{\delta'}}_\mathcal{H}\right) \\
    =& 
    \min_{\delta\in [0,1], \ \theta \in [0,2\pi]}\left(\norm{\boldsymbol{f} \circ \gamma_\delta- \mathbf{O}_\theta \boldsymbol{r}}_\mathcal{H}\right) +\min_{\delta'\in [0,1], \ \theta^\star \in [0,2\pi]}\left(\norm{\boldsymbol{r}- \mathbf{O}_{\theta^\star} \boldsymbol{g}\circ\gamma_{\delta'}}_\mathcal{H}\right),   
\end{align*}
where 
    $$ \min_{\delta\in [0,1], \ \theta \in [0,2\pi]}\norm{\boldsymbol{f} \circ \gamma_\delta- \mathbf{O}_\theta \boldsymbol{r}}_\mathcal{H}=d(\boldsymbol{f}, \boldsymbol{r}),$$
    and $$
        \min_{\delta'\in[0,1], \theta^\star\in [0,2\pi]}\norm{\boldsymbol{r}- \mathbf{O}_{\theta^\star} \boldsymbol{g}\circ\gamma_{\delta'}}_\mathcal{H}
        =  
        \min_{\delta'\in[0,1], \theta^\star\in [0,2\pi]}\norm{  \boldsymbol{r}\circ \gamma_{1-\delta'}- \mathbf{O}_{\theta^\star} \boldsymbol{g}}_\mathcal{H} 
        =d(\boldsymbol{r},\boldsymbol{g}).
  $$
    Using a similar reasoning, we can show that the right-hand side of \eqref{ineq_tri} is equal to $d(\boldsymbol{f}, \boldsymbol{g})$, and thus that $d(\boldsymbol{f}, \boldsymbol{r})+d(\boldsymbol{r}, \boldsymbol{g}) \geq d(\boldsymbol{f}, \boldsymbol{g}). $   
\end{itemize}
\end{proof}

\begin{proof}[Proof of Lemma \eqref{lem_f_gamma}] Due to the block structure of the matrix $\boldsymbol{\beta}(\delta)$, the lemma follows directly from the fact that  
\begin{equation}
    \begin{pmatrix} \sin(2m\pi \gamma_\delta(t)) \\ 
        \cos(2m\pi \gamma_\delta(t))
        \end{pmatrix}= \mathbf{O}_{-2m \pi \delta}\begin{pmatrix}
             \sin(2m\pi t ) \\ 
        \cos(2m\pi t)
        \end{pmatrix},
\label{beta_delta}
\end{equation}
for $m \in \mathbb{N}^*$ and $\gamma_\delta \in \Gamma$, which we prove below. \\ 
First note that for $t\in[0,1]$,    
\begin{equation*}
        \sin(2m\pi\gamma_\delta(t))= \left\{ 
        \begin{array}{c c}
            \sin(2m\pi\left(1+(t-\delta) \right) ) & t \leq \delta \\
                        \sin(2m\pi((t-\delta)) ) & t > \delta,  
        \end{array}
        \right.
\end{equation*}
where $\sin(2m\pi(1+(t-\delta)) )= \sin(2\pi(t-\delta)+2m\pi  )= \sin(2\pi(t-\delta))$. Hence $$\sin(2m\pi \gamma_\delta(t))=\sin(2m\pi(t-\delta)). $$
The same reasoning shows that  
$$\cos(2m\pi \gamma_\delta(t))=\cos(2m\pi(t-\delta)). $$
Finally, 
\begin{align*}
    \begin{pmatrix} \sin(2m\pi \gamma_\delta(t)) \\ 
        \cos(2m\pi \gamma_\delta(t))
        \end{pmatrix}=     \begin{pmatrix} \sin(2m\pi t -2m\pi \delta) \\ 
        \cos(2m\pi t -2m\pi \delta)
        \end{pmatrix}= \mathbf{O}_{-2 m\pi \delta }\begin{pmatrix}
             \sin(2m\pi t ) \\ 
        \cos(2m\pi t)
        \end{pmatrix},
\end{align*}
which concludes the proof. 
\end{proof}
\begin{proof}[ Proof of Proposition \eqref{chang_b}]
    Let $
    \boldsymbol{f}_{\theta, \delta}= \mathbf{C}- \mathbf{O}_\theta\boldsymbol{\mu}\circ \gamma_\delta,$ this function admits the following representation    
\begin{align*}
\boldsymbol{f}_{\theta, \delta}&= \begin{pmatrix}
    \boldsymbol{\psi}^\top & 0 \\ 
    0 &  \boldsymbol{\psi}^\top
\end{pmatrix} \Vec{\boldsymbol{\alpha}}_{\theta, \delta},
\end{align*} 
with \begin{align*}
    \Vec{\boldsymbol{\alpha}}_{\theta, \delta} &= \text{Vec}\left(\left(\boldsymbol{\alpha} \boldsymbol{\beta}(\delta)- \mathbf{O}_\theta\boldsymbol{u} \right)^\top\right),
    \end{align*}
and where Vec$(\cdot)$ is the vectorization operator. Hence the norm of $\boldsymbol{f}_{\theta, \delta}$ is given by 
\begin{align*}
    ||\boldsymbol{f}_{\theta, \delta}||_\mathcal{H}^2& = \int_{0}^1 \boldsymbol{f}_{\theta, \delta}^\top(t)\boldsymbol{f}_{\theta, \delta}(t)dt \\  
    &=     \Vec{\boldsymbol{\alpha}}_{\theta, \delta}^\top \underbrace{\left[\int_{0}^1 \begin{pmatrix}
        \boldsymbol{\psi}(t) & 0 \\ 
        0 &\boldsymbol{\psi}(t)  
    \end{pmatrix}  \begin{pmatrix}
        \boldsymbol{\psi}^\top(t) & 0 \\ 
        0 &\boldsymbol{\psi}^\top(t)  
    \end{pmatrix} dt \right]}_{\mathbf{I} }    \Vec{\boldsymbol{\alpha}}_{\theta, \delta} \\ 
    &= \Vec{\boldsymbol{\alpha}}_{\theta, \delta}^\top\Vec{\boldsymbol{\alpha}}_{\theta, \delta} \\
    &= \norm{\Vec{\boldsymbol{\alpha}}_{\theta, \delta}}_2^2 \\
    & = \norm{\text{Vec}\left(\left(\boldsymbol{\alpha} \boldsymbol{\beta}(\delta)- \mathbf{O}_\theta\boldsymbol{u} \right)^\top\right)}_2^2 \\
    &= \norm{\boldsymbol{\alpha} \boldsymbol{\beta}(\delta)- \mathbf{O}_\theta\boldsymbol{u} }_F^2, 
\end{align*}
which concludes the proof.

\end{proof}

\begin{proof}
\label{delta_sol}
The equation \eqref{pb2} can be seen as a constrained Procrustes problem, since the orthogonal matrix $\boldsymbol{\beta}(-\delta)$ has a fixed form.
 Let note $\boldsymbol{u}^\theta= \mathbf{O}_\theta \boldsymbol{u}$,  we have 
\begin{equation}
    \hat{\delta}=\argmin_{\delta\in [0,1]} \norm{\boldsymbol{\alpha}\boldsymbol{\beta}(\delta)-\boldsymbol{u}^\theta }_F^2=\argmin_{\delta\in [0,1]} \norm{\boldsymbol{\beta}(\delta)-\boldsymbol{\alpha}^\top\boldsymbol{u}^\theta  }_F^2
    \label{pb_2_3}.
\end{equation}
Since $\boldsymbol{\beta}(\delta)$ is a sparse orthogonal matrix, Equation \eqref{pb_2_3} reduces to 
\begin{equation}
    \hat{\delta}= \argmin_{\delta\in [0,1]}\left( \sum_{k \in \{1, 3, \ldots, M-1\} } \norm{ \mathbf{\Sigma}_{k, \theta }- \mathbf{O}_{\pi\delta (k+1) } }^2\right)
    \label{pb_sim}, 
\end{equation}
where $$
\boldsymbol{\alpha}^\top \boldsymbol{u}^\theta \odot \begin{pmatrix}
    \mathbf{I}_2 & 0 & \ldots & 0 \\ 
    0& \mathbf{I}_2& \ldots & 0 \\ 
    \vdots & \ldots  & &\vdots\\
    0& 0 & \ldots & \mathbf{I}_2 
\end{pmatrix}=\begin{pmatrix}
    \boldsymbol{\Sigma}_{1, \theta} & 0 & \ldots & 0 \\ 
    0& \boldsymbol{\Sigma}_{3, \theta}& \ldots & 0 \\ 
       \vdots & \ldots  & &\vdots\\
       0& 0 & \ldots & \boldsymbol{\Sigma}_{M-1, \theta} 
\end{pmatrix},
$$
and $\odot$ denotes the Hadamard product. \\ 
Regarding the resolution of \eqref{pb_sim}, note that 
\begin{align*}
   \begin{split} \frac{\partial }{\partial \delta } \norm{ \mathbf{\Sigma}_{k, \theta }- \mathbf{O}_{\pi\delta (k+1) } }^2&= 
       \pi (k+1)\left( \text{Tr}(\boldsymbol{\Sigma}_{k, \theta})\sin\left( \pi(k+1)\delta\right) \right. \\
       & \quad \left. +\text{Tr}(\boldsymbol{\Sigma}_{k, \theta}\mathbf{O}_{\pi/2})\cos\left( \pi(k+1)\delta\right) \right)
   \end{split}, 
\end{align*}
which leads to
\begin{align*}
   \begin{split} \sum_{k\in \{1, 3, \ldots, M-1\} } \frac{\partial }{\partial \delta } \norm{ \mathbf{\Sigma}_{j,k, \theta }- \mathbf{O}_{\pi\delta (k+1) } }^2&= \pi \sum_{k\in \{1, 3, \ldots, M-1\} }
        \left( w_{j, k}^{1,\theta}\sin\left( \pi(k+1)\delta\right) \right. \\
       & \quad \left. -w_{j, k}^{2,\theta}\cos\left( \pi(k+1)\delta\right) \right)
   \end{split},  
\end{align*}
 where $w_{j, k}^{1,\theta}=\text{Tr}((k+1)\boldsymbol{\Sigma}_{k, \theta})$, $w_{j, k}^{2,\theta}=\text{Tr}(-(k+1)\boldsymbol{\Sigma}_{k, \theta}\mathbf{O}_{\pi/2})$. \par 
Therefore, the solution $\hat{\delta}_j$ of \eqref{pb2} belongs to $\mathcal{S}_M$,
where 
\begin{align*}
    \begin{split}\mathcal{S}_M=\left\{\delta \in [0, 1],\ \sum_{k\in \{1, 3, \ldots, M-1\}}
         w_{j, k}^{1,\theta}\sin\left( \pi(k+1)\delta\right) = \sum_{k\in \{1, 3, \ldots, M-1\}}w_{j, k}^{2,\theta}\cos\left( \pi(k+1)\delta\right) \right\}
\end{split}.
\end{align*}

This concludes the proof. 
\end{proof}

\end{document}